\theoremstyle{plain}
\newtheorem{theorem}{Theorem}
\newtheorem{prop}{Proposition}
\newtheorem{lemma}{Lemma}
\theoremstyle{remark}
\newtheorem{remark}{Remark}
\newtheorem{example}{Example}
\newtheorem{assumption}{\textbf{A}\hspace{-3pt}}
\DeclareMathOperator{\prox}{prox}
\DeclareMathOperator{\sgn}{sgn}
\DeclareMathOperator{\var}{Var}
\newcommand{\real}{\ensuremath{\mathbb{R}}}
\newcommand{\pr}{\ensuremath{\mathbb{P}}}
\newcommand{\norm}[1]{\ensuremath{\Vert #1 \Vert}}
\newcommand{\expe}[1]{\ensuremath{\mathbb{E}\left[ #1 \right]}}
\newcommand{\expeLine}[1]{\ensuremath{\mathbb{E}[ #1 ]}}
\newcommand{\indicator}[2]{\ensuremath{\mathbbm{1}_{#1}( #2 )}}
\newcommand{\indicatorD}[2]{\ensuremath{\mathbbm{1}\{#1\}}}
\newcommand{\indicatorDD}[1]{\ensuremath{\mathbbm{1}\{#1\}}}
\newcommand{\floor}[1]{\ensuremath{\lfloor #1\rfloor}}
\newcommand{\ceil}[1]{\ensuremath{\lceil #1\rceil}}
\def\rmd{\mathrm{d}}
\def\dprime{\prime\prime}
\newcommand{\1}{\mathbbm{1}}
\newcommand{\rmL}{\mathrm{L}}
\newcommand{\bmW}{\bm{\mathrm{W}}}
\def\eqsp{\;}
\newcommand{\ccint}[1]{\left[#1\right]}
\def\rset{\mathbb{R}}
\def\msd{\mathsf{D}}
\def\mrl{\mathrm{L}}
\def\mrL{\mathrm{L}}
\def\bigO{\mathcal{O}}
\def\nset{\mathbb{N}}
\def\nsets{\mathbb{N}^*}
\def\rme{\mathrm{e}}
\def\Sigmabf{\boldsymbol{\Sigma}}
\def\varphibf{\boldsymbol{\varphi}}
\def\bmy{\bm{y}}
\def\bmu{\bm{u}}
\def\bmx{\bm{x}}
\def\plusinfty{+\infty}
\newcommand{\abs}[1]{\left\vert #1 \right\vert}
\newcommandx{\Vnorm}[2][1=V]{\| #2 \|_{#1}}
\newcommandx{\VnormEq}[2][1=V]{\left\| #2 \right\|_{#1}}
\newcommandx{\normLigne}[2][1=]{\ifthenelse{\equal{#1}{}}{\Vert #2 \Vert}{\Vert #2\Vert^{#1}}}
\newcommand{\parenthese}[1]{\left(#1 \right)}
\newcommand{\parentheseLigne}[1]{(#1 )}
\newcommand{\defEns}[1]{\left\lbrace #1 \right\rbrace }
\def\rmd{\mathrm{d}}
\def\rmC{\mathrm{C}}
\def\rmc{\mathrm{c}}
\def\piLaplace{\pi^{\mrL}}
\def\MALA{\textrm{MALA}}
\def\PMALA{\textrm{P-MALA}}
\renewcommand*{\iint}[2]{\lbrace #1 , \dots , #2\rbrace }
\def\acc{\mathsf{A}}
\def\tacc{\widetilde{\mathsf{A}}}
\title{Optimal Scaling Results for Moreau-Yosida Metropolis-adjusted Langevin Algorithms}
\author[1]{Francesca R. Crucinio\thanks{Corresponding author: francesca\_romana.crucinio@kcl.ac.uk}}
\author[2]{Alain Durmus}
\author[3]{Pablo Jim\'{e}nez}
\author[4]{Gareth O. Roberts}
\affil[1]{King's College London}
\affil[2]{Centre de Math\'{e}matiques Appliqu\'{e}es, Ecole
  Polytechnique, France, Institut Polytechnique de Paris}
\affil[3]{Sorbonne Université and Université Paris Cité, CNRS, Laboratoire de Probabilités, Statistique et Modélisation, F-75005 Paris, France}
\affil[4]{Department of Statistics, University of Warwick}
\date{ }
\begin{document}
\maketitle
\begin{abstract}
We consider a recently proposed class of MCMC methods which uses proximity maps instead of gradients to build proposal mechanisms which can be employed for both differentiable and non-differentiable targets.	
These methods have been shown to be stable for a wide class of targets, making them a valuable alternative to Metropolis-adjusted Langevin algorithms (MALA); and have found wide application in imaging contexts.
The wider stability properties are obtained by building the Moreau-Yosida envelope for the target of interest, which depends on a parameter $\lambda$.
In this work, we investigate the optimal scaling problem for this class of algorithms, which encompasses MALA, and provide practical guidelines for the implementation of these methods.
\end{abstract}

\tableofcontents

\section{Introduction}
Gradient-based Markov chain Monte Carlo (MCMC) methods have proved to be very successful at sampling from high-dimensional target distributions
\cite{brooks2011handbook}.
The key to their success is that in many cases their mixing time appears to 
scale better in dimension than
competitor algorithms which do not use gradient information (see for example \cite{roberts1998optimal}),
while their implementation has similar computational cost. Indeed, 
gradients of target densities can often be computed with computational complexity (in dimension $d$) which scales no worse than evaluation of the target density itself.

Gradient-based MCMC methods are mainly motivated from stochastic processes constructed to have the target density as limiting distribution \cite{ma2015complete,bouchard2018bouncy,bierkens2019zig,goldman2021gradient}. Our analysis will concentrate on the Metropolis Adjusted Langevin Algorithm (MALA) and its proximal variants which are based on the Langevin diffusion
\begin{equation}
\label{eq:LangDiff}
\rmd{\bf L}_t =  \rmd{\bf B}_t + {
\frac{
\nabla \log \pi ({\bf L}_t)
}{2}} 
\rmd t,
\end{equation}
where $\pi $ denotes the target density with respect to the Lebesgue measure and $({\bf B}_t)_{t\geq 0
}$ a standard Brownian motion. It is well-known that under appropriate conditions,~\eqref{eq:LangDiff} defines a continuous-time Markov process associated with a Markov semigroup which is reversible with respect to $\pi$. From this observation, it has been suggested to use a Euler-Maruyama (EM) approximation of~\eqref{eq:LangDiff}. This scheme has been popularized in statistics by \cite{grenander1994representations} and referred to as the {\em Unadjusted Langevin Algorithm (ULA)} in \cite{roberts1996exponential}.
Due to time-discretization, ULA typically does not have $\pi $ as stationary distribution. To address this problem,  \cite{rossky:doll:friedman:1978} and independently Besag in his contribution to \cite{grenander1994representations} proposed to add a Metropolis acceptance step at each iteration of the EM scheme, leading to the Metropolis Adjusted Langevin Algorithm (MALA) following \cite{roberts1996exponential} who also derive basic stability analysis. The accept/reject step in this algorithm confers two significant advantages: it ensures that the resulting algorithm has exactly the correct invariant distribution, while step sizes can be chosen larger than in the unadjusted case as there is not need to make the step size small to reduce discretization error. On the other hand, MALA algorithms are typically hard to analyze theoretically
(see e.g. \cite{bou2013nonasymptotic,durmus:moulines:2021:mala,malafast}). However, \cite{roberts1998optimal} (see also \cite{stuart:2009,stuart:2012}) have established that MALA has better convergence properties than the Random Walk Metropolis (RWM) algorithm with respect to the dimension $d$ from an optimal scaling perspective (see also \cite{gelman1997weak}).

Whereas gradient-based methods have been successively applied and offer interesting features, they are typically less robust than their vanilla alternatives (for example see \cite{roberts1996exponential}); while intuition suggests, and existing underpinning theory requires, that target densities need to be sufficiently smooth for the gradients to be aiding Markov chain convergence. 
Moreover, while gradient-based MCMC have been successful for smooth densities, there is no reason to believe that they should be effective for densities which are not differentiable at a subset $\msd\subseteq \rset^d$.
For non-smooth densities, \cite{pereyra2016proximal} proposes modified gradient-based algorithms. Their proposed P-MALA algorithm is inspired by the proximal algorithms popular in the optimization literature (e.g. \cite{parikh2014proximal}). 
The main idea is to approximate the (possibly non differentiable but) log-concave target density $\pi\propto \exp(-G)$ by substituting the potential $G$ with its Moreau-Yosida envelope $G^\lambda$ (see~\eqref{eq:glambda} below for its definition), to obtain a distribution $\pi^\lambda$ whose level of smoothness is controlled by the proximal parameter $\lambda>0$, so that $G^{0}=G$. 
Given this smooth approximation to $\pi$ one can then build proposals based on time discretizations of the Langevin diffusion targeting $\pi^\lambda$ \cite{pereyra2016proximal,durmus2018efficient}:
\begin{equation}
  \label{eq:langevin_intro}
\xi_{k+1}=\xi_k-\frac{\sigma^2}{2} \nabla G^\lambda(\xi_k)+\sigma Z_{k+1} ,
\end{equation}
where $\sigma^2 >0$ is a fixed stepsize and $(Z_k)_{k\in\nsets}$ is a sequence of i.i.d. zero-mean Gaussian random variables with identity covariance matrix. While \cite{pereyra2016proximal} mostly considers the case $\lambda = \sigma^2/2$, our aims in this paper are broadly to provide theoretical underpinning for the slightly larger family of {\em Moreau-Yosida Regularised MALA} (MY-MALA) algorithms obtained when $\lambda\neq \sigma^2/2$, analyze how these methods scale with dimension, and to give insights and practical guidance into how they should be implemented supported by the theory we establish. 

Proximal optimization and MCMC methods proved to be particularly well-suited for image estimation, where penalties involving the sparsity inducing norms are common \cite{pereyra2016proximal, durmus2018efficient, vono2019bayesian}. 
Similar targets are also common in sparse regression contexts \cite{atchade2015moreau, goldman2021gradient, zhou2022proximal}. In these situations, the set of non-differentiability points for the target density $\pi$ is a null set under Lebesgue measure, and, following \cite{durmus2017optimal}, we shall focus on this case. However, in contrast to the conclusions of \cite{durmus2017optimal} for RWM, we shall demonstrate that optimal scaling of MY-MALA is significantly affected by non-smoothness.

In this work, we first extend the results of \cite{pillai2022optimal}, considering a wider range of MY-MALA algorithms, as well as a more general class of finite dimensional target distributions.
We begin by comparing MALA and its proximal cousin in cases where MALA is well-defined, i.e., where target densities are sufficiently differentiable.
In some cases the proximal operator for a given distribution $\pi$ is less expensive to compute than $\nabla \log \pi$ \cite{parikh2014proximal, combettes2011proximal, pereyra2016proximal}, so we anticipate that MY-MALA with an appropriately tuned $\lambda$ might provide a computationally more efficient alternative to MALA, whilst retaining similar scaling properties. In our study, we let both the steps size $\sigma^2$ and the regularization parameter $\lambda$ depend on the dimension $d$ of the target and find that the scaling properties of MY-MALA depend on the relative speed at which $\lambda$ and $\sigma$ converge to $0$ as $d\to \infty$.
When $\lambda$ goes to $0$ at least as fast as $\sigma^2$, we find that the scaling properties of MY-MALA are equivalent to those of MALA (i.e., $\sigma^2$ should decay as $d^{-1/3}$; see Theorem~\ref{theo:differentiable}--\ref{case:b}, Theorem~\ref{theo:differentiable}--\ref{case:c}); when $\lambda$ converges to $0$ more slowly than $\sigma^2$, MY-MALA is less efficient than MALA with $\sigma^2$ decaying as $d^{-1/2}$ (Theorem~\ref{theo:differentiable}--\ref{case:a}).

We then turn to the optimal scaling of MY-MALA applied to the Laplace distribution $\pi(x)\propto \rme^{-\vert x\vert}$. We focus on this particular non-smooth target since it is the most widely used in applications of MY-MALA, including image deconvolution \cite{ pereyra2016proximal, durmus2018efficient, vono2019bayesian}, LASSO, and sparse regression \cite{atchade2015moreau, goldman2021gradient, zhou2022proximal}. We establish that non-differentiability of the target even at one point leads to a different optimal scaling than MALA. In particular, the step size has to scale as  $d^{-2/3}$ and not as $d^{-1/3}$ (Theorem~\ref{theo:acceptance_laplace}).
We thus uncover a new optimal scaling scenario for Metropolis MCMC algorithms which lies in between those of RWM and  MALA. 

The proof of the result for the differentiable case extends that of \cite{roberts1998optimal} for MALA, while 
the structure of the proof for the Laplace target is similar to that of \cite{durmus2017optimal} and constitutes the main element of novelty in this paper. As a special case of the result for the Laplace distribution, we also obtain the optimal scaling for MALA on Laplace targets. 
We point out that the strategy adopted in the proof of this result is not unique to the Laplace distribution, and could be applied to other distributions provided that the required integrals can be obtained.

To sum up, our main contributions are:
\begin{enumerate}[wide, labelwidth=!, labelindent=0pt, label = \arabic*), noitemsep,topsep=0pt,parsep=0pt,partopsep=0pt]
    \item We extend the result of \cite{pillai2022optimal} beyond the Gaussian case, covering all finite dimensional (sufficiently) differentiable targets, and show that, in some cases, MY-MALA affords the same scaling properties of MALA if the proximal parameter $\lambda$ is chosen appropriately.
    \item Motivated by applications in imaging and sparse regression applications, we study the scaling of MY-MALA methods for the Laplace target, and show that for values of $\lambda$ decaying sufficiently fast, the optimal scaling of MY-MALA, i.e., the choice for $\sigma^2$, is different from the one for MALA on differentiable targets and is of order $d^{-2/3}$.
    \item We use the insights obtained with the aforementioned results to provide practical guidelines for the selection of the proximal parameter $\lambda$.
\end{enumerate}

The paper is structured as follows.
In \Cref{sec:proximal_mala}, we rigorously introduce the class of MY-MALA algorithms that are studied and discuss related works on optimal scaling for MCMC algorithms. 
In Section~\ref{sec:differentiable} we state the main result for regular targets, showing that the scaling properties of MY-MALA depend on the relative speed at which $\lambda$ goes to 0 with respect to $\sigma$. In Section~\ref{sec:laplace} we obtain a scaling limit for MY-MALA when $\pi$ is a Laplace distribution, as a special case of our result we also obtain the scaling properties of a sub-gradient version of MALA for this target. We collect in Section~\ref{sec:implications} the main practical takeaways from these results and discuss possible extensions in Section~\ref{sec:discussion}. All proofs are available in the appendix.

\section{Background}
\label{sec:proximal_mala}
\subsection{MY-MALA algorithms}

We now introduce the general class of MY-MALA algorithms, first studied in \cite{pereyra2016proximal}. This class of algorithms aims at sampling from a probability density with respect to the Lebesgue measure on $\mathbb{R}^d$ of the form $\pi(\bm{x})= \exp ( -G(\bm{x}))/\int_{\rset^d} \exp ( -G(\tilde{\bm{x}})) \rmd \tilde{\bmx}$, with $G$ satisfying the following assumption
\begin{assumption}
\label{ass:a0}
The function $G:\real^d\to 
\mathbb{R}$ is convex, proper and lower semi-continuous. 
\end{assumption}

The main idea behind MY-MALA is to approximate the (possibly non differentiable) target density $\pi$ by approximating the potential $G$ with its Moreau-Yosida envelope $G^{\lambda} : \mathbb{R}^d \to \mathbb{R}$ defined for $\lambda >0$ by
\begin{equation}
\label{eq:glambda}
    G^\lambda(\bm{x}) =\min_{\bm{u}\in\real^d} [G(\bm{u})+\norm{\bm{u}-\bm{x}}^2/(2\lambda)] .
\end{equation}
Since $G$ is supposed to be convex, by \cite[Theorem 2.26]{rockafellar2009variational}, the Moreau-Yosida envelope is well-defined, convex and continuously differentiable with 
\begin{equation}
    \label{eq:proximity}
  \nabla G^{\lambda}(\bm{x}) = \lambda^{-1}(\bm{x}-\prox_G^\lambda(\bm{x})) , \quad
    \prox_G^\lambda(\bm{x}) = \arg\min_{\bm{u}\in\real^d} [G(\bm{u})+\norm{\bm{u}-\bm{x}}^2/(2\lambda)] .
\end{equation}
The proximity operator $ \bmx \mapsto \prox_G^\lambda(\bmx)$ behaves similarly to a gradient mapping and moves points in the direction of the minimizers of $G$. In the limit $\lambda\to 0$ the quadratic penalty dominates~\eqref{eq:proximity} and the proximity operator coincides with the identity operator, 
i.e., $\prox_G^\lambda(\bm{x})=\bm{x}$; in the limit $\lambda\to \infty$,
the quadratic penalty term vanishes and~\eqref{eq:proximity} maps all points to the set of minimizers of $G$.


It was shown in \cite[Proposition 1]{durmus2018efficient} that, under \Cref{ass:a0}, $\int_{\real^d} \exp(-G^{\lambda}(\bm{x})) \rmd \bm{x} < \infty$, and therefore the probability density $\pi^\lambda \propto \exp(-G^{\lambda})$ is well-defined. In addition, it has been shown that $\Vert \pi - \pi^\lambda \Vert_{\mathrm{TV}} \to 0$ as $\lambda \to 0$.
Based on this observation and since as we have emphasized $\pi^\lambda$ is now continuously differentiable, it has been suggested in \cite{pereyra2016proximal, durmus2018efficient} to use the discretization of the Langevin diffusion associated with $\pi^\lambda$ given by ~\eqref{eq:langevin_intro}, which can be rewritten using~\eqref{eq:proximity} as
\begin{align}
\label{eq:lambda_choice}
\xi_{k+1}&=\left(1-\frac{\sigma^2}{2\lambda}\right)\xi_k+\frac{\sigma^2}{2\lambda} \prox_{G}^\lambda(\xi_k)+\sigma Z_{k+1} . 
\end{align}

Similarly to other MCMC methods based on discretizations of the Langevin diffusion (e.g. \cite{roberts1996exponential}), one can build unadjusted schemes which target $\pi^\lambda$, expecting draws from these schemes to be close to draws from $\pi$ for small enough $\lambda$, or add a Metropolis-Hastings step to ensure that the resulting algorithm targets $\pi$.
Unadjusted Moreau-Yosida MCMC methods have been analyzed in \cite{durmus2018efficient}; in this paper we focus on Metropolis adjusted Moreau-Yosida MCMC methods and study their scaling properties.
More precisely, at each step $k$ and given the current state of the Markov chain $X_k$, a candidate $Y_{k+1}$ is generated from the transition density associated to~\eqref{eq:lambda_choice}, $(\bmx,\bmy) \mapsto q(\bmx,\bmy)  = \varphibf(\bmy ;[1-\sigma^2/(2\lambda)] \bmx+ \sigma^2 \prox_{G}^\lambda(\bmx)/2\lambda, \sigma^2 \operatorname{I}_{d})$, 
where
$\varphibf(\cdot\ ; \bmu, \Sigmabf)$ stands for the $d$-dimension Gaussian
density with mean $\bmu$ and covariance matrix $\Sigmabf$.
Given $X_k$ and $Y_{k+1}$,
Then, the next state is set as:
\begin{equation}
  \label{eq:p_mala_def}
  X_{k+1} = Y_{k+1} \mathrm{b}_{k+1} + X_k(1-\mathrm{b}_{k+1}) , \mathrm{b}_{k+1} = \1_{\rset_+}\left(\parenthese{\frac{\pi(Y_{k+1})q(Y_{k+1}, X_k)}{\pi(X_k)q(X_k, Y_{k+1})}\wedge 1\right)  - U_{k+1}} ,
\end{equation}
where $(U_{i})_{i\in\nsets}$ is a sequence of i.i.d. uniform random variables on $\ccint{0,1}$.


The value of $\lambda$ characterizes how close the distribution $\pi^\lambda$ is to the original target $\pi$ and therefore how good the proposal is.
Small values of $\lambda$ provide better approximations to $\pi$ and therefore better proposals (see \cite[Proposition 1]{durmus2018efficient}), while larger values of $\lambda$ provide higher levels of smoothing for non-differentiable distributions (see \cite[Figure 1]{pereyra2016proximal}).
In the case $\lambda = \sigma^2/2$ we obtain the special case of MY-MALA referred to as  P-MALA in \cite{pereyra2016proximal}.

The main contribution of this paper is to analyze the optimal scaling for MY-MALA defined by~\eqref{eq:p_mala_def}.


\subsection{Optimal scaling and related works}
\label{sec:related_work}
We briefly summarize here some examples of MCMC algorithms and their optimal scaling results; a full review is out of the scope of this paper and we only mention algorithms to which we will compare MY-MALA in the development of this work.

Popular examples of Metropolis MCMC are RWM and MALA. RWM uses as a
proposal the transition density
$(\bm{x},\bm{y}) \mapsto \varphibf(\bm{y}\ ; \bm{x}, \sigma^2 \operatorname{I}_{d})$, where $\sigma^2 >0$. The
MALA scheme uses as proposal
$(\bm{x},\bm{y}) \mapsto \varphibf(\bm{y}\ ; \bm{x} + (\sigma^2/2) \nabla \log \pi(\bm{x}), \sigma^2
\operatorname{I}_{d})$.
As we will show in Section~\ref{sec:differentiable}, MY-MALA can be considered as an extension of MALA.

A natural question to address when implementing Metropolis adjusted algorithms is how to set the parameter $\sigma^2$ (variance parameter for RWM, step size parameter for MALA) to maximize the efficiency of the algorithm.
Small values of $\sigma^2$ result in higher acceptance probability 
 and cause the chain to move slowly, while large values of $\sigma^2$ result in a high number of rejections with the chain $(X_k)_{k\geq0}$ moving slowly \cite{roberts2001optimal}.
Optimal scaling studies aim to address this question  by investigating how $\sigma^2$ should behave with respect to the dimension $d$ of the support of $\pi$ in the high dimensional setting $d\to\infty$, to obtain the best compromise.

The standard optimal scaling set-up considers the case of $d$-dimensional targets $\pi_d$ which are product form, i.e.,
\begin{align}
  \label{eq:pi_d_iid}
\pi_d(\bm{x}^d)=\prod_{i=1}^d \pi(x_i^d) ,
\end{align}
where $x_i^d$ stands for the $i$-th component of $\bmx^d$ and $\pi$ is a one-dimensional probability density with respect to the Lebesgue measure. Under appropriate assumptions on the regularity of $\pi$, and assuming that the MCMC algorithm is initialized at stationarity, the optimal value of $\sigma^2$ scales as $\ell^2/d^{2\alpha}$ with $\ell >0$, $2\alpha =1$ for RWM \cite{gelman1997weak} and $2\alpha = 1/3$ for MALA \cite{roberts1998optimal}.

By setting $\alpha$ to these values, it is then possible to show that as $d\to\infty$ each 1-dimensional component of the Markov chain defined by RWM and MALA, appropriately rescaled in time, converges to the Langevin diffusion
\begin{align}
\textrm{d}L_t = h(\ell)^{1/2}\textrm{d}B_t-\frac{h(\ell)}{2} [\log\pi]{^\prime}(x)\textrm{d}t ,
\end{align}
where $(B_t)_{t\geq 0}$ is a standard Brownian motion and  $h(\ell)$, referred to as speed function of the diffusion, is a function of the parameter $\ell>0$ that we may tune. Indeed, it is well-known that $(L_{h(\ell)t})_{t \geq 0}$ is a solution of the Langevin diffusion~\eqref{eq:LangDiff}. As a result, we may identify the values of $\ell$ maximizing $h(\ell)$ for the algorithms at hand to approximate the fastest version of the Langevin diffusion. The optimal values for $\ell$ results in an optimal average acceptance probability of $0.234$ for RWM and $0.574$ for MALA.

The scaling properties allow to get an intuition of the efficiency of the corresponding algorithms: RWM requires $\bigO(d)$ steps to achieve convergence on a $d$-dimensional target, i.e., its efficiency is $\bigO(d^{-1})$, while MALA has efficiency $\bigO(d^{-1/3})$.
While these results are asymptotic in $d$, the insights obtained by considering the limit case $d\to \infty$ prove to be useful in practice \cite{roberts2001optimal}.

In the context of non-smooth and even discontinuous target
distributions, studying the simpler RWM algorithm applied to a class
of distributions on compact intervals,
\citep{neal2008optimal,neal2012optimal} show that the lack of
smoothness affects the optimal scaling of RWM with respect to
dimension $d$. More precisely, they show that for a class of
discontinuous densities which includes the uniform distribution on
$\ccint{0,1}$, the optimal scaling of RWM is of order
$\bigO(d^{-2})$. On the other hand, in the case where the set of non-differentiability $\msd$ of $\pi$ is a null
set with respect to the Lebesgue measure, \cite{durmus2017optimal}
shows that under appropriate conditions, including $\mrl^p$
differentiability, the optimal scaling of RWM is of order
$\bigO(d^{-1})$ still.

The scaling properties of MY-MALA have been partially investigated in \cite{pillai2022optimal}, which shows that  P-MALA, obtained when $\lambda=\sigma^2/2$, has the same scaling properties of MALA for the finite dimensional Gaussian density and for a class of infinite dimensional target measures (Theorem 2.1 and Theorem 5.1 therein, respectively).

\section{Optimal scaling of MY-MALA}

We consider the same set up as \cite{roberts1998optimal} and briefly recalled above. Given a real-valued function $g:\real\to \real$ satisfying \Cref{ass:a0} we consider the i.i.d. $d$-dimensional target specified by~\eqref{eq:pi_d_iid} with 
\begin{align}
\label{eq:target_differentiable}
 \pi(x)  \propto \exp(- g(x)) .
\end{align}
Since for any $\bm{x}^d$, $G(\bm{x}^d)=\sum_{i=1}^d g(x_i^d)$, we have by  \cite[Section 2.1]{parikh2014proximal}
\begin{align}
  \label{eq:proximal_laplace}
\prox_{G}^\lambda (\bm{x}^d) = (\prox_{g}^\lambda(x_1^d),\dots, \prox_{g}^{\lambda}(x_d^d))^\top .
\end{align}
It follows that the distribution of the proposal with target $\pi_d$ in~\eqref{eq:pi_d_iid}-\eqref{eq:target_differentiable} is also product form $q_d(\bm{x}^d, \bm{y}^d)= \prod_{i=1}^d q(x_i^d, y_i^d)$ with
\begin{align}
    \nonumber
\textstyle  
 q(x_i^d, y_i^d) = \frac{1}{(2\uppi \sigma^2)^{1/2}}\exp\left( -\frac{\left(y_i^d-(1-\sigma^2/(2\lambda))x_i^d - \sigma^2 \prox_{g}^{\lambda}(x_i^d)/(2\lambda)\right)^2}{2\sigma^2}\right) ,
\end{align}
and $\lambda>0$. For any dimension $d \in\nsets$, we denote by $(X_k^d)_{k\in\nset}$ the Markov chain defined by the Metropolis recursion~\eqref{eq:p_mala_def} with target distribution $\pi_d$ and proposal density $q_d$ and associated to the sequence of candidate moves
\begin{align}
  \label{eq:proposal_pmala}
  Y_{k+1}^d=\left(1-\frac{\sigma^2}{2\lambda}\right)X_k^d+\frac{\sigma^2}{2\lambda} \prox_{G}^\lambda(X_k^d)+\sigma Z^d_{k+1} . 
\end{align}

As mentioned in the introduction, the focus of this work is on investigating the optimal dependence of the proposal variance $\sigma^2$ on the dimension $d$ of the target $\pi$. In this section, we make the dependence of the proposal variance on the dimension explicit and let $\sigma_d^2 = \ell^2 /d^{2\alpha}$ and $\lambda_d = c^2/2d^{2\beta}$ for some $\alpha, \beta>0$ and some constants $c, \ell$ independent on $d$.
Thus, we can write $\lambda_d$ as a function of $\sigma_d$, $\lambda_d =\sigma_d^{2r}m/2
$,
where we defined the \emph{relative velocity} at which $\sigma_d^2$ and $\lambda_d$ converge to 0 as $d\to \infty$ by $v:= \beta/\alpha$ and the \emph{ratio of constants} $r := c^2/\ell^{2v}>0$.
When $v=1$, $\sigma_d^2$ and $\lambda_d$ decay to 0 at the same rate, for $v>1$ the decay of $\lambda_d$ is faster than that of $\sigma_d^2$ and for $v<1$ the decay of $\lambda_d$ is slower than that of $\sigma_d^2$. The parameter $r$ allows to refine the comparison between $\sigma_d^2$ and $\lambda_d$ as $\beta=\alpha$.

By writing $\lambda_d$ as a function of $\sigma_d$ we can decouple the effect of the constants $c, \ell$ from that of the dependence on $d$ (i.e., $\alpha, \beta$). 
In the case $v=1, r=1$ we get the P-MALA algorithm studied in \cite{pereyra2016proximal, pillai2022optimal}, while for all other values of $r, v$ we have a family of proposals whose behaviour depends on $r$ and $v$.



\subsection{Regular targets}
\label{sec:differentiable}

We start with the case where $\pi$ is continuously differentiable. 
Since MALA can be applied to this class of targets, the results obtained in this section allow direct comparison of MY-MALA algorithms with MALA and thus between gradient-based algorithms (MALA) and algorithms that use proximal operator-based approximations of the gradient (MY-MALA).
If $G = -\log \pi$ is continuously differentiable, using \cite[Corollary 17.6]{bauschke2011convex},  $\prox_{G}^\lambda(\bm{x}) = -\lambda \nabla G(\prox_{G}^\lambda(\bm{x}))+\bm{x}$, and~\eqref{eq:lambda_choice} reduces to
\begin{align}
\label{eq:differentiable_ula}
\xi_{k+1}=\xi_k-\frac{\sigma^2}{2} \nabla G(\prox_{G}^\lambda(\xi_k))+\sigma Z_{k+1} .
\end{align}
Hence, the value of $\lambda$ controls how close to $\xi_k$ is the point at which the gradient is evaluated. For $\lambda\to0$, the MY-MALA proposal becomes arbitrarily close to that of MALA, while, as $\lambda$ increases~\eqref{eq:differentiable_ula} moves away from MALA.

Our main result, Theorem~\ref{theo:differentiable} below, shows that the relative speed of decay (i.e., $v$) influences the optimal scaling of the resulting MY-MALA algorithm, while the constant $r$ influences the speed function  of the limiting diffusion.


We make the following assumptions on the regularity of $g$.
\begin{assumption}
\label{ass:g_smooth}
 $g$ is a $\rmC^8$-function whose derivatives are bounded by some polynomial: there exists $k_0 \in\nset$ such that
 \begin{align}
   \sup_{x \in\rset} \max_{i\in\{0,\ldots,8\}} [g^{(i)}(x)/(1+\vert x \vert ^{k_0})] < \infty .
\end{align}
\end{assumption}
Note that under \Cref{ass:a0} and \Cref{ass:g_smooth} and under the assumption that $\pi$ in \eqref{eq:target_differentiable} is a probability density, \cite[Lemma A.1]{durmus2018efficient}  implies that  $\int_\real x^k\exp(-g(x))\rmd x<\infty$ for any $k\in\mathbb{N}$. We also assume that the sequence of MY-MALA algorithms is initialized at stationarity.

\begin{assumption}
\label{ass:stationarity}
For any   $d \in \nsets$, $X_0^d$ has distribution $\pi_d$.
\end{assumption}

The assumptions above closely resemble those of \cite{roberts1998optimal} used to obtain the optimal scaling results for MALA.
In particular, \Cref{ass:g_smooth} ensures that we can approximate the log-acceptance ratio in~\eqref{eq:p_mala_def} with a Taylor expansion, while~\Cref{ass:stationarity} avoids technical complications due to the transient phase of the algorithm. We discuss how the latter assumption could be relaxed in Section~\ref{sec:discussion}.

For technical reasons, and to allow direct comparisons with the results established in \cite{roberts1998optimal} for MALA, we will also consider the following regularity assumption
\begin{assumption}
\label{ass:g_lipschitz}
The function $g^\prime$ is Lipschitz continuous.
\end{assumption}


We denote by $L_t^d$ the linear interpolation of the first component of the discrete time Markov chain $(X_k^d)_{k\geq0}$ obtained with the generic MY-MALA algorithm described above
\begin{align}
\label{eq:Yt_differentiable}
L_t^d &= (\ceil{d^{2\alpha}t}-d^{2\alpha}t)X^d_{\floor{d^{2\alpha}t},1} + (d^{2\alpha}t- \floor{d^{2\alpha}t})X^d_{\ceil{d^{2\alpha}t},1} , 
\end{align}
where $\floor{\cdot}$ and $\ceil{\cdot}$ denote the lower and upper integer part functions, respectively, and denote by $X_{k,1}^d$ the first component of $X_{k}^d$.
The following result shows that in the limit $d\to\infty$ the properties of MY-MALA depend on the relative speed at which $\sigma_d^2 = \ell^2 /d^{2\alpha}$ and $\lambda_d = c^2/2d^{2\beta}$  converge to $0$. Recall that we set $r=c^2/\ell^{2v}>0$ and  under \Cref{ass:stationarity}, consider for any $d \in\nsets$,
\begin{equation}
\label{eq:ar_finite}
  a_d(\ell,r) = \expe{\frac{\pi_d(Y_1^d)q_d(Y_1^d, X_0^d)}{\pi_d(X_0^d)q_d(X_0^d, Y_1^d)}\wedge 1} .
\end{equation}
\begin{theorem}
\label{theo:differentiable}
Assume \Cref{ass:a0}, \Cref{ass:g_smooth} and~\Cref{ass:stationarity}. For any $d \in\nsets$, let $\sigma_d^2 = \ell^2 /d^{2\alpha}$ and $\lambda_d = c^2/2d^{2\beta}$ with $\alpha,\beta >0$. 
Then, the following statements hold.
\begin{enumerate}[label=(\alph*)]
\item \label{case:a} If $\alpha= 1/4$, $\beta = 1/8$ and $r>0$, we have
$\lim_{d\to \plusinfty}    a_d(\ell, r) = 2\Phi\left(-\ell^2K_1(r)/2\right)$,
where $\Phi$ is the distribution function of a standard normal and
\begin{align}
    K_1^2(r) = \frac{r^2}{4}\mathbb{E}\left[\left\lbrace g^{\dprime}(X_{0,1}^d)g^{\prime}(X_{0,1}^d)\right\rbrace^2\right].
\end{align}
\end{enumerate}
If in addition, \Cref{ass:g_lipschitz} holds.
\begin{enumerate}[label=(\alph*),resume]
\item \label{case:b}  If $\alpha= 1/6$, $\beta = 1/6$ and $r> 0$, we have
$\lim_{d\to \plusinfty}    a_d(\ell, r) = 2\Phi\left(-\ell^3K_2(r)/2\right)$, where $\Phi$ is the distribution function
of a standard normal and
\begin{align}
K_2^2(r) &= 
\left(\frac{r}{8}+\frac{r^2}{4}\right)\mathbb{E}\left[ \{g^{\dprime}(X_{0,1}^d)g^{\prime}(X_{0,1}^d)\}^2\right]+\left(\frac{1}{16}+\frac{r}{8}\right)\mathbb{E}\left[g^{\dprime}(X^d_{0,1})^3\right]\\
&+\frac{5}{48}\mathbb{E}\left[g^{\dprime\prime}(X^d_{0,1})^2\right].
\end{align}
\item \label{case:c} If $\alpha= 1/6$, $\beta > 1/6$ and $r> 0$,
we have
$\lim_{d\to \plusinfty}    a_d(\ell, r) = 2\Phi\left(-\ell^3K_2(0)/2\right)$, 
where $\Phi$ is the distribution function of a standard normal.
\end{enumerate}
In addition, in all these cases,  as $d\to \infty$ the process  $(L_t^d)_{t\geq 0}$ converges weakly to the  Langevin diffusion
\begin{align}
\label{eq:langevin}
\rmd L_t = h(\ell, r)^{1/2}\rmd B_t-\frac{h(\ell, r)}{2}g^\prime(x)\rmd t ,
\end{align}
where $(B_t)_{t\geq 0}$ denotes standard Brownian motion and $h(\ell, r) =
\ell^2 a(\ell,r)$ is the speed of the diffusion, setting $a(\ell,r) = \lim_{d
\to \infty} a_d(\ell,r)$. If $\alpha= 1/4$, $\beta = 1/8$, for any $r>0$, $\ell
\mapsto h(\ell, r)$ is maximized at the unique value of $\ell$ such that $a(\ell, r)= 0.452$; while if $\alpha= 1/6$,
$\beta = v/6$ with $v\geq 1$ and $r>0$, $\ell \mapsto h(\ell, r)$ is maximized
at the unique value of $\ell$ such that $a(\ell, r)= 0.574$.
\end{theorem}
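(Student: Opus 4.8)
The plan is to follow the acceptance-ratio and generator strategy of \cite{roberts1998optimal}, adapting it to account for the proximal map in the proposal. The first step exploits the product structure: since both $\pi_d$ and $q_d$ factorize over coordinates, the global log-acceptance ratio is a sum of i.i.d.\ single-coordinate contributions,
\begin{equation*}
S_d := \log \frac{\pi_d(Y_1^d)\,q_d(Y_1^d,X_0^d)}{\pi_d(X_0^d)\,q_d(X_0^d,Y_1^d)} = \sum_{i=1}^d R(X_{0,i}^d,Y_{1,i}^d), \qquad R(x,y):= g(x)-g(y)+\log\frac{q(y,x)}{q(x,y)},
\end{equation*}
so that $a_d(\ell,r)=\expe{1\wedge \rme^{S_d}}$. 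The whole problem then reduces to determining the asymptotic law of the scalar sum $S_d$, and separately the one-coordinate generator, under the two vanishing parameters $\sigma_d^2=\ell^2/d^{2\alpha}$ and $\lambda_d=c^2/(2d^{2\beta})$.

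Next I would obtain a Taylor expansion of the single-coordinate proximal map. Differentiability of $g$ yields the fixed-point identity $\prox_g^\lambda(x)=x-\lambda g'(\prox_g^\lambda(x))$ (the one-coordinate form of the relation from \cite[Corollary 17.6]{bauschke2011convex} used in \eqref{eq:differentiable_ula}), which I would iterate to get $\prox_g^\lambda(x)=x-\lambda g'(x)+\lambda^2 g''(x)g'(x)+\bigO(\lambda^3)$, and hence an expansion of the proposal mean $\mu(x)=x-(\sigma^2/2)\,g'(\prox_g^\lambda(x))$. Substituting $Y=\mu(X)+\sigma Z$ with $Z\sim\mathcal{N}(0,1)$ into $R$ and expanding jointly in $\sigma_d$ and $\lambda_d$ produces a polynomial whose coefficients are explicit polynomials in the derivatives of $g$ evaluated at $X$. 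The polynomial-growth control of $g^{(i)}$, $i\le 8$, in \Cref{ass:g_smooth} (together with the Lipschitz hypothesis \Cref{ass:g_lipschitz} in cases \ref{case:b}--\ref{case:c}, which tames the implicit expansion of the prox) is precisely what is needed to bound the remainders uniformly in $d$ and to guarantee finiteness of the moments $\expe{\abs{X_{0,1}^d}^k}$ appearing in the limits.

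I would then read off the dominant terms in each regime using $\lambda_d=\sigma_d^{2v}m/2$. In case \ref{case:a} ($v=1/2$) the proximal-correction term proportional to $(\sigma^2/2)\lambda\,g''g'$ is the leading random contribution and assembles into $K_1^2(r)=(r^2/4)\expe{\{g''(X_{0,1}^d)g'(X_{0,1}^d)\}^2}$ with the $\ell^2$ dependence; in case \ref{case:b} ($v=1$) the standard third-order MALA terms in $\expe{g''(X_{0,1}^d)^3}$ and $\expe{g'''(X_{0,1}^d)^2}$ appear at the same order as the proximal cross-terms, producing $K_2(r)$ and the $\ell^3$ dependence; in case \ref{case:c} ($v>1$) the proximal correction is of strictly lower order and drops out, leaving $K_2(0)$, i.e.\ the pure MALA constant. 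A weak law of large numbers controls $\expe{S_d}$ and a Lyapunov/Lindeberg central limit theorem gives $S_d\Rightarrow \mathcal{N}(-s^2/2,\,s^2)$, where the mean-is-minus-half-the-variance relation is forced by the stationarity hypothesis \Cref{ass:stationarity}, with $s^2=\ell^4 K_1^2(r)$ in case \ref{case:a} and $s^2=\ell^6 K_2^2(r)$ in cases \ref{case:b}--\ref{case:c}. The acceptance limits then follow from the elementary Gaussian identity $\expe{1\wedge \rme^{W}}=2\Phi(-s/2)$, valid for $W\sim\mathcal{N}(-s^2/2,s^2)$.

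For the diffusion limit I would show, via the generator approach and standard weak-convergence machinery, that the generator of the time-rescaled first coordinate, $\phi\mapsto d^{2\alpha}\expe{\phi(X_{1,1}^d)-\phi(X_{0,1}^d)\mid X_{0,1}^d=x}$, converges on a suitable core to $\tfrac{h(\ell,r)}{2}\phi''(x)-\tfrac{h(\ell,r)}{2}g'(x)\phi'(x)$ with $h(\ell,r)=\ell^2 a(\ell,r)$; the same Taylor expansion supplies the drift and diffusion coefficients, and tightness follows from the moment bounds. The optimal-acceptance claims reduce to a one-dimensional maximization: substituting $u=\ell^2 K_1(r)$ (respectively $u=\ell^3 K_2(r)$) recasts $\ell\mapsto h(\ell,r)$ as a universal profile times a constant, independent of $r$ and of the derivative moments, whose maximizer yields acceptance $0.452$ (respectively $0.574$), the precise values being obtained numerically. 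The main obstacle I anticipate is the bookkeeping of this joint expansion in $(\sigma_d,\lambda_d)$: the prox must itself be expanded through its implicit fixed-point relation, and one must carry enough terms (hence $g\in\rmC^8$) to capture the correct leading order in each of the three regimes while proving the remainders are asymptotically negligible uniformly in $d$ — identifying exactly which monomials survive and confirming that they reassemble into $K_1$, $K_2(r)$ and $K_2(0)$ is the delicate and genuinely new part relative to the gradient-based MALA analysis.
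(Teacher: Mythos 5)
Your route is essentially the paper's: expand the coordinate-wise log-acceptance ratio in powers of $\sigma_d$, expanding the proximal term through its fixed-point identity $\prox_g^{\lambda}(x)=x-\lambda g^{\prime}(\prox_g^{\lambda}(x))$, identify which monomials survive in the regimes $v=1/2$, $v=1$, $v>1$, obtain the acceptance limits from a CLT combined with $\expe{1\wedge \rme^{W}}=2\Phi(-s/2)$ for $W\sim\mathrm{N}(-s^2/2,s^2)$, and get the diffusion limit by generator convergence plus tightness, with the $0.452$/$0.574$ values from the Roberts--Rosenthal substitution. Two remarks on the expansion: the relation between the mean and variance of the limiting Gaussian is verified in the paper by explicit integration by parts ($\expe{C_2^2}=-2\expe{C_4}$, etc.) rather than invoked abstractly from stationarity; and in case~\ref{case:a} the leading fluctuation is the cross term $-\sigma_d\lambda_d\,z\,g^{\dprime}(x)g^{\prime}(x)$, linear in the Gaussian noise $z$, not the deterministic mean shift $(\sigma_d^2/2)\lambda_d g^{\dprime}g^{\prime}$ you name (which is of strictly smaller order $d^{-3/4}$); your constant $K_1^2(r)=(r^2/4)\expe{\{g^{\dprime}g^{\prime}\}^2}$ is nevertheless the correct one.

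The genuine gap is the link between your CLT and the diffusion limit. You establish the CLT for $S_d=\sum_{i=1}^d\phi_d(X_{0,i}^d,Z_{1,i}^d)$ unconditionally, averaging over both $X_0^d\sim\pi_d$ and the noise. That suffices for the limits in \ref{case:a}--\ref{case:c}, but not for the generator convergence: the generator at a fixed state $\bm{x}^d$,
\begin{equation}
\rmL_d V(\bm{x}^d)=d^{2\alpha}\expe{\left[V(y_d(x_1^d,Z_{1,1}^d))-V(x_1^d)\right]\left(\exp\Big(\sum_{i=1}^d\phi_d(x_i^d,Z_{1,i}^d)\Big)\wedge 1\right)},
\end{equation}
integrates over $Z$ only. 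To show $\rmL_d V\to \tfrac{h(\ell,r)}{2}[V^{\dprime}-g^{\prime}V^{\prime}]$ you need (i) to remove the first coordinate from the acceptance product so that the move of coordinate $1$ decouples from the acceptance factor (the paper's \Cref{prop:2_case_a} and its analogues), and (ii) a \emph{conditional} CLT: uniformly over high-probability sets of states (the paper's $F_d,S_d,T_d,H_d$), the law of $d^{-1/2}\sum_{i\geq 2}C_j(x_i^d,Z_{1,i}^d)$ given $\bm{x}^d$ must converge to the limiting Gaussian; the paper proves this via characteristic functions together with a law-of-large-numbers control of the empirical moments of the $x_i^d$ on those sets (\Cref{prop:1_case_a} and \Cref{prop:4_case_a} and their analogues). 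Only this state-uniform version lets you pass to the limit inside $\rmL_d V$ and invoke \cite[Chapter 4, Corollary 8.7]{ethier2009markov}. As written, your unconditional Lindeberg CLT does not deliver it; you would have to redo the CLT conditionally on the configuration $(x_i^d)_{i\geq 2}$ and add the uniform moment bounds on good sets, which is exactly the extra layer the paper's four propositions per case provide.
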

\begin{proof}
The proof follows that of \cite[Theorem 1, Theorem 2]{roberts1998optimal} and is postponed to Appendix A.
\end{proof}

The theorem above shows that the relative speed at which $\lambda_d$ converges to 0 influences the scaling of the resulting proximal algorithm.
In case~\ref{case:c}, $v>1$ and $\lambda_d$ decays with $d$ at a faster rate than $\sigma_d^2$. This causes the proximity map~\eqref{eq:proximity} to collapse onto the identity and therefore the proposal~\eqref{eq:differentiable_ula} is arbitrarily close to that of MALA. The resulting scaling limit also coincides with that of MALA established in \cite[Theorem 1, Theorem 2]{roberts1998optimal}.

If $\lambda_d$ and $\sigma_d^2$ decay at the same rate (case~\ref{case:b}), the amount of gradient information provided by the proximity map is controlled by $r$. Comparing our result for case~\ref{case:b} with \cite[Theorem 1]{roberts1998optimal} we find that
\begin{align}
    K_2^2(0) = \frac{1}{16}\mathbb{E}\left[g^{\dprime}(X^d_{0,1})^3\right]+\frac{5}{48}\mathbb{E}\left[g^{\dprime\prime}(X^d_{0,1})^2\right] = K_{\MALA}^2;
\end{align}
thus, we have
\begin{align}
    K_2^2(r) &= K_2^2(0) +\parenthese{\frac{r}{8} + \frac{r^2}{4}}
    \mathbb{E}\left[\{g^{\dprime}(X^d_{0,1})g^{\prime}(X^d_{0,1})\}^2 \right]
    +\frac{r}{8}\mathbb{E}\left[g^{\dprime}(X^d_{0,1})^3\right] \\
    &= K^2_{\MALA}+\parenthese{\frac{r}{8} + \frac{r^2}{4}}\mathbb{E}\left[\{g^{\dprime}(X^d_{0,1})g^{\prime}(X^d_{0,1})\}^2\right]+\frac{r}{8}\mathbb{E}\left[g^{\dprime}(X^d_{0,1})^3\right] \geq K^2_{\MALA},
\end{align}
since the convexity of $g$ implies that $g^{\dprime}\geq 0$.
In particular, $K_2^2(r)$ is an increasing function of $r$ achieving its minimum when $r\to0$ (i.e., MALA), see Figure~\ref{fig:example_speed}(a).

In case~\ref{case:a}, $v=1/2$ and $\lambda_d$ decays more slowly than $\sigma_d^2$. As a consequence, the gradient information provided by the proximity map is smaller than in cases~\ref{case:b}--\ref{case:c}, and the resulting scaling differs from that of MALA. The value of $K_1^2(r)$ is increasing in $r$ and the speed of the corresponding diffusion also depends on $r$ (see Figure~\ref{fig:example_speed}(a) gray lines and Figure~\ref{fig:example_speed}(b)).

\begin{example}[Gaussian target]
\label{ex:gaussian}
Take $g(x)=x^2/2$, $\prox_\lambda^g(x)=x/(1+\lambda)$. 
In this case, $g^\prime$ is Lipschitz continuous and we have $K_1^2(r)=r^2/4$,
$K_2^2(r) = \left(1+4r+4r^2\right)/16$ and $K_2^2(0)=K_{\MALA}^2 = 1/16$. 
The corresponding speeds are given in Figure~\ref{fig:example_speed}(a).
Optimizing for $v=1, r=0 $ (MALA) and $v=1, r=1$ (P-MALA) we obtain
\begin{align}
    h^{\MALA}( \ell, r) = 1.5639,\qquad 
    h^{\PMALA}( \ell, r) = 0.7519,
\end{align}
achieved with $ \ell^{\MALA} = 1.6503$
and $\ell^{\PMALA} =  1.1443$, respectively. The corresponding acceptance rates are those predicted by Theorem~\ref{theo:differentiable}, $a(\ell, r)= 0.574$.
For Gaussian targets, MALA is geometrically ergodic \cite{durmus:moulines:2021:mala}, and therefore the optimal choice in terms of speed of convergence is MALA which is obtained for $r=0$. The result for $r=1$ and $v=1$ are also given in \cite[Theorem 2.1]{pillai2022optimal}.
\end{example}

\begin{example}[Target with light tails]
Take $g(x)=x^4$, which gives a normalized distribution with normalizing constant $2\Gamma(5/4)$. The proximity map is 
\begin{align}
    \prox^\lambda_g(x) &= \frac{1}{2}\left[\frac{\sqrt[3]{9\lambda^2x+\sqrt{54\lambda^4x^2+3\lambda^3}}}{3^{2/3}\lambda}-\frac{1}{\sqrt[3]{27\lambda^2x+3\sqrt{54\lambda^4x^2+3\lambda^3}}}\right].
\end{align}
In this case $g^\prime$ is not Lipschitz continuous and therefore we only consider~\ref{case:a}, for which we have $K_1^2(r) = 144r^2\Gamma(11/4)/\Gamma(5/4)$.
The corresponding speed is given in Figure~\ref{fig:example_speed}(b).
\end{example}


\begin{figure}
\centering
\begin{tikzpicture}[every node/.append style={font=\normalsize}]
\node (img1) {\includegraphics[width = 0.43\textwidth]{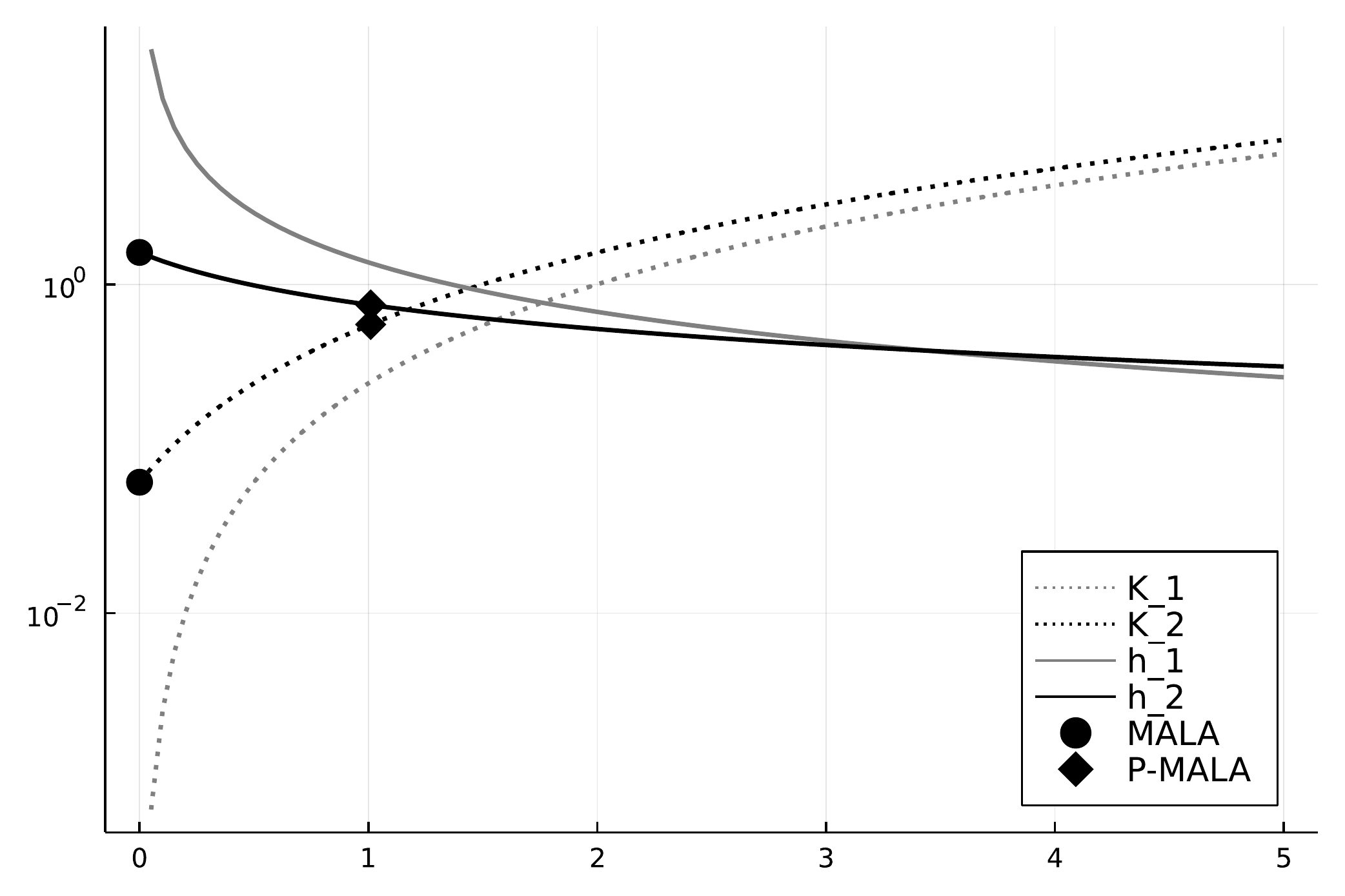}};
\node[below=of img1, node distance = 0, yshift = 1cm] {$r$};
\node[left=of img1, node distance = 0, rotate = 90, anchor = center, yshift = -1cm] {Speed};
\node[below=of img1, node distance = 0, yshift = 0.5cm] {(a) Gaussian target};
\node[right=of img1] (img2) {\includegraphics[width = 0.43\textwidth]{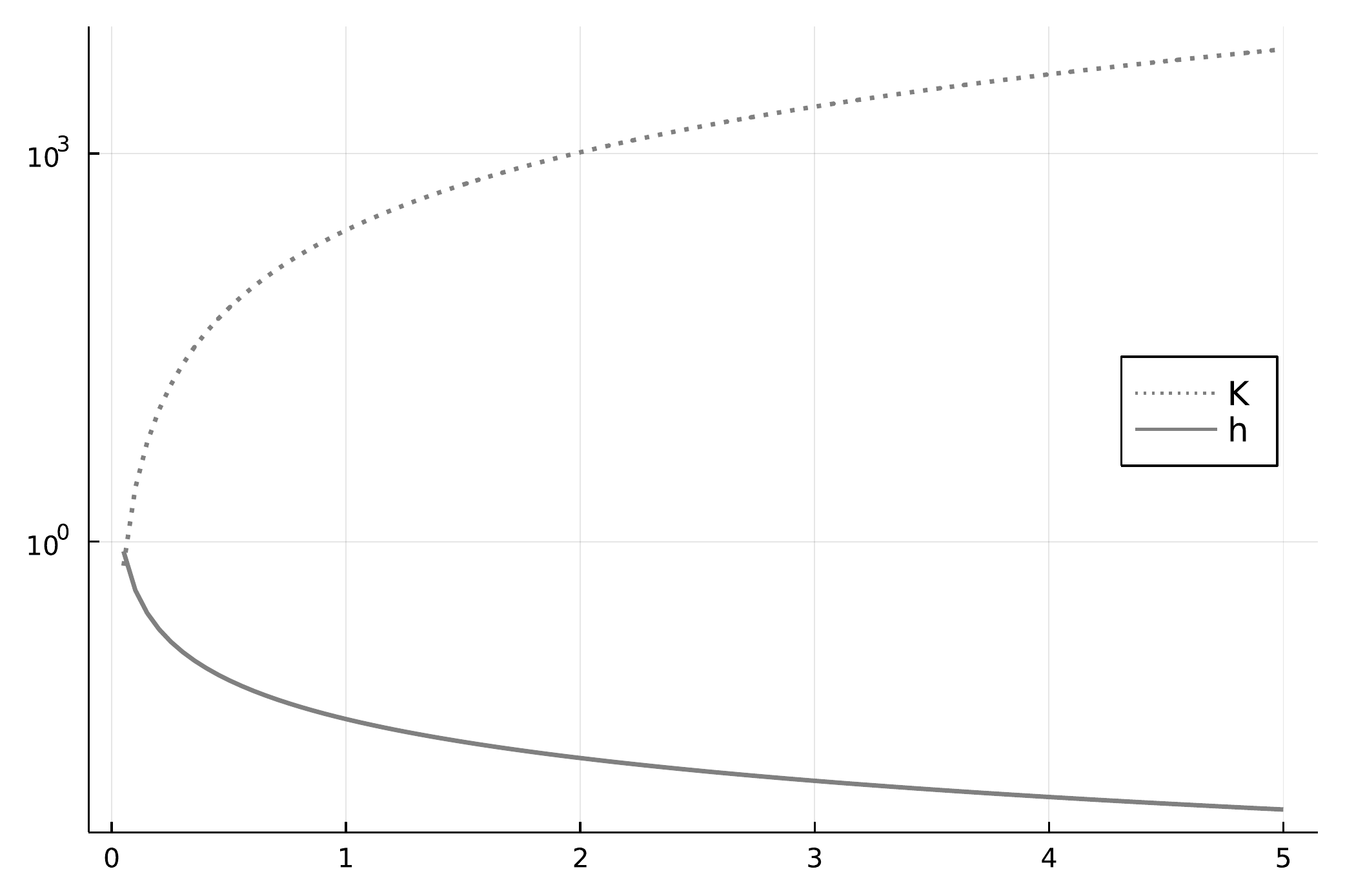}};
\node[below=of img2, node distance = 0, yshift = 1cm] {$r$};
\node[left=of img2, node distance = 0, rotate = 90, anchor = center, yshift = -1cm] {Speed};
\node[below=of img2, node distance = 0, yshift = 0.5cm] {(b) Light tail target};
\end{tikzpicture}
\caption{Value of $K$ for $i=1, 2$ and speed of the corresponding Langevin diffusion as a function of $r$ for a Gaussian target and a light tail target. We denote by $h_1$ the speed obtained in case~\ref{case:a}, by $h_2$ that obtained in~\ref{case:b}. In case~\ref{case:c} both $K_3$ and the speed $h_3$ are constant w.r.t. $r$ and coincide with that of MALA. For the Gaussian target we report the results for case~\ref{case:a}--\ref{case:c} while for the light tail target we only report~\ref{case:a}.}
\label{fig:example_speed}
\end{figure}


\subsection{Laplace target}
\label{sec:laplace}
As discussed in the introduction, MY-MALA has been widely used to quantify uncertainty in imaging applications, in which target distributions involving the $\ell^1$ norm are particularly common \cite{pereyra2016proximal, durmus2018efficient, agrawal2021optimal, zhou2022proximal}.

Here, we consider $\piLaplace_d$ to be the product of $d$ i.i.d. Laplace distributions as in~\eqref{eq:pi_d_iid},
\begin{equation}
  \label{eq:def_target_laplace}
\text{   $ \piLaplace_d(\bm{x}^d)=\prod_{i=1}^d \piLaplace(x_i^d) $, for $\bm{x}^d \in \rset^d$, where $\piLaplace(x)= 2^{-1} \exp(-\vert x\vert)$ }.
\end{equation}
For this particular choice of one-dimensional target distribution, the corresponding potential $G$ is $x\mapsto \vert x \vert$ and satisfies \Cref{ass:a0}. Then, the proximity map is given by the soft thresholding operator \cite[Section 6.1.3]{parikh2014proximal}
\begin{align}
\label{eq:soft}
    \prox_G^{\lambda}(x) = (x-\sgn(x)\lambda)\indicatorD{\vert x\vert\geq\lambda}{x} ,
\end{align}
where $\sgn : \rset \to \defEns{-1,1}$ is the sign function, given by $\sgn(x) = -1$ if $x< 0$, $\sgn(0)=0$, and $\sgn(x) = 1$ otherwise.
This operator is a continuous but not continuously differentiable map whose non-differentiability points are the extrema of the interval $[-\lambda, \lambda]$ and are controlled by the value of the proximity parameter $\lambda$.

Plugging ~\eqref{eq:soft} in~\eqref{eq:proposal_pmala}, the MY-MALA algorithm applied to $\piLaplace_d$ proposes component-wise for $i=1, \dots, d$
\begin{align}
\label{eq:laplace_proposal}
Y_{k+1,i}^d = X_{k,i}^d -\frac{\sigma^2_d}{2}\sgn(X_{k,i}^d)\indicatorDD{\vert X_{k,i}^d\vert\geq\lambda_d}-\frac{\sigma^2_d}{2\lambda_d}X_{k,i}^d\indicatorDD{\vert X_{k,i}^d\vert< \lambda_d}+\sigma_d Z^d_{k+1,i} .
\end{align}
For $X^d_{k,i}$ close to 0 (i.e., the point of non-differentiability) the MY-MALA proposal is a biased random walk around $X^d_{k,i}$, while outside the region $[-\lambda_d, \lambda_d]$ the proposal coincides with that of MALA. As $\lambda_d\to 0$ the region in which the MY-MALA proposal coincides with that of MALA increases and when $\lambda_d\approx 0$ the region $[-\lambda_d, \lambda_d]$ in which the proposal corresponds to a biased random walk is negligible, as confirmed by the asymptotic acceptance rate in Theorem~\ref{theo:acceptance_laplace}.

We also consider the case $\lambda_d=0$ for any $d$. Then, the proposal~\eqref{eq:laplace_proposal} becomes the proposal for the subgradient version of MALA: $Y_{k+1,i}^d = X_{k,i}^d -({\sigma^2_d}/{2})\sgn(X_{k,i}^d)+\sigma_d Z^d_{k+1,i}$, referred to as sG-MALA.

The proof of the optimal scaling for the Laplace distribution follows the structure of that of \cite{durmus2017optimal} for $\rmL^p$-mean differentiable distributions.
We start by characterizing the asymptotic acceptance ratio of a generic MY-MALA algorithm; contrary to Theorem~\ref{theo:differentiable} for differentiable targets, in the limit $d\to\infty$ the properties of MY-MALA do not depend on the relative speed at which $\sigma_d^2 = \ell^2 /d^{2\alpha}$ and $\lambda_d = c^2/2d^{2\beta}$  converge to $0$, as long as $\lambda_d$ decays at least at the same rate as $\sigma_d^2$. In this regime, the region in which the proposal~\eqref{eq:laplace_proposal} corresponds to a biased random walk proposal is negligible, and therefore we obtain the same scaling obtained with $\lambda_d=0$ and corresponding to sG-MALA.

\begin{theorem}
  \label{theo:acceptance_laplace}
Assume \Cref{ass:stationarity} and consider the sequence of target distributions $\{\piLaplace_d\}_{d \in\nsets}$ given in~\eqref{eq:def_target_laplace}. For any $d \in\nsets$, let $\sigma_d^2 = \ell^2 /d^{2\alpha}$ and $\lambda_d = c^2/2d^{2\beta}$ with $\alpha = 1/3$ and $\beta = v/3$ for $v\geq 1$.
Then, we have
$\lim_{d\to\infty} a_d(\ell,r) = a^{\rmL}(\ell) =
2\Phi(-\ell^{3/2}/(72\uppi)^{1/4})$, where $(a_d(\ell,r))_{d\in\nsets}$ is
defined in~\eqref{eq:ar_finite}, with $r=c^2/\ell^{2v}$, and $\Phi$ is the distribution function of a standard normal.
\end{theorem}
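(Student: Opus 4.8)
The plan is to exploit the product structure to reduce the acceptance ratio to a sum of i.i.d.\ one-dimensional contributions, prove a central limit theorem for this sum, and then read off the limit through the standard Gaussian acceptance formula. Writing $R_d$ for the ratio appearing in~\eqref{eq:ar_finite} and using \Cref{ass:stationarity} together with the product form~\eqref{eq:def_target_laplace}, I would first record that $\log R_d = \sum_{i=1}^d W_{d,i}$, where the $W_{d,i}$ are i.i.d.\ copies of a single variable $W$ built from $X\sim\piLaplace$, $Z\sim\mathcal N(0,1)$ and the proposal $Y = X - (\sigma_d^2/2)D_d(X) + \sigma_d Z$, with drift $D_d(x) = \sgn(x)\mathbbm{1}\{\abs{x}\geq\lambda_d\} + (x/\lambda_d)\mathbbm{1}\{\abs{x}<\lambda_d\}$. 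A direct expansion of the Gaussian proposal ratio gives
\[
W = \abs{X} - \abs{Y} + \tfrac12 (Y-X)\bigl(D_d(X)+D_d(Y)\bigr) + \tfrac{\sigma_d^2}{8}\bigl(D_d(X)^2 - D_d(Y)^2\bigr).
\]

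The key structural observation, special to the Laplace target, is that $W$ vanishes identically in the \emph{bulk}: whenever $X$ and $Y$ share the same sign and both lie outside $[-\lambda_d,\lambda_d]$ one has $D_d(X)=D_d(Y)=\sgn(X)$, and the three terms cancel exactly, so $W=0$. Hence $W$ is supported on two kinds of events: sign \emph{crossings} of the proposal, and excursions into the central strip $[-\lambda_d,\lambda_d]$. On a crossing with $X>0>Y$ (and $\lambda_d=o(\sigma_d)$, which holds for all $v\geq1$) the expression collapses to $W = \abs{X}-\abs{Y} = X+Y = 2X - \sigma_d^2/2 + \sigma_d Z$, of order $\sigma_d$; since $X$ must lie within $O(\sigma_d)$ of the origin for a crossing, the crossing probability is $O(\sigma_d)$. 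This explains the exponent: with $\alpha=1/3$ one has $d\,\mathbb{E}[W^2] = O(d\sigma_d^3) = O(\ell^3)$, so the variance of $\log R_d$ is nondegenerate, whereas any other $\alpha$ would send it to $0$ or $\infty$.

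For the limit law I would apply a triangular-array central limit theorem (Lyapunov's condition is easily verified since $W=O(\sigma_d)$ forces $d\,\mathbb{E}[\abs{W}^3]=O(d\sigma_d^4)\to0$) to conclude $\log R_d \lconv \mathcal N(m,s^2)$. Rescaling $X=\sigma_d u$ and using $\piLaplace(0)=1/2$, the crossing contributions from both signs give
\[
s^2 = \lim_{d\to\infty} d\,\mathbb{E}[W^2] = \ell^3 \int_0^\infty\!\!\int_{-\infty}^{-u}(2u+z)^2 \varphi(z)\,\rmd z\,\rmd u = \frac{\ell^3}{3}\int_0^\infty t^3 \varphi(t)\,\rmd t = \frac{2\ell^3}{3\sqrt{2\uppi}},
\]
where $\varphi$ is the standard normal density; equivalently $s = 2\ell^{3/2}/(72\uppi)^{1/4}$. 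To pin down the mean I would avoid a direct (and delicate) expansion and instead use the exact identity $\mathbb{E}[\rme^{\log R_d}] = 1$, valid for every $d$ because $\pi_d(x)q_d(x,y)\rme^{\log R_d} = \pi_d(y)q_d(y,x)$ integrates to one. Combined with uniform integrability of $(\rme^{\log R_d})_d$, which I would obtain by bounding $\sup_d \mathbb{E}[R_d^{1+\epsilon}]$ for some $\epsilon>0$, this forces $\mathbb{E}[\rme^{Z^*}]=1$ for the limit $Z^*\sim\mathcal N(m,s^2)$, hence $m=-s^2/2$. Since $z\mapsto \rme^z\wedge1$ is bounded and continuous, weak convergence then yields $a_d(\ell,r)\to \mathbb{E}[\rme^{Z^*}\wedge1] = 2\Phi(-s/2) = 2\Phi(-\ell^{3/2}/(72\uppi)^{1/4})$.

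The main obstacle is the rigorous control of the central strip and of the neighbouring events where exactly one of $X,Y$ falls inside $[-\lambda_d,\lambda_d]$; this is precisely where the hypothesis $v\geq1$ is used. The expected number of coordinates in the strip is $O(d\lambda_d)$ and each contributes $O(\sigma_d)$ to $W$, so the naive bound on their total effect on $\mathbb{E}[\log R_d]$ is $O(d\lambda_d\sigma_d)=O(d^{2/3-2v/3})$, which is $o(1)$ only for $v>1$ and merely $O(1)$ at the boundary $v=1$. I therefore expect the delicate part to be showing that the leading strip contribution to the mean cancels at $v=1$ (the variance contribution $O(d\lambda_d\sigma_d^2)=O(d^{1/3-2v/3})$ is already negligible for all $v\geq1$), so that these events perturb neither $s^2$ nor the identity-based value of $m$. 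This non-smooth bookkeeping, rather than the CLT itself, is the technical heart of the proof and the point at which it departs from the differentiable case of \Cref{theo:differentiable}; the same scheme also delivers the sG-MALA limit $\lambda_d=0$ as the degenerate case with no central strip.
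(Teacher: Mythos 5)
Your proposal follows the same architecture as the paper's proof: reduction to a sum of $d$ i.i.d.\ one-dimensional log-ratios, identification of $\alpha=1/3$ by making $d\,\mathbb{E}[\phi_{d,1}^2]$ nondegenerate, a triangular-array CLT verified through a Lyapunov condition (the paper invokes the martingale-array CLT of Shiryaev for exactly this reason, cf.\ \Cref{rmk:clt}), and passage to the limit of $a_d(\ell,r)$ via the bounded continuous map $x\mapsto \rme^x\wedge 1$. Your structural observation that the log-ratio vanishes identically when $X$ and $Y$ have the same sign and both lie outside the strip is correct, and your crossing computation of the variance,
\begin{equation}
s^2=\ell^3\int_0^\infty\int_{-\infty}^{-u}(2u+z)^2\varphi(z)\,\rmd z\,\rmd u=\frac{2\ell^3}{3\sqrt{2\uppi}} ,
\end{equation}
where $\varphi$ is the standard normal density, agrees with \Cref{prop:laplace_var}. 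The genuine difference is the mean: the paper computes $\lim_d d\,\mathbb{E}[\phi_{d,1}]=-\ell^3/(3\sqrt{2\uppi})$ explicitly (\Cref{prop:laplace_exp}), and it is precisely there that it verifies that the two $O(1)$, $r$-dependent strip-boundary contributions at $v=1$ (equal to $\pm\ell^3 r/(2\sqrt{2\uppi})$, from the regions where exactly one of $X,Y$ lies in the strip) cancel exactly. You instead want $m=-s^2/2$ to come for free from the exact identity $\mathbb{E}[R_d]=1$ plus uniform integrability; if that step were available, your route would indeed absorb the $v=1$ cancellation automatically and would explain conceptually why the limit is $r$-independent.

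The gap is the uniform integrability step, and it is not routine: it is exactly where the deferred difficulty resurfaces. Since $R_d$ is a product of i.i.d.\ factors, $\mathbb{E}[R_d^{1+\epsilon}]=\bigl(\mathbb{E}[\rme^{(1+\epsilon)\phi_{d,1}}]\bigr)^d$, so your proposed bound $\sup_d\mathbb{E}[R_d^{1+\epsilon}]<\infty$ requires $\mathbb{E}[\rme^{(1+\epsilon)\phi_{d,1}}]\le 1+C/d$; expanding around $\mathbb{E}[\rme^{\phi_{d,1}}]=1$, this forces $\abs{\mathbb{E}[\phi_{d,1}]}=O(1/d)$ together with second-moment control. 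But the naive bounds you use elsewhere give only $\mathbb{E}[\abs{\phi_{d,1}}\1\{\text{crossing}\}]=O(\sigma_d^2)$, i.e.\ $d\,\mathbb{E}[\abs{\phi_{d,1}}\1\{\text{crossing}\}]=O(d^{1/3})$: the $O(1/d)$ size of the mean relies on the exact first-order cancellation $\int_0^\infty\int_{-\infty}^{-u}(2u+z)\varphi(z)\,\rmd z\,\rmd u=0$ on the crossing set (and on analogous signed bookkeeping in the strip regions at $v=1$). So as written the argument is circular: UI was introduced to avoid estimating the mean, yet establishing UI needs mean-level estimates of exactly the kind the paper carries out. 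This is not a cosmetic issue, because without it one cannot rule out $d\,\mathbb{E}[\phi_{d,1}]\to-\infty$ (Jensen only gives $\mathbb{E}[\phi_{d,1}]\le 0$), a scenario compatible with both your CLT and with $\mathbb{E}[R_d]=1$ (the unit expectation being carried by vanishing-probability events), and under which $a_d(\ell,r)\to 0$ rather than $2\Phi(-\ell^{3/2}/(72\uppi)^{1/4})$. One clean way to break the circle, absent from your proposal, is the swap symmetry $\mathbb{E}[g(X,Y)\rme^{\phi_d(X,Y)}]=\mathbb{E}[g(Y,X)]$ combined with $\phi_d(y,x)=-\phi_d(x,y)$, which yields $\mathbb{E}[\phi_{d,1}]=-\tfrac12\,\mathbb{E}\bigl[\phi_{d,1}(\rme^{\phi_{d,1}}-1)\bigr]$ and hence $\abs{\mathbb{E}[\phi_{d,1}]}\le\tfrac12\,\mathbb{E}\bigl[\phi_{d,1}^2\,\rme^{\max(\phi_{d,1},0)}\bigr]=O(\sigma_d^3)$, a purely quadratic bound needing no cancellations; with that (and the Gaussian-tail estimates you already use) your scheme closes. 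Alternatively, compute the mean directly as the paper does.
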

\begin{proof}
The proof is postponed to Appendix C.1.
\end{proof}
Theorem~\ref{theo:acceptance_laplace} shows that the asymptotic average acceptance rate $a^{\rmL}(\ell)$ does not depend on $r$ and as a result on $c$.

Having identified the possible scaling for MY-MALA with Laplace target, we are now ready to show weak convergence to the appropriate Langevin diffusion. To this end, we adapt the proof strategy followed in \cite{jourdain2014optimal} and \cite{durmus2017optimal}.

As for the differentiable case, consider the linear interpolation $(L_{t}^d)_{t\geq0}$ of the first component of the Markov chain $(X^d_k)_{k\geq0 }$ given in~\eqref{eq:Yt_differentiable}. 
For any $d \in \nsets$, denote by $\nu_d$ the law of the process
$(L_{t}^d)_{t\geq0}$ on the space of continuous functions from $\rset_+$ to
$\rset$, $\mathrm{C}(\real^+, \real)$, endowed with the topology of uniform
convergence over compact sets and its corresponding $\sigma$-field.
We first show that the sequence  $(\nu_d)_{d\in \nsets}$, admits a weak limit point as $d\to\infty$.
\begin{prop}
\label{prop:tightness}
Assume \Cref{ass:stationarity} and consider the sequence of target distributions $\{\piLaplace_d\}_{d \in\nsets}$ given in~\eqref{eq:def_target_laplace}. For any $d \in\nsets$, let $\sigma_d^2 = \ell^2 /d^{2\alpha}$ and $\lambda_d = c^2/2d^{2\beta}$ with $\alpha = 1/3$ and $\beta = v/3$.
The sequence $(\nu_d)_{d\in \nsets}$ is tight in $\mathsf{M}^1\left( \mathrm{C}(\real^+, \real)\right)$, the set of probability measures acting on $\mathrm{C}(\real^+, \real)$.
\end{prop}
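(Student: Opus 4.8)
The plan is to establish tightness in $\mathrm{C}(\real^+,\real)$ by fixing a finite horizon $[0,T]$ (the general case follows since the restriction maps are continuous and tightness on each $[0,T]$ can be patched) and verifying a Kolmogorov--Chentsov-type moment criterion: it suffices to produce a constant $C_T$, independent of $d$, with
\[
\expe{\abs{L_t^d - L_s^d}^4} \le C_T \abs{t-s}^2, \qquad 0 \le s \le t \le T,
\]
together with tightness of the marginal $L_0^d$. The latter is immediate: under \Cref{ass:stationarity} the first coordinate $X_{0,1}^d = L_0^d$ has the fixed law $\piLaplace$ for every $d$, so $(L_0^d)_d$ is trivially tight. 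Since $L^d$ is the piecewise-linear interpolation~\eqref{eq:Yt_differentiable} on the grid of mesh $d^{-2\alpha}=d^{-2/3}$, it is enough to prove the moment bound for $s,t$ on the grid; within one grid cell the interpolation is affine, so sub-mesh increments are dominated by a single one-step increment and are handled directly.

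Writing $D_k := X^d_{k+1,1}-X^d_{k,1} = \mathrm{b}_{k+1}(Y^d_{k+1,1}-X^d_{k,1})$ for the one-step increment of the first coordinate and $\mathcal{F}_k$ for the $\sigma$-field generated by $(X_0^d,(Z^d_j,U_j)_{j\le k})$, I would decompose a telescoped grid increment $L^d_t-L^d_s=\sum_k D_k$ (the sum over the $n\approx d^{2/3}\abs{t-s}$ steps in $[s,t]$) as $D_k = \Delta M_k + A_k$, with $A_k := \expe{D_k\mid\mathcal{F}_k}$ predictable and $\Delta M_k := D_k-A_k$ a martingale-difference sequence. The martingale part is controlled by a discrete Burkholder--Davis--Gundy (or Rosenthal) inequality,
\[
\expe{\Big(\textstyle\sum_k \Delta M_k\Big)^4}\le C\Big[\Big(\textstyle\sum_k \expe{\Delta M_k^2}\Big)^2 + \textstyle\sum_k \expe{\Delta M_k^4}\Big].
\]
From~\eqref{eq:laplace_proposal} one has the pathwise conditional bounds $\expe{D_k^2\mid\mathcal{F}_k}\le C\sigma_d^2$ and $\expe{D_k^4\mid\mathcal{F}_k}\le C\sigma_d^4$, valid irrespective of the current state, since the proposal drift is bounded by $\sigma_d^2/2$ and $Z^d_{k+1,1}$ is standard Gaussian. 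Summing the diffusive scale gives $\sum_k\expe{\Delta M_k^2}\le Cn\sigma_d^2 = C\ell^2\abs{t-s}$, whose square is $O(\abs{t-s}^2)$, while $\sum_k\expe{\Delta M_k^4}\le Cn\sigma_d^4 = O(\abs{t-s}\,d^{-2/3})$ is of lower order, so the martingale part obeys the required bound.

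The delicate part is the predictable drift. As $A_k$ is $\mathcal{F}_k$-measurable it suffices to show $\abs{A_k}\le C\sigma_d^2$, for then $\abs{\sum_k A_k}\le Cn\sigma_d^2 = O(\abs{t-s})$ and its fourth power is $O(\abs{t-s}^4)\le C_T\abs{t-s}^2$. Splitting $A_k = \delta_k\,\expe{\mathrm{b}_{k+1}\mid\mathcal{F}_k}+\sigma_d\,\expe{\mathrm{b}_{k+1}Z^d_{k+1,1}\mid\mathcal{F}_k}$, where $\delta_k := Y^d_{k+1,1}-X^d_{k,1}-\sigma_d Z^d_{k+1,1}$ is the $\mathcal{F}_k$-measurable proposal drift with $\abs{\delta_k}\le\sigma_d^2/2$ by~\eqref{eq:laplace_proposal}, the first summand is bounded by $\sigma_d^2/2$. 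The genuine obstacle is the correlation term $\sigma_d\,\expe{\mathrm{b}_{k+1}Z^d_{k+1,1}\mid\mathcal{F}_k}$: the acceptance indicator $\mathrm{b}_{k+1}$ couples the first coordinate to all $d$ coordinates through the global ratio in~\eqref{eq:p_mala_def}, so it does not vanish by independence. I would treat it by conditioning on the other coordinates, writing the conditional acceptance probability as $z\mapsto\min(1,\exp(\zeta_1(z)+R))$, where $\zeta_1$ is the first-coordinate log-ratio ($Y^d_{k+1,1}=m(X^d_{k,1})+\sigma_d z$ with $m$ the proposal mean map) and $R$, independent of $z$, collects the other coordinates. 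Using that $x\mapsto\min(1,\rme^x)$ is $1$-Lipschitz and $\expe{Z}=0$ yields
\[
\sigma_d\,\abs{\expe{\mathrm{b}_{k+1}Z^d_{k+1,1}\mid\mathcal{F}_k}}\le \sigma_d\,\expe{\abs{Z}\,\abs{\zeta_1(Z)-\zeta_1(0)}\mid\mathcal{F}_k}.
\]
The crucial mechanism is that $\mathrm{d}\zeta_1/\mathrm{d}z = O(\sigma_d)$ rather than $O(1)$: the leading $z$-dependence from the Gaussian proposal densities cancels (exactly as in the computation underlying Theorem~\ref{theo:acceptance_laplace}), leaving $\abs{\zeta_1(z)-\zeta_1(0)}\le C\sigma_d\abs{z}$ and hence the bound $O(\sigma_d^2)$. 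The one subtlety is the soft-thresholding region $\{\abs{\cdot}<\lambda_d\}$ of~\eqref{eq:laplace_proposal}, where $m$ has large slope $\sigma_d^2/(2\lambda_d)$; since this event has probability $O(\lambda_d)\to0$ under stationarity, its contribution stays $O(\sigma_d^2)$ after integration, and I would isolate it with a separate elementary estimate.

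Combining the two parts gives $\expe{\abs{L^d_t-L^d_s}^4}\le C_T\abs{t-s}^2$ uniformly in $d$, which with the trivial marginal tightness delivers tightness of $(\nu_d)_{d\in\nsets}$ by the Kolmogorov--Chentsov criterion. I expect the correlation/drift estimate of the previous paragraph to be the main obstacle; note, however, that in contrast to the identification of the limiting diffusion, tightness requires only order-of-magnitude control of $\zeta_1$ and its $z$-derivative, not their exact asymptotics, so the crude Lipschitz bounds above should suffice.
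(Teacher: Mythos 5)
Your high-level architecture is sound and genuinely different from the paper's. The paper telescopes the grid increment $X^d_{k_2,1}-X^d_{k_1,1}$ into accepted jumps, splits off the rejection events by H\"older, and bounds the fourth moment of the rejection sum through a combinatorial expansion over quadruplets of time indices; this forces the introduction of an auxiliary always-rejecting chain $\widetilde{X}^d$ to recover conditional independence (Lemma~\ref{lemma:4th_moment}). Your Doob decomposition $D_k=\Delta M_k+A_k$ with a Burkholder--Rosenthal bound for the martingale part avoids both the frozen-chain device and the index combinatorics (state the Rosenthal bound with conditional variances; here they are bounded a.s.\ by $C\sigma_d^2$, so this is harmless), and your reduction of the whole problem to the single pointwise estimate $\abs{A_k}\le C\sigma_d^2$ is the right one. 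Note in particular that the proposal drift satisfies $\abs{\delta_k}\le\sigma_d^2/2$ also \emph{inside} the threshold window, since $\abs{x}<\lambda_d$ gives $\sigma_d^2\abs{x}/(2\lambda_d)\le\sigma_d^2/2$, so the $\delta_k$ term needs no case distinction. Both proofs then stand or fall with the same quantity, the conditional correlation $\sigma_d\expe{\mathrm{b}_{k+1}Z^d_{k+1,1}\mid\mathcal{F}_k}$, which the paper meets as the factors bounded by $Cd^{-\alpha}$ in the proof of Lemma~\ref{lemma:4th_moment}.

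The gap is in your justification of that estimate. The claimed mechanism, $\mathrm{d}\zeta_1/\mathrm{d}z=O(\sigma_d)$ and hence $\abs{\zeta_1(z)-\zeta_1(0)}\le C\sigma_d\abs{z}$, is false: $z\mapsto\phi_d(x,z)$ is not Lipschitz, indeed not even continuous. Because the reverse-proposal mean involves $\sgn$ and the thresholding indicator, the log-ratio~\eqref{eq:laplace_ar} jumps where the proposal changes sign or crosses $\pm\lambda_d$. Concretely, for $\lambda_d=0$ and $x>\sigma_d^2/2$ one computes $\phi_d(x,z)=0$ whenever $y=x-\sigma_d^2/2+\sigma_d z>0$, but $\phi_d(x,z)=2x-\sigma_d^2/2+\sigma_d z$ whenever $y<0$, so $\phi_d(x,\cdot)$ has a jump of size $x$ at $y=0$; analogous jumps occur at $y=\pm\lambda_d$ when $r>0$. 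No derivative bound exists. What is true, and what your proof actually needs, is an integrated substitute: $\expe{\phi_d(x,Z)^2}\le C\sigma_d^2$ \emph{uniformly in} $x\in\real$, which is precisely the paper's Lemma~\ref{lemma:tightness_moment}; with it, Cauchy--Schwarz gives $\sigma_d\expe{\abs{Z}\,\abs{\phi_d(x,Z)}}\le C\sigma_d^2$ and your argument closes. Proving that lemma is the real work: a region-by-region computation over the four cases generated by the thresholding, where one must verify the cancellation of the $z^2/2$ terms coming from the forward and reverse Gaussian densities and check that the jumps occur only where $\sigma_d\abs{z}$ is itself of order $\abs{x}$. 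Your proposal defers exactly this point, so as written its central estimate is unproven.

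Relatedly, your fallback for the thresholding window --- that it has probability $O(\lambda_d)$ under stationarity, so ``its contribution stays $O(\sigma_d^2)$ after integration'' --- is the wrong mechanism. The bound on $A_k$ must hold conditionally on the current state, since it sits inside a predictable term whose fourth moment you need. If inside the window you only had the trivial bound $\abs{A_k}\le C\sigma_d$ and tried to rescue it by counting visits, you would get terms of order $\sigma_d^4 n^4\,\pr\parenthese{\abs{X}<\lambda_d}\asymp d^{2/3}(t-s)^4$, which is not $O\parenthese{(t-s)^2}$ uniformly in $d$. Fortunately no rescue is needed: the uniform-in-$x$ bound of Lemma~\ref{lemma:tightness_moment} covers the window, and once you have it your route is, if anything, shorter than the paper's.
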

\begin{proof}
See Appendix C.2.
\end{proof}

By Prokhorov's theorem, the tightness of $(\nu_d)_{d\in \nsets}$ implies existence of a weak limit point $\nu$. 
In our next result, we give a sufficient condition to show that any limit point of $(\nu_d)_{d\in \nsets}$ coincides with the law of a solution of:
\begin{align}
\label{eq:sde_langevin_laplace}
\rmd L_t = [h^\rmL(\ell)]^{1/2}\rmd B_t - \frac{h^\rmL(\ell)}{2}\sgn(L_t)\rmd t .
\end{align}

To this end, we consider the martingale problem (see \cite{stroock:varadhan:1979}) associated with~\eqref{eq:sde_langevin_laplace}, that we now present.
Let us denote by $\rmC_{\rmc}^\infty(\real,\real)$ the subset of functions of $\rmC(\real,\real)$ which are infinitely many times differentiable and with compact support, and define the generator  of~\eqref{eq:sde_langevin_laplace} for $V \in \rmC_{\rmc}^\infty(\real,\real)$  by
\begin{align}
\label{eq:laplace_generator}
\rmL V (x) = \frac{h^\rmL(\ell)}{2}\left[V^{\dprime}(x) -\sgn(x)  V^\prime(x)\right] .
\end{align}
Denote by $(W_t)_{t\geq 0}$ the canonical process on $\rmC(\real_+,\real)$, $W_t : \{w_s\}_{s \geq 0} \mapsto w_t$ and the corresponding filtration by $(\mathfrak{F}_t)_{t\geq 0}$.
A probability measure $\nu$ is said to solve the martingale problem associated with~\eqref{eq:sde_langevin_laplace} with initial distribution $\piLaplace$,  if the pushforward of $\nu$ by $W_0$ is $\piLaplace$ and if for all $V\in \rmC_{\rmc}^\infty(\real,\real)$, the process
\begin{align}
\left(V(W_t)-V(W_0)-\int_0^t \rmL V(W_u)\rmd u\right)_{t\geq 0}
\end{align}
is a martingale with respect to $\nu$ and the filtration $(\mathfrak{F}_t)_{t\geq 0}$. The following proposition gives a sufficient condition to prove that $\nu$ is a solution of the martingale problem:

\begin{prop}\label{prop:reduction}
Suppose that for any $V \in \rmC_{\rmc}^\infty(\real,\real)$, $m\in \mathbb{N}$,  $\rho: \real^m \rightarrow \real$ bounded and continuous, and for any $0 \leq t_1 \leq ... \leq t_m \leq s \leq t$:
\begin{equation}\label{eq:martingale_reduction}
\lim_{d\to +\infty}\mathbb{E}^{\nu_d} \left[ \left( V(W_t)-V(W_s)-\int_s^t \rmL V(W_u) \rmd u \right)\rho(W_{t_1},...,W_{t_m}) \right] = 0.
\end{equation}
Then any limit point of $(\nu_d)_{d\in \nsets}$ on $\mathsf{M}^1\left( \rmC(\real^+, \real)\right)$ is a solution to the martingale problem associated with~\eqref{eq:sde_langevin_laplace}.
\end{prop}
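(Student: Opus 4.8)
The plan is to invoke the standard weak-convergence criterion for martingale problems: a probability measure $\nu$ on $\rmC(\real_+,\real)$ solves the martingale problem associated with~\eqref{eq:sde_langevin_laplace} if and only if its time-zero marginal is $\piLaplace$ and, for every $V\in\rmC_{\rmc}^\infty(\real,\real)$, every $0\le t_1\le\dots\le t_m\le s\le t$, and every bounded continuous $\rho:\real^m\to\real$,
\[
\mathbb{E}^{\nu}\left[\left(V(W_t)-V(W_s)-\int_s^t \rmL V(W_u)\,\rmd u\right)\rho(W_{t_1},\dots,W_{t_m})\right]=0 .
\]
This is precisely the assertion that $M_t^V:=V(W_t)-V(W_0)-\int_0^t\rmL V(W_u)\rmd u$ is an $(\mathfrak{F}_t)$-martingale, rewritten by testing the increment $M_t^V-M_s^V$ against the cylinder functionals $\rho(W_{t_1},\dots,W_{t_m})$, which generate $\mathfrak{F}_s$. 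Denoting by $F$ the bounded path functional inside the expectation in~\eqref{eq:martingale_reduction}, the task thus reduces to proving $\mathbb{E}^{\nu_{d_k}}[F]\to\mathbb{E}^{\nu}[F]$ along a subsequence with $\nu_{d_k}\Rightarrow\nu$, since~\eqref{eq:martingale_reduction} already forces $\mathbb{E}^{\nu_{d}}[F]\to 0$. The initial condition is immediate: under \Cref{ass:stationarity} the time-zero marginal of each $\nu_d$ is exactly $\piLaplace$, and this persists in the limit.

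First I would verify that $F$ is bounded and continuous off a small set. The evaluation maps $w\mapsto w_{t_i},w_s,w_t$ are continuous for the topology of uniform convergence on compacts, and $V,V',V'',\rho$ are bounded (the first three by compact support) and continuous, so the only obstruction to continuity of $F$ is the term $\int_s^t \sgn(w_u)V'(w_u)\,\rmd u$ hidden in $\int_s^t\rmL V(w_u)\rmd u$. Since $\sgn$ is continuous away from the origin, a bounded-convergence argument shows this functional is continuous at every path $w$ whose zero set $Z(w)=\{u\in[s,t]:w_u=0\}$ has zero Lebesgue measure; hence $F$ is continuous at every such $w$. Writing $\mathcal{D}=\{w:\mathrm{Leb}(Z(w))>0\}$ for the set of possible discontinuities, the extended continuous mapping theorem delivers $\mathbb{E}^{\nu_{d_k}}[F]\to\mathbb{E}^{\nu}[F]$ as soon as $\nu(\mathcal{D})=0$.

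The hard part will be establishing $\nu(\mathcal{D})=0$, i.e.\ that the limit process spends zero time at the origin — exactly where the non-differentiability of the Laplace potential bites. By Fubini it suffices to show $\nu(W_u=0)=0$ for Lebesgue-almost every $u$, which I would obtain from a uniform-in-$d$ control of the mass $L_u^d$ places near $0$. The key point is that, although the one-step proposal density blows up as $\sigma_d\to0$, the interpolated value $L_u^d$ differs from the chain value $X^d_{\floor{d^{2\alpha}u},1}$ by at most one increment of~\eqref{eq:laplace_proposal}, which is $\bigO_{\pr}(\sigma_d)$ (the problematic term $\tfrac{\sigma_d^2}{2\lambda_d}X\indicatorDD{|X|<\lambda_d}$ is bounded by $\sigma_d^2/2$ on its support). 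At stationarity $X^d_{\floor{d^{2\alpha}u},1}\sim\piLaplace$ has density bounded by $1/2$, so choosing $\delta_d\to0$ with $\sigma_d/\delta_d\to0$ gives
\[
\nu_d(|W_u|<\epsilon)\le \pr\!\left(|X^d_{\floor{d^{2\alpha}u},1}|<\epsilon+\delta_d\right)+\pr\!\left(\tfrac{\sigma_d^2}{2}+\sigma_d|Z|>\delta_d\right)\le (\epsilon+\delta_d)+o_d(1) .
\]
Applying the Portmanteau theorem to the open set $\{|W_u|<\epsilon\}$ then yields $\nu(|W_u|<\epsilon)\le\liminf_k\nu_{d_k}(|W_u|<\epsilon)\le\epsilon$, and letting $\epsilon\downarrow0$ gives $\nu(W_u=0)=0$, hence $\nu(\mathcal{D})=0$.

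With $\nu(\mathcal{D})=0$ in hand, the extended continuous mapping theorem (a bounded, $\nu$-a.s.\ continuous functional passes to the weak limit) gives $\mathbb{E}^{\nu_{d_k}}[F]\to\mathbb{E}^{\nu}[F]$; together with the hypothesis $\mathbb{E}^{\nu_{d}}[F]\to0$ this forces $\mathbb{E}^{\nu}[F]=0$. Since $V$, $m$, $\rho$ and the times $t_1,\dots,t_m,s,t$ were arbitrary, every limit point $\nu$ of $(\nu_d)_{d\in\nsets}$ solves the martingale problem associated with~\eqref{eq:sde_langevin_laplace}, which is the claim.
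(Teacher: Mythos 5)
Your proof is correct and follows the paper's strategy in all essentials: reduce the martingale property to the cylinder-functional identity, observe that the path functional inside the expectation is bounded and continuous at every path spending zero Lebesgue time at the origin, show that any limit point charges only such paths, and invoke the extended continuous mapping theorem to pass the hypothesis~\eqref{eq:martingale_reduction} to the limit. The one step you implement differently is the ``no time at zero'' property. The paper first proves a stronger statement (its Lemma~\ref{lemma:pushforward}): the bound $\expe{\vert L_u^d - X^d_{\floor{d^{2\alpha}u},1}\vert} \leq C d^{-\alpha}$ identifies the pushforward of any limit point $\nu$ by $W_u$ as exactly $\piLaplace$, after which Fubini--Tonelli and $\piLaplace(\{0\})=0$ give $\nu$-a.s.\ zero occupation of the origin. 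You instead prove only what is needed: the same one-increment bound $\vert L_u^d - X^d_{\floor{d^{2\alpha}u},1}\vert \leq \sigma_d \vert Z\vert + \sigma_d^2/2$, combined with the Laplace density being bounded by $1/2$ and a Portmanteau argument on the open sets $\{\vert W_u\vert <\epsilon\}$, yields $\nu(W_u=0)=0$ directly, without identifying the full marginal. Both routes rest on the same two ingredients --- stationarity of the chain and the $\bigO_{\pr}(\sigma_d)$ size of the interpolation increment --- so the difference is one of economy rather than substance: the paper's version buys the complete identification of the limiting marginals (which it reuses when verifying the initial condition and in the proof of Theorem~\ref{theo:laplace_diffusion}), while yours is slightly leaner for the purposes of this proposition alone.
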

\begin{proof}
See Appendix C.3.
\end{proof}

Finally, we use this sufficient condition to establish that any limit point of $(\nu_d)_{d\in \nsets}$ is a solution of the martingale problem for~\eqref{eq:sde_langevin_laplace}. Uniqueness in law of solutions of~\eqref{eq:sde_langevin_laplace} allows to conclude that $(L_{t}^d)_{t\geq 0}$ converges weakly to the Langevin diffusion~\eqref{eq:sde_langevin_laplace}, which establishes our main result.

\begin{theorem}
\label{theo:laplace_diffusion}
The sequence of processes $\{(L_{t}^d)_{t\geq 0} \, :\, d \in \nsets\}$ converges in distribution towards $(L_t)_{t\geq 0}$, solution of~\eqref{eq:sde_langevin_laplace} as $d\to\infty$, with $h^\rmL(\ell)=\ell^2a^{\rmL}(\ell)$ and $a^{\rmL}$ defined in Theorem~\ref{theo:acceptance_laplace}.
In addition, $h^\rmL$ is maximized at the unique value of $\ell$ such that $a^{\rmL}(\ell)= 0.360$.
\end{theorem}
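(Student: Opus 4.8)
The plan is to upgrade the relative compactness established in Proposition~\ref{prop:tightness} to genuine weak convergence by identifying every subsequential limit through the martingale problem, and then to carry out the scalar optimization separately. First, by Proposition~\ref{prop:tightness} and Prokhorov's theorem the family $(\nu_d)_{d\in\nsets}$ is relatively compact, so it suffices to show that every weak limit point $\nu$ is the law of a solution of~\eqref{eq:sde_langevin_laplace} started from $\piLaplace$. Proposition~\ref{prop:reduction} reduces this to verifying the single limit~\eqref{eq:martingale_reduction} for all $V\in\rmC_{\rmc}^\infty(\real,\real)$, all bounded continuous $\rho$, and all ordered times $0\leq t_1\leq\dots\leq t_m\leq s\leq t$. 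Since $\rho(W_{t_1},\dots,W_{t_m})$ is bounded and measurable with respect to the past, the content of~\eqref{eq:martingale_reduction} is precisely that the rescaled one-step drift of the first coordinate converges to the generator~\eqref{eq:laplace_generator}.

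Concretely, the heart of the argument is to establish, in an $\rmL^1(\piLaplace)$-averaged sense, the generator convergence
\[
d^{2/3}\,\mathbb{E}\bigl[V(X_{k+1,1}^d)-V(X_{k,1}^d)\mid X_{k,1}^d=x\bigr]\;\longrightarrow\;\frac{h^\rmL(\ell)}{2}\bigl[V^{\dprime}(x)-\sgn(x)V^\prime(x)\bigr],
\]
with $h^\rmL(\ell)=\ell^2 a^\rmL(\ell)$ and $a^\rmL$ as in Theorem~\ref{theo:acceptance_laplace}. Writing the accepted move as $\mathrm{b}_{k+1}(Y_{k+1,1}^d-x)$ and Taylor-expanding $V$ (whose derivatives are bounded with compact support), the task collapses to the acceptance-weighted moments $\mathbb{E}[\mathrm{b}_{k+1}(Y_{k+1,1}^d-x)]$ and $\mathbb{E}[\mathrm{b}_{k+1}(Y_{k+1,1}^d-x)^2]$, since $Y_{k+1,1}^d-x=\bigO(\sigma_d)=\bigO(d^{-1/3})$ renders the remainder negligible after multiplication by $d^{2/3}$. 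To evaluate these I would reuse the engine of Theorem~\ref{theo:acceptance_laplace}: decompose the log-acceptance ratio as $W_1^d+\sum_{i\geq 2}W_i^d$, where $W_i^d$ is the contribution of coordinate $i$; show by a triangular-array central limit theorem that $\sum_{i\geq 2}W_i^d$ converges to a Gaussian whose mean/variance ratio is pinned down by detailed balance (producing the factor $a^\rmL(\ell)$); and control the correlation between $\mathrm{b}_{k+1}$ and the first-coordinate increment through the explicit dependence of $W_1^d$ on $Z_{k+1,1}^d$. Propagating these limits yields acceptance-weighted drift $\to-(h^\rmL(\ell)/2)\sgn(x)$ and variance $\to h^\rmL(\ell)$, which is exactly~\eqref{eq:laplace_generator}.

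I expect the main obstacle to be the non-smoothness at the origin. Two effects must be shown negligible: the biased-random-walk regime on $\{|x|<\lambda_d\}$, whose Lebesgue measure is $\bigO(\lambda_d)\to0$, and the sign-crossing events in which the current state and the proposal straddle $0$, on which the per-coordinate log-ratio no longer cancels to leading order and which is precisely the source of the nondegenerate limiting acceptance (cf.\ the $\rmL^p$-mean-differentiability mechanism of~\cite{durmus2017optimal}). One must therefore quantify these contributions carefully, using $\lambda_d\to0$ together with the integrability of $\piLaplace$, to conclude that the limiting drift coefficient is exactly the discontinuous function $-\sgn$ and not a mollified surrogate.

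Finally, granting the generator convergence, Proposition~\ref{prop:reduction} shows every limit point solves the martingale problem for~\eqref{eq:sde_langevin_laplace}. Since~\eqref{eq:sde_langevin_laplace} has constant non-degenerate diffusion coefficient and bounded measurable (indeed monotone) drift $-(h^\rmL(\ell)/2)\sgn(\cdot)$, its martingale problem is well-posed, so uniqueness in law holds; hence all subsequential limits coincide and $(L_t^d)_{t\geq0}$ converges weakly to~\eqref{eq:sde_langevin_laplace}. For the optimality claim I would maximize $h^\rmL(\ell)=2\ell^2\Phi\!\left(-\ell^{3/2}/(72\uppi)^{1/4}\right)$ over $\ell>0$: the first-order condition $2a^\rmL(\ell)+\ell\,(a^\rmL)^\prime(\ell)=0$ becomes, after substituting $u=\ell^{3/2}/(72\uppi)^{1/4}$, the transcendental equation $4\Phi(-u)=3u\,\phi(u)$, which has a unique positive root $u^\star$; at this root $a^\rmL=2\Phi(-u^\star)=0.360$, giving both uniqueness of the optimal $\ell$ and the stated optimal acceptance rate.
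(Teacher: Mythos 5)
Your proposal follows essentially the same route as the paper's proof: tightness (Proposition~\ref{prop:tightness}) plus Prokhorov, reduction to the martingale problem via Proposition~\ref{prop:reduction}, identification of every limit point through convergence of the acceptance-weighted one-step drift and squared-increment moments to the coefficients of~\eqref{eq:laplace_generator} — which the paper packages as the explicit martingale approximation of Proposition~\ref{prop:martingale_approximation}, whose terms $T_1^d$ and $T_2^d$ are exactly your acceptance-weighted first and second moments, with Lemma~\ref{lemma:sgn_expe} controlling the sign-crossing events and the $\{|x|<\lambda_d\}$ region you flag — followed by uniqueness in law for~\eqref{eq:sde_langevin_laplace}. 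Your explicit maximization step (first-order condition reducing to $4\Phi(-u)=3u\phi(u)$, $\phi$ the standard normal density, with unique positive root $u^\star$ at which $a^{\rmL}=2\Phi(-u^\star)\approx 0.360$) is correct and spells out a computation the paper leaves implicit.
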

\begin{proof}
See Appendix C.4.
\end{proof}


\section{Practical implications and numerical simulations}
\label{sec:implications}

\subsection{Practical implications}
The optimal scaling results in Sections~\ref{sec:differentiable} and~\ref{sec:laplace} provide some guidance on the choice of the parameters $\sigma$ and $\lambda$ of MY-MALA algorithms, suggesting that smaller values of $\lambda$ provide better efficiency in terms of number of steps necessary to convergence (Theorem~\ref{theo:differentiable}).

However, a number of other factors must be taken into account.
First, as shown in \cite{mengersen1996rates, roberts1996geometric, roberts1996exponential,jarner2007convergence} the convergence properties of Metropolis adjusted algorithms are influenced by the shape of the target distribution and, in particular, by its tail behavior.
Secondly, when comparing MY-MALA algorithms with gradient-based methods (e.g. MALA) one must take into account the cost of obtaining the gradients, whether this comes from automatic differentiation algorithms or from evaluating a potentially complicated gradient function.
On the other hand, proximity mappings can be quickly found or approximated solving convex optimization problems which have been widely studied in the convex optimization literature (e.g. \cite[Chapter 6]{parikh2014proximal}, \cite{combettes2011proximal} and \cite[Section 3.2.3]{pereyra2016proximal}).

In terms of convergence properties, we are usually interested in the family of distributions for which the discrete time Markov chain produced by our algorithm is geometrically ergodic, together with the optimal scaling results briefly recalled in Section~\ref{sec:related_work}.
Normally, the ergodicity results are given by considering the one-dimensional class of distributions $\mathcal{E}(\beta, \gamma)$ introduced in \cite{roberts1996exponential} and defined for $\gamma>0$ and $0<\beta<\infty$ by
\begin{align}
    \mathcal{E}(\beta, \gamma):\left\lbrace\pi:\real\to [0, +\infty):\pi(x)\propto \exp\left(-\gamma\vert x\vert^{\beta}\right), \vert x\vert > x_0 \textrm{ for some } x_0>0\right\rbrace.
\end{align}
As observed by \cite{livingstone2022barker}, there usually is a trade-off between ergodicity and optimal scaling results, algorithms providing better optimal scaling results tend to be geometrically ergodic for a smaller set of targets (e.g. MALA w.r.t. RWM).

As suggested by Theorem~\ref{theo:differentiable}, the scaling properties of MY-MALA on regular targets are close to those of MALA. This leads to a natural comparison between the two algorithms. First, we observe that \Cref{ass:a0} rules out targets for which $G$ is not convex and therefore restricts the families $\mathcal{E}(\beta, \gamma)$ to $\beta\geq1$. To compare MALA with MY-MALA we therefore focus on distributions with $\beta\geq 1$.

It is shown in \cite{roberts1996exponential} that MALA is geometrically ergodic for targets in $\mathcal{E}(\beta, \gamma)$ with $1\leq\beta\leq 2$ (with some caveat for $\beta=2$). Theorem~\ref{theo:differentiable}--\ref{case:b} and~\ref{case:c} show that in this case MY-MALA has the same scaling properties of MALA but in case~\ref{case:b} the asymptotic speed of convergence decays as the constant $r$ increases (Figure~\ref{fig:example_speed}(a)), with the maximum achieved for $r\to 0$, for which MY-MALA collapses onto MALA.
Since MALA is geometrically ergodic, and achieves better (or equivalent) scaling properties than MY-MALA, it would be natural to prefer MALA to MY-MALA for this set of targets.
However, if the gradient is costly to obtain, one might instead consider to use MY-MALA with a small $\lambda$, to retain scaling properties as close as possible to that of MALA but to reduce the computational cost of evaluating the gradient.

In the case of regular targets with light-tails (i.e., $\beta>2$), MALA is known not to be geometrically ergodic \cite[Section 4.2]{roberts1996exponential} while the ergodicity properties of MY-MALA have only been partially studied in \cite[Section 3.2.2]{pereyra2016proximal} for the case $\lambda=\sigma^2/2$ (P-MALA).
As shown in \cite[Section 2.1]{pereyra2016proximal}, given a distribution $\pi\in\mathcal{E}(\beta, \gamma)$ with $\beta\geq 1$, the distribution $\pi_\lambda$ obtained using the potential~\eqref{eq:glambda} belongs to $\mathcal{E}(\beta^{\prime}, \gamma^{\prime})$, where $\beta^{\prime}=\min(\beta, 2)$ and $\gamma'$ depending on $\lambda$.
This suggests that MY-MALA is likely to be geometrically ergodic for appropriate choices of $\lambda$; a first result in this direction is given in \cite[Corollary 3.2]{pereyra2016proximal} for the P-MALA case $\lambda=\sigma^2/2$.
Theorem~\ref{theo:differentiable}--\ref{case:c} restricts the sets of available $\lambda$s showing that for light-tail distributions (for which \Cref{ass:g_lipschitz} does not hold) $\lambda$ should decay at half the speed of $\sigma^2$.
Studying the ergodicity properties of MY-MALA in function of the parameter $\lambda$ is, of course, an interesting problem that we leave for future work.

For the Laplace distribution, Theorem~\ref{theo:acceptance_laplace} shows that the value of $\lambda$ does not influence the asymptotic acceptance ratio of MY-MALA, as long as $\lambda$ decays with $d$ at least as fast as $\sigma^2$.
The scaling properties and the asymptotic speed $h(\ell)$ in Theorem~\ref{theo:laplace_diffusion} do not depend on $\lambda$ and coincide with that of the sG-MALA (obtained for $\lambda=0$). Hence, in terms of optimal scaling, there does not seem to be a difference between MY-MALA and sG-MALA for the Laplace distribution.

\subsection{Numerical experiments}
\label{sec:numerical}

To illustrate the results established in Section~\ref{sec:differentiable} and~\ref{sec:laplace} we consider here a small collection of simulation studies.
The aim of these studies is to empirically confirm the optimal scalings identified in Theorem~\ref{theo:differentiable} and~\ref{theo:acceptance_laplace}, investigate the dimension $d$ at which the asymptotic acceptance ratio $\lim_{d\to \infty}a_d(\ell, r)$ well approximates the empirical average acceptance ratio and, consequently, for which dimensions $d$ we can expect the optimal asymptotic acceptances in Theorem~\ref{theo:differentiable} and~\ref{theo:acceptance_laplace} to guarantee maximal speed $h(\ell, r)$ (approximated by the expected squared jumping distance, see, e.g. \cite{gelman1996efficient}) for the corresponding diffusion.
We summarize here our findings, a more detailed discussion can be found in Appendix E.

For the regular case, we consider the Gaussian distribution in
Example~\ref{ex:gaussian} and four algorithmic settings which correspond to the three cases identified in Theorem~\ref{theo:differentiable} and MALA.
The different values of $r$ and $v$ influence the
dimension required to observe convergence to the theoretical limit in Theorem~\ref{theo:differentiable}: for $r\to 0$ and $v=1$ (MALA) and $v=1/2, r=1$ (corresponding to Theorem~\ref{theo:differentiable}--\ref{case:a}) the theoretical limit is already achieved for $d$ of order $10^2$, while in the cases
$v=3$, $r=2$ and $v=r=1$ (corresponding to Theorem~\ref{theo:differentiable}--\ref{case:c} and \ref{case:b}, respectively) our simulation result match the theoretical limit only for $d$ of order $ 10^{5}$ or higher. 

The results for the Laplace case are similar, with the case $v>1$ requiring a higher $d$ to observe convergence to the theoretical limit.  \Cref{fig:laplace_mala_main}
and \Cref{fig:laplace_pmala_main}
provide numerical simulations of the behavior,  as $d$ increases, of the mean acceptance ratio $(a_d(\ell,r))_{d \in\nsets}$ as a function of $\ell$ and $(\mathrm{ESJD}_d)_{d \in\nsets}$ as a function of $(a_d(\ell,r))_{d \in\nsets}$,  for sG-MALA ($r=0$) and P-MALA ($r=1$) respectively. These confirm our theoretical findings \Cref{theo:acceptance_laplace} and \Cref{theo:laplace_diffusion}.

\begin{figure}
\centering
\begin{tikzpicture}[every node/.append style={font=\tiny}]
\node (img1) {\includegraphics[width = 0.4\textwidth]{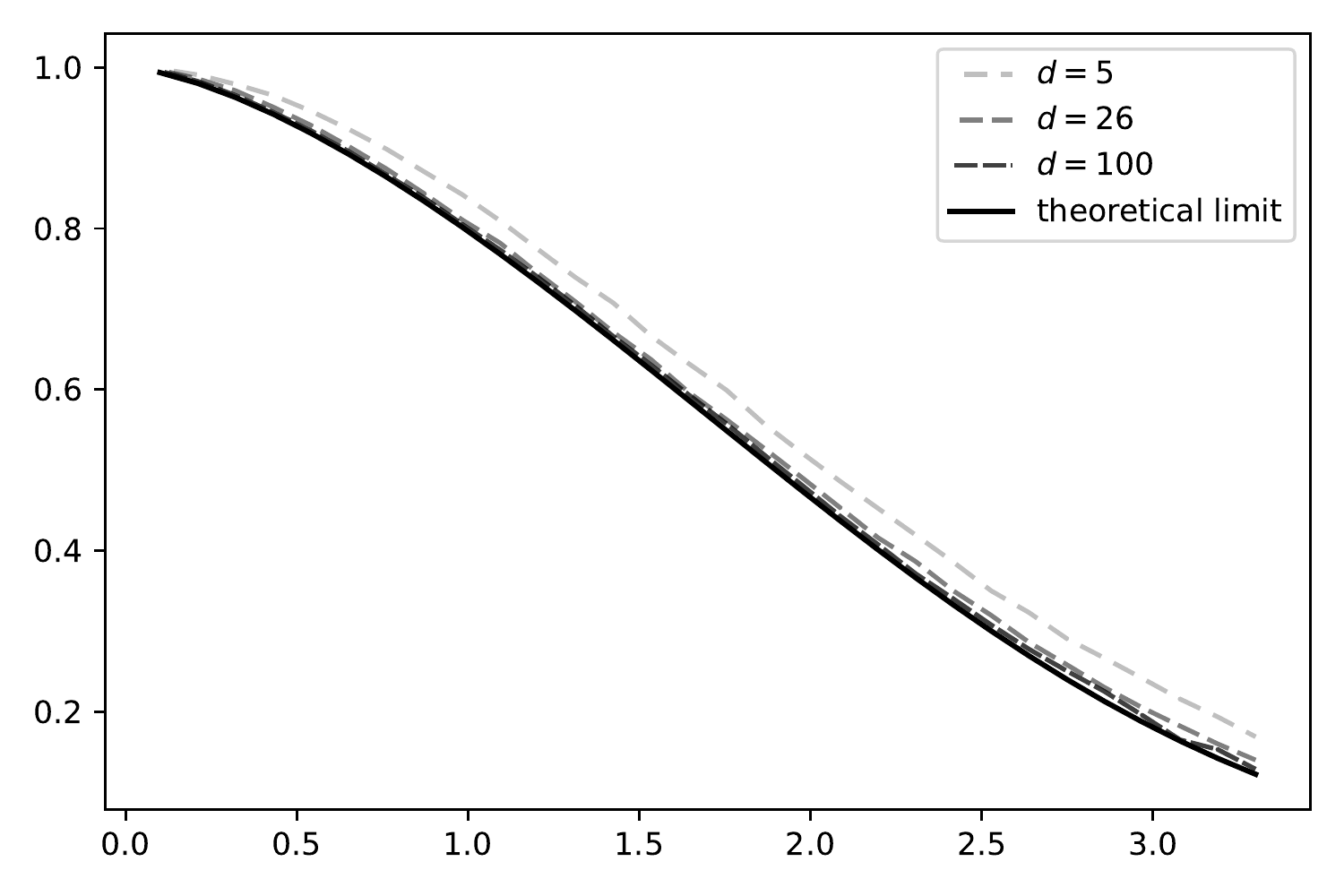}};
\node[below=of img1, node distance = 0, yshift = 1.2cm] {$\ell$};
\node[left=of img1, node distance = 0, rotate = 90, anchor = center, yshift = -1cm] {$a_d(\ell, r)$};
\node[right=of img1, node distance = 0, xshift = -0.7cm] (img3) {\includegraphics[width = 0.4\textwidth]{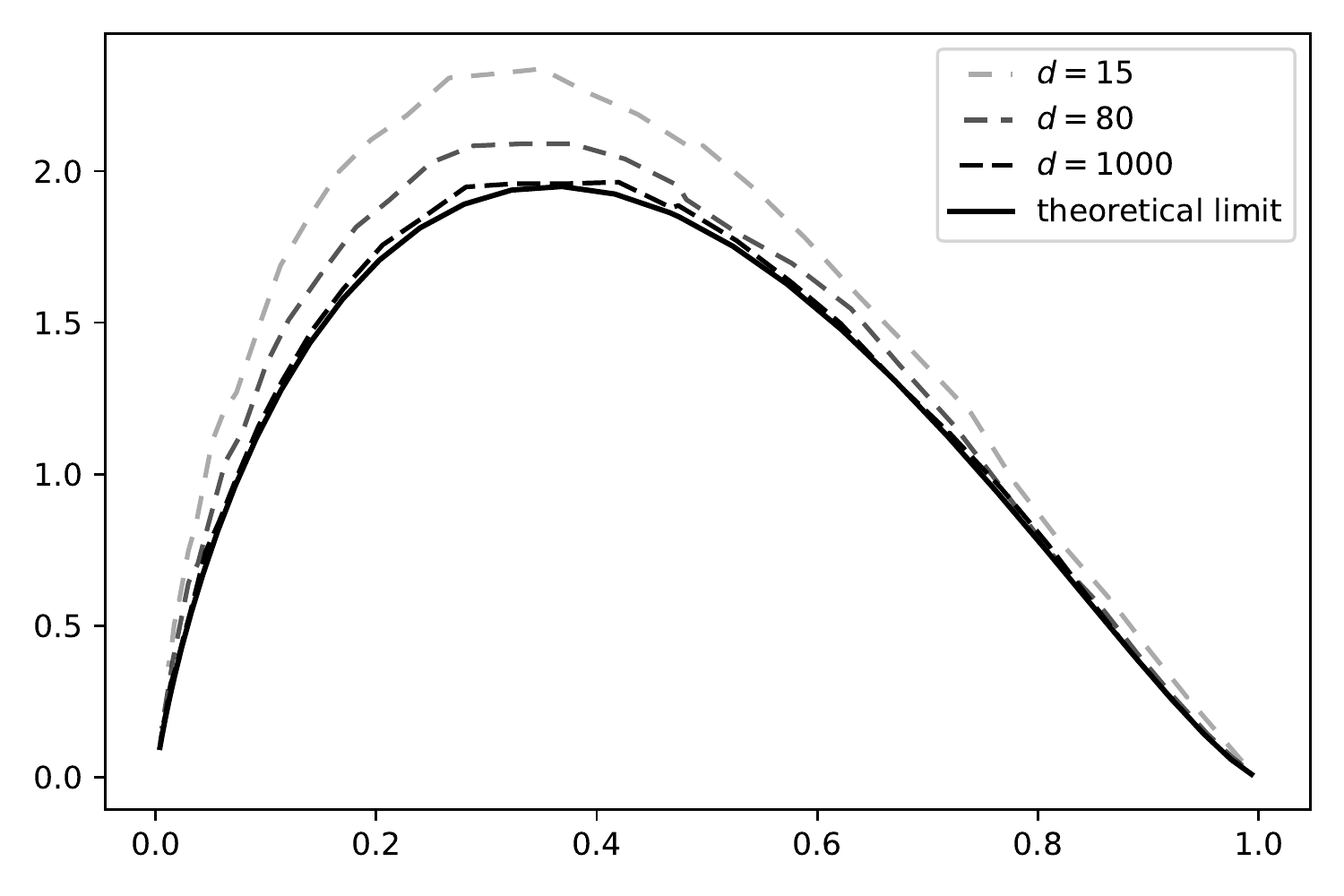}};
\node[below=of img3, node distance = 0, yshift = 1.2cm] {$a_d(\ell, r)$};
\node[left=of img3, node distance = 0, rotate = 90, anchor = center, yshift = -1cm] {$\textrm{ESJD}_d$};
\end{tikzpicture}
\caption{MY-MALA with Laplace target and $v=1, r=0$ (sG-MALA).  Left: acceptance rate as a function of $\ell$ for increasing dimension $d$; Right: $\textrm{ESJD}_d$ as a function of the acceptance rate $a_d(\ell, r)$.}
\label{fig:laplace_mala_main}
\end{figure}

\begin{figure}
\centering
\begin{tikzpicture}[every node/.append style={font=\tiny}]
\node (img2) {\includegraphics[width = 0.4\textwidth]{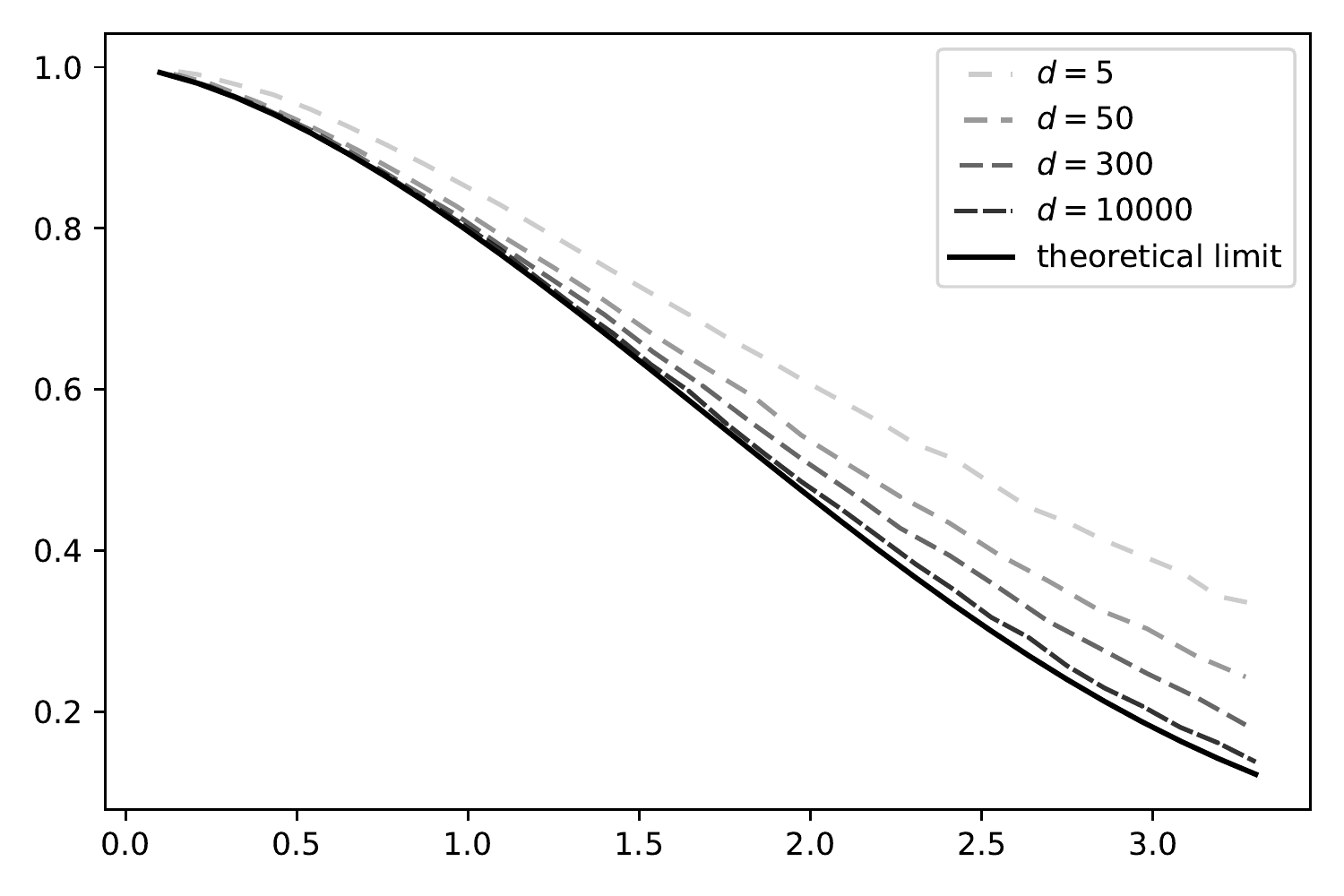}};
\node[below=of img2, node distance = 0, yshift = 1.2cm] {$\ell$};
\node[left=of img2, node distance = 0, rotate = 90, anchor = center, yshift = -1cm] {$a_d(\ell, r)$};
\node[right=of img2, node distance = 0, xshift = -0.7cm] (img3) {\includegraphics[width = 0.4\textwidth]{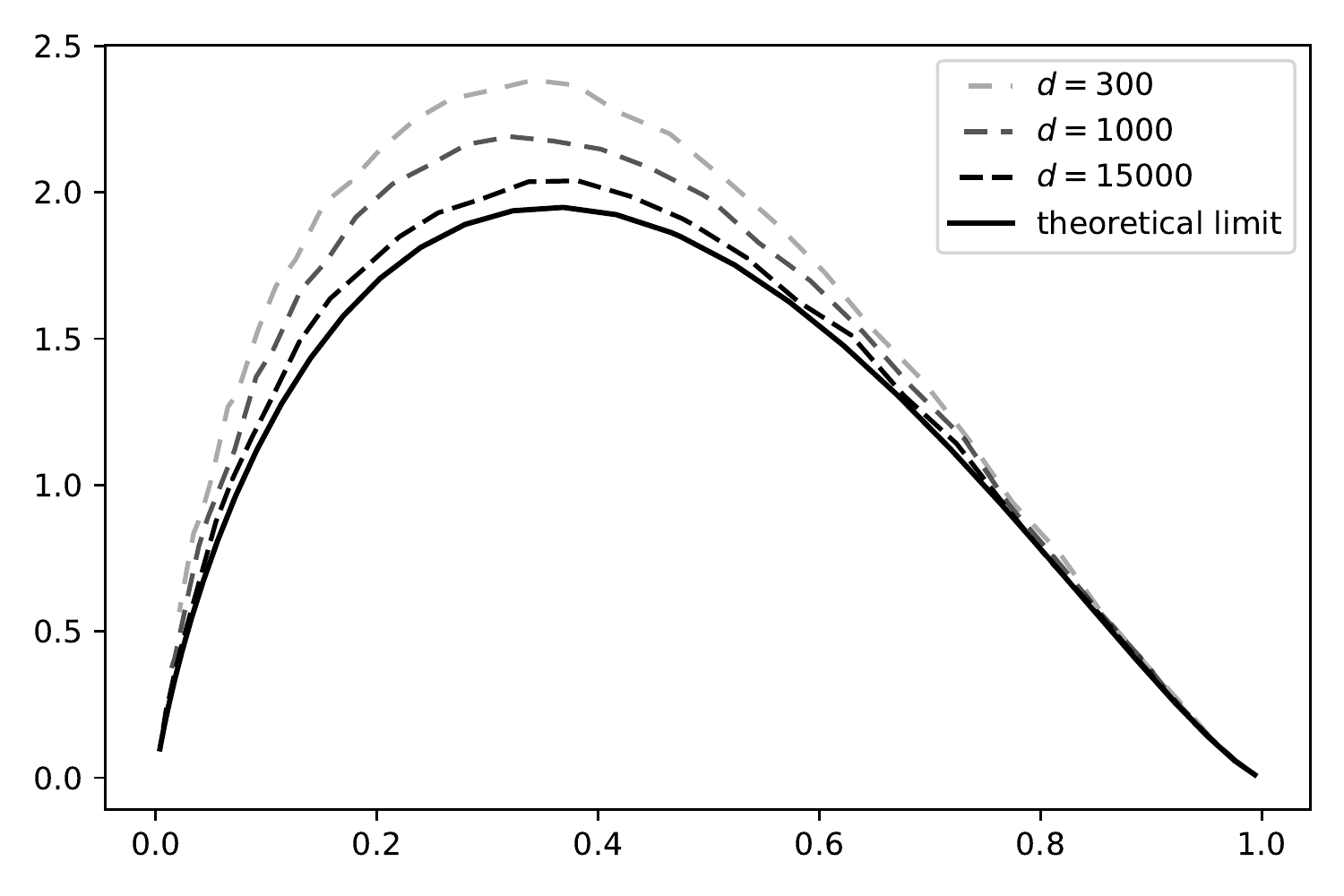}};
\node[below=of img3, node distance = 0, yshift = 1.2cm] {$a_d(\ell, r)$};
\node[left=of img3, node distance = 0, rotate = 90, anchor = center, yshift = -1cm] {$\textrm{ESJD}_d$};
\end{tikzpicture}
\caption{MY-MALA with Laplace target and $v=1, r=1$ (P-MALA). Left: acceptance rate as a function of $\ell$ for increasing dimension $d$; Right: $\textrm{ESJD}_d$ as a function of the acceptance rate $a_d(\ell, r)$.}
\label{fig:laplace_pmala_main}
\end{figure}

In general, we find that the optimal average acceptance ratios in Theorem~\ref{theo:differentiable} and~\ref{theo:laplace_diffusion} guarantee maximal speed $h(\ell, r)$ for $d$ sufficiently large (for small $d$ the optimal acceptance ratio often differs from the optimal asymptotic one, see, e.g.  \cite[Section 2.1]{sherlock2009optimal}).

To further investigate the scaling of MY-MALA to other non-differentiable densities, we empirically study two cases  where the sequence of targets are given by: $x^d \in \rset^d$,
\begin{equation}
\label{eq:def_l2_l1_numerics}
\pi_d^{\mathrm{ GL }}(x^d) = \prod_{i=1}^d \exp(-g(x^d_i)) , \quad g(x)=\abs{x} + x^2/2 , \, \text{ and }
\end{equation}
which, like the Laplace distribution in $0$, is non-differentiable but convex. The study of the potential $g$ is motivated by Bayesian inverse problems considered in \cite{pereyra2016proximal,durmus2018efficient}, for which the posterior distribution arises from Gaussian observations and sparsity-induced priors like the Laplace distribution.
The posterior then has the form (up to a multiplicative constant) $x^d \mapsto \exp(-\Vert y^d - \mathbf{A} x^d \Vert^2 - c_r \sum_{i=1}^d \vert x_i \vert )$. For the choice of target \eqref{eq:def_l2_l1_numerics}, the proposal of MY-MALA is given for any $d\in \nsets$, $k\in \nset$ and $i \in
\iint{1}{d}$,
\begin{equation} Y^d_{k+1,i} = \parenthese{1- \frac{\sigma_d^2}{2\lambda_d} } X^d_{k,i} + \frac{\sigma_d^2}{2\lambda_d}\parenthese{ (X^d_{k,i}-\sgn(X^d_{k,i}))\indicatorDD{|X^d_{k,i}| \geq \lambda_d} - \lambda_d X^d_{k,i}} + \sigma_d Z^d_{k+1,i},
\end{equation}
where $(Z^d_{k+1,i})_{k\in \nset}$ is a sequence of standard normal random variables.

For our second experiment, we go beyond the Laplace case that aim to verify that our scaling  results also holds for other non-smooth distributions. In particular, we consider first a $m$-dimensional distribution, with $m \in \nsets$ defined for any $x^{m}\in \rset^{m}$ by
\begin{equation}
\label{eq:def_smtv_l2_l1_numerics}
\pi_{m}^{\mathrm{ TV }}(x^m) = \exp{\parentheseLigne{ - (x^m_1)^2/2}}\prod_{j=1}^{m-1}\exp\parentheseLigne{-g^{\mathrm{ TV }}(x^m_{i},x^m_{i+1}) + (x^m_{i+1})^2}, \quad g^{\mathrm{TV}}(x,y)=\abs{x-y},
\end{equation}
We chose this type of distribution since it is a sum of a quadratic function and a total variation norm, which have been used in Bayesian image processing \cite{louchet2008total}. Then, the target that we consider is obtained by independently copying this $m$-dimensional distribution $d$ times, with $d\in \nsets$. Let $D=d m$, then for any $x^D \in \rset^D$, 
we define the sequence of target indexed by $d$ by:
\begin{equation}
\label{eq:def_tv_l2_l1_numerics}
\pi_d^{\mathrm{ ITV }}(x^D) = \prod_{i=0}^{d-1} 
\pi_m^{\mathrm{ TV }}\parenthese{x^d_{i m +1}, x^d_{i m +2},\cdots, x^d_{i m +\kappa}},
\end{equation}
where $\pi_m^{\mathrm{ TV }}$ is the $m$-dimensional distribution defined in \eqref{eq:def_smtv_l2_l1_numerics}.
For the choice of target \eqref{eq:def_tv_l2_l1_numerics}, the proximity operator is not explicit, and we use the implementation described in  \cite{conficmlBarbero11,JMLR:v19:13-538}. 
We then repeated the same experiments as for the Laplace distribution increasing the dimension $d$. The results are gathered in Appendix E and \Cref{fig:mixedtv_pmala_main,fig:mixed_general_main}.
These figures seem to indicate that the scaling that we find for the Laplace distribution, i.e., choosing $\sigma_d^2 = \ell/d^{2\alpha}$, $\lambda = \sigma_d^{2v}r/2$ with $\alpha = 1/3$, $r \geq 0$, is also the adequate scaling for \eqref{eq:def_tv_l2_l1_numerics}.
\begin{figure}
\centering
\begin{tikzpicture}[every node/.append style={font=\tiny}]
\node (img1) {\includegraphics[width = 0.4\textwidth]{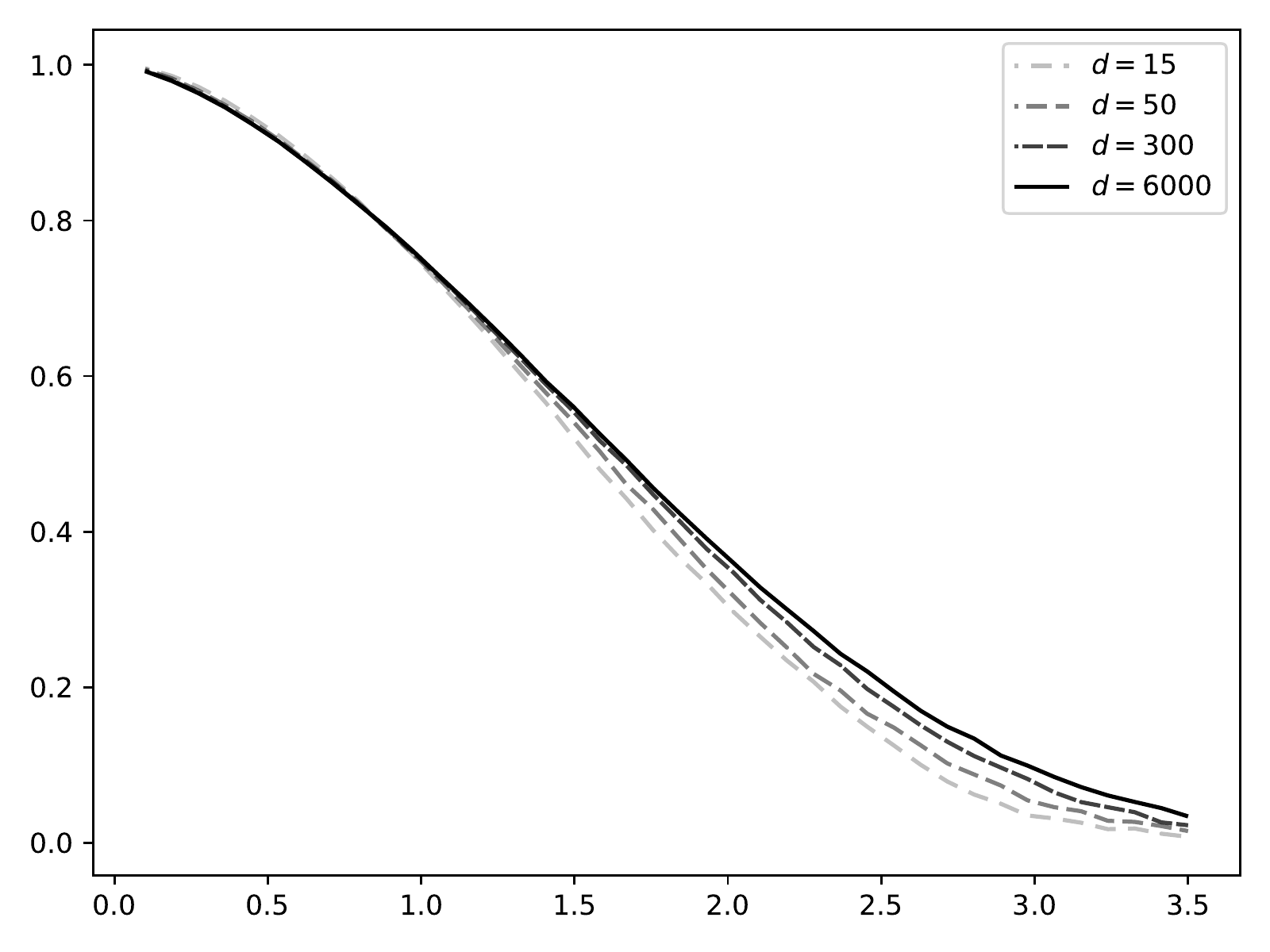}};
\node[below=of img1, node distance = 0, yshift = 1.2cm] {$\ell$};
\node[left=of img1, node distance = 0, rotate = 90, anchor = center, yshift = -1cm] {$a_d(\ell, r)$};
\node[right=of img1, node distance = 0, xshift = -0.7cm] (img3) {\includegraphics[width = 0.4\textwidth]{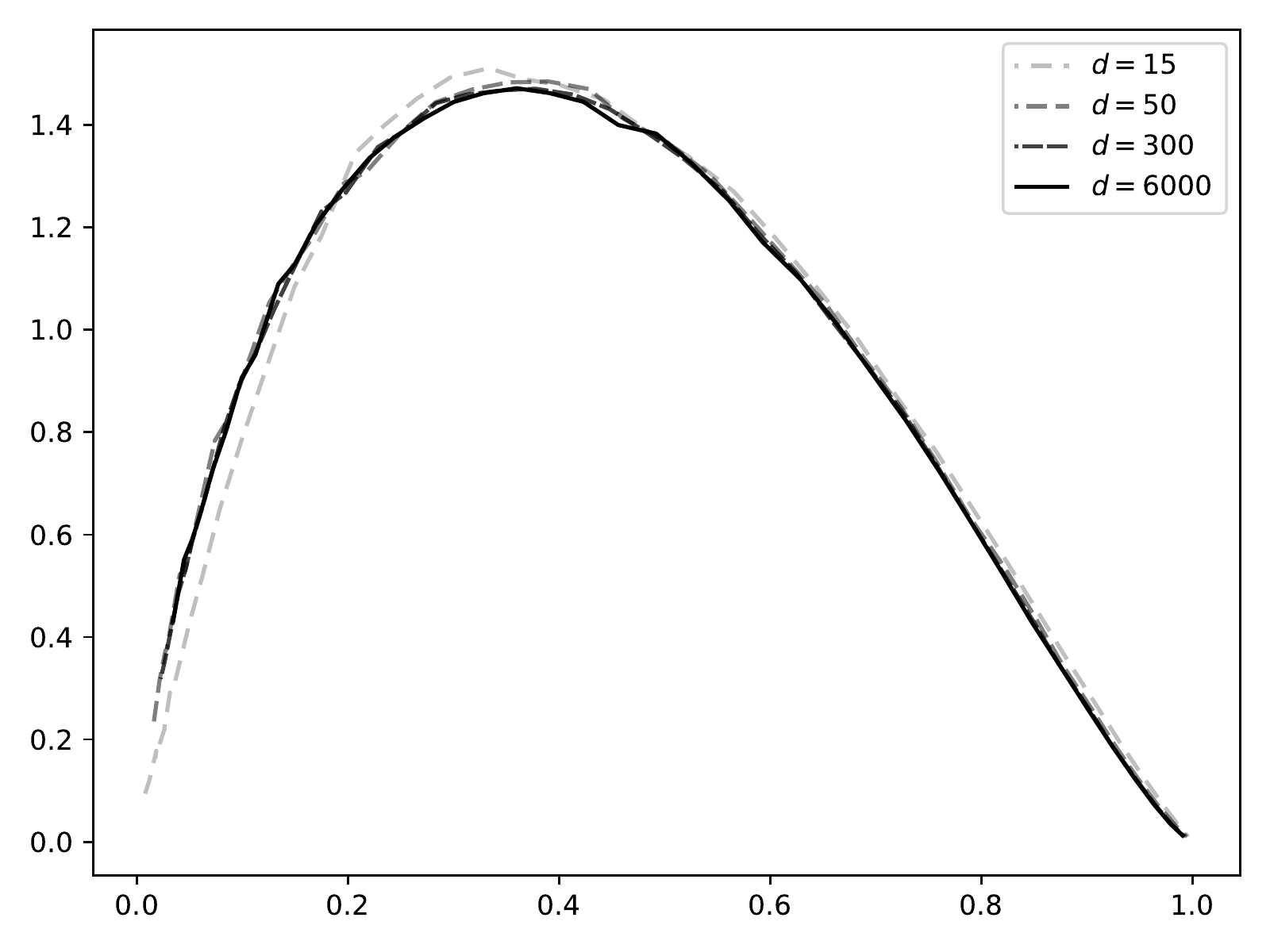}};
\node[below=of img3, node distance = 0, yshift = 1.2cm] {$a_d(\ell, r)$};
\node[left=of img3, node distance = 0, rotate = 90, anchor = center, yshift = -1cm] {$\textrm{ESJD}_d$};
\end{tikzpicture}
\caption{MY-MALA for the target \eqref{eq:def_l2_l1_numerics} and $v=3, r=2$, with $\sigma^2=\ell/d^{2\alpha}$ and $\alpha=1/3$.  Left: acceptance rate as a function of $\ell$ for increasing dimension $d$; Right: $\textrm{ESJD}_d$ as a function of the acceptance rate $a_d(\ell, r)$.}
\label{fig:mixed_general_main}
\end{figure}
\begin{figure}
\centering
\begin{tikzpicture}[every node/.append style={font=\tiny}]
\node (img1) {\includegraphics[width = 0.4\textwidth]{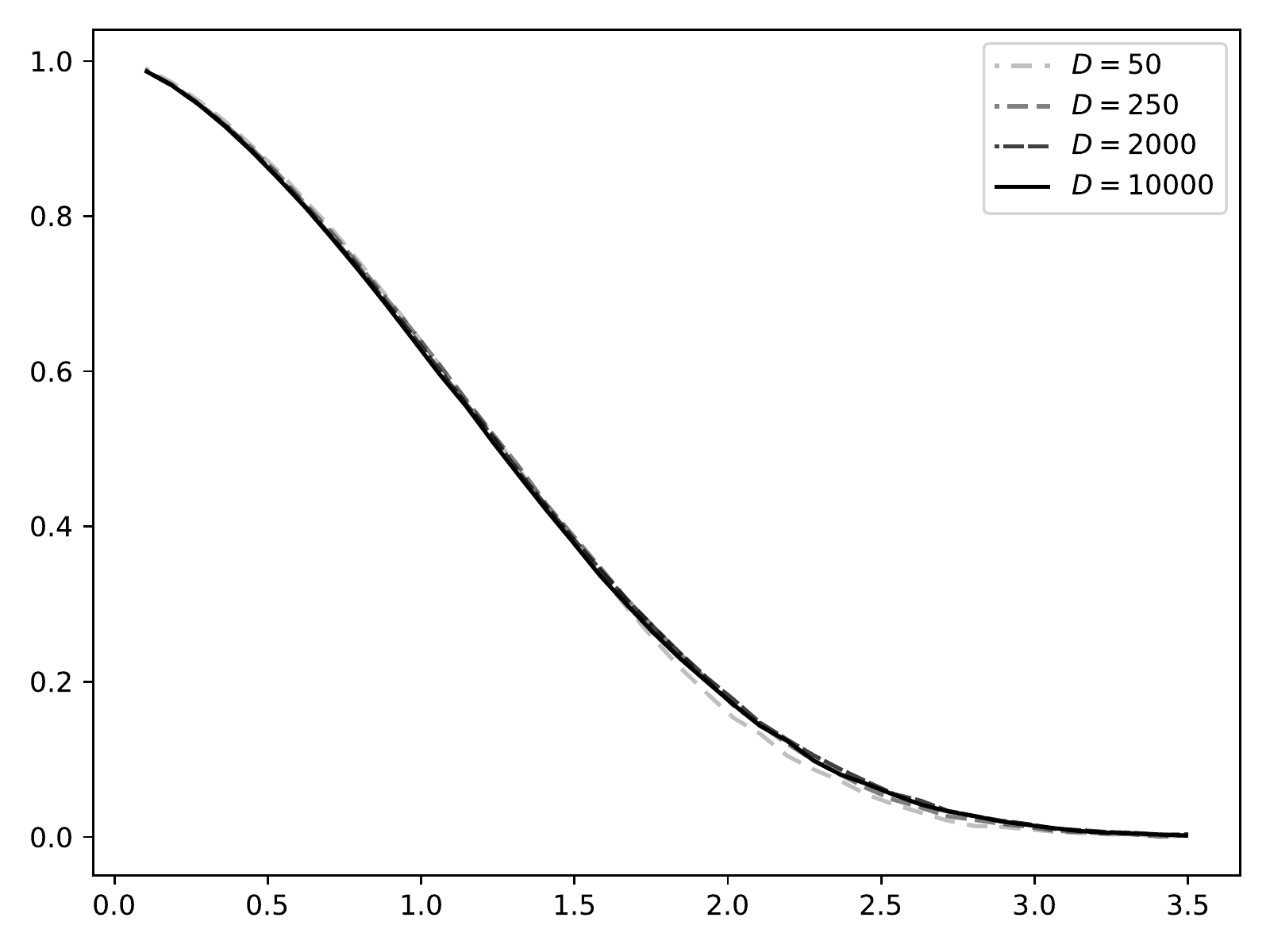}};
\node[below=of img1, node distance = 0, yshift = 1.2cm] {$\ell$};
\node[left=of img1, node distance = 0, rotate = 90, anchor = center, yshift = -1cm] {$a_d(\ell, r)$};
\node[right=of img1, node distance = 0, xshift = -0.7cm] (img3) {\includegraphics[width = 0.4\textwidth]{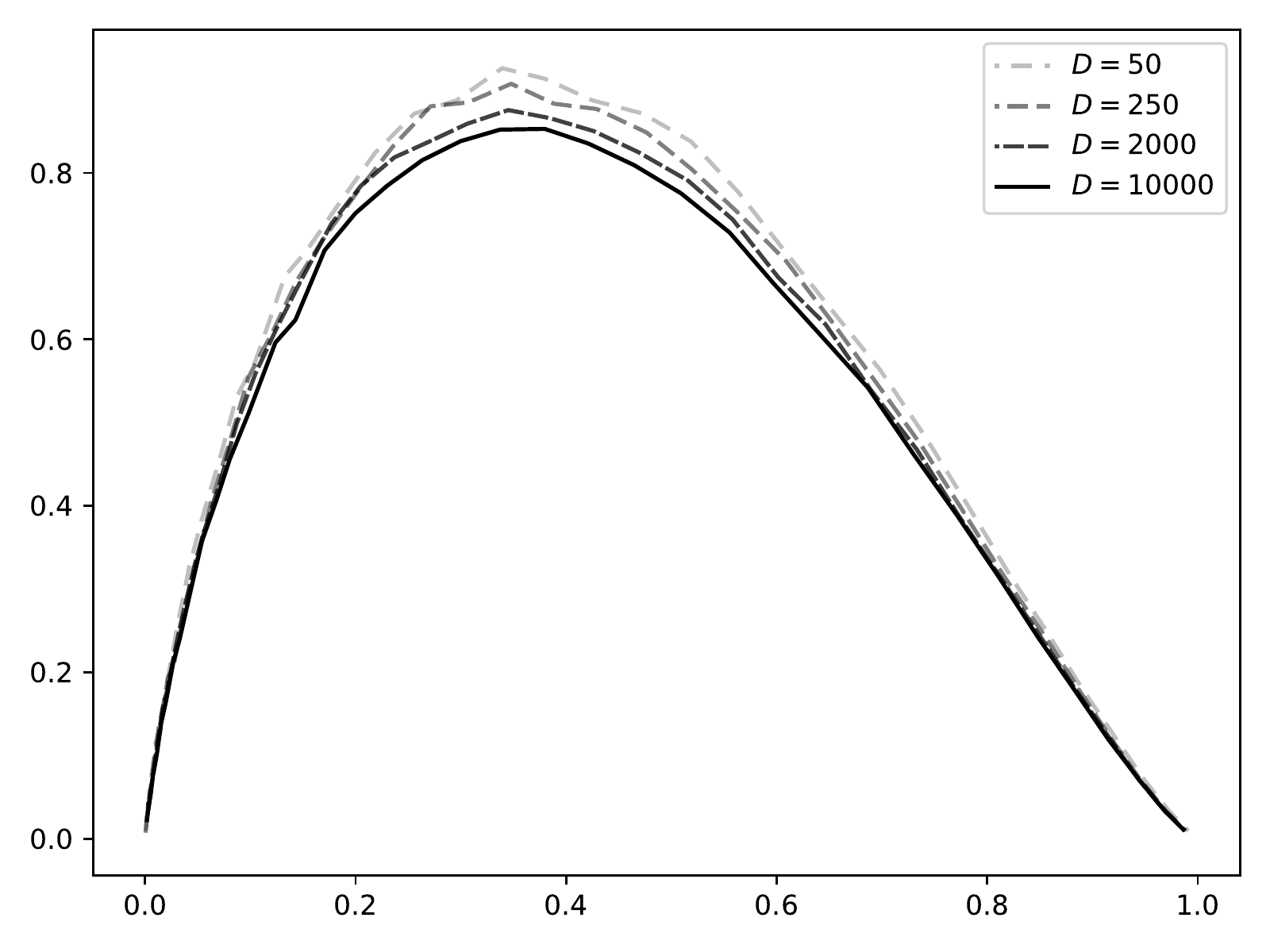}};
\node[below=of img3, node distance = 0, yshift = 1.2cm] {$a_d(\ell, r)$};
\node[left=of img3, node distance = 0, rotate = 90, anchor = center, yshift = -1cm] {$\textrm{ESJD}_d$};
\end{tikzpicture}
\caption{MY-MALA for the target \eqref{eq:def_tv_l2_l1_numerics} with $m=10$ and $v=1$, $r=1$ (P-MALA). Left: acceptance rate as a function of $\ell$ for increasing dimension $d$; Right: $\textrm{ESJD}_d$ as a function of the acceptance rate $a_d(\ell, r)$.}
\label{fig:mixedtv_pmala_main}
\end{figure}

\section{Discussion}
\label{sec:discussion}

In this work we analyze the scaling properties of a wide class of MY-MALA algorithms introduced in \cite{pereyra2016proximal, durmus2018efficient} for smooth targets and for the Laplace distribution.
We show that the scaling properties of MY-MALA are influenced by the relative speed at which the proximal parameter $\lambda_d$ and the proposal variance $\sigma_d$ decay to 0 as $d\to\infty$ and suggest practical ways to choose $\lambda_d$ as a function of $\sigma_d$ to guarantee good results.

In the case of smooth targets, we provide a detailed comparison between MY-MALA and MALA, showing that MY-MALA scales no better than MALA (Theorem~\ref{theo:differentiable}). In particular, Theorem~\ref{theo:differentiable}--\ref{case:a} shows that if $\lambda_d$ is too large w.r.t. $\sigma_d$ then the efficiency of MY-MALA is of order $\bigO(d^{-1/2})$ and therefore worse than the $\bigO(d^{-1/3})$ of MALA, suggesting that $\lambda_d$ should be chosen to decay approximately as $\sigma_d$, if possible.
If $\lambda_d$ decays sufficiently fast, then MALA and MY-MALA have similar scaling properties and, in the case in which the proximity map is cheaper to compute that the gradient, one can build MY-MALA algorithms which are as efficient as MALA in terms of scaling but more computationally efficient. 

In the case of the Laplace distribution, we show that the scaling of MY-MALA is $\bigO(d^{-2/3})$ for any $\lambda_d$ decaying sufficiently fast w.r.t. $\sigma_d$ and, in the case $\lambda_d= 0$, we obtain a novel optimal scaling result for sG-MALA on Laplace targets.

As discussed in Section~\ref{sec:implications}, our analysis provides some guidance on the choice of the parameters that need to be specified to implement MY-MALA, but this analysis should be complemented by an exploration of the ergodicity properties of MY-MALA to obtain a comprehensive description of the algorithms. 
We conjecture that for sufficiently large values of $\lambda$, MY-MALA applied to light tail distributions will be exponentially ergodic; establishing exactly how large should $\lambda$ be to guarantee fast convergence is an interesting question that we leave for future work.
Obtaining these results would open the doors to adaptive tuning strategies for MY-MALA, which are likely to produce better results than those given by the strategies currently used.

We assumed throughout this work that we have access to exact evaluations of the proximity map $\prox_G^\lambda$; while this is true for many functions of interest (e.g. $\ell^p$-norms \cite{parikh2014proximal}), in general the optimization problem~\eqref{eq:proximity} which defines the proximity map needs to be solved approximately. This introduces an
approximation error which could be studied using, e.g., similar tools to those employed in \cite{schmidt2011convergence}.

The set up under which we carried out our analysis closely resembles that of \cite{roberts1998optimal}; we anticipate that \Cref{ass:stationarity} could be relaxed following similar ideas as those in \cite{christensen2005scaling, jourdain2014optimal} and that our analysis could be extended to $d$-dimensional targets $\pi_d$ possessing some dependence structure following the approach of \cite{sherlock2009optimal, bedard2008optimal, yang2020optimal}.
Finally, the analysis carried out for the Laplace distribution could be extended to other piecewise smooth distributions provided that the moments necessary for the proof in Appendix C can be computed.


\section*{Acknowledgments}
F.R.C. and G.O.R. acknowledge support from the EPSRC (grant \#  EP/R034710/1). G.O.R. acknowledges further support from the  EPSRC (grant \# EP/R018561/1) and the Alan Turing Institute. 
A.D. acknowledges support from the Lagrange Mathematics and Computing Research Center.

The authors would like to thank \'{E}ric Moulines for helpful discussions.

For the purpose of open access, the author has applied a Creative Commons Attribution (CC BY) licence to any Author Accepted Manuscript version arising from this submission.
\bibliographystyle{abbrv} 
\bibliography{pmala_biblio}

\appendix

\section{Proof of Theorem~\ref{theo:differentiable}}
\label{app:proof_differentiable}

\noindent The proof of Theorem~\ref{theo:differentiable} follows that of \cite[Theorem 1,
Theorem 2]{roberts1998optimal} and consists of four propositions showing
convergence of the log-acceptance probability to a normal random variable and
(weak) convergence of the process~\eqref{eq:Yt_differentiable} to a Langevin diffusion.

We start by recalling and defining a number of quantities that we will use in the following proofs.
Recall that $\sigma_d = \ell / d^{\alpha}$, that $\lambda_d = \sigma_d^{2v}r/2$ where $v\geq 1/2$ and $r >0$ 
are to be chosen according to the different cases in \Cref{theo:differentiable}.
Recalling the expression of the proposal given in~\eqref{eq:proposal_pmala} and using the
simplification given in~\eqref{eq:differentiable_ula}, we define 
the proposal with starting point $\bm{x}^d\in \rset^d$,
\begin{equation}
    \bm{y}^d(\bm{x}^d,\bm{z}^d) = \bmx^d - \frac{\sigma_d^2}{2} \nabla
    G\parenthese{\prox_{G}^{\sigma_d^{2v}r/2}(\bmx^d)} + \sigma_d
    \bm{z}^d,
\end{equation}
where $\bm{z}^d \in \rset^d$. Since $G(\bm{x}^d)=\sum_{i=1}^d g(x_i^d)$,
the $i$-th component of the proposal only depends on the $i$-th components of
$\bmx^d$ and $\bm{z}^d$. Thus, for any $x,z \in \rset$ we denote
\begin{equation}
    y_d(x,z) = x - \frac{\sigma_d^2}{2} g'\parenthese{\prox_{g}^{\sigma_d^{2v}r/2}(x)}
    + \sigma_d z.
\end{equation}
The proposal for the
chain $(X^d_k)_{k\geq 0}$ is then given by $Y^d_k = \bm{y}^d(X^d_k,Z^d_{k+1}) =
(y_d(X^d_{k,i},Z^d_{k+1,i}))_{i\in\iint{1}{d}}$. 
Let us define the generator of the discrete process $(X_k^d)_{k\geq 0}$ for all
$V\in \rmC_{\rmc}^\infty(\real^d,\real)$, i.e., infinitely differentiable
$\real$-valued multivariate functions with compact support, and any $ \bmx^d\in \rset^d$,
\begin{align}
    \rmL_d V(\bmx^d) &= d^{2\alpha}\expe{\left[V(\bm{y}^d(\bm{x}^d,Z^d_{1})) -
            V(\bm{x}^d)\right]\frac{\pi_d(\bm{y}^d(\bm{x}^d,Z^d_1))q_d(\bm{y}^d(\bm{x}^d,Z^d_1),
    \bm{x}^d)}{\pi_d(\bm{x}^d)q_d(\bm{x}^d, \bm{y}^d(\bm{x}^d,Z^d_1))}\wedge 1}\\
                     &= d^{2\alpha}\expe{\left[V(\bm{y}^d(\bm{x}^d,Z^d_1)) - V(\bm{x}^d)\right]\prod_{i=1}^d \exp\left(\phi_d(x_i^d, Z^d_{1,i})\right)\wedge 1},
\end{align}
where the expectation is w.r.t. $Z^d_{1}=(Z^d_{1,i})_{i \in \iint{1}{d}}$, a
$d$-dimensional standard normal random variable, and where we defined
\begin{align}
\label{eq:log_acceptance}
\phi_d(x, z) &= \log \frac{\pi(y_d(x,z))q(y_d(x,z), x)}{\pi(x)q(x, y_d(x,z))} \\
             &= g(x) - g(y_d(x,z)) +\log q(y_d(x,z), x)  -\log q(x,
             y_d(x,z)).
\end{align}
In the remainder we will work with one-dimensional functions $V\in \rmC_{\rmc}^\infty(\real,\real)$ applied to the first component of $\bm{x}^d$ so that
\begin{align}
    \rmL_d V(\bm{x}^d) &= d^{2\alpha}\expe{\left[V(y_d(x_1^d,Z^d_{1,1})) - V(x_1^d)\right]\frac{\pi_d(\bm{y}^d(\bm{x}^d,Z^d_{1}))q_d(\bm{y}^d(\bm{x}^d,Z^d_1), \bm{x}^d)}{\pi_d(\bm{x}^d)q_d(\bm{x}^d, \bm{y}^d(\bm{x}^d,Z^d_1))}\wedge 1}\\
                       &= d^{2\alpha}\expe{\left[V(y_d(x_1^d,Z^d_{1,1})) - V(x_1^d)\right]\prod_{i=1}^d \exp\left(\phi_d\left(x_i^d, Z^d_{1,i}\right)\right)\wedge 1}.
\label{eq:generator_differentiable}
\end{align}
We also define $ \widetilde{\rmL}_d$ to be a variant of $\rmL_d$ in which the first component of the acceptance ratio is omitted:
\begin{align}
\label{eq:tilde_generator_differentiable}
\widetilde{\rmL}_d V(\bm{x}^d) = d^{2\alpha}\expe{\left[V(y_d(x_1^d,Z^d_{1,1}))
    - V(x_1^d)\right]\prod_{i=2}^d \exp\left(\phi_d\parenthese{x_i^d,
    Z^d_{1,i}}\right)\wedge 1}.
\end{align}
We further define the generator of the Langevin diffusion
\begin{align}
\label{eq:langevin_generator_differentiable}
\rmL V(x) = \frac{h(\ell, r)}{2}\left[V^{\dprime}(x) -g^{\prime}(x) V^{\prime}(x)\right],
\end{align}
where $h(\ell, r) = \ell^2 a(\ell,r)$ is the speed of the diffusion and
$a(\ell, r)=\lim_{d\to \infty} a_d(\ell,r)$ is given in
Theorem~\ref{theo:differentiable}.

We will make use of the derivatives of $g$ in~\eqref{eq:target_differentiable} up to order 8, which we denote by $g^\prime, g^{\dprime}, g^{\dprime\prime}$ and $g^{(k)}$ for all $k=4, \dots, 8$. 
We recall that $( g^\lambda)'$ is Lipschitz continuous with Lipschitz constant
$\lambda^{-1}$ \cite[Proposition 12.19]{rockafellar2009variational} and that $(
g^\lambda)'(x) = \lambda^{-1}(\prox_g^{\lambda}(x)-x)$, hence
$\prox_g^{\lambda}$ is Lipschitz continuous with Lipschitz constant $1$.

\subsection{Identifying the scaling regimes}
The scaling regimes are identified following the approach of \cite{roberts1998optimal}: we approximate $ \phi_d(x,
z)$ with a Taylor expansion about $\sigma_d\to 0$ and find for which values of $d$ this expansion well approximates $ \phi_d(x,
z)$.

We start by decomposing $\phi_d(x, z) = R_1(x, z,
\sigma_d)+R_2(x, z, \sigma_d)$, where
\begin{align}
R_1(x, z, \sigma) &=-g\left[x - \frac{\sigma^2}{2} g^{\prime}\left(\prox_{g}^{\sigma^{2v}r/2}(x)\right)+\sigma z\right]+g(x),\\
R_2(x, z, \sigma) &= \frac{1}{2}z^2
-\frac{1}{2}\left(z-\frac{\sigma}{2}g^\prime\left(\prox_g^{\sigma^{2v}r/2}\left[x+\sigma z
-\frac{\sigma^2}{2}g^\prime\left[\prox_g^{\sigma^{2v}r/2}(x)\right]\right]\right)\right.\\
\label{eq:R}
                  &\qquad\quad \left.-\frac{\sigma}{2}g^\prime\left[\prox_g^{\sigma^{2v}r/2}(x)\right]\right)^2.
\end{align}
We then compute the derivatives of $R_1, R_2$ w.r.t. $\sigma$ at $\sigma = 0$ as shown in Appendix~\ref{app:derivatives_R}. This shows that
\begin{align}
    R_1(x, z, 0)+R_2(x, z, 0)&= 0,\\
 \frac{\partial R_1}{\partial \sigma}(x, z, \sigma)_{\mid \sigma=0}+\frac{\partial R_2}{\partial \sigma}(x, z, \sigma)_{\mid \sigma=0} &=0,\\
\frac{\partial^2 R_1}{\partial \sigma^2}(x, z, \sigma)_{\mid \sigma=0}+\frac{\partial^2 R_2}{\partial \sigma^2}(x, z, \sigma)_{\mid \sigma=0} &=2zg^{\dprime}(x)\frac{\partial}{\partial \sigma}\prox_g^{\sigma^{2v}r/2}(x)_{\mid \sigma=0}.
\end{align}
Looking at the derivatives of the proximity map w.r.t. $\sigma$ in Appendix~\ref{app:derivatives_prox} we find that for $v=1/2$ we have
\begin{align}
\frac{\partial^2 R_1}{\partial \sigma^2}(x, z, \sigma)_{\mid \sigma=0}+\frac{\partial^2 R_2}{\partial \sigma^2}(x, z, \sigma)_{\mid \sigma=0} &=-rzg^{\dprime}(x)g^{\prime}(x).
\end{align}
while for $v>1/2$ we have 
\begin{align}
\frac{\partial^2 R_1}{\partial \sigma^2}(x, z, \sigma)_{\mid \sigma=0}+\frac{\partial^2 R_2}{\partial \sigma^2}(x, z, \sigma)_{\mid \sigma=0} &=0.
\end{align}

Proceeding similarly for higher order derivatives we identify the Taylor expansions in Proposition~\ref{prop:1_case_a}, \ref{prop:1_case_b} and~\ref{prop:1_case_c} below.
\subsection{Auxiliary results for the proof of case~\ref{case:a}}

First, we characterize the limit behaviour of the acceptance ratio~\eqref{eq:ar_finite}.
\begin{prop}
\label{prop:1_case_a}
Under~\Cref{ass:a0}, \Cref{ass:g_smooth} and~\Cref{ass:stationarity}, if $\alpha= 1/4$, $\beta = 1/8$ and $r>0$, then
\begin{enumerate}[label=(\roman*)]
\item the log-acceptance ratio~\eqref{eq:log_acceptance}, when $d\to \infty$, satisfies the following Taylor expansion
\begin{align}
    \phi_d(x, z)
    &= d^{-1/2}C_2(x, z) + d^{-3/4}C_3(x, z)+d^{-1}C_4(x, z)+C_5(x, z, \sigma_d),
\end{align}
where $C_2(x, z)$ is given in~\eqref{eq:C2}, $C_3$ and $C_4$ are
polynomials in $z$ and the derivatives of $g$, such that
$\mathbb{E}[ C_3(X^d_{0,1}, Z^d_{1,1})]=0$ and
$\mathbb{E}[ C_2(X^d_{0,1}, Z^d_{1,1})^2]=-2\mathbb{E}[ C_4(X^d_{0,1}, Z^d_{1,1})]$;
\item there exists sets $F_d \subseteq \real^d$ with $d^{2\alpha}\pi_d(F_d^c)\to 0$ such that 
\begin{align}
\label{eq:expansion_result1}
\lim_{d\to\infty}\sup_{\bm{x}^d\in
    F_d}\expe{\left\lvert\sum_{i=2}^d\phi_d(x_i^d, Z^d_{1,i})
-d^{-1/2}\sum_{i=2}^dC_2(x_i^d,
Z^d_{1,i})+\frac{\ell^4K_1(r)^2}{2}\right\rvert}=0,
\end{align}
where $K_1(r)$ is given in~\Cref{theo:differentiable}--\ref{case:a}.
\end{enumerate}
\end{prop}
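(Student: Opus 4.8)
The plan is to derive both claims from a single Taylor expansion of $\phi_d$ in $\sigma_d = \ell d^{-1/4}$, combined with the exact one-component reversibility identity $\mathbb{E}[\rme^{\phi_d(X_{0,1}^d,Z_{1,1}^d)}]=1$. First I would continue the computation of the $\sigma$-derivatives of $R_1+R_2$ begun above, pushing it to fourth order at $\sigma=0$. Since $\sigma_d^2=\ell^2 d^{-1/2}$, $\sigma_d^3=\ell^3 d^{-3/4}$ and $\sigma_d^4=\ell^4 d^{-1}$, grouping by powers of $\sigma_d$ directly yields
\[
\phi_d(x,z)=d^{-1/2}C_2(x,z)+d^{-3/4}C_3(x,z)+d^{-1}C_4(x,z)+C_5(x,z,\sigma_d),
\]
with $C_2(x,z)=-\tfrac{\ell^2 r}{2}\,z\,g^{\dprime}(x)g^\prime(x)$ read off from the second-order coefficient $-rzg^{\dprime}(x)g^\prime(x)$ computed above for $v=1/2$; here $C_3,C_4$ are the third- and fourth-order Taylor coefficients (polynomials in $z$ and the derivatives of $g$ up to order four), and $C_5$ is the Lagrange remainder. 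Using \Cref{ass:g_smooth} together with the $1$-Lipschitz continuity of $\prox_g^\lambda$ recalled above, I would bound $C_5$ and the lower coefficients by polynomials in $|x|$ and $|z|$, so that in particular $\mathbb{E}[|C_5(x,Z,\sigma_d)|^p]\le \mathrm{const}\,(1+|x|)^{k}\sigma_d^{5p}$ for suitable $k$; this controls the remainder in every subsequent estimate.

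For the moment identities I would avoid computing $C_3,C_4$ explicitly and instead exploit that, for a single component, $\mathbb{E}[\rme^{\phi_d}]=1$ exactly: writing $y=y_d(x,z)$ and changing variables $z\mapsto y$, under which the standard normal law of $z$ pushes forward to the proposal density $q(x,\cdot)$, turns $\mathbb{E}[\rme^{\phi_d}]$ into $\int\!\int\pi(y)q(y,x)\,\rmd x\,\rmd y=1$. Expanding $\rme^{\phi_d}=1+\phi_d+\tfrac12\phi_d^2+O(|\phi_d|^3)$, taking expectations, and using $\mathbb{E}[|\phi_d|^3]=O(d^{-3/2})=o(d^{-1})$, the exact identity becomes $0=\mathbb{E}[\phi_d]+\tfrac12\mathbb{E}[\phi_d^2]+o(d^{-1})$; matching the distinct powers $d^{-1/2},d^{-3/4},d^{-1}$ gives $\mathbb{E}[C_2]=0$ (also immediate, as $C_2$ is odd in $z$), $\mathbb{E}[C_3]=0$, and $\mathbb{E}[C_4]=-\tfrac12\mathbb{E}[C_2^2]$. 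Since $\mathbb{E}[Z^2]=1$,
\[
\mathbb{E}[C_2^2]=\frac{\ell^4 r^2}{4}\,\mathbb{E}\big[\{g^{\dprime}(X_{0,1}^d)g^\prime(X_{0,1}^d)\}^2\big]=\ell^4K_1(r)^2,
\]
so $\mathbb{E}[C_4]=-\ell^4K_1(r)^2/2$, which is exactly the constant in~\eqref{eq:expansion_result1}.

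For part (ii) I would subtract the retained leading term and treat the three remaining pieces separately. Writing $\bar C_j(x)=\mathbb{E}_Z[C_j(x,Z)]$, I define $F_d$ by requiring the empirical averages $d^{-1}\sum_i\bar C_3(x_i^d)$, $d^{-1}\sum_i\bar C_4(x_i^d)$, $d^{-1}\sum_i\var_Z C_3(x_i^d,Z)$ and the moments $d^{-1}\sum_i(1+|x_i^d|)^{k}$ to be close to their $\pi$-expectations; a Chebyshev bound with high moments (available since $\pi$ has all moments under \Cref{ass:g_smooth} and the $C_j$ are polynomially bounded) makes $\pi_d(F_d^c)=o(d^{-1/2})=o(d^{-2\alpha})$. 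On $F_d$ the $C_4$ term obeys $d^{-1}\sum_{i=2}^dC_4(x_i^d,Z_i)\to\mathbb{E}[C_4]$ in $L^1(Z)$, its mean converging by the law of large numbers encoded in $F_d$ and its $Z$-variance being $O(d^{-1})$ by independence of the $Z_i$; the remainder satisfies $\mathbb{E}\sum_{i=2}^d|C_5(x_i^d,Z_i,\sigma_d)|=O(d\cdot\sigma_d^5)=O(d^{-1/4})\to0$. Combining with $\mathbb{E}[C_4]=-\ell^4K_1(r)^2/2$ then gives~\eqref{eq:expansion_result1}.

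The hard part will be the $C_3$ term: it sits at order $d^{-3/4}$ multiplying a sum of $\sim d$ summands, so a naive bound yields a divergent $O(d^{1/4})$ contribution. The centering $\mathbb{E}[C_3]=0$ established above is therefore essential: on $F_d$ its $Z$-mean is $d^{-3/4}\sum_i\bar C_3(x_i^d)=d^{1/4}\cdot\tfrac1d\sum_i\bar C_3(x_i^d)=d^{1/4}\,O(d^{-1/2})=O(d^{-1/4})$ by the square-root fluctuations of the empirical average around $\mathbb{E}[C_3]=0$, while its $Z$-variance is $d^{-3/2}O(d)=O(d^{-1/2})$, so the whole term vanishes in $L^2(Z)$ uniformly over $F_d$. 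Arranging $F_d$ so that this, the $C_4$ law of large numbers, the polynomial remainder bounds on $C_5$, and the requirement $d^{2\alpha}\pi_d(F_d^c)\to0$ all hold simultaneously is the main technical burden; the remaining estimates are the same Gaussian-moment and law-of-large-numbers computations as in \cite{roberts1998optimal}.
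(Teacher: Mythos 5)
Your proof follows the paper's architecture for the expansion itself and for part (ii) — an order-five Taylor expansion of $R_1+R_2$ in $\sigma_d$, polynomial bounds on the remainder from \Cref{ass:g_smooth}, and good sets $F_d$ built from concentration of empirical averages — but your route to the moment identities in (i) is genuinely different: instead of computing $C_3$ and $C_4$ explicitly and integrating by parts against Gaussian moments (which is what the paper does, with the coefficients recorded in Appendix B), you extract $\mathbb{E}[C_3]=0$ and $\mathbb{E}[C_4]=-\tfrac12\mathbb{E}[C_2^2]$ from the exact identity $\mathbb{E}[\rme^{\phi_d}]=1$ by matching the distinct powers $d^{-1/2}$, $d^{-3/4}$, $d^{-1}$. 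The idea is attractive because it avoids the heavy explicit coefficients, but as written it has a gap: $\rme^{x}-1-x-x^{2}/2$ is \emph{not} $O(|x|^{3})$ uniformly in $x$ (it grows like $\rme^{x}$ as $x\to+\infty$), so the error in ``$0=\mathbb{E}[\phi_d]+\tfrac12\mathbb{E}[\phi_d^2]+o(d^{-1})$'' cannot be controlled by $\mathbb{E}[|\phi_d|^3]$ alone; you must bound the contribution of the event $\{\phi_d>1\}$ separately. This is fixable: writing $\psi(x,y)=\log[\pi(y)q(y,x)/(\pi(x)q(x,y))]$, so that $\phi_d(x,z)=\psi(x,y_d(x,z))$ and $\psi(x,y)=-\psi(y,x)$, the same change of variables you use for $\mathbb{E}[\rme^{\phi_d}]=1$ gives $\mathbb{E}[\rme^{\phi_d}F(\phi_d)]=\mathbb{E}[F(-\phi_d)]$, whence $\mathbb{E}[\rme^{\phi_d}\1\{\phi_d>1\}]=\pr(\phi_d<-1)\leq\mathbb{E}[\phi_d^{2k}]=O(d^{-k})$ for every $k$. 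But some such argument must be supplied; without it the power-matching step is unjustified.

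The second gap is quantitative and concerns your set $F_d$ in part (ii). You require simultaneously that $d^{-1}\sum_i\bar C_3(x_i^d)=O(d^{-1/2})$ on $F_d$ and that $\pi_d(F_d^c)=o(d^{-1/2})$. These are incompatible: under \Cref{ass:stationarity} the variables $\bar C_3(X^d_{0,i})$ are i.i.d.\ and centered (that is exactly $\mathbb{E}[C_3]=0$), with non-degenerate variance in general (for the Gaussian target $\bar C_3(x)\propto x^2-1$), so by the central limit theorem $\pi_d\left(\left\lvert\sum_{i}\bar C_3(X^d_{0,i})\right\rvert>Kd^{1/2}\right)$ converges to a strictly positive constant for any fixed $K$; no Chebyshev bound with higher moments can make an event at the exact CLT scale have vanishing probability. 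The paper avoids this by taking the threshold slightly above that scale — $d^{5/8}$ in the set $F^1_{d,3}$ — for which high-moment Markov bounds do give $d^{1/2}\pi_d((F^1_{d,3})^c)\to0$, and which still kills the $C_3$ term since $d^{-3/4}\cdot d^{5/8}=d^{-1/8}\to0$. With this correction (and the analogous ones for the $C_4$ and variance constraints), the rest of your part (ii) — the conditional-variance bound $d^{-3/2}O(d)=O(d^{-1/2})$ for the fluctuation term, the law of large numbers for the $C_4$ term, and the remainder bound $O(d\,\sigma_d^5)=O(d^{-1/4})$ — matches the paper's argument and goes through.
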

\begin{proof}
Take one component of the log-acceptance ratio
\begin{align}
    \phi_d(x, z)= g(x) - g(y_d(x,z))  +\log q(y_d(x,z), x)  -\log q(x,
    y_d(x,z)),
\end{align}
with $y_d(x,z)=  x - \sigma_d^2 g^\prime(\prox_{g}^{\sigma^{2v}_d
r/2}(x))/2+\sigma_d z$. We have that $\phi_d(x, z) = R_1(x, z,
\sigma_d)+R_2(x, z, \sigma_d)$, where
\begin{align}
R_1(x, z, \sigma) &=-g\left[x - \frac{\sigma^2}{2} g^{\prime}\left(\prox_{g}^{\sigma^{2v}r/2}(x)\right)+\sigma z\right]+g(x),\\
R_2(x, z, \sigma) &= \frac{1}{2}z^2
-\frac{1}{2}\left(z-\frac{\sigma}{2}g^\prime\left(\prox_g^{\sigma^{2v}r/2}\left[x+\sigma z
-\frac{\sigma^2}{2}g^\prime\left[\prox_g^{\sigma^{2v}r/2}(x)\right]\right]\right)\right.\\
                  &\qquad\quad \left.-\frac{\sigma}{2}g^\prime\left[\prox_g^{\sigma^{2v}r/2}(x)\right]\right)^2.
\end{align}
Following the approach of \cite{roberts1998optimal} we approximate $ \phi_d(x,
z)$ with a Taylor expansion about $\sigma_d\to 0$.

\begin{enumerate}[label=(\roman*)]
\item Using a Taylor expansion of order 5, we obtain
\begin{align}
\label{eq:expansion1}
\phi_d(x, z) &= d^{-1/2}C_2(x, z) + d^{-3/4}C_3(x, z)+d^{-1}C_4(x, z)+C_5(x, z, \sigma_d),
\end{align}
where
\begin{align}
\label{eq:C2}
C_2(x, z) &= \frac{\ell^2}{2}\left(-rzg^{\dprime}(x)g^{\prime}(x)\right),
\end{align}
$C_3(x, z)$ and $C_4(x, z)$ are given in \Cref{app:taylor_a} and we use the integral form for the remainder
\begin{align}
 C_5(x, z, \sigma_d) = \int_0^{\sigma_d} \left.\frac{\partial ^5}{\partial \sigma^5}R(x, z, \sigma)\right|_{\sigma =u}\frac{(\sigma_d-u)^4}{4!}\rmd u,
\end{align}
with $u$ between 0 and $\sigma_d$ and the derivatives of $R_1$ and $R_2$ given in Appendix~\ref{app:derivatives_R}.
In addition, integrating by parts and using the moments of $Z^d_{1,1}$ we find that
$\mathbb{E}[ C_2(X^d_{0,1}, Z^d_{1,1})]=\mathbb{E}[ C_3(X^d_{0,1},
Z^d_{1,1})]=0$ and
\begin{align}
    2\expe{ C_4(X^d_{0,1}, Z^d_{1,1})}+\expe{ C_2(X^d_{0,1}, Z^d_{1,1})^2}=0.
\end{align}
\item To construct the sets $F_d$, consider, for $j=3, 4$, $F_{d, j} =F_{d, j}^1\cap F_{d, j}^2$ where we define
\begin{align}
    F_{d, j}^1=\left\lbrace \bm{x}^d\in\rset^d: \left\lvert \sum_{i=2}^d
        \expe{C_j(x_i^d,Z^d_{1,i})-C_j(X^d_{0,i},Z^d_{1,i})}\right\rvert \leq
d^{5/8}\right\rbrace,
\end{align}
and
\begin{align}
    F_{d, j}^2=\left\lbrace \bmx^d\in\rset^d : \left\lvert \sum_{i=2}^d V_j(x_i^d)-\expe{V_j(X^d_{0,i})}\right\rvert \leq d^{6/5}\right\rbrace,
\end{align}
where $V_j(x):= \var(C_j(x, Z^d_{1,1}))$.
Using Markov's inequality and the fact that $C_j, V_j$ are bounded by
polynomials since $g$ and its derivatives are bounded by polynomials, it is
easy to show that $d^{1/2}\pi_d((F_{d, j}^1)^c)\to 0$ and $d^{1/2}\pi_d((F_{d,
j}^2)^c)\to 0$, from which follows $d^{1/2}\pi_d(F_{d, j}^c)\to 0$ as $d\to \infty$.
To prove $\mrl^1$ convergence of $C_j$ for $j=3, 4$, observe that
\begin{align}
    &\expe{\left(\sum_{i=2}^d C_j(x_i^d, Z^d_{1,i}) - \expe{C_j(X^d_{0,1},Z^d_{1,1})}\right)^2} \\
    &\quad = \sum_{i=2}^d  V_j(x_i^d) 
    + \left(\sum_{i=2}^d \expe{C_j(x_i^d,Z^d_{1,i})-C_j(X^d_{0,1},Z^d_{1,1})}\right)^2,
\end{align}
and that, for $\bmx^d\in F_{d, j}$, we have
\begin{align}
\expe{\left(\sum_{i=2}^d C_j(x_i^d, Z^d_{1,i}) - \expe{C_j(X^d_{0,1},Z^d_{1,1})}\right)^2}
& \leq \expe{V_j(x_1^d)}(d-1) + d^{6/5}+ d^{5/4}.
\end{align}
Thus, the third and fourth term in the Taylor expansion~\eqref{eq:expansion1} converge in $\mrl^1$ to 0 and $-\ell^4K_1^2(r)/2$ respectively.
Now, consider $C_5(x_i^d, Z^d_{1,i}, \sigma_d)$. We can bound $\frac{\partial^5 R}{\partial \sigma^5}(x, z, \sigma) $ with the derivatives of $g$ evaluated at 
\begin{align}
x+\frac{\sigma^2}{2}\prox_g^{\sigma^{2v}r/2}(x)+\sigma z\qquad\textrm{and}\qquad\prox_g^{\sigma^{2v}r/2}(x).
\end{align}
Under our assumptions, the derivatives of $g$ are bounded by polynomials $M_0$, it follows that there exist polynomials $p$ of the form
$$
A\left(1+\left[\prox_g^{\sigma^{2v}r/2}(x)\right]^N\right)\left(1+z^N\right)\left(1+x^N\right)\left(1+\sigma^N\right),
$$
for sufficiently large $A$ and sufficiently large even integer $N$, such that
\begin{align}
   &\left \lvert g^{(k)}\left[ x+\frac{\sigma^2}{2}\prox_g^{\sigma^{2v}r/2}(x)+\sigma_d z\right]\right\rvert\vee \left \lvert g^{(k)}\left[ \prox_g^{\sigma^{2v}r/2}(x)\right]\right\rvert\\
   &\qquad\qquad\leq  p(\prox_g^{\sigma^{2v}r/2}(x), x, z, \sigma_d).
\end{align}
In addition, $\vert\prox_g^{\sigma^{2v}r/2}(x)\vert \leq C(1+\vert x\vert)$ for some $C\geq 1$, and we can bound
\begin{align}
     p(\prox_g^{\sigma^{2v}r/2}(x), x, z, \sigma) \leq  A\left(1+z^N\right)\left(1+x^{2N}\right)\left(1+\sigma^N\right).
\end{align}
Therefore, we have
\begin{align}
\expe{\left\vert C_5(x_i^d, Z^d_{1,i}, \sigma_d) \right\vert}
&\leq A\expe{1+(Z^d_{1,i})^N}\left(1+(x_i^d)^{2N}\right)\int_0^{\sigma_d} (1+u^N)\frac{(\sigma_d-u)^4}{4!}\rmd u\\
&\leq A\expe{1+(Z^d_{1,i})^N}\left(1+(x_i^d)^{2N}\right)d^{-5/2}\\
&\leq  A\left(1+(x_i^d)^{2N}\right)d^{-5/2},
\end{align}
where the last inequality follows since all the moments of $Z^d_{1}$ are bounded.
Let us denote $p(x)=  A\left(1+x^{2N}\right)$ and
\begin{align}
    F_{d, 5}= \left\lbrace \bm{x}^d\in\real^d: \left\lvert d^{-1}\sum_{i=1}^d p(x_i^d)-\expe{p(X^d_{0,i})}\right\rvert < 1\right\rbrace.
\end{align}
By Chebychev's inequality we have $\pi_d(F_{d, 5}^c)\leq \var(p(X^d_{0,1}))d^{-1}$. Additionally, for all $\bm{x}^d\in F_{d, 5}$,
\begin{align}
    \sum_{i=2}^d \expe{\left\vert C_5(x_i^d, Z^d_{1,i}, \sigma_d) \right\vert}
    &\leq \sum_{i=2}^d d^{-5/2}\parenthese{\expe{p(X^d_{0,1})}+d^{-1}}\\
    &\leq d^{-3/2}\parenthese{\expe{p(X^d_{0,1})}+1}.
\end{align}
Finally, set $F_d= \cap_{j=3}^5 F_{d, j}$. On $F_d$ the last three terms of~\eqref{eq:expansion1} converge uniformly in $\mrl^1$, and~\eqref{eq:expansion_result1} follows using the triangle inequality.
\end{enumerate}
\end{proof}

Next, we compare the generator $\rmL_d $ and $\widetilde{\rmL}_d $ in~\eqref{eq:generator_differentiable} and~\eqref{eq:tilde_generator_differentiable} respectively.
\begin{prop}
\label{prop:2_case_a}
Under~\Cref{ass:a0}, \Cref{ass:g_smooth} and~\Cref{ass:stationarity}, if $\alpha= 1/4$, $\beta = 1/8$ and $r>0$, there exists sets $S_d \subseteq \real^d$ with $d^{2\alpha}\pi_d(S_d^c)\to 0$ such that for any $V\in \rmC_{\rmc}^\infty(\real,\real)$
\begin{align}
    \lim_{d\to \infty}\sup_{\bm{x}^d \in S_d}\left\lvert \rmL_d V(\bm{x}^d)-\widetilde{\rmL}_d V(\bm{x}^d)\right\rvert = 0 ,
\end{align}
and
\begin{align}
    \lim_{d\to \infty}\sup_{\bm{x}^d \in S_d}\expe{\left\lvert
        \parenthese{\exp\parenthese{\sum_{i=1}^d\phi_{d}(x^d_i,Z^d_{1,i})}\wedge 1} -
        \parenthese{\exp\parenthese{\sum_{i=2}^d\phi_{d}(x^d_i,Z^d_{1,i})}\wedge 1}
     \right\rvert} = 0.
\end{align}
\end{prop}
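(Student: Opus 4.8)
The plan is to exploit that the only discrepancy between $\rmL_d$ and $\widetilde{\rmL}_d$, and between the two clipped products in the second display, is the single factor indexed by $i=1$ inside the acceptance ratio. Writing $A = \sum_{i=1}^d \phi_d(x_i^d, Z^d_{1,i})$ and $B = \sum_{i=2}^d \phi_d(x_i^d, Z^d_{1,i})$, so that $A-B = \phi_d(x_1^d, Z^d_{1,1})$ is exactly the first-coordinate contribution, the starting point is the elementary observation that $u \mapsto \mathrm{e}^u \wedge 1$ is $1$-Lipschitz on $\real$ (its a.e. derivative is $\mathrm{e}^u \indicatorDD{u<0}$, bounded by $1$), whence
\begin{align*}
\abs{(\mathrm{e}^{A}\wedge 1) - (\mathrm{e}^{B}\wedge 1)} \leq \abs{A-B} = \abs{\phi_d(x_1^d, Z^d_{1,1})}.
\end{align*}
This reduction is the crux of the argument: it cancels the product over $i\geq 2$ and, crucially, avoids any exponential-moment estimate on the acceptance ratio (which could diverge for light-tailed $g$), leaving only $\mrL^1$- and $\mrL^2$-control of $\phi_d(x_1^d, Z^d_{1,1})$.

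For the second statement, taking expectations gives $\expe{\abs{(\mathrm{e}^{A}\wedge 1) - (\mathrm{e}^{B}\wedge 1)}} \leq \expe{\abs{\phi_d(x_1^d, Z^d_{1,1})}}$, and I would bound the right-hand side with the Taylor expansion of Proposition~\ref{prop:1_case_a}(i): since $\phi_d(x,z) = d^{-1/2}C_2(x,z) + d^{-3/4}C_3(x,z) + d^{-1}C_4(x,z) + C_5(x,z,\sigma_d)$, with $C_2,C_3,C_4$ polynomial in $z$ and in the derivatives of $g$ at $x$, and with the remainder bound $\expe{\abs{C_5(x, Z^d_{1,1}, \sigma_d)}} \leq A(1+x^{2N})d^{-5/2}$ already established there, integrating over $Z^d_{1,1}$ yields $\expe{\abs{\phi_d(x,Z^d_{1,1})}} \leq d^{-1/2}P(x)$ for a fixed polynomial $P$. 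I would then take $S_d = \{\bm{x}^d \in \real^d : \abs{x_1^d}\leq d^{s}\}$ for a small $s>0$; because $\pi$ has finite moments of every order (\cite[Lemma A.1]{durmus2018efficient}), choosing $k$ large in $\pr(\abs{X^d_{0,1}}>d^s)\leq C_k d^{-sk}$ gives $d^{2\alpha}\pi_d(S_d^c) = d^{1/2}\pr(\abs{X^d_{0,1}}>d^s)\to 0$, while $\sup_{\bm{x}^d\in S_d}\expe{\abs{\phi_d(x_1^d, Z^d_{1,1})}}\leq d^{-1/2}P(d^{s})\to 0$ as soon as $s<1/(2\deg P)$.

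For the first statement I would additionally use the smoothness of $V$: since $V\in\rmC_{\rmc}^\infty(\real,\real)$, one has $\abs{V(y_d(x_1^d,Z^d_{1,1}))-V(x_1^d)}\leq \norm{V'}_\infty\abs{y_d(x_1^d,Z^d_{1,1})-x_1^d}$, and $y_d(x,z)-x = -\tfrac{\sigma_d^2}{2}g'(\prox_g^{\sigma_d^{2v}r/2}(x))+\sigma_d z$ satisfies $\expe{(y_d(x,Z^d_{1,1})-x)^2}^{1/2}=\bigO(\sigma_d)=\bigO(d^{-1/4})$ uniformly on $S_d$. Combining the Lipschitz reduction with Cauchy--Schwarz,
\begin{align*}
\abs{\rmL_d V(\bm{x}^d)-\widetilde{\rmL}_d V(\bm{x}^d)} \leq d^{2\alpha}\norm{V'}_\infty\, \expe{(y_d(x_1^d,Z^d_{1,1})-x_1^d)^2}^{1/2}\,\expe{\phi_d(x_1^d,Z^d_{1,1})^2}^{1/2},
\end{align*}
and since the same expansion gives $\expe{\phi_d(x_1^d,Z^d_{1,1})^2}^{1/2}=\bigO(d^{-1/2}P(x_1^d))$, the bound is $\bigO(d^{1/2}\cdot d^{-1/4}\cdot d^{-1/2}P(d^{s}))=\bigO(d^{-1/4}P(d^s))$, which vanishes uniformly on $S_d$ for $s$ small. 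The main obstacle, and essentially the only place care is needed, is keeping these moment estimates uniform over the \emph{growing} truncation window $\abs{x_1^d}\leq d^s$: the polynomial growth in $x$ of the derivatives of $g$ inherited through $C_2,C_3,C_4,C_5$ must be balanced against both the window size $d^s$ and the tail decay $\pr(\abs{X^d_{0,1}}>d^s)$, forcing $s$ to be strictly positive but sufficiently small; once this is arranged, everything else follows routinely from the $1$-Lipschitz reduction and Proposition~\ref{prop:1_case_a}.
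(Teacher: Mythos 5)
Your proposal is correct and follows essentially the same route as the paper's proof: the $1$-Lipschitz bound on $u\mapsto \rme^u\wedge 1$ to isolate the first coordinate, Taylor-expansion-based polynomial bounds on $\phi_d$, a truncation set of the form $\{\abs{x_1^d}\le d^s\}$, and Markov's inequality with high moments — your Cauchy--Schwarz splitting is only a cosmetic variant of the paper's pointwise product bound. The one imprecision is the claim that $\expe{(y_d(x_1^d,Z^d_{1,1})-x_1^d)^2}^{1/2}=\bigO(\sigma_d)$ \emph{uniformly} on $S_d$: since $v=1/2$, the drift term is only $\bigO(\sigma_d(1+\abs{x_1^d}))$ (via the $\lambda_d^{-1}$-Lipschitz bound on $(g^{\lambda_d})'$, exactly as the paper uses), but this extra polynomial factor is absorbed by taking $s$ small, precisely the balancing you describe in your final paragraph, so the argument stands.
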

\begin{proof}
The function $x\mapsto \exp(x)\wedge 1$ is Lipschitz continuous with Lipschitz constant 1, hence
\begin{align}
    \left\lvert \rmL_d V(\bm{x}^d)-\widetilde{\rmL}_d V(\bm{x}^d)\right\rvert
    \leq d^{2\alpha}\expe{\left\vert V\left(y_d(x^d_1,Z^d_{1,1})
    \right)-V(x_1^d)\right\vert \vert R(x_1^d, Z^d_{1,1}, \sigma_d)\vert },
\end{align}
where $R(x, z, \sigma) = R_1(x, z, \sigma) + R_2(x, z, \sigma)$ as in~\eqref{eq:R}.
Using a Taylor expansion of order 1 about $\sigma = 0$ with integral remainder:
\begin{align}
R(x, z, \sigma) = R(x, z, 0)+\left.\frac{\partial R}{\partial \sigma}(x, z, \sigma)\right|_{\sigma=0} \sigma + \int_{0}^\sigma \left.\frac{\partial^2 R}{\partial \sigma^2}(x, z, \sigma)\right|_{\sigma =u} (\sigma-u)\rmd u   ,
\end{align}
we obtain 
\begin{align}
    R(x, z, \sigma)  = \int_0^{\sigma}\left.\frac{\partial^2
        R}{\partial \sigma^2}(x, z, \sigma)\right|_{\sigma =u}(\sigma-u)\rmd
        u ,
\end{align}
where $\frac{\partial^2 R}{\partial \sigma^2}(x, z, \sigma) $ is bounded by the derivatives of $g$ evaluated at 
\begin{align}
x+\frac{\sigma^2}{2}\prox_g^{\sigma^{2v}r/2}(x)+\sigma z\qquad\textrm{and}\qquad\prox_g^{\sigma^{2v}r/2}(x).
\end{align}
Under our assumptions, the derivatives of $g$ are bounded by polynomials $M_0$, it follows that there exist polynomials $p$ of the form 
$$A\left(1+\left[\prox_g^{\sigma^{2v}r/2}(x)\right]^N\right)\left(1+z^N\right)\left(1+x^N\right)\left(1+\sigma^N\right),$$ for sufficiently large $A$ and sufficiently large even integer $N$, such that
\begin{align}
   \left \lvert g^{(k)}\left[
   x+\frac{\sigma^2}{2}\prox_g^{\sigma^{2v}r/2}(x)+\sigma
   z\right]\right\rvert\vee \left \lvert g^{(k)}\left[
   \prox_g^{\sigma^{2v}r/2}(x)\right]\right\rvert\leq
   p(\prox_g^{\sigma^{2v}r/2}(x), x, z, \sigma).
\end{align}
Proceeding as in Proposition~\ref{prop:1_case_a}, we can bound
\begin{align}
     p(\prox_g^{\sigma^{2v}r/2}(x), x, z, \sigma) \leq  A\left(1+z^N\right)\left(1+x^{2N}\right)\left(1+\sigma^N\right).
\end{align}
Therefore, we have
\begin{align}
    \left\vert R\left(x_1^d, Z^d_{1,1}, \sigma_d\right) \right\vert 
    &\leq A\left(1+(Z^d_{1,1})^N\right)\left(1+(x_1^d)^{2N}\right)\\
    &\qquad\quad\times\int_0^{\sigma_d} (1+u^N)(\sigma_d-u)\rmd u
    \leq A\left(1+(Z^d_{1,1})^N\right)\left(1+(x_1^d)^{2N}\right)\frac{\sigma_d^2}{2}.
\label{eq:theo3_eq1}
\end{align}
Since $V\in \rmC_\rmc^\infty(\real, \real)$, there exists a constant $C$ such that
\begin{align}
    \left\vert V\left(y_d(x^d_1,Z^d_{1,1})\right)-V(x_1^d)\right\vert 
    &\leq C\left\vert y_d(x^d_1,Z^d_{1,1})-x_1^d \right\vert\\
    &\leq  C\left(\sigma_d\vert Z_{1,1}^d\vert
    +\frac{\sigma_d^2}{2}\left\vert
g^\prime\left(\prox_g^{\sigma_d^{2v}r/2}(x_1^d)\right)\right\vert\right).
\end{align}
Recalling that $g^\prime(\prox_g^{\lambda}(x)) =(g^{\lambda})'(x)$ with
$(g^{\lambda})'$ Lipschitz continuous with Lipschitz constant $\lambda^{-1}$,
we have
\begin{align}
\left\vert g^\prime\left(\prox_g^{\sigma_d^{2v}r/2}(x_1^d)\right)\right\vert \leq \frac{2}{\sigma_d^{2v}r}(1+\vert x_1^d\vert),
\end{align}
and
\begin{align}
\label{eq:theo3_eq2}
  \left\vert V\left(y_d(x^d_1,Z^d_{1,1})\right)-V(x_1^d)\right\vert
  &\leq  C\left(\sigma_d\vert Z^d_{1,1}\vert +\frac{\sigma_d^{2-2m}}{r}\left(1+\vert x_1^d\vert\right)\right)\\
  &\leq  C\sigma_d\left(\vert Z^d_{1,1}\vert +\frac{1}{r}\left(1+\vert x_1^d\vert\right)\right),
\end{align}
since $v= 1/2$.
Combining~\eqref{eq:theo3_eq1} and~\eqref{eq:theo3_eq2} we obtain
\begin{align}
  &d^{2\alpha}\left\vert V\left(y_d(x^d_1,Z^d_{1,1})\right)-V(x_1^d)\right\vert
  \left\vert R(x_1^d, Z^d_{1,1}, \sigma_d)\right\vert \\
  &\qquad\qquad  \leq
  C\sigma_d\left(1+(Z^d_{1,1})^N\right)\left(1+(x_i^d)^{2N}\right)\left(\vert
  Z^d_{1,1}\vert +\frac{1}{r}(1+\vert x_1^d\vert)\right),
\end{align}
for some $C>0$. 

Set $S_d$ to be the set in which $1+(x_1^d)^{2N+1} \leq d^{\alpha/2}$, applying Markov's inequality we obtain
\begin{align}
d^{2\alpha}\pi_d(S_d^c)=
d^{2\alpha}\pi_d\left(\left(1+(x_1^d)^{2N+1}\right)^5\geq d^{5\alpha/2}\right)
\leq d^{-\alpha/2}\expe{\left(1+(x_1^d)^{2N+1}\right)^5}\underset{d\to \infty}{\longrightarrow}0.
\end{align}
Recalling that $\vert Z^d_{1,1}\vert$ and $1+(Z^d_{1,1})^N$ are bounded, we have that
\begin{align}
\sup_{\bm{x}^d \in S_d}\left\lvert \rmL_d V(\bm{x}^d)-\widetilde{\rmL}_d V(\bm{x}^d)\right\rvert \leq Cd^{\alpha/2}\frac{\ell}{d^{\alpha}}
\underset{d\to \infty}{\longrightarrow}0.
\end{align}
The second results follows from~\eqref{eq:theo3_eq1} using the same argument. 
\end{proof}


The following result considers the convergence to the generator of the Langevin diffusion~\eqref{eq:langevin_generator_differentiable}.
\begin{prop}
\label{prop:3_case_a}
Under~\Cref{ass:a0}, \Cref{ass:g_smooth} and~\Cref{ass:stationarity}, if
$\alpha= 1/4$, $\beta = 1/8$ and $r>0$, there exists sets $T_d \subseteq
\real^d$ with $d^{2\alpha}\pi_d(T_d^c)\to 0$ as $d\to \infty$, such that for any $V\in
\rmC_{\rmc}^\infty(\real,\real)$
\begin{align}
    \lim_{d\to \infty}\sup_{\bm{x}^d \in T_d}\left\lvert d^{2\alpha}\expe{
  V\left(y_d(x^d_1,Z^d_{1,1})\right)-V(x_1^d)
} -\frac{\ell^2}{2}(V^{\dprime}(x_1^d)+g^\prime(x_1^d)V^\prime(x_1^d))\right\rvert = 0.
\end{align}
\end{prop}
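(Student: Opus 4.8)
The plan is to read off the limiting generator from a second-order Taylor expansion of $V$, keeping track only of the first two moments of the one-dimensional proposal increment. Fix $V \in \rmC_{\rmc}^\infty(\real,\real)$, write $x = x_1^d$, $z = Z^d_{1,1}$, and set $\delta_d = y_d(x,z) - x = \sigma_d z - (\sigma_d^2/2)\, g^\prime(\prox_g^{\lambda_d}(x))$, with $\lambda_d = \sigma_d r/2$ since $v = 1/2$. Taylor's theorem with Lagrange remainder gives, uniformly in $x$,
\begin{equation}
\abs{V(y_d(x,z)) - V(x) - V^\prime(x)\delta_d - \tfrac12 V^{\dprime}(x)\delta_d^2} \leq \tfrac16 \norm{V^{\prime\prime\prime}}_\infty \abs{\delta_d}^3 ,
\end{equation}
which is the only analytic input needed; everything else is moment computation and tail control.

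Taking the expectation over the standard Gaussian $z$ and using $\expe{\delta_d} = -(\sigma_d^2/2)\, g^\prime(\prox_g^{\lambda_d}(x))$, $\expe{\delta_d^2} = \sigma_d^2 + (\sigma_d^4/4)\, g^\prime(\prox_g^{\lambda_d}(x))^2$ and $\expe{\abs{\delta_d}^3} \leq C(\sigma_d^3 + \sigma_d^6 \abs{g^\prime(\prox_g^{\lambda_d}(x))}^3)$, then multiplying by $d^{2\alpha} = \ell^2/\sigma_d^2$, I would obtain
\begin{equation}
\label{eq:prop3_main}
d^{2\alpha}\expe{V(y_d(x,z)) - V(x)} = \frac{\ell^2}{2}\ccint{V^{\dprime}(x) - g^\prime(\prox_g^{\lambda_d}(x))\, V^\prime(x)} + \mathcal{E}_d(x) ,
\end{equation}
where $\mathcal{E}_d(x)$ collects the $\bigO(\sigma_d^4)$ contribution of $\expe{\delta_d^2}$ and the third-order remainder. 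This already exhibits the diffusion and drift coefficients of the generator~\eqref{eq:langevin_generator_differentiable} (with the speed replaced by $\ell^2$, the acceptance factor $a(\ell,r)$ being supplied separately by Propositions~\ref{prop:1_case_a} and~\ref{prop:2_case_a}); it remains to pass to the limit in the leading bracket and to dispose of $\mathcal{E}_d$.

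Two limits remain. First, to replace $g^\prime(\prox_g^{\lambda_d}(x))$ by $g^\prime(x)$ in~\eqref{eq:prop3_main}: since $\prox_g^{\lambda_d}$ is $1$-Lipschitz and $\abs{\prox_g^{\lambda_d}(x) - x} = \lambda_d \abs{g^\prime(\prox_g^{\lambda_d}(x))}$, and since $g^\prime, g^{\dprime}$ are bounded on the compact support of $V^\prime$, one gets $\abs{g^\prime(\prox_g^{\lambda_d}(x)) - g^\prime(x)}\, \abs{V^\prime(x)} \leq C\lambda_d \to 0$ uniformly in $x$. Here $\lambda_d = \sigma_d r/2 = \bigO(d^{-1/4})$ decays only slowly, which is precisely what distinguishes case~\ref{case:a}, but $\lambda_d \to 0$ still suffices. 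Second, to control $\mathcal{E}_d$ uniformly on a good set $T_d$: the $\expe{\delta_d^2}$ correction carries a factor $V^{\dprime}(x)$ of compact support, on which $g^\prime(\prox_g^{\lambda_d}(x))^2$ is bounded, so it is $\bigO(\sigma_d^2)$; for the remainder, $d^{2\alpha}\expe{\abs{\delta_d}^3} \leq C\ell^2(\sigma_d + \sigma_d^4 \abs{g^\prime(\prox_g^{\lambda_d}(x))}^3)$, where $\abs{g^\prime(\prox_g^{\lambda_d}(x))} \leq C(1 + \abs{x}^{k_0})$ by \Cref{ass:g_smooth} together with $\abs{\prox_g^{\lambda_d}(x)} \leq C(1+\abs{x})$. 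Taking $T_d = \defEns{\bm{x}^d \in \real^d : 1 + \abs{x_1^d}^{3k_0} \leq d^{\gamma}}$ with $0 < \gamma < 1 = 4\alpha$ makes $\sigma_d^4 \abs{g^\prime(\prox_g^{\lambda_d}(x_1^d))}^3 \leq C d^{\gamma - 1} \to 0$ uniformly on $T_d$, while $d^{2\alpha}\pi_d(T_d^c) \to 0$ follows from Markov's inequality and the finiteness of all moments of $\pi$ (recalled after \Cref{ass:g_smooth}) upon choosing a sufficiently high moment. Combining the two limits yields the claim.

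The main delicate point is not any single estimate but the tension between the slow decay $\lambda_d = \bigO(d^{-1/4})$ of the proximal parameter and the polynomial growth of $g^\prime(\prox_g^{\lambda_d})$: the increment $\delta_d$ has a drift of size $\sigma_d^2 \abs{g^\prime(\prox_g^{\lambda_d}(x))}$ that is not uniformly small in $x$, so the error terms cannot be bounded globally. The construction of $T_d$, together with the repeated use of the compact support of the derivatives of $V$ wherever the dangerous polynomial factor multiplies a derivative of $V$, is exactly what localizes these terms; verifying simultaneously that $T_d$ carries almost all the mass in the sharpened sense $d^{2\alpha}\pi_d(T_d^c)\to 0$ is the part genuinely requiring \Cref{ass:g_smooth} and the moment bounds.
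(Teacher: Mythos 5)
Your proof is correct, and structurally it is the same argument as the paper's: a second-order Taylor expansion whose leading term produces the diffusion-plus-drift bracket, a remainder bounded by a polynomial in $x$ times a negative power of $d$, and a truncation set $T_d$ on which that polynomial is at most $d^{\gamma}$, with $d^{2\alpha}\pi_d(T_d^c)\to 0$ by Markov's inequality and the finiteness of all moments of $\pi$. The one genuine mechanical difference is the expansion variable. The paper expands $W(x,z,\sigma)=V(y_d(x,z))$ in the step size $\sigma$ about $\sigma=0$ with integral remainder; since $\prox_g^{\sigma^{2v}r/2}$ is the identity at $\sigma=0$, the coefficient of $\sigma^2$ is exactly $V^{\dprime}(x)z^2+V^{\prime}(x)g^{\prime}(x)$, so the discrepancy between $g^{\prime}(\prox_g^{\lambda_d}(x))$ and $g^{\prime}(x)$ is absorbed automatically into the third-order remainder --- at the price of needing the $\sigma$-derivatives of the proximity map (\Cref{app:derivatives_prox}). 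You instead expand $V$ in the spatial increment $\delta_d$ with the proximal parameter frozen at $\lambda_d$, which keeps the computation to elementary Gaussian moments and avoids differentiating the prox in $\sigma$ altogether, but then forces the extra replacement step $\abs{g^{\prime}(\prox_g^{\lambda_d}(x))-g^{\prime}(x)}\,\abs{V^{\prime}(x)}\le C\lambda_d$. Your justification of that step (1-Lipschitzness of the prox, the identity $\abs{\prox_g^{\lambda_d}(x)-x}=\lambda_d\abs{g^{\prime}(\prox_g^{\lambda_d}(x))}$, and boundedness of $g^{\prime},g^{\dprime}$ on a compact neighbourhood of the support of $V^{\prime}$) is sound, and your $T_d$ plays exactly the role of the paper's set $\{1+(x_1^d)^{2N}\le d^{\alpha/2}\}$. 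The two routes buy essentially the same thing; yours is marginally more self-contained for this proposition, while the paper's recycles the $\sigma$-expansion machinery it needs anyway for Propositions~\ref{prop:1_case_a} and~\ref{prop:2_case_a}.

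One bookkeeping remark, which is not a gap in your argument: with the paper's definition of $y_d$ (drift $-\tfrac{\sigma_d^2}{2}g^{\prime}(\prox_g^{\lambda_d}(x))$), your computation correctly yields the limit $\tfrac{\ell^2}{2}\left(V^{\dprime}-g^{\prime}V^{\prime}\right)$, which is the sign that matches the Langevin generator~\eqref{eq:langevin_generator_differentiable} and is what the proof of Theorem~\ref{theo:differentiable} actually uses. The proposition as stated carries the opposite sign $+g^{\prime}V^{\prime}$, and the paper's own proof obtains it only by silently writing $y_d$ with a $+$ drift, contradicting the definition of $y_d$ given at the start of the appendix; this is an internal inconsistency of the paper rather than an error on your side.
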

\begin{proof}
Take
\begin{align}
    y_d(x^d_1,Z^d_{1,1}) = x_1^d+\frac{\sigma_d^2}{2}g^{\prime}\left(\prox_g^{\sigma_d^{2v}r/2}(x_1^d)\right) + \sigma_d Z^d_{1,1},
\end{align}
and use a Taylor expansion of order 2 of
\begin{align}
W(x, z, \sigma) = V\left[x+\frac{\sigma^2}{2}g^{\prime}\left(\prox_g^{\sigma^{2v}r/2}(x)\right) + \sigma z\right],
\end{align}
about $\sigma = 0$ with integral remainder:
\begin{align}
W(x, z, \sigma)& = W(x, z, 0)+\left.\frac{\partial W}{\partial \sigma}(x, z, \sigma)\right|_{\sigma=0} \sigma + \frac{1}{2}\left.\frac{\partial^2 W}{\partial \sigma^2}(x, z, \sigma)\right|_{\sigma=0} \sigma^2 \\
&\quad+ \int_{0}^\sigma \left.\frac{\partial^3 W}{\partial \sigma^3}(x, z, \sigma)\right|_{\sigma =u} \frac{(\sigma-u)^2}{2}\rmd u  .
\end{align}
Using the derivatives
\begin{align}
 W(x, z, 0)&= V(x), \quad
 \left.\frac{\partial W}{\partial \sigma}(x, z, \sigma)\right|_{\sigma=0} =V^\prime(x)z,\\
\frac{\partial^2 W}{\partial \sigma^2}(x, z, \sigma) &= V^{\dprime}(x)z^2+V^\prime(x)g^\prime(x),
\end{align}
and recalling that $\expe{Z^d_{1,1}}=0, \expe{(Z^d_{1,1})^2}=1$, we have
\begin{align}
    \expe{ V\left(y_d(x^d_1,Z^d_{1,1})\right)-V(x_1^d) } 
    &= \frac{\sigma_d^2}{2} \left[V^{\dprime}(x_1^d)+V^\prime(x_1^d)g^\prime(x_1^d)\right] \\
    &\quad+ \expe{\int_{0}^{\sigma_d} \left.\frac{\partial^3 W}{\partial
            \sigma^3}(x_1^d, Z^d_{1,1}, \sigma)\right|_{\sigma =u}
    \frac{(\sigma_d-u)^2}{2}\rmd u}.
\end{align}
Proceeding as in the previous proposition, we can bound
\begin{align}
    \left\lvert \int_{0}^{\sigma_d} \frac{\partial^3 W}{\partial \sigma^3}\left.(x_1^d, Z^d_{1,1}, \sigma)\right|_{\sigma =u} \frac{(\sigma_d-u)^2}{2}\rmd u\right\rvert \leq  A\left(1+(Z^d_{1,1})^N\right)\left(1+(x_i^d)^{2N}\right)d^{-3\alpha}.
\end{align}
Setting $T_d$ to be the set in which $(1+(x_1^d)^{2N}) \leq d^{\alpha/2}$, the
result follows by applying Markov's inequality as in
Proposition~\ref{prop:2_case_a}.
\end{proof}


Before proceeding to stating and proving the last auxiliary result, let us
denote by $\psi_{1}:\real \to [0, +\infty)$ the characteristic function of the
distribution $\mathrm{N}(0, \ell^4K_1^2(r))$, where $K_1^2(r)$ is given
in~\Cref{theo:differentiable}--\ref{case:a},
\begin{align}
\psi_1(t) = \exp(-t^2\ell^{4}K^2_1(r)/2),
\end{align}
and by $\psi_1^d(\bm{x}^d; t) = \int \exp(itw)\mathcal{Q}^d_1(\bm{x}^d;\rmd w)$ the characteristic functions associated with the law
\begin{align}
    \mathcal{Q}_1^d(\bm{x}^d; \cdot) = \mathcal{L}\left\lbrace d^{-1/2}\sum_{i=2}^d C_2(x_i^d, Z^d_{1,i})\right\rbrace.
\end{align}

\begin{prop}
\label{prop:4_case_a}
Under~\Cref{ass:a0}, \Cref{ass:g_smooth} and~\Cref{ass:stationarity}, if $\alpha= 1/4$, $\beta = 1/8$ and $r>0$, there exists a sequence of sets $H_d \subseteq \real^d$ such that 
\begin{enumerate}[label=(\roman*)]
\item $ \lim_{d\to\infty} d^{2\alpha}\pi_d(H_d^c)=0$ ,
\item \label{item:psi} for all $t\in \real$, $\lim_{d\to\infty}\sup_{\bm{x}^d\in H_d}\vert
    \psi_1^d(\bm{x}^d; t)- \psi_1(t)\vert = 0$ ,
\item for all bounded continuous function $\chi : \rset \to \rset$ ,
    \begin{equation}
        \lim_{d\to \infty} \sup_{\bm{x}^d \in H_d} \left\vert
        \int_{\rset} \mathcal{Q}_{1}^d(\bm{x}^d;\rmd u) \chi(u)
            - \parenthese{2\pi \ell^4 K_1^2(r)}^{-1/2}
                \int_{\rset}\chi(u) e^{-u^2/(2\ell^4 K_1^2(r))} \rmd u
                \right\vert= 0 ,
    \end{equation} 
\item in particular,
    \begin{align}
    \lim_{d\to\infty}\sup_{\bm{x}^d\in H_d}\left\vert\expe{1\wedge
\exp\left(d^{-1/2}\sum_{i=2}^d C_2(x_i^d,
Z^d_{1,i})-\frac{\ell^4K_1^2(r)}{2}\right)}-2\Phi\left(-\frac{\ell^2K_1(r)}{2}\right)\right\rvert = 0 ,
\end{align}
where $\Phi$ is the distribution function of the standard normal random variable.
\end{enumerate}
\end{prop}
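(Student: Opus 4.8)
The plan is to exploit that $C_2(x,z) = -\tfrac{\ell^2 r}{2}\, z\, g^{\dprime}(x) g^{\prime}(x)$, read off from \eqref{eq:C2}, is \emph{linear} in $z$. Consequently, for fixed $\bm{x}^d$ the sum $d^{-1/2}\sum_{i=2}^d C_2(x_i^d, Z^d_{1,i})$ is a linear combination of the independent standard Gaussians $(Z^d_{1,i})_{i\geq 2}$, hence \emph{exactly} Gaussian: $\mathcal{Q}_1^d(\bm{x}^d;\cdot) = \mathrm{N}(0, v_d(\bm{x}^d))$ with
\begin{align}
v_d(\bm{x}^d) = \frac{\ell^4 r^2}{4}\, d^{-1}\sum_{i=2}^d \{g^{\dprime}(x_i^d) g^{\prime}(x_i^d)\}^2 .
\end{align}
Since $\ell^4 K_1^2(r) = \tfrac{\ell^4 r^2}{4}\,\mathbb{E}[\{g^{\dprime}(X^d_{0,1})g^{\prime}(X^d_{0,1})\}^2]$, the entire proposition reduces to a law of large numbers for $v_d(\bm{x}^d)$, uniform over a set of high $\pi_d$-probability; crucially, \emph{no} Lindeberg-type central limit theorem is required, which is the main structural simplification of this case.

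I would take $H_d = \{\bm{x}^d \in \rset^d : |v_d(\bm{x}^d) - \ell^4 K_1^2(r)| \leq \varepsilon_d\}$ with $\varepsilon_d = d^{-1/8}$. For (i), under \Cref{ass:stationarity} the coordinates $x_i^d$ are i.i.d.\ draws from $\pi$, so Chebyshev's inequality applied to $d^{-1}\sum_i \{g^{\dprime}(x_i^d)g^{\prime}(x_i^d)\}^2$ yields $\pi_d(H_d^c) = \mathcal{O}(\varepsilon_d^{-2} d^{-1})$; here $\{g^{\dprime}g^{\prime}\}^2$ has finite variance under $\pi$ because $g$ and its derivatives are polynomially bounded (\Cref{ass:g_smooth}) and $\pi$ has all polynomial moments finite by \cite[Lemma A.1]{durmus2018efficient}. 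With $\varepsilon_d = d^{-1/8}$ this gives $d^{2\alpha}\pi_d(H_d^c) = \mathcal{O}(d^{-1/4}) \to 0$, as $2\alpha = 1/2$. For (ii), the characteristic function of a centred Gaussian is explicit, $\psi_1^d(\bm{x}^d; t) = \exp(-t^2 v_d(\bm{x}^d)/2)$, so that $\sup_{\bm{x}^d \in H_d} |\psi_1^d(\bm{x}^d; t) - \psi_1(t)| \to 0$ follows immediately from $\sup_{\bm{x}^d\in H_d}|v_d(\bm{x}^d) - \ell^4 K_1^2(r)| \leq \varepsilon_d$, with $\varepsilon_d \to 0$, together with the uniform continuity of $v \mapsto e^{-t^2 v/2}$.

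For (iii), set $F(v) = \mathbb{E}_{W\sim \mathrm{N}(0,v)}[\chi(W)]$ for a given bounded continuous $\chi$; since $\mathrm{N}(0,v) \to \mathrm{N}(0, v_0)$ weakly as $v \to v_0$ for every $v_0 \geq 0$ (with $\mathrm{N}(0,0) = \delta_0$), $F$ is continuous on $[0,\infty)$. Because $v_d(\bm{x}^d)$ lies within $\varepsilon_d$ of $\ell^4 K_1^2(r)$ for $\bm{x}^d \in H_d$, continuity of $F$ at $\ell^4 K_1^2(r)$ gives $\sup_{\bm{x}^d\in H_d}|F(v_d(\bm{x}^d)) - F(\ell^4 K_1^2(r))| \to 0$, which is exactly the claim. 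Finally, (iv) is the instance of (iii) with the bounded continuous choice $\chi(u) = 1 \wedge \exp(u - \ell^4 K_1^2(r)/2)$; the limiting integral is then computed by the classical identity $\mathbb{E}[1 \wedge e^{W - s^2/2}] = 2\Phi(-s/2)$ for $W\sim \mathrm{N}(0, s^2)$ (obtained by splitting on the sign of $W - s^2/2$ and completing the square in the second piece), applied with $s = \ell^2 K_1(r)$, which delivers the limit $2\Phi(-\ell^2 K_1(r)/2)$.

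The work here is essentially bookkeeping rather than conceptual. The main point to get right is that a single sequence $H_d$, defined solely through the scalar $v_d$, must simultaneously secure the rate $d^{2\alpha}\pi_d(H_d^c)\to 0$ in (i) and the three uniform statements (ii)--(iv); tying every part to the one quantity $|v_d - \ell^4 K_1^2(r)|$ achieves this. The only genuine subtlety is the possibly degenerate case $K_1(r) = 0$, where the limit law is $\delta_0$ and $2\Phi(0) = 1$: here one must invoke continuity of $F$ at $v = 0$ from the right, which the weak-continuity argument for (iii) already covers, and note that the function in (iv) is bounded by $1$ so that (iii) applies verbatim without any supplementary uniform-integrability estimate.
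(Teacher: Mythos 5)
Your proof is correct, but it takes a genuinely different and more elementary route than the paper. The paper transplants the machinery of \cite[Lemma 3(b)]{roberts1998optimal}: it builds $H_d$ from moment conditions on the functions $h_j=[-g^{\dprime}g^{\prime}]^j$, $j=1,\dots,4$, proves (ii) by a three-term decomposition of the characteristic-function difference with the estimates of \cite[Lemma 3.4.3, eq.~(3.3.3)]{durrett2019probability} (in effect re-running a Lindeberg--Feller argument for the conditionally independent, non-identically distributed array), and then obtains (iii) via L\'evy's continuity theorem along an extremizing sequence and (iv) from \cite[Proposition 2.4]{gelman1997weak}. You instead observe that $C_2(x,z)$ is \emph{linear} in $z$, so that conditionally on $\bm{x}^d$ the law $\mathcal{Q}_1^d(\bm{x}^d;\cdot)$ is \emph{exactly} $\mathrm{N}(0,v_d(\bm{x}^d))$; all four statements then collapse to a uniform law of large numbers for the single scalar $v_d(\bm{x}^d)$, with no CLT, no characteristic-function estimates and no continuity theorem required. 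This observation is sound (the paper never uses it), your Chebyshev rate checks out ($d^{1/2}\pi_d(H_d^c)=\mathcal{O}(d^{-1/4})$, the $\mathcal{O}(d^{-1})$ bias $\expeLine{v_d}-\ell^4K_1^2(r)$ being negligible against $\varepsilon_d=d^{-1/8}$), and your closed-form identity $\expeLine{1\wedge e^{W-s^2/2}}=2\Phi(-s/2)$ for $W\sim\mathrm{N}(0,s^2)$ is the same fact the paper imports from \cite{gelman1997weak}; you even handle the degenerate case $K_1(r)=0$, which the paper's statement of (iii) tacitly excludes. What each approach buys: yours is shorter, ties (i)--(iv) to one quantity, and makes transparent why case~\ref{case:a} is easy; the paper's heavier argument is structurally uniform across all three regimes --- it transfers verbatim to Propositions~\ref{prop:4_case_b} and~\ref{prop:4_case_c}, where $C_3$ contains a $z^3$ term, the conditional law is genuinely non-Gaussian, and your exact-Gaussianity shortcut is unavailable.
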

\begin{proof}
\begin{enumerate}[label=(\roman*)]
\item Define the functions $h_j(x)=\left[-g^{\dprime}(x)g^{\prime}(x)\right]^j$ with $j=1,\dots, 4$ and let $H_d=H_{d, 1}\cap H_{d, 2}$ where 
\begin{align}
H_{d,1} &= \left\lbrace \bm{x}^d\in\real^d:\left\vert \frac{1}{d}\sum_{i=2}^d h_j(x_i^d) - \int_{\rset} h_j(u)\pi(u)\rmd u\right\rvert \leq d^{1/3}\textrm{ for } j=1, \dots, 4\right\rbrace,\\
H_{d,2} &= \left\lbrace \bm{x}^d\in\real^d:\vert h_j(x_i^d)\vert \leq d^{2/3} \textrm{ for } i=1, \dots, d \textrm{ and } j=1, \dots, 4\right\rbrace.
\end{align}
Using Chebychev's inequality, the fact that the derivatives of $g$ are bounded by polynomials and that $\pi$ has finite moments, we have $d^{1/2}\pi_d((H_{d,1})^c)\to 0$ as $d\to \infty$. Similarly, by Markov's inequality we have $d^{1/2}\pi_d((H_{d,2})^c)\to 0$ as $d\to \infty$.
\item 
We follow \cite[Lemma 3(b)]{roberts1998optimal} and decompose 
\begin{align}
\label{eq:characteristic_decomposition}
\vert \psi_1^d(\bm{x}^d; t)- \psi_1(t)\vert &\leq \left\vert \psi_1^d(\bm{x}^d; t)-\prod_{i=2}^d\left(1-\frac{t^2}{2d}v(x_i^d)\right)\right\rvert\\
&+ \left\vert \prod_{i=2}^d\left(1-\frac{t^2}{2d}v(x_i^d)\right)-\prod_{i=2}^d\exp\left(-t^2\frac{v(x_i^d)}{2d}\right)\right\rvert\notag\\
&+ \left\vert \prod_{i=2}^d\exp\left(-t^2\frac{v(x_i^d)}{2d}\right)-\exp\left(-t^2\frac{\ell^4 K_1(r)^2}{2}\right)\right\rvert\notag,
\end{align}
where $v(x_i^d) = \var( C_2(x_i^d, Z^d_{1,i}))=\expeLine{C_2(x^d_i,Z_{1,i}^d)^2}$, where the expectation is taken w.r.t. $Z^d_{1,i}$.
For the first term, decompose the characteristic function $\psi_1^d(\bm{x}^d; t)=\prod_{i=2}^d \theta_i^d(x_i^d; t)$ as the product of the characteristic functions of $d^{-1/2}W_i$ where we define $W_i= C_2(x_i^d, Z^d_{1,i})$, using \cite[equation (3.3.3)]{durrett2019probability} as in the proof of \cite[Theorem 3.4.10]{durrett2019probability} we obtain
\begin{align}
&\left\vert \theta_i^d(x_i^d; t)-\left(1-\frac{t^2}{2d}v(x_i^d)\right)\right\rvert \leq \mathbb{E}\left[\frac{\vert t\vert^3}{d^{3/2}}\frac{\vert W_i\vert^3}{3!}\wedge \frac{2\vert t\vert^2}{d}\frac{\vert W_i\vert^2}{2!}\right]\\
&\qquad\leq \mathbb{E}\left[\frac{\vert t\vert^3}{d^{3/2}}\frac{\vert W_i\vert^3}{3!} ; \vert W_i\vert \leq d^{1/2}\varepsilon\right]+\frac{t^2}{d}\mathbb{E}\left[\vert W_i\vert^2 ; \vert W_i\vert > d^{1/2}\varepsilon\right]\\
&\qquad\leq \frac{\varepsilon \vert t\vert^3}{6d}\mathbb{E}\left[\vert W_i\vert^2\right]+\frac{t^2}{\varepsilon^2 d^2}\mathbb{E}\left[\vert W_i\vert^4\right],
\end{align}
for any $\varepsilon>0$. For sufficiently large $d$, we have that $t^2v(x_i^d)/(2d)\leq 1$ for $\bm{x}\in H_{d, 2}$, and we can use \cite[Lemma 3.4.3]{durrett2019probability}
\begin{align}
&\left\vert \psi_j^d(\bm{x}^d; t)-\prod_{i=2}^d\left(1-\frac{t^2}{2d}v(x_i^d)\right)\right\rvert \leq \sum_{i=2}^d \left(\frac{\varepsilon \vert t\vert^3}{6d}\mathbb{E}\left[\vert W_i\vert^2\right]+\frac{t^2}{\varepsilon^2 d^2}\mathbb{E}\left[\vert W_i\vert^4\right]\right)\\
&\qquad\leq \frac{\varepsilon \ell^4\vert t\vert^3}{6}(K_1^2(r)+D_1 d^{-1/3})+\frac{t^2}{\varepsilon^2d}(\expe{\vert W_i\vert^4}+D_2\ell^8 d^{-1/4}),
\end{align}
where the last inequality follows from the fact that $\bm{x}^d\in H_{d,2}$ and $D_1, D_2$ are positive constants.
For any $\delta>0$ we can chose $\varepsilon$ small enough so that the first term in the above is less than $\delta/2$ and we can chose $d$ sufficiently large to make the second term less than $\delta/2$. Thus, for any $\delta>0$ we can find $\varepsilon>0$ and $d\in\mathbb{N}$ such that
\begin{align}
\left\vert \psi_1^d(\bm{x}^d; t)-\prod_{i=2}^d\left(1-\frac{t^2}{2d}v(x_i^d)\right)\right\rvert <\delta,
\end{align}
the uniform convergence then follows.
The second term in~\eqref{eq:characteristic_decomposition} converges to 0 uniformly for all $\bm{x}^d\in H_{d,1}$; while 
for the third term in~\eqref{eq:characteristic_decomposition} we use again \cite[Lemma 3.4.3]{durrett2019probability}
\begin{align}
\left\vert \prod_{i=2}^d\exp\left(-t^2\frac{v(x_i^d)}{2d}\right)-\exp\left(-t^2\frac{\ell^4 K_1(r)^2}{2}\right)\right\rvert \leq \sum_{i=2}^d \frac{t^4 v(x_i^d)^2}{4d^2},
\end{align}
which goes to zero when $d\to \infty$,
for all $\bm{x}^d\in H_{d, 2}$. The result then follows.
\item Let $\chi : \rset \to \rset$ be a bounded and continuous function. Define
    the sequence $\lbrace \bm{x}^d \,:\, d\in\nsets \rbrace$, where, for any $d
    \in \nsets$, $\bm{x}^d \in H_d$ satisfies, 
    \begin{align}
        &\sup_{y^d \in H_d} \abs{ 
        \int_{\rset} \mathcal{Q}_{1}^d(y_i^d;\rmd u) \chi(u)
             - 
        \parenthese{2 \pi \ell^4 K_1^2(r)}^{-1/2} \int_{\rset}\chi(u) e^{-u^2/(2\ell^4 K_1^2(r))} \rmd u }\\
        & \leq \abs{
        \int_{\rset} \mathcal{Q}_{1}^d(x_i^d;\rmd u) \chi(u)
        - \parenthese{2\pi \ell^4 K_1^2(r)}^{-1/2} \int_{\rset}\chi(u) e^{-u^2/(2\ell^4
        K_1^2(r))} \rmd u } + d^{-1} .
    \end{align}
    Then, using \ref{item:psi} and L\'evy's continuity theorem (e.g. \cite[Theorem 1, page 322]{shiryaev1996probability}), we obtain
    \begin{equation}
        \lim_{d\to \infty} \abs{
        \int_{\rset} \mathcal{Q}_{1}^d(x_i^d;\rmd u) \chi(u)
        - \parenthese{2 \pi \ell^4 K_1^2(r)}^{-1/2} \int_{\rset}\chi(u) e^{-u^2/(2\ell^4
        K_1^2(r))} \rmd u }  = 0 .
    \end{equation}
    Combining this limit with the definition of the sequence $\lbrace\bm{x}^d \,:\, d \in \nsets\rbrace$, we conclude the proof.

\item This statement follows directly from (iii) and \cite[Proposition 2.4]{gelman1997weak}.
\end{enumerate}
\end{proof}

\subsection{Auxiliary results for the proof of case~\ref{case:b}}
First, we characterize the limit behaviour of the acceptance ratio~\eqref{eq:ar_finite}. The following result is an extension of \cite[Lemma 1]{roberts1998optimal}.

\begin{prop}
\label{prop:1_case_b}
Under~\Cref{ass:a0}, \Cref{ass:g_smooth}, \Cref{ass:stationarity} and~\Cref{ass:g_lipschitz}, if $\alpha= 1/6$, $\beta = 1/6$ and $r> 0$, then
\begin{enumerate}[label=(\roman*)]
\item the log-acceptance ratio~\eqref{eq:log_acceptance}, when $d\to \infty$, satisfies the following Taylor expansion
\begin{align}
    \phi_d(x,z) &= d^{-1/2}C_3(x, z)+d^{-2/3}C_4(x, z)\\
    &\qquad\qquad+d^{-5/6}C_5(x, z)+d^{-1}C_6(x, z)+C_7(x, z, \sigma_d),
\end{align}
where $C_3$ is given in~\eqref{eq:C3_case_a}, $C_4, C_5, C_6$
are polynomials in $z$ and the derivatives of $g$, such that
$\mathbb{E}[ C_j(X_{0,1}^d, Z_{1,1}^d)]=0$ for
$j=3, 4, 5$ and $ \mathbb{E}[ C_3(X^d_{0,1},
Z^d_{1,1})^2]=-2\mathbb{E}[C_6(X^d_{0,1}, Z^d_{1,1})] $,
\item there exists sets $F_d \subseteq \real^d$ with $d^{2\alpha}\pi_d(F_d^c)\to 0$ such that 
\begin{align}
    \lim_{d\to\infty}\sup_{\bm{x}^d\in
        F_d}\expe{\left\lvert\sum_{i=2}^d\phi_d(x_i^d, Z^d_{1,i})
-d^{-1/2}\sum_{i=2}^dC_3(x_i^d,
Z^d_{1,i})+\frac{\ell^6K_2(r)^2}{2}\right\rvert}=0,
\end{align}
where $K_2(r)$ is given in~\Cref{theo:differentiable}--\ref{case:b}.
\end{enumerate}
\end{prop}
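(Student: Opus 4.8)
The plan is to follow verbatim the structure of the proof of Proposition~\ref{prop:1_case_a}, adapting it to the scaling $\alpha=\beta=1/6$ (so that $v=1$ and $2\alpha=1/3$) and to the extra regularity furnished by~\Cref{ass:g_lipschitz}. As there, I start from the decomposition $\phi_d(x,z)=R_1(x,z,\sigma_d)+R_2(x,z,\sigma_d)$ with $R_1,R_2$ as in~\eqref{eq:R}, and Taylor expand $R:=R_1+R_2$ in $\sigma$ about $0$. The $\sigma$-derivatives of $R$ and of $\prox_g^{\sigma^{2v}r/2}$ recorded in Appendices~\ref{app:derivatives_R}--\ref{app:derivatives_prox} show that, because here $v=1>1/2$, the zeroth-, first- and second-order terms all vanish, so the leading contribution is the $\sigma^3$ term. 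Substituting $\sigma_d=\ell d^{-1/6}$ turns the powers $\sigma_d^3,\sigma_d^4,\sigma_d^5,\sigma_d^6$ into $d^{-1/2},d^{-2/3},d^{-5/6},d^{-1}$, producing the coefficients $C_3,\dots,C_6$, while a Taylor expansion of order $7$ with integral remainder produces $C_7(x,z,\sigma_d)$.

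For part (i), once the $C_j$ are identified (with $C_3$ read off as in~\eqref{eq:C3_case_a}), the moment identities follow exactly as in Proposition~\ref{prop:1_case_a}: each $C_j$ is a polynomial in $z$ and in the derivatives of $g$ at $x$, so taking expectation over $Z^d_{1,1}\sim\mathrm{N}(0,1)$ and over $X^d_{0,1}\sim\pi$ (using \Cref{ass:stationarity}), integrating by parts in $x$ against $\pi'=-g'\pi$ and using the Gaussian moment relations, kills the means of $C_3,C_4,C_5$ and gives $\mathbb{E}[C_3(X^d_{0,1},Z^d_{1,1})^2]=-2\mathbb{E}[C_6(X^d_{0,1},Z^d_{1,1})]$. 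Equivalently, this last identity is the $d^{-1}$ coefficient of the exact reversibility relation $\mathbb{E}[\exp(\phi_d(X^d_{0,1},Z^d_{1,1}))]=1$: matching orders in $\mathbb{E}[e^{\phi_d}]=1+\tfrac12\mathbb{E}[C_3^2]d^{-1}+\mathbb{E}[C_6]d^{-1}+\cdots$. A direct evaluation of $\mathbb{E}[C_3^2]$ then yields $\ell^6K_2^2(r)$ with $K_2^2(r)$ as in~\Cref{theo:differentiable}--\ref{case:b}.

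For part (ii), I reuse the truncation machinery of Proposition~\ref{prop:1_case_a}(ii): set $F_d=\bigcap_{j=4}^{7}F_{d,j}$, where the $F_{d,j}$ are the events on which the empirical sums $\sum_{i=2}^d\mathbb{E}[C_j(x_i^d,Z^d_{1,i})-C_j(X^d_{0,i},Z^d_{1,i})]$ and the empirical variances $\sum_{i=2}^d V_j(x_i^d)$, with $V_j(x)=\var(C_j(x,Z^d_{1,1}))$, stay within prescribed polynomial bands. Since $C_j$ and $V_j$ are polynomially bounded under \Cref{ass:g_smooth} and $\pi$ has all moments, Markov's and Chebyshev's inequalities give $d^{1/3}\pi_d(F_d^c)\to0$. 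On $F_d$ I then check the $\mrL^1$ limits term by term: using $\mathbb{E}[C_4]=\mathbb{E}[C_5]=0$, the variance bounds make $d^{-2/3}\sum_{i=2}^d C_4$ and $d^{-5/6}\sum_{i=2}^d C_5$ have $\mrL^2$-norms of order $d^{-1/6}$ and $d^{-1/3}$ and hence vanish, while $d^{-1}\sum_{i=2}^d C_6$ concentrates at $\mathbb{E}[C_6]=-\ell^6K_2^2(r)/2$; the remainder is controlled by bounding $\partial_\sigma^7R$ by polynomials and estimating $\int_0^{\sigma_d}(1+u^N)(\sigma_d-u)^6\,\rmd u\lesssim\sigma_d^7\asymp d^{-7/6}$, so that $\sum_{i=2}^d\mathbb{E}|C_7(x_i^d,Z^d_{1,i},\sigma_d)|\lesssim d\cdot d^{-7/6}=d^{-1/6}\to0$ on a further moment-truncation set absorbed into $F_d$. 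Summing these four limits with the triangle inequality gives the claim.

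The main obstacle I anticipate is bookkeeping rather than conceptual: because the expansion must be carried to order $7$ (versus order $5$ in case~\ref{case:a}), the coefficients $C_4,C_5,C_6$ are lengthy polynomials in $z$ and in $g',\dots,g^{(6)}$ combined with the $\sigma$-derivatives of the proximity map, and verifying $\mathbb{E}[C_3^2]=-2\mathbb{E}[C_6]$ by hand demands careful integration by parts. The role of the additional~\Cref{ass:g_lipschitz} is precisely to make these estimates uniform in $d$: since $v=1$, the crude bound $|g'(\prox_g^{\lambda_d}(x))|\le 2\lambda_d^{-1}(1+|x|)$ exploited in case~\ref{case:a} no longer produces a vanishing displacement, whereas Lipschitzness of $g'$ gives $|g'(\prox_g^{\lambda_d}(x))|\le C(1+|x|)$ uniformly in $\lambda_d$, keeping the proposal increments and the remainder under polynomial control throughout.
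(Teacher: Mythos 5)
Your proposal is correct and follows essentially the same route as the paper: decompose $\phi_d=R_1+R_2$, Taylor expand to order $7$ in $\sigma$ (noting the terms below $\sigma^3$ vanish since $v=1>1/2$), identify $C_3,\dots,C_6$ and the integral remainder $C_7$, obtain the moment identities by integration by parts and Gaussian moments, and prove (ii) with the same truncation-set machinery as case~\ref{case:a}. The paper in fact dispatches part (ii) with exactly that remark ("same steps as case~\ref{case:a}", following \cite[Lemma 1]{roberts1998optimal}), so your explicit construction of $F_d=\bigcap_{j=4}^{7}F_{d,j}$ with the stated $\mrL^2$ rates, your remainder bound $\sum_{i}\mathbb{E}|C_7|\lesssim d\cdot\sigma_d^{7}\to 0$, and your observation that $\mathbb{E}[C_3^2]=-2\mathbb{E}[C_6]$ can also be read off the exact identity $\mathbb{E}[\rme^{\phi_d}]=1$ are consistent elaborations of the paper's argument rather than a different proof.
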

\begin{proof}
Take one component of the log-acceptance ratio
\begin{align}
    \phi_d(x, z)= g(x) - g(y_d(x,z))  +\log q(y_d(x,z), x)  -\log
    q(x, y_d(x,z)),
\end{align}
with $y_d(x,z) =  x - \sigma_d^2
g^\prime(\prox_{g}^{\sigma^{2v}_dr/2}(x))+\sigma_d z$.
Proceeding as in the proof for case~\ref{case:a}, we have that
$ \phi_d(x, z) = R_1(x, z, \sigma_d)+R_2(x, z, \sigma_d)$ where $R_1, R_2$ are given in~\eqref{eq:R}.
Following the approach of \cite{roberts1998optimal} we approximate $ \phi_d(x, z)$ with a Taylor expansion about $\sigma_d=0$.

\begin{enumerate}[label=(\roman*)]
\item Using a Taylor expansion of order 7, we find that
\begin{align}
\phi_d(x, z) &= d^{-1/2}C_3(x, z)+d^{-2/3}C_4(x, z)+d^{-5/6}C_5(x, z)\\
&+d^{-1}C_6(x, z)+C_7(x, z, \sigma_d),
\end{align}
where
\begin{align}
\label{eq:C3_case_a}
C_3(x, z) &= \frac{\ell^3}{6}\left(\frac{1}{2}g^{\dprime\prime}(x)z^3-\frac{3}{2}g^{\dprime}(x)g^\prime(x)z\left(1+2r\right)\right),
\end{align}
$C_4(x, z), C_5(x, z)$ and $C_6(x, z)$ are given in \Cref{app:taylor_b}
and integral form of the remainder
\begin{align}
 C_7(x, z, \sigma_d) = \int_0^{\sigma_d} \left.\frac{\partial
         ^7}{\partial
 \sigma^7}R(x, z, \sigma)\right|_{\sigma
 =u}\frac{(\sigma_d-u)^6}{6!}\rmd u,
\end{align}
with $u$ between 0 and $\sigma_d$ and the derivatives of $R_1$ and
$R_2$ are given in Appendix~\ref{app:derivatives_R}.
In addition, integrating by parts and using the moments of the standard
normal $Z^d_{1,1}$, we find that $\mathbb{E}[ C_3(X^d_{0,1},
Z^d_{1,1})]=\mathbb{E}[ C_4(X^d_{0,1},
Z^d_{1,1})]=\mathbb{E}[C_5(X^d_{0,1}, Z^d_{1,1})]=0$ and
\begin{align}
    \mathbb{E}\left[C_6(X^d_{0,1}, Z^d_{1,1})\right]
    &=
    -\frac{\ell^6}{16}\left(r+2r^2\right)
    \mathbb{E}\left[\parenthese{g^{\dprime}(X^d_{0,1})g^{\prime}(X^d_{0,1})}^2
    \right]\\
    &\quad -\frac{\ell^6}{16}\left(\frac{1}{2}
    +r\right)\expe{g^{\dprime}(X^d_{0,1})^3}
-\frac{5\ell^6}{96}\expe{g^{\dprime\prime}(X^d_{0,1})^2} ,
\end{align}
which shows that $\mathbb{E}[ C_3(X^d_{0,1}, Z^d_{1,1})^2+
2C_6(X^d_{0,1}, Z^d_{1,1})]=0$.

\item The proof of this result follows using the same steps as case~\ref{case:a} and is analogous to that of \cite[Lemma 1]{roberts1998optimal}.
\end{enumerate}
\end{proof}

Next, we compare the generator $\rmL_d $ and $\widetilde{\rmL}_d $ in~\eqref{eq:generator_differentiable} and~\eqref{eq:tilde_generator_differentiable} respectively, extending \cite[Theorem 3]{roberts1998optimal}.
\begin{prop}
\label{prop:2_case_b}
Under~\Cref{ass:a0}, \Cref{ass:g_smooth}, \Cref{ass:stationarity} and~\Cref{ass:g_lipschitz}, if $\alpha= 1/6$, $\beta = 1/6$ and $r> 0$, there exists sets $S_d \subseteq \real^d$ with $d^{2\alpha}\pi_d(S_d^c)\to 0$ such that for any $V\in \rmC_{\rmc}^\infty(\real,\real)$
\begin{align}
    \lim_{d\to \infty}\sup_{\bm{x}^d \in S_d}\left\lvert \rmL_d V(\bm{x}^d)-\widetilde{\rmL}_d V(\bm{x}^d)\right\rvert = 0,
\end{align}
and
\begin{align}
    \lim_{d\to \infty}\sup_{\bm{x}^d \in S_d}\expe{\left\lvert
        \parenthese{\exp\parenthese{\sum_{i=1}^d\phi_{d}(x^d_i,Z^d_{1,i})}\wedge 1} -
        \parenthese{\exp\parenthese{\sum_{i=2}^d\phi_{d}(x^d_i,Z^d_{1,i})}\wedge 1}
     \right\rvert} = 0.
\end{align}
\end{prop}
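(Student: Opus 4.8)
The plan is to follow the proof of Proposition~\ref{prop:2_case_a} line for line, since the structure is identical and only one estimate changes. Using that $x \mapsto \rme^x \wedge 1$ is $1$-Lipschitz, I would first reduce the generator difference to
\begin{equation}
\abs{\rmL_d V(\bm{x}^d) - \widetilde{\rmL}_d V(\bm{x}^d)} \leq d^{2\alpha} \expe{\abs{V(y_d(x_1^d, Z^d_{1,1})) - V(x_1^d)}\, \abs{R(x_1^d, Z^d_{1,1}, \sigma_d)}},
\end{equation}
with $R = R_1 + R_2$ as in~\eqref{eq:R}. The bound on $\abs{R}$ is unchanged from case~\ref{case:a}: a first-order Taylor expansion about $\sigma = 0$, together with $R(x,z,0)=0$ and $\partial_\sigma R(x,z,0)=0$ and the polynomial growth of the derivatives of $g$ from \Cref{ass:g_smooth}, gives $\abs{R(x_1^d, Z^d_{1,1}, \sigma_d)} \leq A(1 + (Z^d_{1,1})^N)(1 + (x_1^d)^{2N})\sigma_d^2/2$ exactly as in~\eqref{eq:theo3_eq1}.

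The only genuine difference lies in controlling the increment $\abs{V(y_d(x_1^d,Z^d_{1,1})) - V(x_1^d)}$. Lipschitzness of $V$ again gives $\abs{V(y_d)-V(x_1^d)} \leq C(\sigma_d\abs{Z^d_{1,1}} + (\sigma_d^2/2)\abs{g^\prime(\prox_g^{\sigma_d^{2v}r/2}(x_1^d))})$. In case~\ref{case:a} the proximal gradient term was bounded using that $(g^\lambda)^\prime$ is $\lambda^{-1}$-Lipschitz, yielding $2(1+\abs{x_1^d})/(\sigma_d^{2v}r)$; since $v=1/2$ there, the prefactor $\sigma_d^2/2$ absorbed the $\sigma_d^{-1}$ and left an $O(\sigma_d)$ increment. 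Here $v=1$, so that same route produces only $(\sigma_d^2/2)\cdot 2(1+\abs{x_1^d})/(\sigma_d^2 r) = (1+\abs{x_1^d})/r$, which does not vanish. This is the main obstacle, and it is exactly where \Cref{ass:g_lipschitz} enters.

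To bypass it, I would use that under \Cref{ass:g_lipschitz} the map $g^\prime$ is globally Lipschitz, say with constant $L$, and that $\prox_g^\lambda$ is $1$-Lipschitz (recalled at the start of this appendix). Writing $g^\prime(\prox_g^{\lambda_d}(x)) = (g^{\lambda_d})^\prime(x)$ and combining $\abs{g^\prime(u)} \leq \abs{g^\prime(0)} + L\abs{u}$ with $\abs{\prox_g^{\lambda_d}(x)} \leq \abs{\prox_g^{\lambda_d}(0)} + \abs{x}$ yields a bound $\abs{g^\prime(\prox_g^{\lambda_d}(x))} \leq C(1+\abs{x})$ that is uniform in $d$, because $\lambda_d \to 0$ forces $\prox_g^{\lambda_d}(0) \to 0$. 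Hence $(\sigma_d^2/2)\abs{g^\prime(\prox_g^{\lambda_d}(x_1^d))} = O(\sigma_d^2(1+\abs{x_1^d}))$ and the increment is again $O(\sigma_d)$, namely $\abs{V(y_d) - V(x_1^d)} \leq C\sigma_d(\abs{Z^d_{1,1}} + 1 + \abs{x_1^d})$.

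It then remains to assemble the pieces as in case~\ref{case:a}. Multiplying the two bounds and using $d^{2\alpha}\sigma_d^2 = \ell^2$, the integrand is at most $C\sigma_d$ times a fixed polynomial in $Z^d_{1,1}$ and $x_1^d$; integrating out $Z^d_{1,1}$ (all its moments being finite) leaves $C\sigma_d$ times a polynomial $p(x_1^d)$. I would define $S_d$ to be the set on which $p(x_1^d) \leq d^{\alpha/2}$, so that Markov's inequality and the finite moments of $\pi$ give $d^{2\alpha}\pi_d(S_d^c) \to 0$, while on $S_d$ the bound is $O(\sigma_d d^{\alpha/2}) = O(d^{-\alpha/2}) \to 0$, proving the first limit. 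The second limit is easier: $1$-Lipschitzness of $\rme^x\wedge 1$ bounds the integrand by $\abs{\phi_d(x_1^d,Z^d_{1,1})} = \abs{R(x_1^d,Z^d_{1,1},\sigma_d)}$, and the $R$-bound above with the same $S_d$ shows this expectation is $O(\sigma_d^2 d^{\alpha/2}) \to 0$.
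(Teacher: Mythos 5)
Your proof is correct and follows essentially the same route as the paper's: both reduce the generator difference via the $1$-Lipschitzness of $x\mapsto \rme^x\wedge 1$ and of $V$, reuse the Taylor/polynomial-growth bound on $R$ from case~\ref{case:a}, and then invoke \Cref{ass:g_lipschitz} together with the $1$-Lipschitzness of $\prox_g^\lambda$ to replace the $\lambda^{-1}$-Lipschitz bound (which fails for $v=1$) by $\abs{g^\prime(\prox_g^{\lambda_d}(x_1^d))}\leq C(1+\vert x_1^d\vert)$, before concluding with the same Markov-inequality construction of $S_d$. Your explicit identification of why the case-\ref{case:a} estimate breaks down at $v=1$, and your handling of the $\prox_g^{\lambda_d}(0)$ term, are just slightly more detailed renderings of the paper's argument, not a different approach.
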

\begin{proof}
Proceeding as in Proposition~\ref{prop:2_case_a},
for any $V \in \rmC_\rmc^\infty(\rset,\rset)$,
there exists a constant $C$ such that
\begin{align}
    \left\vert V\left(y_d(x^d_1,Z^d_{1,1})\right)-V(x_1^d)\right\vert 
    &\leq C\left\vert y_d(x^d_1,Z^d_{1,1})-x_1^d \right\vert\\
    &\leq  C\left(\sigma_d\vert Z_{1,1}^d\vert
    +\frac{\sigma_d^2}{2}\left\vert
g^\prime\left(\prox_g^{\sigma_d^{2v}r/2}(x_1^d)\right)\right\vert\right).
\end{align}
Under~\Cref{ass:g_lipschitz}, $g^\prime$ is Lipschitz continuous and we have, for some $C\geq 1$,
\begin{align}
    \abs{ g^\prime\left(\prox_g^{\sigma_d^{2v}r/2}(x_1^d)\right)} \leq
C\parenthese{1+\abs{ \prox_g^{\sigma_d^{2v}r/2}(x_1^d)}}\leq C(1+\vert
    x_1^d\vert),
\end{align}
where we used the fact that $\prox_g^\lambda$ is $1$-Lipschitz continuous for all $\lambda>0$.
The result then follows similarly to \Cref{prop:2_case_a} and
\cite[Theorem 3]{roberts1998optimal}.
\end{proof}

The following result considers the convergence to the generator of the Langevin diffusion~\eqref{eq:langevin_generator_differentiable} and is a generalization of \cite[Lemma 2]{roberts1998optimal}.
\begin{prop}
\label{prop:3_case_b}
Under~\Cref{ass:a0}, \Cref{ass:g_smooth}, \Cref{ass:stationarity}
and~\Cref{ass:g_lipschitz}, if $\alpha= 1/6$, $\beta = 1/6$ and $r> 0$,
there exists sets $T_d \subseteq \real^d$ with
$d^{2\alpha}\pi_d(T_d^c)\to 0$ as $d\to \infty$ such that for any $V\in
\rmC_{\rmc}^\infty(\real,\real)$
\begin{align}
    \lim_{d\to \infty}\sup_{\bm{x}^d \in T_d}\left\lvert
    d^{2\alpha}\expe{V\parenthese{y_d(x^d_1,Z^d_{1,1})}-V(x_1^d)}
-\frac{\ell^2}{2}(V^{\dprime}(x_1^d)+g^\prime(x_1^d)V^\prime(x_1^d))\right\rvert
= 0.
\end{align}
\end{prop}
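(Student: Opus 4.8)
The plan is to follow the proof of Proposition~\ref{prop:3_case_a} line for line in its structure, the only genuine difference being that the coarser scaling $\alpha=\beta=1/6$ (so that $v=1$, $\sigma_d^2 = \ell^2 d^{-1/3}$ and $d^{2\alpha}\sigma_d^2 = \ell^2$) forces the proximity-map gradient to be controlled through \Cref{ass:g_lipschitz} rather than through the $\lambda^{-1}$-Lipschitz bound on $(g^{\sigma_d^{2v}r/2})'$ that was harmless when $\alpha=1/4$. First I would introduce $W(x,z,\sigma)=V\parenthese{y_d(x,z)}$ with $\sigma_d$ replaced by the running variable $\sigma$, and Taylor expand it to second order in $\sigma$ about $\sigma=0$ with integral remainder, exactly as in Proposition~\ref{prop:3_case_a}.

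Next I would reuse the derivative computations of Proposition~\ref{prop:3_case_a}: $W(x,z,0)=V(x)$, $\partial_\sigma W|_{\sigma=0}=V'(x)z$, and, using $\prox_g^0(x)=x$, $\partial_\sigma^2 W|_{\sigma=0}=V^{\dprime}(x)z^2+V'(x)g'(x)$, where the $\sigma$-derivatives of $\sigma\mapsto\prox_g^{\sigma^2 r/2}(x)$ drop out at $\sigma=0$ because they are multiplied by positive powers of $\sigma$. Taking the expectation over the standard normal $Z^d_{1,1}$ annihilates the first-order term and turns $z^2$ into $1$, so that
\begin{align}
\expe{V\parenthese{y_d(x_1^d, Z^d_{1,1})} - V(x_1^d)} = \frac{\sigma_d^2}{2}\parenthese{V^{\dprime}(x_1^d) + V'(x_1^d)g'(x_1^d)} + \expe{\mathcal{R}_d(x_1^d, Z^d_{1,1})},
\end{align}
with $\mathcal{R}_d$ the order-two integral remainder. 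Multiplying by $d^{2\alpha}$ and using $d^{2\alpha}\sigma_d^2=\ell^2$ recovers the asserted leading term, so the statement reduces to showing $d^{2\alpha}\expe{\abs{\mathcal{R}_d(x_1^d, Z^d_{1,1})}}\to 0$ uniformly over a suitable set $T_d$.

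For the remainder I would bound $\partial_\sigma^3 W$ by the derivatives of $g$ evaluated at the two arguments $x+(\sigma^2/2)\prox_g^{\sigma^2 r/2}(x)+\sigma z$ and $\prox_g^{\sigma^2 r/2}(x)$, which by \Cref{ass:g_smooth} are dominated by a polynomial in $(x,z,\sigma,\prox_g^{\sigma^2 r/2}(x))$. The decisive step, identical to Proposition~\ref{prop:2_case_b}, is that under \Cref{ass:g_lipschitz} the Lipschitzness of $g'$ together with the $1$-Lipschitzness of $\prox_g^\lambda$ gives $\abs{g'(\prox_g^{\sigma^2 r/2}(x))}\leq C(1+\abs{x})$ uniformly in $\sigma$, so that no factor $\sigma^{-2v}$ appears; combined with $\abs{\prox_g^{\sigma^2 r/2}(x)}\leq C(1+\abs{x})$ this collapses the bound to $A(1+z^N)(1+x^{2N})(1+\sigma^N)$ for some even $N$. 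Integrating against $(\sigma_d-u)^2/2$ then yields $\abs{\mathcal{R}_d(x_1^d, Z^d_{1,1})}\leq A'(1+(Z^d_{1,1})^N)(1+(x_1^d)^{2N})d^{-3\alpha}$; taking expectation over $Z^d_{1,1}$ (all moments finite) and multiplying by $d^{2\alpha}$ leaves $O(d^{-\alpha})(1+(x_1^d)^{2N})$. Defining $T_d=\defEns{\bm{x}^d\in\real^d : 1+(x_1^d)^{2N}\leq d^{\alpha/2}}$ makes this $O(d^{-\alpha/2})$ uniformly on $T_d$, while Markov's inequality and the finiteness of the moments of $\pi$ give $d^{2\alpha}\pi_d(T_d^c)\to 0$, concluding as in Proposition~\ref{prop:2_case_a}.

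The main obstacle is precisely this uniform bound on $\partial_\sigma^3 W$: its chain-rule expansion carries derivatives of $\sigma\mapsto\prox_g^{\sigma^2 r/2}(x)$ that a priori degenerate as $\sigma\to 0$, and the $\lambda^{-1}$-Lipschitz estimate for $(g^\lambda)'$ used in case~\ref{case:a} would here contribute a factor of order $\sigma_d^{-2}=O(d^{1/3})$, which more than overwhelms the $d^{-3\alpha}=d^{-1/2}$ gain and would prevent the remainder from vanishing after multiplication by $d^{2\alpha}$. Invoking \Cref{ass:g_lipschitz}, the extra hypothesis that distinguishes case~\ref{case:b} from case~\ref{case:a}, replaces that estimate by a bound uniform in $\lambda$, ensuring the remainder genuinely scales like $\sigma_d^3=O(d^{-3\alpha})$ and hence tends to zero on $T_d$.
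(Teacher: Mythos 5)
Your proof is correct and coincides with the paper's: the paper disposes of this proposition in one line, declaring its proof identical to that of Proposition~\ref{prop:3_case_a}, and what you write out --- the second-order Taylor expansion of $W(x,z,\sigma)=V(y_d(x,z))$ in $\sigma$ with integral remainder, the evaluations $W(x,z,0)=V(x)$, $\partial_\sigma W\vert_{\sigma=0}=V^{\prime}(x)z$, $\partial_\sigma^2 W\vert_{\sigma=0}=V^{\dprime}(x)z^2+V^{\prime}(x)g^{\prime}(x)$, the polynomial bound on $\partial_\sigma^3 W$, the set $T_d=\{1+(x_1^d)^{2N}\leq d^{\alpha/2}\}$ and Markov's inequality --- is precisely that argument.

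The one point to correct is your closing paragraph on the role of \Cref{ass:g_lipschitz}. The remainder bound in Proposition~\ref{prop:3_case_a} does not rest on the $\lambda^{-1}$-Lipschitz estimate for $(g^\lambda)^{\prime}$, so there is no factor $\sigma_d^{-2v}$ that needs to be defused: every composition $g^{(k)}(\prox_g^{\sigma^{2v}r/2}(x))$, including $k=1$, is bounded there by combining the polynomial domination of the derivatives of $g$ from \Cref{ass:g_smooth} with $\vert\prox_g^{\lambda}(x)\vert\leq C(1+\vert x\vert)$, which follows from the $1$-Lipschitz continuity of the proximity operator and is uniform in $\lambda$. Consequently the same proof runs verbatim for $\alpha=\beta=1/6$ without ever invoking \Cref{ass:g_lipschitz} --- which is exactly why the paper can call the two proofs identical even though Proposition~\ref{prop:3_case_a} does not assume that hypothesis. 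The place where \Cref{ass:g_lipschitz} is actually put to work in case~\ref{case:b} is Proposition~\ref{prop:2_case_b}: there one needs $\vert V(y_d(x_1^d,Z^d_{1,1}))-V(x_1^d)\vert$ to be $\bigO(\sigma_d)$ up to polynomial factors, and the estimate $\vert g^{\prime}(\prox_g^{\sigma_d^{2v}r/2}(x))\vert\leq 2(1+\vert x\vert)/(\sigma_d^{2v}r)$, which was adequate when $v=1/2$, would indeed give only $\bigO(1)$ for $v=1$; it is there that it gets replaced by the $\lambda$-uniform bound $C(1+\vert x\vert)$ afforded by \Cref{ass:g_lipschitz}. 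Your use of \Cref{ass:g_lipschitz} inside the remainder is harmless, since it is among the stated hypotheses and yields a valid bound, but it is not the decisive ingredient you present it as, and the claim that the argument of case~\ref{case:a} would fail here without it is inaccurate.
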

\begin{proof}
The proof is identical to that of Proposition~\ref{prop:3_case_a}.
\end{proof}

Before proceeding to stating and proving the last auxiliary result, let
us denote by $\psi_{2}:\real \to [0, +\infty)$ the characteristic
function of the distribution $\mathrm{N}(0, \ell^6 K_2^2(r))$, where
$K_2^2(r)$ is given in Theorem~\ref{theo:differentiable}--\ref{case:b},
\begin{align}
\psi_2(t) = \exp(-t^2\ell^{6}K^2_2(r)/2),
\end{align}
and by $\psi_2^d(\bm{x}^d; t) = \int \exp(itw)\mathcal{Q}^d_2(\bm{x}^d;\rmd w)$ the characteristic functions associated with the law
\begin{align}
\mathcal{Q}_2^d(\bm{x}^d; \cdot) = \mathcal{L}\left\lbrace
d^{-1/2}\sum_{i=2}^d C_3(x_i^d, Z^d_{1,i})\right\rbrace.
\end{align}

The following result extends \cite[Lemma 3]{roberts1998optimal}.
\begin{prop}
\label{prop:4_case_b}
Under~\Cref{ass:a0}, \Cref{ass:g_smooth}, \Cref{ass:stationarity} and~\Cref{ass:g_lipschitz}, if $\alpha= 1/6$, $\beta = 1/6$ and $r> 0$, there exists a sequence of sets $H_d \subseteq \real^d$ such that 
\begin{enumerate}[label=(\roman*)]
\item $ \lim_{d\to\infty} d^{2\alpha}\pi_d(H_d^c)=0$ ,
\item \label{item:psi2} for all $t\in \real$, $\lim_{d\to\infty}\sup_{\bm{x}^d\in H_d}\vert
    \psi_2^d(\bm{x}^d; t)- \psi_2(t)\vert = 0$ ,
\item for all bounded continuous function $\chi : \rset \to \rset$ ,
    \begin{equation}
        \lim_{d\to \infty} \sup_{\bm{x}^d \in H_d} \left\vert
        \int_{\rset} \mathcal{Q}^d_2(\bm{x}^d;\rmd u) \chi(u)
            - \parenthese{2\pi \ell^6 K_2^2(r)}^{-1/2}
                \int_{\rset}\chi(u) e^{-u^2/(2\ell^6 K_2^2(r))} \rmd u
                \right\vert= 0 ,
    \end{equation} 
\item in particular,
    \begin{align}
    \lim_{d\to\infty}\sup_{\bm{x}^d\in H_d}\left\vert\expe{1\wedge
\exp\left(d^{-1/2}\sum_{i=2}^d C_3(x_i^d,
Z^d_{1,i})-\frac{\ell^6K_2^2(r)}{2}\right)}-2\Phi\left(-\frac{\ell^3K_2(r)}{2}\right)\right\rvert = 0 ,
\end{align}
where $\Phi$ is the distribution function of the standard normal random variable.
\end{enumerate}
\end{prop}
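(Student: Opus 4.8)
The plan is to mirror the proof of Proposition~\ref{prop:4_case_a} line by line, replacing the linear-in-$z$ leading term $C_2$ by the cubic-in-$z$ term $C_3$ of~\eqref{eq:C3_case_a}, the limiting variance $\ell^4 K_1^2(r)$ by $\ell^6 K_2^2(r)$, and the weight $d^{2\alpha}=d^{1/2}$ by $d^{2\alpha}=d^{1/3}$. The only input specific to case~\ref{case:b} is the identity $\expe{C_3(X^d_{0,1},Z^d_{1,1})^2}=\ell^6 K_2^2(r)$, which follows from Proposition~\ref{prop:1_case_b}(i) (where $\expeLine{C_3^2}=-2\expeLine{C_6}$) together with the explicit value of $\expeLine{C_6}$ recorded there; matching coefficients against the definition of $K_2^2(r)$ in \Cref{theo:differentiable}--\ref{case:b} yields the claim. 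Writing $v(x):=\var(C_3(x,Z^d_{1,1}))=\expeLine{C_3(x,Z^d_{1,1})^2}$ (the mean vanishes, since $C_3$ is odd in $z$), this identity reads $\int_\real v(x)\pi(x)\rmd x=\ell^6 K_2^2(r)$.

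For part (i), I would set $H_d=H_{d,1}\cap H_{d,2}$ exactly as in Proposition~\ref{prop:4_case_a}, but with the family $\{h_j\}$ now taken to be the finite collection of polynomials in $g^\prime,g^{\dprime},g^{\dprime\prime}$ that appear as the $x$-dependent coefficients of $v(x)=\expeLine{C_3(x,Z^d_{1,1})^2}$ and of the fourth-moment bound $\expeLine{C_3(x,Z^d_{1,1})^4}$; concretely these are the monomials $g^{\dprime\prime}(x)^2$, $g^{\dprime\prime}(x)g^{\dprime}(x)g^\prime(x)$, $(g^{\dprime}(x)g^\prime(x))^2$ and their higher powers produced by the cubic term. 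Since each $h_j$ is polynomial and $\pi$ has finite moments of every order under \Cref{ass:a0} and \Cref{ass:g_smooth}, Chebychev's inequality on the empirical means and Markov's inequality on the pointwise values give $d^{1/3}\pi_d(H_{d,1}^c)\to 0$ and $d^{1/3}\pi_d(H_{d,2}^c)\to 0$, hence $d^{2\alpha}\pi_d(H_d^c)\to 0$. Note this decay requirement is milder than in case~\ref{case:a} (where the weight was $d^{1/2}$), so the same thresholds suffice.

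For part (ii), I would reproduce the three-term decomposition~\eqref{eq:characteristic_decomposition} with $W_i:=C_3(x_i^d,Z^d_{1,i})$ and $v(x_i^d)=\expeLine{W_i^2}$. The Lyapunov/Lindeberg estimate via \cite[equation (3.3.3)]{durrett2019probability} carries over verbatim: the only change is that $W_i$ is now a cubic rather than a linear polynomial in $Z^d_{1,i}$, so the required moments $\expeLine{\vert W_i\vert^3}$ and $\expeLine{\vert W_i\vert^4}$ remain bounded by polynomials in $x_i^d$ (through the derivatives of $g$) and are controlled on $H_{d,2}$. The first term is handled by \cite[Lemma 3.4.3]{durrett2019probability}, the second by the elementary bound on $\vert 1-w-\rme^{-w}\vert$, and the third by $\tfrac1d\sum_{i=2}^d v(x_i^d)\to \ell^6 K_2^2(r)$ on $H_{d,1}$, giving the uniform convergence $\sup_{\bm{x}^d\in H_d}\vert\psi_2^d(\bm{x}^d;t)-\psi_2(t)\vert\to 0$.

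Parts (iii) and (iv) then follow exactly as for case~\ref{case:a}: from (ii) and L\'evy's continuity theorem \cite[Theorem 1, page 322]{shiryaev1996probability} one deduces weak convergence of $\mathcal{Q}_2^d(\bm{x}^d;\cdot)$ to $\mathrm{N}(0,\ell^6 K_2^2(r))$, uniformly along any sequence $\bm{x}^d\in H_d$ via the near-supremum selection argument used there; applying this to a bounded continuous $\chi$ gives (iii), and specializing to $\chi(u)=1\wedge\rme^{u-\ell^6 K_2^2(r)/2}$ together with \cite[Proposition 2.4]{gelman1997weak} (the identity $\expeLine{1\wedge \rme^{X}}=2\Phi(-s/2)$ for $X\sim\mathrm{N}(-s^2/2,s^2)$, here with $s=\ell^3 K_2(r)$) gives (iv). The only genuinely new point beyond case~\ref{case:a} is the bookkeeping for the higher $z$-moments created by the cubic term $C_3$; since all of these are finite and the coefficient functions stay polynomial in the derivatives of $g$, this is routine, and I expect the only (minor) obstacle to be verifying the coefficient identity $\expe{C_3^2}=\ell^6 K_2^2(r)$ and tracking the enlarged family $\{h_j\}$.
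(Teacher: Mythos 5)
Your proposal is correct and matches the paper's own proof, which simply declares each part analogous to Proposition~\ref{prop:4_case_a} (itself following \cite[Lemma 3]{roberts1998optimal}) with $C_2$, $\ell^4K_1^2(r)$, $d^{1/2}$ replaced by $C_3$, $\ell^6K_2^2(r)$, $d^{1/3}$, and with parts (iii)--(iv) obtained from L\'evy's continuity theorem and \cite[Proposition 2.4]{gelman1997weak}. Your verification of the variance identity $\expeLine{C_3^2}=-2\expeLine{C_6}=\ell^6K_2^2(r)$ and your enlarged polynomial family $\{h_j\}$ for the sets $H_d$ are exactly the bookkeeping the paper leaves implicit.
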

\begin{proof}
\begin{enumerate}[label=(\roman*)]
\item The proof is analogous to that of Proposition~\ref{prop:4_case_a} and follows the same steps of that of \cite[Lemma 3(a)]{roberts1998optimal}.
\item The proof is analogous to that of Proposition~\ref{prop:4_case_a} and follows the same steps of that of \cite[Lemma 3(b)]{roberts1998optimal}.
\item Following the steps of (iii) in \Cref{prop:4_case_a} and the L\'evy's continuity Theorem (e.g. \cite[Theorem 1, page 322]{shiryaev1996probability}) bring the result.
\item This statement follows directly from (iii) and \cite[Proposition 2.4]{gelman1997weak}.
\end{enumerate}
\end{proof}

\subsection{Auxiliary results for the proof of case~\ref{case:c}}
First, we characterize the limit behaviour of the acceptance ratio~\eqref{eq:ar_finite}.

\begin{prop}
\label{prop:1_case_c}
Under~\Cref{ass:a0}, \Cref{ass:g_smooth}, \Cref{ass:stationarity} and~\Cref{ass:g_lipschitz} and, if $\alpha= 1/6$, $\beta = m/6$ for $v>1$ and $r> 0$, then
\begin{enumerate}[label=(\roman*)]
\item the log-acceptance ratio~\eqref{eq:log_acceptance}, when $d\to \infty$, satisfies the following Taylor expansion
\begin{align}
    \phi_d(x,z) &= d^{-1/2}C_3(x, z)+d^{-2/3}C_4(x, z)\\
    &\qquad\qquad+d^{-5/6}C_5(x, z)+d^{-1}C_6(x, z)+C_7(x, z, \sigma_d),
\end{align}
where $C_3$ is given in~\eqref{eq:C3_case_a}, $C_4, C_5, C_6$
are polynomials in $z$ and the derivatives of $g$, such that
$\mathbb{E}[ C_j(X_{0,1}^d, Z_{1,1}^d)]=0$ for
$j=3, 4, 5$ and $ \mathbb{E}[ C_3(X^d_{0,1},
Z^d_{1,1})^2]=-2\mathbb{E}[C_6(X^d_{0,1}, Z^d_{1,1})] $,
\item there exists sets $F_d \subseteq \real^d$ with $d^{2\alpha}\pi_d(F_d^c)\to 0$ such that 
\begin{align}
    \lim_{d\to\infty}\sup_{\bm{x}^d\in
        F_d}\expe{\left\lvert\sum_{i=2}^d\phi_d(x_i^d, Z^d_{1,i})
-d^{-1/2}\sum_{i=2}^dC_3(x_i^d,
Z^d_{1,i})+\frac{\ell^6K^2_2(0)}{2}\right\rvert}=0,
\end{align}
where $K_2$ is given in~\Cref{theo:differentiable}--\ref{case:b}.
\end{enumerate}
\end{prop}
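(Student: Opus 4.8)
The plan is to follow the proof of Proposition~\ref{prop:1_case_b} (itself modelled on \cite[Lemma 1]{roberts1998optimal}) almost verbatim, exploiting the fact that the only structural change from case~\ref{case:b} to case~\ref{case:c} is that $v>1$ is now \emph{strict}, so that the proposal drift is an asymptotically negligible perturbation of the MALA drift. Concretely, I would write $b_d(x)=g^\prime(\prox_g^{\sigma_d^{2v}r/2}(x))=g^\prime(x)+\epsilon_d(x)$, so that $y_d(x,z)$ equals the MALA proposal $y_d^{\MALA}(x,z)=x-(\sigma_d^2/2)g^\prime(x)+\sigma_d z$ shifted by $-(\sigma_d^2/2)\epsilon_d(x)$. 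Using \Cref{ass:g_lipschitz} together with the $1$-Lipschitz continuity of $\prox_g^\lambda$ and the relation $\abs{\prox_g^\lambda(x)-x}=\lambda\abs{(g^\lambda)^\prime(x)}$ recalled at the start of the appendix, one obtains the pointwise bound $\abs{\epsilon_d(x)}\leq C\sigma_d^{2v}(1+\abs{x})$, and more precisely $\epsilon_d(x)=-(\sigma_d^{2v}r/2)g^{\dprime}(x)g^\prime(x)+\bigO(\sigma_d^{4v})$. Since $2v>2$, this quantifies the collapse of the proximity map onto the identity.

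For part (i), I would first record the order-$7$ Taylor expansion of the MALA log-acceptance ratio $\phi_d^{\MALA}$, which is exactly the Roberts--Rosenthal expansion and yields the coefficients $C_3,\dots,C_6$ at $r=0$: in particular $C_3$ reduces to its MALA form (the $r$-dependent part of \eqref{eq:C3_case_a} being pushed to higher order, since $1+2v>3$), the moments $\expeLine{C_j(X^d_{0,1},Z^d_{1,1})}=0$ for $j=3,4,5$, and $\expeLine{C_6(X^d_{0,1},Z^d_{1,1})}=-\tfrac12\expeLine{C_3(X^d_{0,1},Z^d_{1,1})^2}=-\ell^6K_2^2(0)/2$. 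It then remains to show the perturbation $\Delta_d:=\phi_d-\phi_d^{\MALA}$ does not contaminate these coefficients. Expanding $\Delta_d$ in $\epsilon_d$, its leading contribution is the single odd-in-$z$ term $\sigma_d z\,\epsilon_d(x)$ of order $\sigma_d^{1+2v}$; because $1+2v>3$ it cannot enter $C_3$, and being odd in $z$ it has vanishing Gaussian mean. Collecting the even-in-$z$ contributions from $R_1$ and $R_2$ at the next order $\sigma_d^{2+2v}$, I would show $\expeLine{\Delta_d(x,Z^d_{1,1})}=(\sigma_d^2/2)(\epsilon_d^\prime(x)-g^\prime(x)\epsilon_d(x))+\bigO(\sigma_d^{3+2v})$, and then invoke the integration-by-parts identity $\expeLine{g^{\dprime\prime}(X)g^\prime(X)+g^{\dprime}(X)^2-g^\prime(X)^2g^{\dprime}(X)}=0$ for $X\sim\pi$ (valid since $\pi\propto\rme^{-g}$ has all moments) to conclude that the $\pi$-average of this even part vanishes. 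The $r^2$-type contributions sit at order $\sigma_d^{2+4v}>\sigma_d^6$ and are absorbed into the remainder $C_7$. Hence $C_3,\dots,C_6$ and the stated moment relations coincide with the MALA ($r=0$) ones, with $\expeLine{C_3^2}=-2\expeLine{C_6}=\ell^6K_2^2(0)$.

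For part (ii), I would reuse the construction of the sets $F_d$ from Proposition~\ref{prop:1_case_a} verbatim (truncation sets defined via Markov's and Chebyshev's inequalities applied to the polynomially-bounded $C_j$, the variances $V_j=\var(C_j(\cdot,Z^d_{1,1}))$, and the remainder bound), guaranteeing $d^{2\alpha}\pi_d(F_d^c)\to 0$ and the uniform $\mrL^1$ convergence of $\sum_{i=2}^d C_4$ and $\sum_{i=2}^d C_5$ to $0$, of $\sum_{i=2}^d C_6$ to $-\ell^6K_2^2(0)/2$, and of $\sum_{i=2}^d C_7$ to $0$. The one new ingredient is to bound the summed perturbation $\sum_{i=2}^d\Delta_d(x_i^d,Z^d_{1,i})$ in $\mrL^1$: its odd-in-$z$ part is mean-zero with variance of order $d\,\sigma_d^{2(1+2v)}=\bigO(d^{(4-4v)/6})$, while on $F_d$ the mean-zero empirical average controlling its even-in-$z$ part forces a bound of order $\sigma_d^{2+2v}\sqrt d=\bigO(d^{(1-2v)/6})$; both tend to $0$ precisely because $v>1$. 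Combining these estimates through the triangle inequality yields the claimed uniform $\mrL^1$ limit with constant $-\ell^6K_2^2(0)/2$.

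The main obstacle is the second paragraph, and within it the even-in-$z$, order-$\sigma_d^{2+2v}$ terms: for $1<v\leq 5/2$ these fall at a power of $\sigma_d$ lying strictly between $\sigma_d^4$ and $\sigma_d^6$ (and at $\sigma_d^6$ itself when $v=2$), so they are not automatically relegated to the remainder and one must verify that their contribution to the relevant expectations is exactly zero. This is what the integration-by-parts identity above delivers, and it is the analogue here of the cancellations making $\expeLine{C_4}=\expeLine{C_5}=0$ in the MALA computation. A secondary technical point is that for non-integer $2v$ the map $\sigma\mapsto\prox_g^{\sigma^{2v}r/2}(x)$ is not smooth at $0$, so these corrections must be controlled through the direct Lipschitz and polynomial bounds on $\epsilon_d$ (as in the remainder estimates of Proposition~\ref{prop:1_case_a}) rather than by a genuine integer-power Taylor expansion.
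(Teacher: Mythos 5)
Your proof is correct, but it takes a genuinely different route from the paper's. The paper proves \Cref{prop:1_case_c} by Taylor-expanding $\phi_d=R_1+R_2$ directly in $\sigma$ to order seven, using the $\sigma$-derivatives of the proximity map at $\sigma=0$; for $v>1$ the $r$-dependent corrections then land \emph{inside} $C_4$, $C_5$ or $C_6$ exactly when $2v\in\{3,4,5\}$ (hence the case distinctions $v=3/2,2,5/2$ in the appendix of coefficients), are shown to have zero expectation by integration by parts, and part (ii) is dispatched as ``the same steps as case~\ref{case:a}''. You instead never differentiate the proximity map: you split $\phi_d=\phi_d^{\MALA}+\Delta_d$, import the Roberts--Rosenthal expansion for $\phi_d^{\MALA}$ wholesale, and control $\Delta_d$ by two estimates — a variance/cancellation argument for the odd-in-$z$ leading term $\sigma_d z\,\epsilon_d(x)$ (the sum is conditionally centred with variance $O(d\sigma_d^{2(1+2v)})\to 0$ for $v>1$), and the integration-by-parts identity $\expeLine{g^{\dprime\prime}g^{\prime}+(g^{\dprime})^2-(g^{\prime})^2g^{\dprime}}=0$ for the even part $\tfrac{\sigma_d^2}{2}(z^2\epsilon_d^{\prime}(x)-g^{\prime}(x)\epsilon_d(x))$; I verified your expansion of $\Delta_d$, including the partial cancellation of the $\tfrac{\sigma_d^2}{2}g^{\prime}\epsilon_d$ contributions between the potential difference and the quadratic proposal terms, and it is right. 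Your route buys two real things: it treats all $v>1$ uniformly with no case analysis, and it is honest about the point you flag at the end — for non-integer $2v$ the map $\sigma\mapsto\prox_g^{\sigma^{2v}r/2}(x)$ is not seven times differentiable at $0$, and moreover for $1<v<5/2$ the perturbation has per-component size $\sigma_d^{1+2v}$, whose absolute sum $d\sigma_d^{1+2v}=d^{(5-2v)/6}$ diverges, so it genuinely cannot be absorbed into the remainder $C_7$ by the absolute-value bounds of case~\ref{case:a} and requires the mean-zero variance argument you supply; the paper's ``same steps as case~\ref{case:a}'' glosses over this. What the paper's route buys in exchange is the explicit form of the $r$-corrections to $C_4$--$C_6$ at the special values of $v$. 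Two cosmetic points: with your organization the $r$-dependent pieces live in $C_7$ rather than in $C_4$--$C_6$, which is still consistent with the statement (it only requires the $C_j$ to be mean-zero polynomials); and your $\sqrt d$ fluctuation bound for the even part should be loosened to roughly $d^{2/3}$ if the bad set is to satisfy $\pi_d(F_d^c)=o(d^{-1/3})$ via Chebyshev, which still gives $\sigma_d^{2+2v}d^{2/3}=d^{(2-2v)/6}\to 0$ precisely when $v>1$.
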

\begin{proof}
Take one component of the log-acceptance ratio
\begin{align}
    \phi_d(x, z)= g(x) - g(y_d(x,z))  +\log q(y_d(x,z), x)  -\log
    q(x, y_d(x,z)),
\end{align}
with $y_d(x,z) =  x - \sigma_d^2
g^\prime(\prox_{g}^{\sigma^{2v}_dr/2}(x))+\sigma_d z$.
Proceeding as in the proof for case~\ref{case:a}, we have that
$ \phi_d(x, z) = R_1(x, z, \sigma_d)+R_2(x, z, \sigma_d)$ where $R_1, R_2$ are given in~\eqref{eq:R}.
Following the approach of \cite{roberts1998optimal} we approximate $ \phi_d(x, z)$ with a Taylor expansion about $\sigma_d=0$.

\begin{enumerate}[label=(\roman*)]
\item Using a Taylor expansion of order 7, we find that
\begin{align}
 \phi_d(x, z) &= d^{-1/2}C_3(x, z)+d^{-2/3}C_4(x, z)+d^{-5/6}C_5(x, z)\\
&+d^{-1}C_6(x, z)+C_7(x, z, \sigma_d),
\end{align}
where
\begin{align}
\label{eq:C3_case_c}
C_3(x, z) &= 
\frac{\ell^3}{6}\left(\frac{1}{2}g^{\dprime\prime}(x)z^3-\frac{3}{2}g^{\dprime}(x)g^\prime(x)z\right),
\end{align}
$C_4(x, z), C_5(x, z)$ and $C_6(x, z)$ are given in \Cref{app:taylor_c}
and integral form of the remainder
\begin{align}
 C_7(x, z, \sigma_d) = \int_0^{\sigma_d} \left.\frac{\partial
         ^7}{\partial
 \sigma^7}R(x, z, \sigma)\right|_{\sigma
 =u}\frac{(\sigma_d-u)^6}{6!}\rmd u,
\end{align}
with $u$ between 0 and $\sigma_d$ and the derivatives of $R_1$ and $R_2$ are given in Appendix~\ref{app:derivatives_R}.
In addition, integrating by parts and using the moments of $Z^d_{1,1}$ we find that
$\mathbb{E}[ C_3(X^d_{0,1}, Z^d_{1,1})]=\mathbb{E}[ C_4(X^d_{0,1},
Z^d_{1,1})]=\mathbb{E}[ C_5(X^d_{0,1}, Z^d_{1,1})]=0$ and
\begin{align}
\mathbb{E}[ C_6(X^d_{0,1}, Z^d_{1,1})] &= \ell^6\left(
-\frac{1}{32}\expe{g^{\dprime}(X^d_{0,1})^3}-\frac{5}{96}\expe{g^{\dprime\prime}(X^d_{0,1})^2}\right) ,
\end{align}
which shows that $\mathbb{E}[ C_3(X^d_{0,1}, Z^d_{1,1})^2+2C_6(X^d_{0,1}, Z^d_{1,1})]=0$.

\item The proof of this result follows using the same steps as case~\ref{case:a} and is analogous to that of \cite[Lemma 1]{roberts1998optimal}.
\end{enumerate}
\end{proof}

Next, we compare the generator $\rmL_d $ and $\widetilde{\rmL}_d $ in~\eqref{eq:generator_differentiable} and~\eqref{eq:tilde_generator_differentiable} respectively.
\begin{prop}
\label{prop:2_case_c}
Under~\Cref{ass:a0}, \Cref{ass:g_smooth}, \Cref{ass:stationarity} and~\Cref{ass:g_lipschitz}, if $\alpha= 1/6$, $\beta = m/6$ for $v>1$ and $r> 0$, there exists sets $S_d \subseteq \real^d$ with $d^{2\alpha}\pi_d(S_d^c)\to 0$ as $d\to \infty$ such that for any $V\in \rmC_{\rmc}^\infty(\real,\real)$
\begin{align}
    \lim_{d\to \infty}\sup_{\bm{x}^d \in S_d}\left\lvert \rmL_d V(\bm{x}^d)-\widetilde{\rmL}_d V(\bm{x}^d)\right\rvert = 0 ,
\end{align}
and
\begin{align}
    \lim_{d\to \infty}\sup_{\bm{x}^d \in S_d}\expe{\left\lvert
        \parenthese{\exp\parenthese{\sum_{i=1}^d\phi_{d}(x^d_i,Z^d_{1,i})}\wedge 1} -
        \parenthese{\exp\parenthese{\sum_{i=2}^d\phi_{d}(x^d_i,Z^d_{1,i})}\wedge 1}
     \right\rvert} = 0.
\end{align}
\end{prop}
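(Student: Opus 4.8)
The plan is to mirror the argument of Proposition~\ref{prop:2_case_b} essentially verbatim, the point being that the passage from case~\ref{case:b} to case~\ref{case:c} only changes the relative velocity from $v=1$ to $v>1$, and under \Cref{ass:g_lipschitz} this value never enters any of the estimates. First I would use that $x\mapsto \exp(x)\wedge 1$ is $1$-Lipschitz to bound, for any $V\in \rmC_\rmc^\infty(\real,\real)$,
\[
\left\lvert \rmL_d V(\bm{x}^d)-\widetilde{\rmL}_d V(\bm{x}^d)\right\rvert \leq d^{2\alpha}\expe{\abs{ V\parenthese{y_d(x_1^d,Z^d_{1,1})}-V(x_1^d)}\,\abs{ R(x_1^d,Z^d_{1,1},\sigma_d)}},
\]
where $R=R_1+R_2$ is the quantity in~\eqref{eq:R} and coincides componentwise with $\phi_d$.

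Next I would Taylor-expand $R$ to first order in $\sigma$ about $\sigma=0$ with integral remainder. The computation in the ``Identifying the scaling regimes'' section already gives $R(x,z,0)=0$ and $\partial_\sigma R(x,z,\sigma)\vert_{\sigma=0}=0$, so $R(x,z,\sigma)=\int_0^\sigma \partial_\sigma^2 R(x,z,u)(\sigma-u)\,\rmd u$. Bounding $\partial_\sigma^2 R$ by the derivatives of $g$ evaluated at $x+\tfrac{\sigma^2}{2}\prox_g^{\sigma^{2v}r/2}(x)+\sigma z$ and at $\prox_g^{\sigma^{2v}r/2}(x)$, and using \Cref{ass:g_smooth} together with $\abs{\prox_g^{\lambda}(x)}\leq C(1+\abs{x})$, reproduces the polynomial control $\abs{ R(x_1^d,Z^d_{1,1},\sigma_d)}\leq A(1+(Z^d_{1,1})^N)(1+(x_1^d)^{2N})\sigma_d^2/2$ of~\eqref{eq:theo3_eq1}.

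The only place where case~\ref{case:a} and cases~\ref{case:b}--\ref{case:c} diverge is the bound on $\abs{ V(y_d(x_1^d,Z^d_{1,1}))-V(x_1^d)}$. Here I would invoke that $V$ has compact support, hence is globally Lipschitz, and that under \Cref{ass:g_lipschitz} the map $g'$ is Lipschitz; combined with the $1$-Lipschitz continuity of $\prox_g^{\lambda}$ (valid for \emph{every} $\lambda>0$) this yields the $v$-free estimate $\abs{ g'(\prox_g^{\sigma_d^{2v}r/2}(x_1^d))}\leq C(1+\abs{x_1^d})$. This is precisely why case~\ref{case:c} runs identically to case~\ref{case:b}: the potentially dangerous factor $\lambda_d^{-1}=\bigO(\sigma_d^{-2v})$ coming from the Lipschitz constant of $(g^\lambda)'$, which forced the separate treatment of case~\ref{case:a}, is never generated. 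Consequently $\abs{ V(y_d)-V(x_1^d)}\leq C\sigma_d(\abs{Z^d_{1,1}}+1+\abs{x_1^d})$, and multiplying by $d^{2\alpha}\abs{R}$ gives a bound of the form $C\sigma_d\times(\text{polynomial in }x_1^d,Z^d_{1,1})$, using $d^{2\alpha}\sigma_d^2=\ell^2$.

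Finally I would take $S_d$ to be the set on which $1+(x_1^d)^{2N+1}\leq d^{\alpha/2}$, so that Markov's inequality gives $d^{2\alpha}\pi_d(S_d^c)\to 0$, and the supremum over $S_d$ is $\bigO(d^{\alpha/2}\sigma_d)=\bigO(d^{-\alpha/2})\to 0$, establishing the first assertion. The second assertion follows directly from the remainder bound: since $\abs{\sum_{i=1}^d\phi_d-\sum_{i=2}^d\phi_d}=\abs{\phi_d(x_1^d,Z^d_{1,1})}=\abs{ R(x_1^d,Z^d_{1,1},\sigma_d)}$ and $\exp(\cdot)\wedge 1$ is $1$-Lipschitz, the target expectation is at most $\expe{\abs{ R(x_1^d,Z^d_{1,1},\sigma_d)}}=\bigO(d^{\alpha/2}\sigma_d^2)\to 0$ uniformly over $S_d$. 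I do not anticipate a genuine obstacle here; the substance of the proof is the observation that \Cref{ass:g_lipschitz} makes all the estimates independent of $v$, so the argument collapses to that of Proposition~\ref{prop:2_case_b}. The one point worth double-checking is that the regime $v>1$ indeed never reintroduces a $\lambda_d^{-1}$ factor, which the $1$-Lipschitz property of $\prox_g^{\lambda}$ guarantees.
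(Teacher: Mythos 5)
Your proposal is correct and follows essentially the same route as the paper: the paper proves this case by declaring it identical to Proposition~\ref{prop:2_case_b}, whose proof is exactly your argument — the Lipschitz bound via $x\mapsto \exp(x)\wedge 1$, the Taylor expansion of $R$ with the polynomial remainder control of~\eqref{eq:theo3_eq1}, and crucially the replacement of the $\lambda_d^{-1}$-Lipschitz bound on $(g^{\lambda})'$ by the $v$-free estimate $\abs{g'(\prox_g^{\sigma_d^{2v}r/2}(x_1^d))}\leq C(1+\abs{x_1^d})$ obtained from \Cref{ass:g_lipschitz} together with the $1$-Lipschitz continuity of $\prox_g^{\lambda}$, followed by the same choice of $S_d$ and Markov's inequality. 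Your closing observation — that this is precisely what makes the argument independent of $v$ and lets case~\ref{case:c} collapse onto case~\ref{case:b} — is exactly the point the paper relies on.
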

\begin{proof}
The proof is identical to that of Proposition~\ref{prop:2_case_b}.
\end{proof}

The following result considers the convergence to the generator of the Langevin diffusion~\eqref{eq:langevin_generator_differentiable}.
\begin{prop}
\label{prop:3_case_c}
Under~\Cref{ass:a0}, \Cref{ass:g_smooth}, \Cref{ass:stationarity}
and~\Cref{ass:g_lipschitz}, if $\alpha= 1/6$, $\beta = m/6$ for $v>1$ and $r>
0$, there exists sets $T_d \subseteq \real^d$ with $d^{2\alpha}\pi_d(T_d^c)\to
0$ as $d\to \infty$ such that for any $V\in \rmC_{\rmc}^\infty(\real,\real)$
\begin{align}
    \lim_{d\to \infty}\sup_{\bm{x}^d \in T_d}\left\lvert
    d^{2\alpha}\expe{V\parenthese{y_d(x^d_1,Z^d_{1,1})}-V(x_1^d)}
-\frac{\ell^2}{2}(V^{\dprime}(x_1^d)+g^\prime(x_1^d)V^\prime(x_1^d))\right\rvert
= 0.
\end{align}
\end{prop}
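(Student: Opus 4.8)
The plan is to follow the argument of Proposition~\ref{prop:3_case_a} (equivalently Proposition~\ref{prop:3_case_b}) essentially verbatim, the only new feature being that $v>1$. I would first set, as in Proposition~\ref{prop:3_case_a},
\begin{align}
W(x, z, \sigma) = V\parenthese{x + \frac{\sigma^2}{2} g^\prime\parenthese{\prox_g^{\sigma^{2v}r/2}(x)} + \sigma z},
\end{align}
so that $V(y_d(x,z)) = W(x,z,\sigma_d)$, and then Taylor-expand $\sigma\mapsto W(x,z,\sigma)$ to order $2$ about $\sigma = 0$ with integral remainder. Since $V \in \rmC_\rmc^\infty(\real,\real)$, this writes $\expe{V(y_d(x_1^d, Z^d_{1,1})) - V(x_1^d)}$ as an explicit second-order polynomial in $\sigma_d$ plus a controlled remainder.

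The derivatives at $\sigma=0$ are $W(x,z,0)=V(x)$, $\frac{\partial W}{\partial\sigma}(x,z,0) = V^\prime(x)\,z$ and $\frac{\partial^2 W}{\partial\sigma^2}(x,z,0) = V^{\dprime}(x)\,z^2 + V^\prime(x)\,g^\prime(x)$, exactly as in Proposition~\ref{prop:3_case_a}. The feature specific to $v>1$ is that $\frac{\partial}{\partial\sigma}\prox_g^{\sigma^{2v}r/2}(x)\big\vert_{\sigma=0}=0$, since $\frac{\rmd}{\rmd\sigma}\sigma^{2v}=2v\sigma^{2v-1}\to 0$ as $\sigma\to0$; consequently the proximity map contributes nothing to the order-$2$ coefficient, which coincides with the MALA one. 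Taking the expectation over $Z^d_{1,1}$ and using $\expe{Z^d_{1,1}}=0$, $\expe{(Z^d_{1,1})^2}=1$ then produces the announced leading term $\frac{\sigma_d^2}{2}\parenthese{V^{\dprime}(x_1^d)+g^\prime(x_1^d)V^\prime(x_1^d)}$.

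It then remains to bound the order-$3$ remainder
\begin{align}
\expe{\int_0^{\sigma_d} \frac{\partial^3 W}{\partial\sigma^3}(x_1^d, Z^d_{1,1},\sigma)\Big\vert_{\sigma=u}\,\frac{(\sigma_d-u)^2}{2}\,\rmd u}.
\end{align}
As in Proposition~\ref{prop:3_case_a}, $\frac{\partial^3 W}{\partial\sigma^3}$ is a finite combination of $V^{(j)}$ (bounded with bounded derivatives, as $V$ has compact support) and of derivatives $g^{(k)}$ evaluated at $\prox_g^{\sigma^{2v}r/2}(x)$; because the proximity map enters $W$ only through the prefactor $\sigma^2/2$, every term in which $\prox_g^{\sigma^{2v}r/2}(x)$ is itself differentiated in $\sigma$ carries a net factor of at least $\sigma^{2v-1}$, which is bounded as $\sigma\to0$ for $v>1$, so no negative power of $\sigma$ appears and the dominant contribution is the MALA-type term of order $\sigma_d^3$. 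Using Assumption~\ref{ass:g_smooth} (polynomial growth of the derivatives of $g$) together with the $1$-Lipschitz continuity of $\prox_g^\lambda$, one bounds $\frac{\partial^3 W}{\partial\sigma^3}$ by a polynomial $A(1+z^N)(1+x^{2N})(1+\sigma^N)$ uniformly in $\sigma\in(0,\sigma_d)$; Assumption~\ref{ass:g_lipschitz} supplies the clean bound $\abs{g^\prime(\prox_g^{\sigma^{2v}r/2}(x))}\leq C(1+\abs{x})$ exactly as in Proposition~\ref{prop:2_case_b}, avoiding the factor $\sigma_d^{-2v}$ that the case~\ref{case:a} estimate would produce and that would blow up when $v>1$. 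The remainder is therefore $\bigO(d^{-3\alpha})$ times a polynomial in $x_1^d$. Finally I would take $T_d=\defEns{\bm{x}^d\in\real^d : 1+(x_1^d)^{2N}\leq d^{\alpha/2}}$, so that Markov's inequality gives $d^{2\alpha}\pi_d(T_d^c)\to0$, while on $T_d$ the remainder multiplied by $d^{2\alpha}$ is $\bigO(d^{-\alpha/2})$ uniformly. I do not expect a genuinely new obstacle: the only real work is the bookkeeping of the third-order remainder, and the conceptual content is simply that for $v>1$ the proximity correction is even more strongly subdominant than in case~\ref{case:a}, so the limit is precisely the MALA generator coefficient.
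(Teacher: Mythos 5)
Your proposal is correct and follows essentially the same route as the paper: the paper's proof of Proposition~\ref{prop:3_case_c} simply declares it identical to that of Proposition~\ref{prop:3_case_a}, i.e.\ the order-2 Taylor expansion of $W(x,z,\sigma)=V\bigl(x+\tfrac{\sigma^2}{2}g^\prime(\prox_g^{\sigma^{2v}r/2}(x))+\sigma z\bigr)$ with integral remainder, the same derivative computations at $\sigma=0$, the polynomial bound $A(1+z^N)(1+x^{2N})d^{-3\alpha}$ on the third-order remainder, and the choice $T_d=\{ \bm{x}^d : 1+(x_1^d)^{2N}\leq d^{\alpha/2}\}$ with Markov's inequality, which is exactly what you do. Your additional observations (the vanishing of $\partial_\sigma \prox_g^{\sigma^{2v}r/2}(x)\vert_{\sigma=0}$ for $v>1$, and using \Cref{ass:g_lipschitz} to avoid the $\sigma_d^{-2v}$ factor from the envelope-gradient bound, as in \Cref{prop:2_case_b}) are consistent with how the paper handles the $v>1$ regime.
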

\begin{proof}
The proof is identical to that of Proposition~\ref{prop:3_case_a}.
\end{proof}

Before proceeding to stating and proving the last auxiliary result, let us
denote by $\psi_{3}:\real \to [0, +\infty)$ the characteristic function of the
distribution $\mathrm{N}(0, \ell^6 K_2^2(0))$, where $K_2^2$ is given in
Theorem~\ref{theo:differentiable}--\ref{case:b},
\begin{align}
    \psi_3(t) = \exp(-t^2\ell^{6}K_2^2(0)/2),
\end{align}
and by $\psi_3^d(\bm{x}^d; t) = \int \exp(itw)\mathcal{Q}^d_3(\bm{x}^d;\rmd w)$ the characteristic functions associated with the law
\begin{align}
\mathcal{Q}_3^d(\bm{x}^d; \cdot) = \mathcal{L}\left\lbrace d^{-1/2}\sum_{i=2}^d
C_3(x_i^d, Z^d_{1,i})\right\rbrace.
\end{align}

\begin{prop}
\label{prop:4_case_c}
Under~\Cref{ass:a0}, \Cref{ass:g_smooth}, \Cref{ass:stationarity} and~\Cref{ass:g_lipschitz}, if $\alpha= 1/6$, $\beta = m/6$ for $v>1$ and $r> 0$, there exists a sequence of sets $H_d \subseteq \real^d$ such that 
\begin{enumerate}[label=(\roman*)]
\item $ \lim_{d\to\infty} d^{2\alpha}\pi_d(H_d^c)=0$ ,
\item \label{item:psi3} for all $t\in \real$, $\lim_{d\to\infty}\sup_{\bm{x}^d\in H_d}\vert
    \psi_3^d(\bm{x}^d; t)- \psi_3(t)\vert = 0$ ,
\item for all bounded continuous function $\chi : \rset \to \rset$ ,
    \begin{equation}
        \lim_{d\to \infty} \sup_{\bm{x}^d \in H_d} \left\vert
        \int_{\rset} \mathcal{Q}^d_3(\bm{x}^d;\rmd u) \chi(u)
            - \parenthese{2\pi \ell^6 K_2^2(0)}^{-1/2}
                \int_{\rset}\chi(u) e^{-u^2/(2\ell^6 K_2^2(0))} \rmd u
                \right\vert= 0 ,
    \end{equation} 
\item in particular,
    \end{enumerate}
    \begin{equation}
    \lim_{d\to\infty}\sup_{\bm{x}^d\in H_d}\left\vert\expe{1\wedge
\exp\left(d^{-1/2}\sum_{i=2}^d C_3(x_i^d,
Z^d_{1,i})-\frac{\ell^6K_2^2(0)}{2}\right)}-2\Phi\left(-\frac{\ell^3K_2(0)}{2}\right)\right\rvert = 0 ,
\end{equation}
where $\Phi$ is the distribution function of the standard normal random variable.
\end{prop}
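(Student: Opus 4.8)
The plan is to mirror the proofs of \Cref{prop:4_case_a} and \Cref{prop:4_case_b}, since the structure is identical; the only substantive difference is that, because $v>1$, the $r$-dependent contributions of the proximity map vanish in the Taylor expansion (as already recorded in \Cref{prop:1_case_c}), so here $C_3$ is the MALA coefficient~\eqref{eq:C3_case_c} and the target variance is $\ell^6 K_2^2(0)$ rather than $\ell^6 K_2^2(r)$. Concretely, for (i) I would take $H_d = H_{d,1}\cap H_{d,2}$ built from the monomial functions appearing in the one-component variance $v(x) := \expeLine{C_3(x, Z^d_{1,1})^2}$, exactly as in \Cref{prop:4_case_a}(i). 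Since $C_3$ in~\eqref{eq:C3_case_c} is a polynomial in $z$ and in $g^\prime, g^{\dprime}, g^{\dprime\prime}$, under \Cref{ass:g_smooth} the function $v$ is dominated by a polynomial, so a Chebychev bound on $H_{d,1}$ together with a Markov bound on $H_{d,2}$ yields $d^{1/2}\pi_d(H_d^c)\to 0$, i.e.\ $d^{2\alpha}\pi_d(H_d^c)\to 0$.

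For (ii), I would reproduce the three-term decomposition~\eqref{eq:characteristic_decomposition} of $\vert \psi_3^d(\bm{x}^d; t)- \psi_3(t)\vert$ from the proof of \Cref{prop:4_case_a}, now with limiting constant $\ell^6 K_2^2(0)$, and bound each term via \cite[equation (3.3.3), Lemma 3.4.3]{durrett2019probability}. The concentration encoded in $H_d$ forces $d^{-1}\sum_{i=2}^d v(x_i^d)\to \ell^6 K_2^2(0)$ uniformly on $H_d$; the limiting constant is correct because \Cref{prop:1_case_c} gives $\expeLine{C_3(X^d_{0,1}, Z^d_{1,1})^2}=-2\,\expeLine{C_6(X^d_{0,1}, Z^d_{1,1})}=\ell^6 K_2^2(0)$. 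For (iii), I would invoke L\'evy's continuity theorem (\cite[Theorem 1, page 322]{shiryaev1996probability}) exactly as in the proof of \Cref{prop:4_case_a}(iii), upgrading the uniform convergence of characteristic functions from (ii) to convergence of $\int_{\rset} \mathcal{Q}^d_3(\bm{x}^d;\rmd u)\chi(u)$ against any bounded continuous $\chi$.

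Finally, for (iv), I would apply \cite[Proposition 2.4]{gelman1997weak}: since $\mathcal{Q}^d_3$ converges to $\mathrm{N}(0,\ell^6 K_2^2(0))$ and the mean-variance relation $\expeLine{C_3^2}=-2\,\expeLine{C_6}$ holds by \Cref{prop:1_case_c}, the identity $\expeLine{1\wedge \rme^{W-s^2/2}}=2\Phi(-s/2)$ for $W\sim\mathrm{N}(0,s^2)$ with $s^2=\ell^6 K_2^2(0)$ delivers the stated limit $2\Phi(-\ell^3 K_2(0)/2)$. There is no genuine obstacle beyond bookkeeping: the one point requiring care is confirming that the regime $v>1$ collapses all $r$-dependent terms, so that the correct limiting variance is $K_2^2(0)$ and not $K_2^2(r)$, which was already settled in \Cref{prop:1_case_c}. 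The most delicate technical estimate remains the uniform control of the first term of~\eqref{eq:characteristic_decomposition}, but it is handled verbatim as in \Cref{prop:4_case_a}.
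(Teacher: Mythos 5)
Your proof is correct and takes essentially the same route as the paper: the paper's own argument for (i)--(ii) simply repeats the set construction and three-term characteristic-function decomposition of \Cref{prop:4_case_a} (following \cite{roberts1998optimal}) with limiting variance $\ell^6 K_2^2(0)$, then gets (iii) from L\'evy's continuity theorem \cite{shiryaev1996probability} and (iv) from \cite{gelman1997weak}, exactly as you do. The point you single out for care --- that the regime $v>1$ collapses all $r$-dependent terms so the correct variance is $K_2^2(0)$, via the identity $\expeLine{C_3(X^d_{0,1},Z^d_{1,1})^2}=-2\,\expeLine{C_6(X^d_{0,1},Z^d_{1,1})}$ from \Cref{prop:1_case_c} --- is precisely what the paper relies on as well.
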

\begin{proof}
\begin{enumerate}[label=(\roman*)]
\item The proof is analogous to that of Proposition~\ref{prop:4_case_a} and follows the same steps of that of \cite[Lemma 3(a)]{roberts1998optimal}.
\item The proof is analogous to that of Proposition~\ref{prop:4_case_a} and follows the same steps of that of \cite[Lemma 3(b)]{roberts1998optimal}.
\item Following the steps of (iii) in \Cref{prop:4_case_a} and the L\'evy's continuity Theorem (e.g. \cite[Theorem 1, page 322]{shiryaev1996probability}) bring the result.
\item This statement follows directly from (iii) and \cite[Proposition 2.4]{gelman1997weak}.
\end{enumerate}
\end{proof}

\subsection{Proof of Theorem~\ref{theo:differentiable}}
\begin{proof}[Proof of Theorem~\ref{theo:differentiable}]

\begin{enumerate}[label=(\alph*)]
\item The asymptotic acceptance rate follows by combining Propositions~\ref{prop:1_case_a}--\ref{prop:3_case_a} with part (iv) of Proposition~\ref{prop:4_case_a} as in the proof of \cite[Theorem 1]{roberts1998optimal}.
To prove the weak convergence of the process it suffices to show that there exists events $F_d^\star\in\real^d$ such that for all $t>0$
\begin{align}
    \lim_{d\to \infty}\pr\left(L_t^d\in F_d^\star \textrm{ for all
    }0\leq s\leq t\right)=1,
\end{align}
and
\begin{align}
\lim_{d\to \infty}\sup_{\bm{x}^d\in F_d^\star}\left\vert \rmL_d V(\bm{x}^d)-\rmL V(\bm{x}^d)\right\rvert.
\end{align}
for all $V\in \rmC_{\rmc}^\infty(\real,\real)$ \cite[Chapter 4, Corollary 8.7]{ethier2009markov}.
We take $F_d^\star = F_d\cap S_d\cap T_d \cap H_d$. Then, $d^{2\alpha}\pi_d\left((F_d^\star)^c\right)\to 0$ and
\begin{align}
    \lim_{d\to \infty}\pr\left(L_t^d\in F_d^\star \textrm{ for all }0\leq s\leq t\right)=1,
\end{align}
for all fixed $t$. Combining Propositions~\ref{prop:1_case_a}--\ref{prop:4_case_a} we obtain convergence of the generators.

To obtain the value of $a(\ell, r)$ maximizing the speed, we observe that $K_1^2(r)$ is a function of the ratio $r=c^2/\ell^{2v} = c^2/\ell$ only, we can take $c\propto \ell^{1/2}$ so that $K_1^2(r)$ is constant for given $c$. 
Using the same substitution as in \cite[Theorem 2]{roberts1998optimal} we find that $h(\ell, r)$ is maximized at the unique value of $\ell$ such that $a(\ell, r)= 0.452$.
\item The proof is analogous to that of case~\ref{case:a} replacing Propositions~\ref{prop:1_case_a}, \ref{prop:2_case_a}, \ref{prop:3_case_a} and~\ref{prop:4_case_a} with Propositions~\ref{prop:1_case_b}, \ref{prop:2_case_b}, \ref{prop:3_case_b} and~\ref{prop:4_case_b}. 
To obtain the value of $a(\ell, r)$ maximizing the speed, we observe that $K_2^2(r)$ is a function of the ratio $r=c^2/\ell^{2v}=c^2/\ell^2$ only, we can take $c\propto \ell$ so that $K_2^2(r)$ is constant for given $c$. 
Using the same substitution as in \cite[Theorem 2]{roberts1998optimal} we find that $h(\ell, r)$ is maximized at the unique value of $\ell$ such that $a(\ell, r)= 0.574$.
\item The proof is analogous to that of case~\ref{case:a} replacing
    Propositions~\ref{prop:1_case_a}, \ref{prop:2_case_a},
    \ref{prop:3_case_a} and~\ref{prop:4_case_a} with
    Propositions~\ref{prop:1_case_c}, \ref{prop:2_case_c},
    \ref{prop:3_case_c} and~\ref{prop:4_case_c}. To obtain the value of
    $a(\ell, r)$ maximizing the speed, we observe that $K_2^2(0)$ is
    constant w.r.t. $r$, we can use the same substitution as in
    \cite[Theorem 2]{roberts1998optimal} we find that $h(\ell, r)$ is
    maximized at the unique value of $\ell$ such that $a(\ell, r)=
    0.574$.
\end{enumerate}
\end{proof}


\section{Taylor expansions for the results on regular targets}

\subsection{Coefficients of the Taylor expansion}
We collect here the coefficients of the Taylor expansions in Propositions~\ref{prop:1_case_a}, \ref{prop:1_case_b} and \ref{prop:1_case_c}.
\subsubsection{Case~\ref{case:a}}
\label{app:taylor_a}
If $\alpha= 1/4$, $\beta = 1/8$ and $r>0$, then the log-acceptance ratio~\eqref{eq:log_acceptance} satisfies
\begin{align}
    \phi_d(x, z)
    &= d^{-1/2}C_2(x, z) + d^{-3/4}C_3(x, z)+d^{-1}C_4(x, z)+C_5(x, z, \sigma_d),
\end{align}
where
\begin{align}
C_2(x, z) &= \frac{\ell^2}{2}\left(-rzg^{\dprime}(x)g^{\prime}(x)\right),
\end{align}
and
\begin{align}
C_3(x, z) &= \frac{\ell^3}{6}\left(\frac{z^3}{2}g^{\dprime\prime}(x)-\frac{3}{2}z^2rg^{\prime}(x)g^{\dprime\prime}(x)-\frac{3}{2}rz^2\left[g^{\dprime}(x)\right]^2+\frac{3}{4}zr^2g^{\dprime\prime}(x)\left[g^{\prime}(x)\right]^2\right.\\
&\left.-\frac{3}{2}zg^{\prime}(x)g^{\dprime}(x)+\frac{3}{2}r\left[g^{\prime}(x)\right]^2g^{\dprime}(x) + 3r^2z[g'(x)]^2g^{\dprime}(x)\right),\\
C_4(x, z) &= \frac{\ell^4}{24}\left\lbrace
g^{(4)}(x)\left(z^4-\frac{zr^3}{2}\left[g^{\prime}(x)\right]^3-3z^3rg^{\prime}(x)+\frac{3}{2}z^2r^2\left[g^{\prime}(x)\right]^2\right)\right.\\
&+g^{\prime\dprime}(x)\left(-6z^2g^{\prime}(x)-9rz^3g^{\dprime}(x)+9z^2r^2g^{\prime}(x)g^{\dprime}(x)+9rz\left[g^{\prime}(x)\right]^2\right.\\
&\qquad\left.-\frac{9}{2}zr^3\left[g^{\prime}(x)\right]^2g^{\dprime}(x)-\frac{3}{2}r^2\left[g^{\prime}(x)\right]^3\right)\\
&+12rzg^{\prime}(x)\left[g^{\dprime}(x)\right]^2+3g^{\dprime}(x)\left[g^{\prime}(x)\right]^2-3z^2\left[g^{\dprime}(x)\right]^2+3z^2r^2\left[g^{\dprime}(x)\right]^3\\
&\qquad\left.-6r^2\left[g^{\prime}(x)g^{\dprime}(x)\right]^2-3zr^3g^{\prime}(x)\left[g^{\dprime}(x)\right]^3\right\rbrace,
\end{align}
and we use the integral form for the remainder
\begin{align}
 C_5(x, z, \sigma_d) = \int_0^{\sigma_d} \left.\frac{\partial ^5}{\partial \sigma^5}R(x, z, \sigma)\right|_{\sigma =u}\frac{(\sigma_d-u)^4}{4!}\rmd u,
\end{align}
with $u$ between 0 and $\sigma_d$ and the derivatives of $R_1$ and $R_2$ given in Appendix~\ref{app:derivatives_R}.
\subsubsection{Case~\ref{case:b}}
\label{app:taylor_b}
If $\alpha= 1/6$, $\beta = 1/6$ and $r>0$, the the log-acceptance ratio~\eqref{eq:log_acceptance} satisfies
\begin{align}
    \phi_d(x, z) &= d^{-1/2}C_3(x, z)+d^{-2/3}C_4(x, z)\\
    &\qquad\qquad+d^{-5/6}C_5(x, z)+d^{-1}C_6(x, z)+C_7(x, z, \sigma_d),
\end{align}
where
\begin{align}
C_3(x, z) &= \frac{\ell^3}{6}\left(\frac{1}{2}g^{\dprime\prime}(x)z^3-\frac{3}{2}g^{\dprime}(x)g^\prime(x)z\left(1+2r\right)\right),
\end{align}
and
\begin{align}
C_4(x, z) &= \frac{\ell^4}{24}\left(
z^4 g^{(4)}(x) -6z^2g^{\dprime\prime}(x)g^{\prime}(x)(1+r)\right.\\
&\qquad\qquad\left.-3z^2\left[g^{\dprime}(x)\right]^2(1+2r)+3g^{\dprime}(x)\left[g^{\prime}(x)\right]^2(1+2r)\right),\\
C_5(x, z) &= \frac{\ell^5}{120}\left(
\frac{3}{2}z^5g^{(5)}(x) -15z^3g^{(4)}(x)g^{\prime}(x)(1+r)+15z\left[g^{\prime}(x)\right]^2g^{\dprime\prime}(x)\left(\frac{3}{2}+3r+r^2\right)\right.\\
&\qquad\qquad\left. +15z(1+4r+2r^2)g^{\prime}(x)\left[g^{\dprime}(x)\right]^2 - 15z^3g^{\dprime}(x)g^{\dprime\prime}(x)(1+3r)\right),\\
C_6(x, z) &= \frac{\ell^6}{720}\left(
2z^6g^{(6)}(x)-30\left(1+r\right)z^4g^\prime(x)g^{(5)}(x)+45\left(2+4r+r^2\right)z^2\left[g^\prime(x)\right]^2g^{(4)}(x)\right.\\
&\qquad\qquad+90\left(r+r^2\right)z^2\left[g^{\dprime}(x)\right]^3-15\left(2+6r+3r^2\right)g^{\dprime\prime}(x)\left[g^\prime(x)\right]^3\\
&\qquad\qquad-30\left(1+4r\right)z^4g^{\dprime}(x)g^{(4)}(x) +45\left(3+16r+6r^2\right)z^2g^\prime(x)g^{\dprime}(x)g^{\dprime\prime}(x)\\
&\qquad\qquad\left.-\frac{45}{2}\left(1+4r\right)z^4\left[g^{\dprime\prime}(x)\right]^2-\frac{45}{2}\left(1+8r+8r^2\right)\left[g^\prime(x)g^{\dprime}(x)\right]^2\right),
\end{align}
and integral form of the remainder
\begin{align}
 C_7(x, z, \sigma_d) = \int_0^{\sigma_d} \left.\frac{\partial ^7}{\partial \sigma^7}R(x, z, \sigma)\right|_{\sigma =u}\frac{(\sigma_d-u)^6}{6!}\rmd u,
\end{align}
with $u$ between 0 and $\sigma_d$ and the derivatives of $R_1$ and $R_2$ are given in Appendix~\ref{app:derivatives_R}.

\subsubsection{Case~\ref{case:c}}
\label{app:taylor_c}
If $\alpha= 1/6$, $\beta = v/6$ for $v>1$ and $r> 0$, then the log-acceptance ratio~\eqref{eq:log_acceptance} satisfies
\begin{align}
    \phi_d(x, z) &= d^{-1/2}C_3(x, z)+d^{-2/3}C_4(x, z)\\
    &\qquad\qquad+d^{-5/6}C_5(x, z)+d^{-1}C_6(x, z)+C_7(x, z, \sigma_d),
\end{align}
where
\begin{align}
C_3(x, z) &= 
\frac{\ell^3}{6}\left(\frac{1}{2}g^{\dprime\prime}(x)z^3-\frac{3}{2}g^{\dprime}(x)g^\prime(x)z\right)
\end{align}
and
\begin{align}
C_4(x, z) &= \frac{\ell^4}{24}\begin{cases}
z^4 g^{(4)}(x) -6z^2g^{\dprime\prime}(x)g^{\prime}(x)-3z^2\left[g^{\dprime}(x)\right]^2+3g^{\dprime}(x)\left[g^{\prime}(x)\right]^2 \\
\qquad\qquad- 12zrg^{\prime}(x)g^{\dprime}(x)\qquad\qquad\textrm{if } v=3/2\\
z^4 g^{(4)}(x) -6z^2g^{\dprime\prime}(x)g^{\prime}(x)-3z^2\left[g^{\dprime}(x)\right]^2+3g^{\dprime}(x)\left[g^{\prime}(x)\right]^2 \\
\qquad\qquad\textrm{otherwise }\\
\end{cases},\\
C_5(x, z) &= \frac{\ell^5}{120}\begin{cases}
\frac{3}{2}z^5g^{(5)}(x) -15z^3g^{(4)}(x)g^{\prime}(x) +\frac{45}{2}z\left[g^{\prime}(x)\right]^2g^{\dprime\prime}(x)+15zg^{\prime}(x)\left[g^{\dprime}(x)\right]^2 \\
\qquad - 15z^3g^{\dprime}(x)g^{\dprime\prime}(x)-30z^2r\left[g^{\dprime}(x)\right]^2+30r\left[g^{\prime}(x)\right]^2g^{\dprime}(x)\\
\qquad-30rz^2g^{\prime}(x)g^{\dprime\prime}(x)\\
\qquad\qquad\qquad\textrm{if } v=3/2\\
\frac{3}{2}z^5g^{(5)}(x) -15z^3g^{(4)}(x)g^{\prime}(x) +\frac{45}{2}z\left[g^{\prime}(x)\right]^2g^{\dprime\prime}(x)+15zg^{\prime}(x)\left[g^{\dprime}(x)\right]^2\\
\qquad - 15z^3g^{\dprime}(x)g^{\dprime\prime}(x) -60zrg^{\prime}(x)g^{\dprime}(x)\\
\qquad\qquad\qquad\textrm{if } v=2\\
\frac{3}{2}z^5g^{(5)}(x) -15z^3g^{(4)}(x)g^{\prime}(x) +\frac{45}{2}z\left[g^{\prime}(x)\right]^2g^{\dprime\prime}(x)\\
\qquad+15zg^{\prime}(x)\left[g^{\dprime}(x)\right]^2- 15z^3g^{\dprime}(x)g^{\dprime\prime}(x) \qquad\qquad\qquad\textrm{otherwise }
\end{cases},\\
C_6(x, z) &= \frac{\ell^6}{720}\begin{cases}
 2z^6g^{(6)}(x)-30z^4g^\prime(x)g^{(5)}(x)+90z^2\left[g^\prime(x)\right]^2g^{(4)}(x)-30g^{\dprime\prime}(x)\left[g^\prime(x)\right]^3\\
 \qquad-\frac{45}{2}\left[g^\prime(x)g^{\dprime}(x)\right]^2-30z^4g^{\dprime}(x)g^{(4)}(x) +135z^2g^\prime(x)g^{\dprime}(x)g^{\dprime\prime}(x)\\
\qquad-\frac{45}{2}z^4\left[g^{\dprime\prime}(x)\right]^2-90z^3rg^{\prime}(x)g^{(4)}(x)+540rzg^{\prime}(x)\left[g^{\dprime}(x)\right]^2\\
\qquad+180rzg^{\prime}(x)g^{\dprime}(x)+270rzg^{\dprime\prime}(x)\left[g^{\prime}(x)\right]^2-45zrg^{\dprime}(x)g^{\dprime\prime}(x)\\
\qquad+90rz^3g^{\dprime\prime}(x)\\
\qquad\qquad\qquad\textrm{if } v=3/2\\
 2z^6g^{(6)}(x)-30z^4g^\prime(x)g^{(5)}(x)+90z^2\left[g^\prime(x)\right]^2g^{(4)}(x)-30g^{\dprime\prime}(x)\left[g^\prime(x)\right]^3\\
 \qquad-\frac{45}{2}\left[g^\prime(x)g^{\dprime}(x)\right]^2-30z^4g^{\dprime}(x)g^{(4)}(x) +135z^2g^\prime(x)g^{\dprime}(x)g^{\dprime\prime}(x)\\
\qquad-\frac{45}{2}z^4\left[g^{\dprime\prime}(x)\right]^2-180z^2rg^{\prime}(x)g^{\dprime\prime}(x)-180z^2r\left[g^{\dprime}(x)\right]^2\\
\qquad+180rg^{\dprime}(x)\left[g^{\prime}(x)\right]^2\\
\qquad\qquad\qquad\textrm{if } v=2\\
 2z^6g^{(6)}(x)-30z^4g^\prime(x)g^{(5)}(x)+90z^2\left[g^\prime(x)\right]^2g^{(4)}(x)-30g^{\dprime\prime}(x)\left[g^\prime(x)\right]^3\\
 \qquad-\frac{45}{2}\left[g^\prime(x)g^{\dprime}(x)\right]^2-30z^4g^{\dprime}(x)g^{(4)}(x) +135z^2g^\prime(x)g^{\dprime}(x)g^{\dprime\prime}(x)\\
\qquad-\frac{45}{2}z^4\left[g^{\dprime\prime}(x)\right]^2-360zrg^{\prime}(x)g^{\dprime}(x)\\
\qquad\qquad\qquad\textrm{if } v=5/2\\
 2z^6g^{(6)}(x)-30z^4g^\prime(x)g^{(5)}(x)+90z^2\left[g^\prime(x)\right]^2g^{(4)}(x)-30g^{\dprime\prime}(x)\left[g^\prime(x)\right]^3\\
 \qquad-\frac{45}{2}\left[g^\prime(x)g^{\dprime}(x)\right]^2-30z^4g^{\dprime}(x)g^{(4)}(x) +135z^2g^\prime(x)g^{\dprime}(x)g^{\dprime\prime}(x)\\
\qquad-\frac{45}{2}z^4\left[g^{\dprime\prime}(x)\right]^2\\
\qquad\qquad\qquad\textrm{otherwise}\\
\end{cases},
\end{align}
and integral form of the remainder
\begin{align}
 C_7(x, z, \sigma_d) = \int_0^{\sigma_d} \left.\frac{\partial ^7}{\partial \sigma^7}R(x, z, \sigma)\right|_{\sigma =u}\frac{(\sigma_d-u)^6}{6!}\rmd u,
\end{align}
with $u$ between 0 and $\sigma_d$ and the derivatives of $R_1$ and $R_2$ are given in Appendix~\ref{app:derivatives_R}.
\subsection{Taylor expansions of the log-acceptance ratio}
\label{app:derivatives_R}
\subsubsection{$R_1$}
Recall that $R_1(x, z, \sigma) = -g\left[x - \frac{\sigma^2}{2} g^{\prime}\left(\prox_{g}^{\sigma^{2v}r/2}(x) \right)+ \sigma z\right] + g(x)$. We compute the derivatives of $R_1$ w.r.t. $\sigma$ evaluated at 0:
\begin{align}
 R_1(x, z, 0)&= 0,\\
 \frac{\partial R_1}{\partial \sigma}(x, z, \sigma)_{\mid \sigma=0} &=-g^\prime(x)z,\\
\frac{\partial^2 R_1}{\partial \sigma^2}(x, z, \sigma)_{\mid \sigma=0} &=-z^2g^{\dprime}(x)+\left[g^\prime(x)\right]^2,\\
\frac{\partial^3 R_1}{\partial \sigma^3}(x, z, \sigma)_{\mid \sigma=0}&= -z^3g^{\dprime\prime}(x)+3g^\prime(x)g^{\dprime}(x)\left[z+\frac{\partial}{\partial \sigma}\prox_g^{\sigma^{2v}r/2}(x)_{\mid \sigma =0}\right],\\
\frac{\partial^4 R_1}{\partial \sigma^4}(x, z, \sigma)_{\mid \sigma=0} &= -z^4g^{ (4)}(x)+6z^2g^{\dprime\prime}(x)g^{\prime}(x)-3g^{\dprime}(x)\left[g^{\prime}(x)\right]^2\\
&+12z\left[g^{\dprime}(x)\right]^2\frac{\partial}{\partial \sigma}\prox_g^{\sigma^{2v}r/2}(x)_{\mid \sigma =0}\\
&+6g^\prime(x)\\
&\times\left[g^{\dprime}(x)\frac{\partial^2}{\partial \sigma^2}\prox_g^{\sigma^{2v}r/2}(x)_{\mid \sigma =0}+\left(\frac{\partial}{\partial \sigma}\prox_g^{\sigma^{2v}r/2}(x)_{\mid \sigma =0}\right)^2g^{\dprime\prime}(x)\right].
\end{align}
In addition, for $v>1/2$ we will also use
\begin{align}
\frac{\partial^5 R_1}{\partial \sigma^5}(x, z, \sigma)_{\mid \sigma=0} &=  -z^5g^{ (5)}(x)+10z^3g^{(4)}(x)g^{\prime}(x)-15zg^{\dprime\prime}(x)\left[g^{\prime}(x)\right]^2\\
&+30z\left[g^{\dprime}(x)\right]^2\frac{\partial^2}{\partial^2 \sigma}\prox_g^{\sigma^{2v}r/2}(x)_{\mid \sigma =0} \\
&+ 10g^{\prime}(x)g^{\dprime}(x)\frac{\partial^3}{\partial^3 \sigma}\prox_g^{\sigma^{2v}r/2}(x)_{\mid \sigma =0},\\
\frac{\partial^6 R_1}{\partial \sigma^6}(x, z, \sigma)_{\mid \sigma=0} &=-z^6g^{ (6)}(x)+15z^4g^{(5)}(x)g^{\prime}(x)-45z^2g^{(4)}(x)\left[g^{\prime}(x)\right]^2+15g^{\dprime\prime}(x)\left[g^{\prime}(x)\right]^3\\
&-90g^{\prime}(x)\left[g^{\dprime}(x)\right]^2\frac{\partial^2}{\partial \sigma^2}\prox_g^{\sigma^{2v}r/2}(x)_{\mid \sigma =0}\\
&-60zg^{\dprime}(x)\frac{\partial^3}{\partial \sigma^3}\prox_{g}^{\sigma^{2v}r/2}(x)_{\mid \sigma=0}\\
&+90g^{\dprime}(x)g^{\dprime\prime}(x)\frac{\partial^2}{\partial \sigma^2}\prox_{g}^{\sigma^{2v}r/2}(x)_{\mid \sigma=0}\\
&+15g^{\prime}(x)\\
&\times\left(g^{\dprime}(x)\frac{\partial^4}{\partial \sigma^4}\prox_{g}^{\sigma^{2v}r/2}(x)_{\mid \sigma=0}+3\left[\frac{\partial^2}{\partial \sigma^2}\prox_{g}^{\sigma^{2v}r/2}(x)_{\mid \sigma=0}\right]^2g^{\prime\dprime}(x)\right).
\end{align}

\subsubsection{$R_2$}

Recall that 
\begin{align}
       & R_2(x, z, \sigma) =
        \frac{1}{2}z^2 \notag\\
        &-\frac{1}{2}\left(z-\frac{\sigma}{2}g^\prime\left(\prox_g^{\sigma^{2v}r/2}\left[x+\sigma z -\frac{\sigma^2}{2}g^\prime\left[\prox_g^{\sigma^{2v}r/2}(x)\right]\right]\right)-\frac{\sigma}{2}g^\prime\left[\prox_g^{\sigma^{2v}r/2}(x)\right]\right)^2.\notag
\end{align}
We compute the derivatives of $R_2$ w.r.t. $\sigma$ evaluated at 0:
\begin{align}
R_2(x, z, 0)&= 0,\\
\frac{\partial R_2}{\partial \sigma}(x, z, \sigma)_{\vert \sigma =0} &= z g^\prime(x),\\
\frac{\partial^2 R_2}{\partial \sigma^2}(x, z, \sigma)_{\vert \sigma =0} &= -\left[ g^\prime(x)\right]^2+zg^{\dprime}(x)\left[z+2\frac{\partial}{\partial \sigma}\prox_g^{\sigma^{2v}r/2}(x)_{\mid \sigma=0}\right],\\
\frac{\partial^3 R_2}{\partial \sigma^3}(x, z, \sigma)_{\vert \sigma =0} &= -3g^{\prime}(x)g^{\dprime}(x)\left[z+2\frac{\partial}{\partial \sigma}\prox_g^{\sigma^{2v}r/2}(x)_{\mid \sigma=0}\right]\\
&+\frac{3}{2}z\left(\left[z+\frac{\partial}{\partial \sigma}\prox_g^{\sigma^{2v}r/2}(x)_{\mid \sigma=0}\right]^2g^{\dprime\prime}(x) \right.\\
&\left.+\left[-g^{\prime}(x) +2z\frac{\partial^2}{\partial \sigma\partial x}\prox_g^{\sigma^{2v}r/2}(x)_{\mid \sigma=0} + \frac{\partial^2}{\partial \sigma^2}\prox_g^{\sigma^{2v}r/2}(x)_{\mid \sigma=0}\right]g^{\dprime}(x) \right.\\
&\left. +\left[\frac{\partial}{\partial \sigma}\prox_g^{\sigma^{2v}r/2}(x)_{\mid \sigma=0}\right]^2 g^{\prime\dprime}(x)+\frac{\partial^2}{\partial \sigma^2}\prox_g^{\sigma^{2v}r/2}(x)_{\mid \sigma=0}g^{\dprime}(x)\right),\\
\frac{\partial^4 R_2}{\partial \sigma^4}(x, z, \sigma)_{\vert \sigma =0} &= -3\left[g^{\dprime}(x)\right]^2\left[z+2\frac{\partial}{\partial \sigma}\prox_g^{\sigma^{2v}r/2}(x)_{\mid \sigma=0}\right]^2\\
&-6g^{\prime}(x)\left(\left[z+\frac{\partial}{\partial \sigma}\prox_g^{\sigma^{2v}r/2}(x)_{\mid \sigma=0}\right]^2g^{\dprime\prime}(x) \right.\\
&\left.+\left[-g^{\prime}(x) +2z\frac{\partial^2}{\partial \sigma\partial x}\prox_g^{\sigma^{2v}r/2}(x)_{\mid \sigma=0} + \frac{\partial^2}{\partial \sigma^2}\prox_g^{\sigma^{2v}r/2}(x)_{\mid \sigma=0}\right]g^{\dprime}(x) \right.\\
&\left. +\left[\frac{\partial}{\partial \sigma}\prox_g^{\sigma^{2v}r/2}(x)_{\mid \sigma=0}\right]^2 g^{\prime\dprime}(x)+\frac{\partial^2}{\partial \sigma^2}\prox_g^{\sigma^{2v}r/2}(x)_{\mid \sigma=0}g^{\dprime}(x)\right)\\
&+2zg^{(4)}(x)\left[z+\frac{\partial}{\partial \sigma}\prox_g^{\sigma^{2v}r/2}(x)_{\mid \sigma=0}\right]^3\\
&+6zg^{\dprime\prime}(x)\left[z+\frac{\partial}{\partial \sigma}\prox_g^{\sigma^{2v}r/2}(x)_{\mid \sigma=0}\right]\\
&\qquad\times\left[-g^{\prime}(x)+2z\frac{\partial^2}{\partial \sigma\partial x}\prox_g^{\sigma^{2v}r/2}(x)_{\mid \sigma=0}+\frac{\partial^2}{\partial \sigma^2}\prox_g^{\sigma^{2v}r/2}(x)_{\mid \sigma=0}\right]\\
&+2zg^{\dprime}(x)\left[-3g^{\dprime}(x)\frac{\partial}{\partial \sigma}\prox_g^{\sigma^{2v}r/2}(x)_{\mid \sigma=0}-3g^{\prime}(x)\frac{\partial^2}{\partial \sigma\partial x}\prox_g^{\sigma^{2v}r/2}(x)_{\mid \sigma=0}\right.\\
&\left.\qquad +3z^2\frac{\partial^3}{\partial \sigma\partial x^2}\prox_g^{\sigma^{2v}r/2}(x)_{\mid \sigma=0}+3z\frac{\partial^3}{\partial \sigma^2\partial x}\prox_g^{\sigma^{2v}r/2}(x)_{\mid \sigma=0}\right.\\
&\qquad\left.+\frac{\partial^3}{\partial \sigma^3}\prox_g^{\sigma^{2v}r/2}(x)_{\mid \sigma=0}\right]\\
&+2zg^{(4)}(x)\left[\frac{\partial}{\partial \sigma}\prox_g^{\sigma^{2v}r/2}(x)_{\mid \sigma=0}\right]^3+2zg^{\dprime}(x)\frac{\partial^3}{\partial \sigma^3}\prox_g^{\sigma^{2v}r/2}(x)_{\mid \sigma=0}\\
&+6z\frac{\partial^2}{\partial \sigma^2}\prox_g^{\sigma^{2v}r/2}(x)_{\mid \sigma=0}\frac{\partial}{\partial \sigma}\prox_g^{\sigma^{2v}r/2}(x)_{\mid \sigma=0}g^{\dprime\prime}(x).
\end{align}
We then proceed to get the derivatives needed for $v>1/2$:
\begin{align}
\frac{\partial^5 R_2}{\partial \sigma^5}(x, z, \sigma)_{\vert \sigma =0} &= -15zg^{\dprime}(x)\left(z^2g^{\dprime\prime}(x) +\left[-g^{\prime}(x) +2\frac{\partial^2}{\partial \sigma^2}\prox_g^{\sigma^{2v}r/2}(x)_{\mid \sigma=0}\right]g^{\dprime}(x) \right)\\
&-5g^{\prime}(x)\left(2z^3g^{(4)}(x)+6zg^{\dprime\prime}(x)\left[-g^{\prime}(x)+\frac{\partial^2}{\partial \sigma^2}\prox_g^{\sigma^{2v}r/2}(x)_{\mid \sigma=0}\right]\right.\\
&\left.\qquad+2g^{\dprime}(x)\left[3z\frac{\partial^3}{\partial\sigma^2 \partial x}\prox_g^{\sigma^{2v}r/2}(x)_{\mid \sigma=0}+\frac{\partial^3}{\partial \sigma^3}\prox_g^{\sigma^{2v}r/2}(x)_{\mid \sigma=0}\right]\right.\\
&\left.\qquad+2g^{\dprime}(x)\frac{\partial^3}{\partial \sigma^3}\prox_g^{\sigma^{2v}r/2}(x)_{\mid \sigma=0}\right)\\
&+\frac{5}{2}g^{(5)}(x)z^5+\frac{5}{2}zg^{\dprime}(x)\frac{\partial^4}{\partial \sigma^4}\prox_g^{\sigma^{2v}r/2}(x)_{\mid \sigma=0}\\
&+ 15g^{(4)}(x)z^3\left[-g^{\prime}(x)+\frac{\partial^2}{\partial \sigma^2}\prox_g^{\sigma^{2v}r/2}(x)_{\mid \sigma=0}\right]\\
&+\frac{15}{2}zg^{\dprime\prime}(x)\left[-g^{\prime}(x)+\frac{\partial^2}{\partial \sigma^2}\prox_g^{\sigma^{2v}r/2}(x)_{\mid \sigma=0}\right]^2\\
&+\frac{15}{2}zg^{\dprime\prime}(x)\left[\frac{\partial^2}{\partial \sigma^2}\prox_g^{\sigma^{2v}r/2}(x)_{\mid \sigma=0}\right]^2\\
&+10z^2 g^{\dprime\prime}(x)\left(3z\frac{\partial^3}{\partial \sigma^2\partial x}\prox_g^{\sigma^{2v}r/2}(x)_{\mid \sigma=0}+ \frac{\partial^3}{\partial \sigma^3}\prox_g^{\sigma^{2v}r/2}(x)_{\mid \sigma=0}\right)\\
&+\frac{5}{2}zg^{\dprime}(x)
\left(\frac{\partial^4}{\partial \sigma^4}\prox_g^{\sigma^{2v}r/2}(x)_{\mid \sigma=0}-6g^{\dprime}(x)\frac{\partial^2}{\partial \sigma^2}\prox_g^{\sigma^{2v}r/2}(x)_{\mid \sigma=0}\right.\\
&\qquad\left.-6g^{\prime}(x)\frac{\partial^3}{\partial \sigma^2\partial x}\prox_g^{\sigma^{2v}r/2}(x)_{\mid \sigma=0}+6z^2\frac{\partial^4}{\partial \sigma^2\partial x^2}\prox_g^{\sigma^{2v}r/2}(x)_{\mid \sigma=0}\right.\\
&\left.\qquad+4z\frac{\partial^4}{\partial \sigma^3\partial x}\prox_g^{\sigma^{2v}r/2}(x)_{\mid \sigma=0}\right)
\end{align}
and
\begin{align}
\frac{\partial^6 R_2}{\partial \sigma^6}(x, z, \sigma)_{\vert \sigma =0} &= -\frac{45}{2}\left(z^2g^{\dprime\prime}(x) +\left[-g^{\prime}(x) +2\frac{\partial^2}{\partial \sigma^2}\prox_g^{\sigma^{2v}r/2}(x)_{\mid \sigma=0}\right]g^{\dprime}(x) \right)^2\\
&-15zg^{\dprime}(x)\left( 2z^3g^{(4)}(x)+6zg^{\dprime\prime}(x)
\left[-g^{\prime}(x)+\frac{\partial^2}{\partial \sigma^2}\prox_g^{\sigma^{2v}r/2}(x)_{\mid \sigma=0}\right]\right.\\
&\qquad+2g^{\dprime}(x)\left[3z\frac{\partial^3}{\partial\sigma^2 \partial x}\prox_g^{\sigma^{2v}r/2}(x)_{\mid \sigma=0}+\frac{\partial^3}{\partial \sigma^3}\prox_g^{\sigma^{2v}r/2}(x)_{\mid \sigma=0}\right]\\
&\qquad\left.+2g^{\dprime}(x)\frac{\partial^3}{\partial \sigma^3}\prox_g^{\sigma^{2v}r/2}(x)_{\mid \sigma=0}\right)\\
&+6g^{\prime}(x)A^{(5)} - zA^{(6)},
\end{align}
with
\begin{align}
A^{(5)} &= -\frac{5}{2}g^{(5)}(x)z^4-\frac{5}{2}g^{\dprime}(x)\frac{\partial^4}{\partial \sigma^4}\prox_g^{\sigma^{2v}r/2}(x)_{\mid \sigma=0}\\
&- 15g^{(4)}(x)z^2\left[-g^{\prime}(x)+\frac{\partial^2}{\partial \sigma^2}\prox_g^{\sigma^{2v}r/2}(x)_{\mid \sigma=0}\right]\\
&-\frac{15}{2}g^{\dprime\prime}(x)\left[-g^{\prime}(x)+\frac{\partial^2}{\partial \sigma^2}\prox_g^{\sigma^{2v}r/2}(x)_{\mid \sigma=0}\right]^2-\frac{15}{2}g^{\dprime\prime}(x)\left[\frac{\partial^2}{\partial \sigma^2}\prox_g^{\sigma^{2v}r/2}(x)_{\mid \sigma=0}\right]^2\\
&-10z g^{\dprime\prime}(x)\left(3z\frac{\partial^3}{\partial \sigma^2\partial x}\prox_g^{\sigma^{2v}r/2}(x)_{\mid \sigma=0}+ \frac{\partial^3}{\partial \sigma^3}\prox_g^{\sigma^{2v}r/2}(x)_{\mid \sigma=0}\right)\\
&-\frac{5}{2}g^{\dprime}(x)
\left(\frac{\partial^4}{\partial \sigma^4}\prox_g^{\sigma^{2v}r/2}(x)_{\mid \sigma=0}-6g^{\dprime}(x)\frac{\partial^2}{\partial \sigma^2}\prox_g^{\sigma^{2v}r/2}(x)_{\mid \sigma=0}\right.\\
&\qquad\left.-6g^{\prime}(x)\frac{\partial^3}{\partial \sigma^2\partial x}\prox_g^{\sigma^{2v}r/2}(x)_{\mid \sigma=0}\right.\\
&\qquad\left.+6z^2\frac{\partial^4}{\partial \sigma^2\partial x^2}\prox_g^{\sigma^{2v}r/2}(x)_{\mid \sigma=0}+4z\frac{\partial^4}{\partial \sigma^3\partial x}\prox_g^{\sigma^{2v}r/2}(x)_{\mid \sigma=0}\right),\\
A^{(6)} &= -3\left(10g^{\dprime\prime}(x)\frac{\partial^2}{\partial \sigma^2}\prox_g^{\sigma^{2v}r/2}(x)_{\mid \sigma=0}\frac{\partial^3}{\partial \sigma^3}\prox_g^{\sigma^{2v}r/2}(x)_{\mid \sigma=0}\right.\\
&\left.+g^{\dprime}(x)\frac{\partial^5}{\partial \sigma^5}\prox_g^{\sigma^{2v}r/2}(x)_{\mid \sigma=0}+g^{(6)}(x)z^5\right.\\
& + 10g^{(5)}(x)z^3\left[-g^{\prime}(x)+\frac{\partial^2}{\partial \sigma^2}\prox_g^{\sigma^{2v}r/2}(x)_{\mid \sigma=0}\right]\\
&+ 15zg^{(4)}(x)\left[-g^{\prime}(x)+\frac{\partial^2}{\partial \sigma^2}\prox_g^{\sigma^{2v}r/2}(x)_{\mid \sigma=0}\right]^2\\
&+ 10g^{(4)}(x)z^2\left[\frac{\partial^3}{\partial \sigma^3}\prox_g^{\sigma^{2v}r/2}(x)_{\mid \sigma=0}+3z\frac{\partial^3}{\partial \sigma^2\partial x}\prox_g^{\sigma^{2v}r/2}(x)_{\mid \sigma=0}\right]\\
&+ 10g^{\dprime\prime}(x)\left[-g^{\prime}(x)+\frac{\partial^2}{\partial \sigma^2}\prox_g^{\sigma^{2v}r/2}(x)_{\mid \sigma=0}\right]\\
&\qquad\times\left[\frac{\partial^3}{\partial \sigma^3}\prox_g^{\sigma^{2v}r/2}(x)_{\mid \sigma=0}+3z\frac{\partial^3}{\partial \sigma^2\partial x}\prox_g^{\sigma^{2v}r/2}(x)_{\mid \sigma=0}\right]\\
&+5g^{\dprime\prime}(x)z\left[-6g^{\dprime}(x)\frac{\partial^2}{\partial \sigma^2}\prox_g^{\sigma^{2v}r/2}(x)_{\mid \sigma=0}-6g^{\prime}(x)\frac{\partial^3}{\partial \sigma^2\partial x}\prox_g^{\sigma^{2v}r/2}(x)_{\mid \sigma=0}\right.\\
&\qquad\left. +6z^2\frac{\partial^4}{\partial \sigma^2\partial x^2}\prox_g^{\sigma^{2v}r/2}(x)_{\mid \sigma=0}+3z\frac{\partial^4}{\partial \sigma^3\partial x}\prox_g^{\sigma^{2v}r/2}(x)_{\mid \sigma=0}\right.\\
&\qquad\left.+\frac{\partial^4}{\partial \sigma^4}\prox_g^{\sigma^{2v}r/2}(x)_{\mid \sigma=0}\right]\\
&+g^{\dprime}(x)\left[-10g^{\dprime}(x)\frac{\partial^3}{\partial \sigma^3}\prox_g^{\sigma^{2v}r/2}(x)_{\mid \sigma=0}-10g^{\prime}(x)\frac{\partial^4}{\partial \sigma^3\partial x}\prox_g^{\sigma^{2v}r/2}(x)_{\mid \sigma=0}\right.\\
&\qquad +5z\frac{\partial^5}{\partial \sigma^4\partial x}\prox_g^{\sigma^{2v}r/2}(x)_{\mid \sigma=0}+10z^2\frac{\partial^5}{\partial \sigma^3\partial x^2}\prox_g^{\sigma^{2v}r/2}(x)_{\mid \sigma=0}\\
&\qquad+10z^3\frac{\partial^5}{\partial \sigma^2\partial x^3}\prox_g^{\sigma^{2v}r/2}(x)_{\mid \sigma=0}\\
&\qquad\left.\left. -30g^{\prime}(x)z\frac{\partial^4}{\partial \sigma^2\partial x^2}\prox_g^{\sigma^{2v}r/2}(x)_{\mid \sigma=0}+\frac{\partial^5}{\partial \sigma^5}\prox_g^{\sigma^{2v}r/2}(x)_{\mid \sigma=0}\right]\right).
\end{align}

\subsection{Derivatives of the proximity map for regular targets}
\label{app:derivatives_prox}

Recall that, in the regular case, $\prox_g^{\sigma^{2v}r/2}(x)=-\frac{\sigma^{2v}r}{2} g^\prime(\prox_g^{\sigma^{2v}r/2}(x))+x$
then
\begin{align}
&\frac{\partial}{\partial \sigma} \prox_g^{\sigma^{2v}r/2}(x)_{|\sigma =0} =\begin{cases} 
-\frac{r}{2}g^{\prime}(x)\qquad \textrm{if } v= 1/2\\
0\qquad \textrm{if } v> 1/2\\
\infty \qquad \textrm{otherwise}
\end{cases}\\
&\frac{\partial^2}{\partial \sigma^2} \prox_g^{\sigma^{2v}r/2}(x)_{|\sigma =0} =\begin{cases}
\frac{r^2}{2}g^{\prime}(x)g^{\dprime}(x)\qquad \textrm{if } v = 1/2\\
-rg^{\prime}(x)\qquad \textrm{if } v = 1\\
0 \qquad \textrm{if } m > 1\\
\infty \qquad \textrm{otherwise}
\end{cases}\\
&\frac{\partial^3}{\partial \sigma^3}\prox_g^{\sigma^{2v}r/2}(x)_{|\sigma =0} = 
\begin{cases}
-\frac{3r^3}{8}g^{\dprime\prime}(x)\left[g^{\prime}(x)\right]^2-\frac{3r^3}{4}g^{\prime}(x)\left[g^{\dprime}(x)\right]^2\qquad \textrm{if } v = 1/2\\
-3rg^{\prime}(x)\qquad \textrm{if } v = 3/2\\
0 \qquad \textrm{if }  v=1, m > 3/2\\
\infty \qquad \textrm{otherwise}
\end{cases}\\
&\frac{\partial^4}{\partial \sigma^4}\prox_g^{\sigma^{2v}r/2}(x)_{|\sigma =0} = 
\begin{cases}
<\infty\qquad \textrm{if } v = 1/2\\
6r^2g^{\prime}(x)g^{\dprime}(x)\qquad \textrm{if } v = 1\\
-12rg^{\prime}(x)\qquad \textrm{if } v = 2\\
0 \qquad \textrm{if }v=3/2, v>2\\
\infty \qquad \textrm{otherwise}
\end{cases}\\
&\frac{\partial^5}{\partial \sigma^5}\prox_g^{\sigma^{2v}r/2}(x)_{|\sigma =0} = 
\begin{cases}
<\infty\qquad \textrm{if } v = 1/2\\
-60rg^{\prime}(x)\qquad \textrm{if } v = 5/2\\
0 \qquad \textrm{if } v=1, v=3/2, v=2, v>5/2\\
\infty \qquad \textrm{otherwise}
\end{cases}\\
\end{align}
and
\begin{align}
&\frac{\partial}{\partial x} \prox_g^{\sigma^{2v}r/2}(x)_{|\sigma =0} =1,\qquad\qquad\qquad\frac{\partial^{(k)}}{\partial x^{(k)}} \prox_g^{\sigma^{2v}r/2}(x)_{|\sigma =0} =0,
\end{align}
for all integers $k>1$.
For the mixed derivatives we have
\begin{align}
&\frac{\partial^2}{\partial \sigma \partial x} \prox_g^{\sigma^{2v}r/2}(x)_{|\sigma =0} =\begin{cases}
-\frac{r}{2}g^{\dprime}(x)\qquad\textrm{if } v= 1/2\\
0\qquad\textrm{if }  v>1/2\\
\infty\qquad\textrm{otherwise}
\end{cases}\\
&\frac{\partial^3}{\partial \sigma^2 \partial x} \prox_g^{\sigma^{2v}r/2}(x)_{|\sigma =0} =\begin{cases}
\frac{r^2}{2}\left[g^{\dprime}(x)\right]^2+\frac{r^2}{2}g^{\prime}(x)g^{\dprime\prime}(x)\qquad\textrm{if } v= 1/2\\
-rg^{\dprime}(x)\qquad\textrm{if } v= 1\\
0\qquad\textrm{if }  v>1/2\\
\infty\qquad\textrm{otherwise}
\end{cases}\\
&\frac{\partial^4}{\partial \sigma^3 \partial x} \prox_g^{\sigma^{2v}r/2}(x)_{|\sigma =0} =\begin{cases}
<\infty \qquad\textrm{if } v= 1/2\\
-3rg^{\dprime}(x)\qquad\textrm{if } v= 3/2\\
0\qquad\textrm{if }  v=1, v>3/2\\
\infty\qquad\textrm{otherwise}
\end{cases}\\
&\frac{\partial^5}{\partial \sigma^4 \partial x} \prox_g^{\sigma^{2v}r/2}(x)_{|\sigma =0} =\begin{cases}
<\infty \qquad\textrm{if } v= 1/2\\
6r^2\left(g^{\prime}(x)g^{\dprime\prime}(x)+\left[g^{\dprime}(x)\right]^2\right)\qquad\textrm{if } v= 1\\
-12rg^{\dprime}(x)\qquad\textrm{if } v= 2\\
0\qquad\textrm{if }  v=3/2, v>2\\
\infty\qquad\textrm{otherwise}
\end{cases}\\
\end{align}
and
\begin{align}
&\frac{\partial^3}{\partial \sigma\partial x^2 } \prox_g^{\sigma^{2v}r/2}(x)_{|\sigma =0} =\begin{cases}
-\frac{r}{2}g^{\dprime\prime}(x) \qquad\textrm{if } v= 1/2\\
0\qquad\textrm{if }  v>1/2\\
\infty\qquad\textrm{otherwise}
\end{cases}\\
&\frac{\partial^4}{\partial \sigma\partial x^3 } \prox_g^{\sigma^{2v}r/2}(x)_{|\sigma =0} =\begin{cases}
<\infty \qquad\textrm{if } v= 1/2\\
0\qquad\textrm{if }  v>1/2\\
\infty\qquad\textrm{otherwise}
\end{cases}\\
&\frac{\partial^4}{\partial \sigma^2\partial x^2 } \prox_g^{\sigma^{2v}r/2}(x)_{|\sigma =0} =\begin{cases}
<\infty \qquad\textrm{if } v= 1/2\\
-rg^{\dprime\prime}(x) \qquad\textrm{if } v= 1\\
0\qquad\textrm{if }  v>1/2\\
\infty\qquad\textrm{otherwise}
\end{cases}\\
&\frac{\partial^5}{\partial \sigma^3\partial x^2 } \prox_g^{\sigma^{2v}r/2}(x)_{|\sigma =0} =\begin{cases}
<\infty \qquad\textrm{if } v= 1/2\\
-3rg^{\dprime\prime}(x) \qquad\textrm{if } v= 3/2\\
0\qquad\textrm{if }  v=1, v>3/2\\
\infty\qquad\textrm{otherwise}
\end{cases}\\
&\frac{\partial^5}{\partial \sigma^2\partial x^3 } \prox_g^{\sigma^{2v}r/2}(x)_{|\sigma =0} =\begin{cases}
<\infty \qquad\textrm{if } v= 1/2\\
-rg^{(4)}(x) \qquad\textrm{if } v= 1\\
0\qquad\textrm{if }  v>1\\
\infty\qquad\textrm{otherwise}
\end{cases}
\end{align}


\section{Proof of the result for the Laplace distribution}
\label{sec:laplace_proof}

In this section we prove the results in Section~\ref{sec:laplace} which give the scaling properties of MY-MALA (and sG-MALA) for the Laplace distribution.
We collect technical results (e.g. moment computations, bounds, etc.) in Appendix~\ref{app:laplace_computations}.

We recall that $\sigma_d^2 = \ell^2 /d^{2\alpha}$ and $\lambda_d = c^2/2d^{2\beta}$ for some $\alpha, \beta>0$ and some constants $c, \ell$ independent on $d$.
Thus, we can write $\lambda_d$ as a function of $\sigma_d$, $\lambda_d = \sigma_d^{2v}r/2$,
where we define $r=c^2/\ell^{2v}\geq0$ and $v= \beta/\alpha$. %
 In order to study the scaling limit of MY-MALA with Laplace target, 
 consider the mapping
$b_d:\rset^2 \to \rset$ given by
\begin{align}
\label{eq:martingale_b}
b_d:(x,z)\mapsto z - \frac{\sigma_d}{2}\sgn(x)\1\defEns{\vert x\vert
\geq\sigma_d^{2v}r/2}-\frac{1}{\sigma_d^{2m-1}r}x\1\defEns{\vert x\vert<
\sigma_d^{2v}r/2},
\end{align}
which allows us to write the proposal as  $Y^d_{1,i}=X_{0,i}^d + \sigma_d
 b_d(X_{0,i}^d,Z_{1,i}^d)$ , for any $i \in \iint{1}{d}$.

We consider also the function $\phi_d:\rset^2 \to \rset$, given by 
\begin{align}
\label{eq:laplace_ar}
\phi_d:(x,z)&\mapsto\log \frac{\pi(x+\sigma_d b_d(x,z))q(x+\sigma_d b_d(x,z), x)}{\pi(x)q(x, x+\sigma_d b_d(x,z))} \\
   &= \vert x\vert -\left\lvert x +\sigma_d b_d(x, z) \right\rvert 
    +\frac{z^2}{2}\\
   &\quad-\frac{1}{2\sigma_d^2}\left\lbrace 
    \frac{\sigma_d^2}{2}\sgn\left[x +\sigma_d b_d( x, z)\right]
\1\defEns{\left\vert x +\sigma_d b_d( x,
z)\right\rvert\geq\frac{\sigma_d^{2v}r}{2}}\right. \\
   &\qquad -\sigma_d b_d(x, z)   \\
    & \qquad+\left.\frac{1}{\sigma_d^{2(m-1)}r}\left[x +\sigma_d b_d( x,
z)\right]\1\defEns{\left\vert x +\sigma_d b_d(x,
z)\right\rvert<\frac{\sigma_d^{2v}r}{2}}\right\rbrace^2.
\end{align}

\subsection{Proof of Theorem~\ref{theo:acceptance_laplace}}
\label{sec:acceptance_ratio}

We introduce, for $i\in\iint{1}{d}$, $\phi_{d,i}=\phi_{d}(X^d_{0,i},Z^d_{1,i})$
for the sake of conciseness.
This allows us to rewrite $a_d(\ell,r)$, defined in~\eqref{eq:ar_finite}, in
the following way,
\begin{equation}
    \label{eq:acceptance_phi}
    a_d(\ell,r)= \expe{\exp\parenthese{\sum_{i=1}^d
    \phi_{d,i}} \wedge 1}.
\end{equation}
\begin{remark}
    \label{rmk:iid}
    Under \Cref{ass:stationarity}, the families of random variables
    $(b_d(X^d_{0,i},Z^d_{1,i}))_{i\in\iint{1}{d}}$ and
    $(\phi_{d,i})_{i\in\iint{1}{d}}$ are i.i.d..
\end{remark}
\begin{remark}
\label{rmk:clt}
It is important to see why \Cref{rmk:iid} and the central limit theorem (CLT)
do not allow us to conclude on the limiting law of the sum of the random variables $(\phi_{d,i})_{i\in \iint{1}{d}}$ when $d$ goes to infinity. To use the CLT, we would need a family of i.i.d. random variables $\parentheseLigne{T_i}_{i\in \nsets}$ such that $\expeLine{T_1^2}< +\infty$. In which case, we would obtain the convergence in distribution of the random variable
\begin{equation}
    \left.\sum_{i=1}^d\parenthese{T_i-\expe{T_1}}
    \middle/ \parenthese{\expe{T_1^2}d}^{1/2}
    \right. .
\end{equation}
In our case, $\phi_{d,i}$ would play the role of 
$T_i/ (\expeLine{T_1^2}d)^{1/2}$. However, the comparison is flawed because, as one can see in \eqref{eq:laplace_ar}, there is no easy way to choose a map $f:\nsets \to \rset$ such that $f(d) \phi_{d,i}$ no longer depends on $d$. Indeed, the random variable $\phi_{d,i}$ depends on $d$ in a very intricate way, such that the CLT is not of any help here. This is why we use Lindeberg's CLT for martingale arrays instead (see \cite[Theorem 4, page 543]{shiryaev1996probability}).
\end{remark}

The proof of Theorem~\ref{theo:acceptance_laplace} uses the first three moments
of $\phi_{d,1}$, whose computation is postponed to
Appendix~\ref{app:laplace_moments}, and is an application of Lindeberg's
central limit theorem.

To identify the optimal scaling for the Laplace distribution, we look for those
values of $\alpha$ such that $\sum_{i=1}^d \expeLine{\phi_{d,i}}$ and
$\var\parentheseLigne{\sum_{i=1}^d\phi_{d,i}}$ converge to a finite value. Using
\Cref{rmk:iid}, we have that,
\begin{align}
\label{eq:laplace_expe_var}
\sum_{i=1}^d \expe{\phi_{d,i}} = d\eqsp \expe{\phi_{d,1}} \eqsp
\quad \text{and} \quad
\var\parenthese{\sum_{i=1}^d \phi_{d,i}} = d \var\left(\phi_{d,1}\right) .
\end{align}
Then, using the integrals in Appendix~\ref{app:laplace_moments}, we find that
the only value of $\alpha$ for which~\eqref{eq:laplace_expe_var} converge to a
finite value with the variance strictly positive is
$\alpha=1/3$ as confirmed empirically in Appendix~\ref{sec:laplace_numeric}.

Having identified $\alpha=1/3$, we can then proceed applying Lindeberg's CLT.
\begin{proof}[Proof of Theorem~\ref{theo:acceptance_laplace}]
We start by showing that the acceptance ratio converges to a Gaussian distribution. 
Define $\mu_d= \expeLine{\phi_{d,1}}$ and
$\mathcal{F}_{d,i}=\sigma((X_{0,j}^d, Z^d_{1,j}), 1\leq j\leq i)$, the natural
filtration for $(X_{0,i}^d, Z^d_{1,i})_{d\in\nset, 1\leq i\leq d}$. The
square-integrable martingale sequence
\begin{align}
    \left(\sum_{j=1}^i W_{d, j}, \mathcal{F}_{d,i}\right)_{d \in
    \nsets, 1\leq i\leq d}
\end{align}
where $W_{d,i}= \phi_{d,i} - \mu_d$, 
forms a triangular array, to which we can apply the corresponding CLT (e.g.
\cite[Theorem 4, page 543]{shiryaev1996probability}). In particular, we have that,
\begin{align}
    \lim_{d\to \infty}\sum_{i=1}^d \expe{W_{d, i}^2\mid \mathcal{F}_{d, i-1}
    }&=\lim_{d\to \infty} d \var\left(\phi_{d,1}\right) =
    \frac{2\ell^3}{3\sqrt{2\uppi}},
\end{align}
as shown in Proposition~\ref{prop:laplace_var} in Appendix~\ref{app:laplace_moments}.
It remains to verify Lindeberg's condition: for $\varepsilon>0$, 
\begin{align}
\lim_{d\to \infty} d\expe{W_{d, 1}^2\1\defEns{\vert W_{d, 1}\vert > \varepsilon}}=0.
\end{align}
In order to verify Lindeberg's condition we verify the stronger Lyapunov condition: there exists
$\epsilon>0$ such that 
\begin{align}
\lim_{d\to \infty} d\expe{W_{d, 1}^{2+\epsilon}}=0.
\end{align}
Pick $\epsilon =1$ and expand the cube using $\mu_d=\expeLine{\phi_{d,i}}$,
\begin{align}
\label{eq:linderberg}
\expe{W_{d, 1}^3} &= \expe{\phi_{d,i}^3}-3 \mu_d\expe{\phi_{d,i}^2}
+2\mu_d^3.
\end{align}
By Proposition~\ref{prop:laplace_exp} in Appendix~\ref{app:laplace_moments}, we have
    $\lim_{d\to \infty} d\mu_d^3=0$, $\lim_{d\to \infty} \mu_d=0$,
and, by Proposition~\ref{prop:laplace_var} in Appendix~\ref{app:laplace_moments},
\begin{align}
    \lim_{d\to \infty} d\expe{\phi_{d,i}^2}=\frac{2\ell^3}{3\sqrt{2\uppi}}.
\end{align}
Finally, for the remaining term in~\eqref{eq:linderberg} we use
Proposition~\ref{prop:laplace_third} in Appendix~\ref{app:laplace_moments} to
show that $\lim_{d\to \infty} d\expeLine{\phi_{d,i}^3}=0$. The above and the
fact that, by Proposition~\ref{prop:laplace_exp} in
Appendix~\ref{app:laplace_moments},
\begin{align}
    \lim_{d\to\infty}d \mu_d = -\frac{\ell^3}{3\sqrt{2\uppi}},
\end{align}
show, by Lindeberg's CLT, that the acceptance ratio converges in law to a normal random variable $\widetilde Z$ with mean
$-\ell^3/(3\sqrt{2\uppi})$ and 
variance $2\ell^3/(3\sqrt{2\uppi})$.

To conclude the proof, we apply the continuous mapping theorem to the bounded and continuous function $x \mapsto e^x\wedge 1$ and obtain
\begin{equation}
    \lim_{d\to \infty} \exp\parenthese{\sum_{i=1}^{d} \phi_{d,i}} \wedge 1
    \overset{\rmd}{=} e^{\widetilde{Z}} \wedge 1\eqsp \quad \text{and} \quad
    \lim_{d\to \infty} a_d(\ell,r) = \expe{e^{\widetilde{Z}} \wedge 1},
\end{equation}
where the limit does not depend on $r$. Defining $a^\rmL(\ell) = \lim_{d\to \infty} a_d(\ell,r)$ and
using \cite[Proposition 2.4]{gelman1997weak}, we have the result.

\end{proof}


\subsection{Proof of Proposition~\ref{prop:tightness}}
\label{sec:tightness}
We are interested in the law $\nu_d$ of the linear
interpolant $(L_{t}^d)_{t\geq 0}$, defined in~\eqref{eq:Yt_differentiable},
of the first component of the chain $(X^d_k)_{k\in\nset}$. Let us recall the definition of the chain:
assumption \Cref{ass:stationarity} gives the initial distribution $\pi_d$,
then, for any $k \in \nset$, the proposal $Y^d_{k+1}=(Y^d_{k+1,i})_{1\leq i
\leq d}$ is defined in~\eqref{eq:laplace_proposal} with $\sigma_d^2 =
\ell^2/d^{2\alpha}$, $\lambda_d = \sigma_d^{2v}r/2$ with $\alpha=1/3$ and $v\geq 1$. The proposal~\eqref{eq:laplace_proposal} can be written as
\begin{equation}
    \label{eq:laplace_proposal_bis}
    Y^d_{k+1,i} = X^d_{k,i} + \sigma_d b_d(X^d_{k,i},Z^d_{k+1,i}),
\end{equation}
for any $i \in \iint{1}{d}$, where $b_d$ is defined in~\eqref{eq:martingale_b} and $r=c^2/\ell^{2v}$.
We further define the acceptance event $\acc_{k+1}^d=\defEns{\mathrm{b}_{k+1}^d=1}$ where $\mathrm{b}_{k+1}^d$ is as in~\eqref{eq:p_mala_def}.

We can now expand the expression of the linear interpolant $L_{t}^d$  using~\eqref{eq:p_mala_def}, \eqref{eq:Yt_differentiable} and the definition of
$\acc_{k+1}^d$,
\begin{align}
\label{eq:Yt1}
L_{t}^d&=\begin{cases}
X^d_{\floor{d^{2\alpha}t}, 1} + (d^{2\alpha}t- \floor{d^{2\alpha}t})\left[\sigma_d Z^d_{\ceil{d^{2\alpha}t}, 1}-\frac{\sigma^2_d}{2}\sgn(X^d_{\floor{d^{2\alpha}t}, 1})\right]\mathbbm{1}_{\acc^d_{\ceil{d^{2\alpha}t}}}\\
\qquad\qquad\textrm{if } \vert X^d_{\floor{d^{2\alpha}t}, 1}\vert \geq \frac{\sigma^{2v}_dr}{2}\\
X^d_{\floor{d^{2\alpha}t}, 1} + (d^{2\alpha}t- \floor{d^{2\alpha}t}) \left[\sigma_d Z^d_{\ceil{d^{2\alpha}t}, 1}-\frac{1}{\sigma_d^{2(m-1)}r}X^d_{\floor{d^{2\alpha}t}, 1}\right]\mathbbm{1}_{\acc^d_{\ceil{d^{2\alpha}t}}}\\
\qquad\qquad\textrm{otherwise}
\end{cases},
\end{align}
or, equivalently,
\begin{align}
L_{t}^d&=\begin{cases}
X^d_{\ceil{d^{2\alpha}t}, 1} - (\ceil{d^{2\alpha}t} - d^{2\alpha}t)\left[\sigma_d Z^d_{\ceil{d^{2\alpha}t}, 1}-\frac{\sigma^2_d}{2}\sgn(X^d_{\floor{d^{2\alpha}t}, 1})\right]\mathbbm{1}_{\acc^d_{\ceil{d^{2\alpha}t}}}\\
\qquad\qquad\textrm{if } \vert X^d_{\floor{d^{2\alpha}t}, 1}\vert \geq \frac{\sigma^{2v}_dr}{2}\\
X^d_{\ceil{d^{2\alpha}t}, 1} - (\ceil{d^{2\alpha}t} - d^{2\alpha}t) \left[\sigma_d Z^d_{\ceil{d^{2\alpha}t}, 1}-\frac{1}{\sigma_d^{2(m-1)}r}X^d_{\floor{d^{2\alpha}t}, 1}\right]\mathbbm{1}_{\acc^d_{\ceil{d^{2\alpha}t}}}\\
\qquad\qquad\textrm{otherwise}
\end{cases}.
\end{align}

In order to prove Proposition~\ref{prop:tightness}, we consider Kolmogorov's
criterion for tightness (see \cite[Theorem 23.7]{kallenberg:2021}): the sequence $(\nu_d)_{d\geq 1}$ is tight if the sequence $(L^d_{0})_{d\in \nsets}$ is tight, and
\begin{align}
\expe{(L_{t}^d-L_{s}^d)^4}\leq \gamma(t)(t-s)^2,
\end{align}
for some non-decreasing positive function $\gamma$, all $0\leq s\leq t$ and
all $d\in \nsets$. The condition on $(L^d_{0})_{d\in \nsets}$ is straightforward to check, since by \Cref{ass:stationarity} the distribution of $L^d_0 = X^d_{0,1}$ is $\piLaplace$ for all $d\in\nsets$.
Before proceeding with the proof of the proposition we briefly recall the following H\"older's inequality (which holds for any $p\geq 1$ and any $a_i \geq 0$)
\begin{align}
\label{eq:holder}
\left(\sum_{i=1}^na_i \right)^p \leq n^{p-1}\sum_{i=1}^n(a_i)^p,
\end{align}
of which we will make frequent use.
\begin{proof}[Proof of Proposition~\ref{prop:tightness}]
Consider $\expe{(L_{t}^d-L_{s}^d)^4}$ with $L_t^d, L_s^d$ as in~\eqref{eq:Yt1}.

If $\floor{d^{2\alpha}s}=\floor{d^{2\alpha}t}$ (which implies $\ceil{d^{2\alpha}s}=\ceil{d^{2\alpha}t}$) and $\vert X^d_{\floor{d^{2\alpha}t}, 1}\vert = \vert X^d_{\floor{d^{2\alpha}s}, 1}\vert \geq \sigma_d^{2v}r/2$, we have
\begin{align}
L_{t}^d-L_{s}^d &= (d^{2\alpha}t- d^{2\alpha}s)\left[\sigma_d Z^d_{\ceil{d^{2\alpha}t}, 1}-\frac{\sigma^2_d}{2}\sgn(X^d_{\floor{d^{2\alpha}t}, 1})\right]\mathbbm{1}_{\acc^d_{\ceil{d^{2\alpha}t}}}.
\end{align}
Therefore, 
\begin{align}
\expe{(L_{t}^d-L_{s}^d)^4} &= (d^{2\alpha}t- d^{2\alpha}s)^4\expe{\left[\sigma_d Z^d_{\ceil{d^{2\alpha}t}, 1}-\frac{\sigma^2_d}{2}\sgn(X^d_{\floor{d^{2\alpha}t}, 1})\right]\mathbbm{1}_{\acc^d_{\ceil{d^{2\alpha}t}}}}.
\end{align}
By bounding the indicator function by one and using~\eqref{eq:holder} we obtain
\begin{align}
\expe{(L_{t}^d-L_{s}^d)^4} &\leq  2^3(d^{2\alpha}t- d^{2\alpha}s)^4\expe{\sigma_d^4 (Z^d_{\ceil{d^{2\alpha}t}, 1})^4+\frac{\sigma^8_d}{2^4}\sgn(X^d_{\floor{d^{2\alpha}t}, 1})^4}\\
&\leq C\frac{(d^{2\alpha}t- d^{2\alpha}s)^4}{(d^{2\alpha})^2}\expe{ (Z^d_{\ceil{d^{2\alpha}t}, 1})^4+\frac{\sigma^2_d}{2^4}}\\
&\leq C\frac{(d^{2\alpha}t- d^{2\alpha}s)^4}{(d^{2\alpha})^2}.
\end{align}
where we used the boundedness of the $\sgn$ function and the of the moments of normal distributions and we incorporated all constants in $C$.
Moreover, using $\lfloor d^{2\alpha}s\rfloor = \lfloor d^{2\alpha}t\rfloor$, we have $( d^{2\alpha}t-d^{2\alpha}s)^4\leq ( d^{2\alpha}t-d^{2\alpha}s)^2$. Hence,
\begin{equation}
\expe{(L_{t}^d-L_{s}^d)^4} \leq C(t-s)^2.
\end{equation}
The case $\vert X^d_{\floor{d^{2\alpha}t}, 1}\vert = \vert X^d_{\floor{d^{2\alpha}s}, 1}\vert < \sigma_d^{2v}r/2$ follows using the same strategy and exploiting the boundedness of the moments of normal distributions and the boundedness of $X^d_{\floor{d^{2\alpha}t}, 1}$ itself.

For all $ 0\leq s\leq t$ such that $\ceil{d^{2\alpha}s}\leq\floor{d^{2\alpha}t}$, we can distinguish three cases.
\subsubsection*{Case 1} If $\vert X^d_{\floor{d^{2\alpha}t}, 1}\vert \geq \sigma_d^{2v}r/2$ and $\vert X^d_{\floor{d^{2\alpha}s}, 1}\vert \geq \sigma_d^{2v}r/2$, then
\begin{align}
L_{t}^d-L_{s}^d &= X^d_{\floor{d^{2\alpha}t}, 1} - X^d_{\ceil{d^{2\alpha}s}, 1}\\
&+ (d^{2\alpha}t- \floor{d^{2\alpha}t})\left[\sigma_d Z^d_{\ceil{d^{2\alpha}t}, 1}-\frac{\sigma^2_d}{2}\sgn(X^d_{\floor{d^{2\alpha}t}, 1})\right]\mathbbm{1}_{\acc^d_{\ceil{d^{2\alpha}t}}}\\
&+( \ceil{d^{2\alpha}s}- d^{2\alpha}s)\left[\sigma_d Z^d_{\ceil{d^{2\alpha}s},1}-\frac{\sigma^2_d}{2}\sgn(X^d_{\floor{d^{2\alpha}s}, 1})\right]\mathbbm{1}_{\acc^d_{\ceil{d^{2\alpha}s}}} .
\end{align}
Using H\"older's inequality~\eqref{eq:holder} collecting all constants in $C$, and the fact that $0\leq d^{2\alpha}t- \floor{d^{2\alpha}t} \leq 1$ (and similarly for $s$) we have
\begin{align}
\expe{(L_{t}^d-L_{s}^d)^4}&\leq C\expe{\left(X^d_{\floor{d^{2\alpha}t}, 1}-X^d_{\ceil{d^{2\alpha}s}, 1}\right)^4}\\
&+ C\frac{(d^{2\alpha}t- \floor{d^{2\alpha}t})^2}{d^{4\alpha}}\expe{\left(\ell Z^d_{\ceil{d^{2\alpha}t}, 1}\right)^4+\frac{\ell^8}{2^4d^{4\alpha}}} \\
&+ C\frac{( \ceil{d^{2\alpha}s}- d^{2\alpha}s)^2}{d^{4\alpha}}\expe{\left(\ell Z^d_{\ceil{d^{2\alpha}s},1}\right)^4+\frac{\ell^8}{2^4d^{4\alpha}}}.
\end{align}
Recalling that the moments of $Z^d$ are bounded and that $d^{2\alpha}s\leq\ceil{d^{2\alpha}s}\leq\floor{d^{2\alpha}t}\leq d^{2\alpha}t$, it follows
\begin{align}
\label{eq:tightness1}
\expe{(L_{t}^d-L_{s}^d)^4}\leq C\left((t-s)^2+\expe{\left(X^d_{\floor{d^{2\alpha}t}, 1}-X^d_{\ceil{d^{2\alpha}s}, 1}\right)^4}\right).
\end{align}

\subsubsection*{Case 2} If $\vert X^d_{\floor{d^{2\alpha}t}, 1}\vert \geq \sigma_d^{2v}r/2$ and $\vert X^d_{\floor{d^{2\alpha}s}, 1}\vert < \sigma_d^{2v}r/2$ or $\vert X^d_{\floor{d^{2\alpha}t}, 1}\vert < \sigma_d^{2v}r/2$ and $\vert X^d_{\floor{d^{2\alpha}s}, 1}\vert \geq \sigma_d^{2v}r/2$.
We only describe the argument for the first case, the second case follows from analogous steps.
Take
\begin{align}
L_{t}^d-L_{s}^d &= X^d_{\floor{d^{2\alpha}t}, 1} + (d^{2\alpha}t- \floor{d^{2\alpha}t})\left[\sigma_d Z^d_{\ceil{d^{2\alpha}t}, 1}-\frac{\sigma^2_d}{2}\sgn(X^d_{\floor{d^{2\alpha}t}, 1})\right]\mathbbm{1}_{\acc^d_{\ceil{d^{2\alpha}t}}}\\
                & - X^d_{\ceil{d^{2\alpha}s}, 1} - (\ceil{d^{2\alpha}s} - d^{2\alpha}s)\left( \sigma_d Z^d_{\ceil{d^{2\alpha}s}, 1}-\frac{1}{\sigma_d^{2(m-1)}r}X^d_{\floor{d^{2\alpha}s}, 1}\right)\1_{\acc^d_{\ceil{d^{2\alpha}s}}}.
\end{align}
Proceeding as above, we find that
\begin{align}
\expe{(L_{t}^d-L_{s}^d)^4}&\leq C\left((t-s)^2+\expe{\left(X^d_{\floor{d^{2\alpha}t}, 1}-X^d_{\ceil{d^{2\alpha}s}, 1}\right)^4}\right.\\
&\qquad\qquad\left.+(\ceil{d^{2\alpha}s} - d^{2\alpha}s)^4\expe{\left(\frac{1}{\sigma_d^{2(m-1)}r}X^d_{\floor{d^{2\alpha}s}}\right)^4}\right),
\end{align}
and recalling that $\vert X^d_{\floor{d^{2\alpha}s}, 1}\vert <
\sigma_d^{2v}r/2$ we have that $\vert X^d_{\floor{d^{2\alpha}s},
1}\vert/(r \sigma_d^{2(m-1)}) < \sigma_d^2 /2$. Using this and the same arguments as above, we have
\begin{align}
\label{eq:tightness2}
\expe{(L_{t}^d-L_{s}^d)^4}&\leq C\left((t-s)^2+\expe{\left(X^d_{\floor{d^{2\alpha}t}, 1}-X^d_{\ceil{d^{2\alpha}s}, 1}\right)^4}\right) .
\end{align}

\subsubsection*{Case 3} If $\vert X^d_{\floor{d^{2\alpha}t}, 1}\vert < \sigma_d^{2v}r/2$ and $\vert X^d_{\floor{d^{2\alpha}s}, 1}\vert <  \sigma_d^{2v}r/2$, then
\begin{align}
L_{t}^d-L_{s}^d &= X^d_{\floor{d^{2\alpha}t}, 1} + (d^{2\alpha}t- \floor{d^{2\alpha}t}) \left(\sigma_d Z^d_{\ceil{d^{2\alpha}t}, 1} -\frac{1}{\sigma_d^{2(m-1)}r} X^d_{\floor{d^{2\alpha}t}, 1}\right)\mathbbm{1}_{\acc^d_{\ceil{d^{2\alpha}t}}}\\
&- X^d_{\ceil{d^{2\alpha}s}, 1} +(\ceil{d^{2\alpha}s}-d^{2\alpha}s) \left(\sigma_d Z^d_{\ceil{d^{2\alpha}s}, 1}-\frac{1}{\sigma_d^{2(m-1)}r}X^d_{\floor{d^{2\alpha}s}, 1}\right)\mathbbm{1}_{\acc^d_{\ceil{d^{2\alpha}s}}}.
\end{align}
Using the boundedness of moments of Gaussian distributions and of $X^d_{\floor{d^{2\alpha}t}, 1}, X^d_{\floor{d^{2\alpha}s}, 1}$, we have
\begin{align}
\label{eq:tightness3}
\expe{(L_{t}^d-L_{s}^d)^4}\leq C\left((t-s)^2+\expe{\left(X^d_{\floor{d^{2\alpha}t}, 1}-X^d_{\ceil{d^{2\alpha}s}, 1}\right)^4}\right).
\end{align}
Putting~\eqref{eq:tightness1},~\eqref{eq:tightness2} and~\eqref{eq:tightness3} together and using Lemma~\ref{lemma:4th_moment} below we obtain 
\begin{align}
\expe{ \left( L_{t}^d-L_{s}^d\right)^4} &\leq C \left( \left(t-s\right)^2 + \sum_{p=2}^4 \frac{\left( \floor{ d^{2\alpha}t } - \ceil{ d^{2\alpha}s} \right)^p}{d^{2\alpha p}} \right)\\
&\leq C(t-s)^2 + C \sum_{p=2}^4 \frac{d^{2\alpha p}\left( t - s\right)^p}{d^{2\alpha p}} \leq C\left(2+t+t^2 \right) (t-s)^2,
\end{align}
which concludes the proof.
\end{proof}

We are now ready to state and prove Lemma~\ref{lemma:4th_moment}:
\begin{lemma}
\label{lemma:4th_moment}
There exists $C>0$ such that for any $k_1, k_2\in\mathbb{N}$ with $0\leq k_1 < k_2$,
\begin{equation}
\expe {\left( X^d_{k_2 ,1} - X^d_{k_1 ,1}\right)^4} \leq C  \sum_{p=2}^{4} \frac{(k_2 - k_1)^p}{d^{2\alpha p}},
\end{equation}
where $\alpha=1/3$.
\end{lemma}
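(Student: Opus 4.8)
The plan is to exploit \Cref{ass:stationarity} to reduce to a fixed window, and then to split the accepted increment of the first coordinate into a martingale part and a drift part. Since $X_0^d\sim\pi_d$ and the Metropolis kernel~\eqref{eq:p_mala_def} preserves $\pi_d$, the chain is stationary, so the law of $(X^d_{k_1+j,1}-X^d_{k_1,1})_{j\geq 0}$ does not depend on $k_1$; it therefore suffices to take $k_1=0$ and write $n=k_2$. Recalling that acceptance is joint across coordinates, \eqref{eq:p_mala_def} and~\eqref{eq:laplace_proposal_bis} give the one-step increment $\Delta_j:=X^d_{j,1}-X^d_{j-1,1}=\1_{\acc^d_j}\sigma_d b_d(X^d_{j-1,1},Z^d_{j,1})$, so that $X^d_{k_2,1}-X^d_{k_1,1}=\sum_{j=1}^n\Delta_j$. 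Writing $\sigma_d b_d(x,z)=\sigma_d z-\mu_d(x)$ with $\mu_d(x)=\tfrac{\sigma_d^2}{2}\sgn(x)\1\{|x|\geq\lambda_d\}+\tfrac{\sigma_d^2}{2\lambda_d}x\1\{|x|<\lambda_d\}$, one checks $|\mu_d(x)|\leq\sigma_d^2/2$ for all $x$. Denoting by $(\mathcal F_j)_{j\geq 0}$ the natural filtration, I would use the decomposition $\Delta_j=D_j+M_j$, with $D_j:=\expeLine{\Delta_j\mid\mathcal F_{j-1}}$ and $M_j:=\Delta_j-D_j$ a martingale-difference sequence, bound the two sums separately, and recombine via~\eqref{eq:holder} with $p=4$.

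For the martingale part, since $|\Delta_j|\leq\sigma_d|Z^d_{j,1}|+\sigma_d^2/2$ and $\piLaplace$ has finite moments, one has $\expeLine{M_j^4}\leq C\sigma_d^4$; a discrete Burkholder--Davis--Gundy inequality then yields
\[
\expe{\Big(\sum_{j=1}^n M_j\Big)^{4}}\leq C\,\expe{\Big(\sum_{j=1}^n M_j^2\Big)^{2}}\leq C\,n^2\max_{j}\expeLine{M_j^4}\leq C\,n^2\sigma_d^4=C\,n^2 d^{-4\alpha}.
\]
It is worth stressing that the naive pathwise bound $|\sum_j\Delta_j|\leq\sigma_d\sum_j|Z^d_{j,1}|+n\sigma_d^2/2$ is useless here: it produces a term of order $n^4\sigma_d^4=n^4 d^{-4\alpha}$, which diverges along the tightness scaling $n\asymp d^{2\alpha}$. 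The improvement from $n^4$ to $n^2$ is exactly the martingale cancellation of the mean-zero Gaussian increments, which is why the decomposition is required.

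The delicate point, and the main obstacle, is the drift part, for which I must show $\expeLine{D_j^4}\leq C\sigma_d^8$, i.e.\ that the conditional drift of the first coordinate is of order $\sigma_d^2$ in $\mrL^4$. Writing $\alpha_j=1\wedge\exp(\sum_{i=1}^d\phi_{d,i})$ for the acceptance probability, one has $D_j=\sigma_d\,\expeLine{\alpha_j Z^d_{j,1}\mid\mathcal F_{j-1}}-\mu_d(X^d_{j-1,1})\,\expeLine{\alpha_j\mid\mathcal F_{j-1}}$, whose second summand is bounded pathwise by $\sigma_d^2/2$. For the first summand I would apply Gaussian integration by parts (Stein's identity) in the variable $Z^d_{j,1}$, turning $\expeLine{\alpha_j Z^d_{j,1}\mid\mathcal F_{j-1}}$ into $\expeLine{\partial_z\alpha_j\mid\mathcal F_{j-1}}$, and then use that a direct expansion of~\eqref{eq:laplace_ar}, in which the $z^2/2$ term cancels against the square of the reverse drift, gives $|\partial_z\phi_d(x,z)|\leq C\sigma_d(1+|x|)$ almost everywhere. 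This produces $|D_j|\leq C\sigma_d^2(1+|X^d_{j-1,1}|)$, hence $\expeLine{D_j^4}\leq C\sigma_d^8$ by finiteness of the moments of $\piLaplace$. The technical nuisances to handle are the non-differentiability of $x\mapsto 1\wedge e^{x}$ and the indicator discontinuities of the soft-thresholding map in $b_d$; both occur on sets negligible in $Z^d_{j,1}$, so the integration by parts is justified by a routine Lipschitz/almost-everywhere approximation.

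Finally, Hölder's inequality~\eqref{eq:holder} gives $\expe{(\sum_{j=1}^n D_j)^4}\leq n^3\sum_{j=1}^n\expeLine{D_j^4}\leq C n^4\sigma_d^8=C n^4 d^{-8\alpha}$, and combining the two parts,
\[
\expe{\Big(X^d_{k_2,1}-X^d_{k_1,1}\Big)^4}\leq C\big(n^2 d^{-4\alpha}+n^4 d^{-8\alpha}\big)\leq C\sum_{p=2}^4 \frac{(k_2-k_1)^p}{d^{2\alpha p}},
\]
where the last inequality merely inserts the nonnegative $p=3$ term. With $\alpha=1/3$ this is the claim.
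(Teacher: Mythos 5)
Your strategy is genuinely different from the paper's: where the paper writes $\1_{\acc_k^d}=1-\1_{(\acc_k^d)^c}$ and controls the rejection sum by a combinatorial analysis over index multiplicities (with the frozen auxiliary chain $\widetilde X$ restoring conditional independence), you use a Doob decomposition of the accepted increments plus Burkholder--Davis--Gundy. That skeleton is sound: the martingale part correctly produces the $n^2d^{-4\alpha}$ term, and the lemma does reduce to the drift estimate $\expe{D_j^4}\leq C\sigma_d^8$. The problem is the way you prove that estimate.

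The claimed a.e.\ bound $|\partial_z\phi_d(x,z)|\leq C\sigma_d(1+|x|)$ is false. Write $\mu_d(x)=\frac{\sigma_d^2}{2}\sgn(x)\1\{|x|\geq\lambda_d\}+\frac{\sigma_d^2}{2\lambda_d}x\1\{|x|<\lambda_d\}$ and $y=x-\mu_d(x)+\sigma_d z$. The cancellation you invoke gives $\phi_d(x,z)=|x|-|y|+z(\mu_d(x)+\mu_d(y))/\sigma_d-(\mu_d(x)+\mu_d(y))^2/(2\sigma_d^2)$, hence almost everywhere
\begin{equation}
\partial_z\phi_d(x,z)=-\sigma_d\sgn(y)+\frac{\mu_d(x)+\mu_d(y)}{\sigma_d}+\mu_d'(y)\Bigl(z-\frac{\mu_d(x)+\mu_d(y)}{\sigma_d}\Bigr),\qquad \mu_d'(y)=\frac{\sigma_d^{2-2v}}{r}\1\{|y|<\lambda_d\}.
\end{equation}
The first two terms are indeed $O(\sigma_d)$, but on the thresholding window $\{|y|<\lambda_d\}$ the last term is of order $|z|\,\sigma_d^{2-2v}/r$: for $v=1$ it is of order $|z|$ (take $x=-\sigma_d$ and $z$ near $1-\sigma_d/2$, so that $y\in(-\lambda_d,\lambda_d)$ and $\partial_z\phi_d\approx 1/r$, while $\sigma_d(1+|x|)\approx\sigma_d$), and for $v>1$ it diverges as $d\to\infty$. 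So the pointwise bound on which $|D_j|\leq C\sigma_d^2(1+|X^d_{j-1,1}|)$ rests does not hold, and this is precisely the step that carries all the difficulty of the non-smooth target.

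The conclusion $\expe{D_j^4}\leq C\sigma_d^8$ is nevertheless true, and your proof can be repaired in either of two ways. (a) Keep Stein's identity but bound the expectation rather than the integrand: the window $\{|y|<\lambda_d\}$ confines $z$ to an interval of length $\sigma_d^{2v-1}r$, so the offending term contributes at most $\frac{\sigma_d^{2-2v}}{r}\,\expe{(|Z|+\sigma_d)\1\{|y|<\lambda_d\}}\leq C\sigma_d$ uniformly in $x$, the Gaussian tail absorbing the location of the window. (b) More simply, use the decoupling device the paper itself uses (case $\mathcal I_4$ of its proof, together with Lemma~\ref{lemma:tightness_moment}): setting $\tilde\alpha_j=1\wedge\exp(\sum_{i=2}^d\phi_{d,i})$, one has $\expe{\tilde\alpha_j Z^d_{j,1}\mid\mathcal F_{j-1}}=0$ by conditional independence, so, since $u\mapsto 1\wedge e^u$ is $1$-Lipschitz, Cauchy--Schwarz and Lemma~\ref{lemma:tightness_moment} give
\begin{equation}
\bigl|\expe{\alpha_j Z^d_{j,1}\mid\mathcal F_{j-1}}\bigr|
\leq \expe{\phi_d\bigl(X^d_{j-1,1},Z^d_{j,1}\bigr)^2\,\Big|\,\mathcal F_{j-1}}^{1/2}\leq C d^{-\alpha},
\end{equation}
whence $|D_j|\leq \sigma_d\,Cd^{-\alpha}+\sigma_d^2/2\leq C\sigma_d^2$ pathwise, with no Gaussian integration by parts at all. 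With either repair your argument goes through and yields the stated bound.
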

\begin{proof}
    Recalling the definition of the proposal in~\eqref{eq:laplace_proposal_bis}
    and the definition of $b_d$ in~\eqref{eq:martingale_b} we can write
\begin{align}
\expe{\left( X_{k_2,1}^d - X^d_{k_1 ,1}\right)^4} = \mathbb{E}\left[\left(
\sum_{k=k_1+1}^{k_2} \sigma_d b_d\parenthese{X_{k-1,1}^d,Z^d_{k,1}}\1_{\acc_k^d}
\right)^4\right].
\end{align} 
Then, we expand all acceptance or rejection terms between $k_1$ and $k_2$ and use H\"older's inequality to obtain
\begin{align}
\label{eq:kolmogorov1}
\expe{\left( X_{k_2,1}^d - X^d_{k_1 ,1}\right)^4} &\leq C \left\lbrace \sigma_d^4\expe{\left( \sum_{k=k_1+1}^{k_2}  b_d\left( X^d_{k-1,1}, Z^d_{k,1}\right)\right)^4}\right.\\
&\left.+\sigma_d^4\expe{\left( \sum_{k=k_1+1}^{k_2} b_d\left(X^d_{k-1,1}, Z^d_{k,1}\right)\1_{(\acc_k^d)^c} \right)^4}\right\rbrace.
\end{align} 
Using again H\"older's inequality, for the first term we have
\begin{align}
&\expe{\left( \sum_{k=k_1+1}^{k_2} b_d\left( X^d_{k-1,1}, Z^d_{k,1}\right)\right)^4}\leq C \left\lbrace \expe{\left( \sum_{k=k_1+1}^{k_2} Z^d_{k,1}\right)^4}\right.\\
&\qquad +\frac{\sigma_d^4}{2^4}\expe{\left(\sum_{k=k_1+1}^{k_2}\sgn\left( X^d_{k-1,1}\right)\1\defEns{\vert X_{k-1, 1}^d\vert \geq \sigma_d^{2v}r/2}\right)^4}\notag\\
&\qquad\left. + \expe{\left(\sum_{k=k_1+1}^{k_2}\frac{1}{\sigma_d^{2v-1} r}X^d_{k-1,1}\1\defEns{\vert X_{k-1, 1}^d\vert < \sigma_d^{2v}r/2}\right)^4}\right\rbrace\notag \\
&\qquad\leq C \left[ 3(k_2-k_1)^2 + \frac{2\sigma_d^4}{2^4} (k_2-k_1)^4 \right],
\label{eq:kolmogorov2}
\end{align}
where the last line follows using the moments of $Z^d_{k, 1}$ and the boundedness of $X_{k-1, 1}^d$ in the set $\{\vert X_{k-1, 1}^d\vert < \sigma_d^{2v}r/2\}$.

Using a Binomial expansion of the rejection term, we obtain
\begin{align} \label{eq:sum_reject}
    \expe{\left( \sum_{k=k_1+1}^{k_2} b_d\left(X^d_{k-1,1}, Z^d_{k,1}\right) \1_{\left( \acc^d_{k}\right)^c} \right)^4 } 
= \sum \expe{\prod_{i=1}^{4}b_d\left(X^d_{m_i-1,1}, Z^d_{m_i,1}\right)\1_{\left( \acc^d_{m_i}\right)^c}},
\end{align}
where the sum is over the quadruplets $(m_i)_{1\leq i \leq 4}$ with $m_i \in \left\lbrace k_1+1,\dots,k_2\right\rbrace$.

We separate the terms in the sum according to their cardinality: we denote by $\vert B\vert $ the cardinality of set $B$ (i.e. the number of distinct elements), and, for $j \in \iint{1}{4}$,
we define
\begin{equation}
\mathcal{I}_j=\left\lbrace (m_1,\dots,m_4)\in \left\lbrace k_1+1,\dots,k_2\right\rbrace^4 : \vert\left\lbrace m_1,\dots,m_4\right\rbrace\vert = j \right\rbrace.
\end{equation}
For any $(m_1,\dots,m_4)\in \left\lbrace
k_1+1,\dots,k_2\right\rbrace^4$, $\widetilde{X}^d_0 = X^d_0$ and for any $i \in
\iint{1}{d}$,
\begin{align}
&\widetilde{X}^d_{k+1,i} = \widetilde{X}^d_{k,i} + \1_{\left\lbrace m_1-1,\dots,m_4-1\right\rbrace^c}(k)\1_{\tacc_{k+1}^d}\sigma_d b_d\left( \widetilde{X}_{k,i}^d, Z^d_{k+1,i}\right),
\end{align}
where
\begin{equation}
\label{eq:4thmoment_acceptance}
\tacc_{k+1}^d = \left\lbrace U_{k+1} \leq \exp \left[ \sum_{i=1}^d \phi_d\left(
\widetilde{X}^d_{k,i},Z_{k+1,i}^d\right) \right]\right\rbrace ,
\end{equation}
and $\phi_d$ in~\eqref{eq:laplace_ar}. 
Denote by $\mathcal{F}$ the $\sigma$-algebra generated by the process
$(\widetilde{X}_k^d)_{k\geq 0}$ and observe that on the event
$\underset{j=1}{\overset{4}{\bigcap}} \left(\acc_{m_j}^d\right)^c$, $X_k^d$ is
equal to $\widetilde{X}_k^d$.

We consider now the terms in the sum~\eqref{eq:sum_reject}.
\begin{enumerate}
\item If $(m_1,\dots,m_4) \in \mathcal{I}_4$, then the $m_i$s are all distinct and
\begin{align}
    \expe{ \prod_{j=1}^{4} b_d\left(X_{m_j-1,1}^d, Z_{m_j,1}^d \right) \1_{\left(\acc_{m_j}^d\right)^c} \middle| \mathcal{F}} = \expe{ \prod_{j=1}^{4} b_d\left(\widetilde{X}_{m_j-1,1}^d, Z^d_{m_j,1} \right) \1_{\left(\tacc_{m_j}^d\right)^c} \middle| \mathcal{F}}.
\end{align}
However, $\lbrace b_d( \widetilde{X}_{m_j-1,1}^d, Z^d_{m_j,1} ) \1_{(\tacc_{m_j}^d)^c} \rbrace_{j=1,\dots,4}$ are independent conditionally on $\mathcal{F}$. Thus,
\begin{align}
&\expe{ \prod_{j=1}^{4} b_d\left(\widetilde{X}_{m_j-1,1}^d, Z^d_{m_j,1} \right) \1_{\left(\tacc_{m_j}^d\right)^c} \middle| \mathcal{F}} = \prod_{j=1}^{4} \expe{ b_d\left(\widetilde{X}_{m_j-1,1}^d, Z^d_{m_j,1} \right) \1_{\left(\tacc_{m_j}^d\right)^c} \middle| \mathcal{F}}\\
&\qquad = \prod_{j=1}^{4} \expe{ b_d\left(\widetilde{X}_{m_j-1,1}^d, Z^d_{m_j,1} \right)\times\left( 1-\exp\left( \sum_{i=1}^d \phi_d\left(
\widetilde{X}^d_{m_j-1,i},Z_{m_j,i}^d\right) \right) \right)_+ \middle| \mathcal{F}},
\end{align}
by integrating the uniform variables $U_{m_j}$
in~\eqref{eq:4thmoment_acceptance}.

Recalling the definition of $b_d$ in~\eqref{eq:martingale_b}, we can bound the expectation above with
\begin{align}
\label{eq:case_a_1}
&\left\lvert\expe{ b_d\left(\widetilde{X}_{m_j-1,1}^d, Z_{m_j,1}^d \right)\left\lbrace 1-\exp\left( \sum_{i=1}^d \phi_d\left(
\widetilde{X}^d_{m_j-1,i},Z_{m_j,i}^d\right) \right) \right\rbrace_+ \middle| \mathcal{F}} \right\rvert\\
&\qquad\leq \left\vert \mathbb{E}\left[ \left(\frac{\sigma_d}{2}\sgn \left( \widetilde{X}^d_{m_j-1,1} \right)\1\defEns{\vert \widetilde{X}_{m_j-1, 1}^d\vert \geq \sigma_d^{2v}r/2}\right.\right.\right. \\
&\qquad\quad- \left.\frac{1}{\sigma_d^{2v-1} r}\widetilde{X}^d_{m_j-1,1}\1\defEns{\vert
\widetilde{X}_{m_j-1, 1}^d\vert < \sigma_d^{2v}r/2} \right) 
\\
&\qquad\qquad\left.\left.\times\left\lbrace 1-\exp\left( \sum_{i=1}^d \phi_d\left(
\widetilde{X}^d_{m_j-1,i},Z_{m_j,i}^d\right) \right) \right\rbrace_+ \middle|
\mathcal{F}\right] \right\vert \\
&\qquad\quad+\left\lvert \expe{ Z_{m_j,1}^d \left\lbrace 1-\exp\left( \sum_{i=1}^d \phi_d\left(
\widetilde{X}^d_{m_j-1,i},Z_{m_j,i}^d\right) \right) \right\rbrace_+  \middle| \mathcal{F}}\right\rvert . 
\end{align}
For the first term, we use the boundedness of the $\sgn$ function and of $\widetilde{X}^d_{m_j-1,1}$ in the set $\{\vert \widetilde{X}_{m_j-1, 1}^d\vert \leq \sigma_d^{2v}r/2\}$ to obtain
\begin{align}
&\left\vert \mathbb{E}\left[ \left(\frac{\sigma_d}{2}\sgn \left(
\widetilde{X}^d_{m_j-1,1} \right)\1\defEns{\vert \widetilde{X}_{m_j-1, 1}^d\vert \geq
\sigma_d^{2v}r/2}-\frac{\widetilde{X}^d_{m_j-1,1}}{\sigma_d^{2v-1} r}\1\defEns{\vert
\widetilde{X}_{m_j-1, 1}^d\vert < \sigma_d^{2v}r/2} \right) \right.\right.\\
&\qquad\qquad\left.\left.\times\left\lbrace 1-\exp\left( \sum_{i=1}^d \phi_d\left(
\widetilde{X}^d_{m_j-1,i},Z_{m_j,i}^d\right) \right) \right\rbrace_+ \middle| \mathcal{F}\right] \right\vert \notag\\
&\leq \frac{\sigma_d}{2}\expe{ \left\vert\left\lbrace 1-\exp\left( \sum_{i=1}^d \phi_d\left(
\widetilde{X}^d_{m_j-1,i},Z_{m_j,i}^d\right) \right) \right\rbrace_+ \right\vert \middle| \mathcal{F}} \leq \frac{\sigma_d}{2}.
\label{eq:case_a_2}
\end{align}
We can write the second term as
\begin{align}
\expe{ Z_{m_j,1}^d \left( 1-\exp\left( \sum_{i=1}^{d}\phi_d\left( \widetilde{X}^d_{m_j-1,i},Z^d_{m_j,i}\right) \right)\right)_+  \middle| \mathcal{F}} \\= \expe{\mathcal{G} \left( \widetilde{X}^d_{m_j-1,1},\sum_{i=2}^{d}\phi_d\left( \widetilde{X}^d_{m_j-1,i},Z^d_{m_j,i}\right)\right) \middle| \mathcal{F}},
\end{align}
where we define $\mathcal{G}(a,b)=\expe{Z\left( 1-\exp\left(\phi_d\left( a,Z\right)+b \right) \right)_+}$ with $Z$ a standard Gaussian. Because the function $x \mapsto (1-\exp(x))_+$ is 1-Lipschitz, we have, using Cauchy-Schwarz and Lemma~\ref{lemma:tightness_moment} in Appendix~\ref{app:tightness_moment},
\begin{align}
\left\vert \expe{Z\left( 1-\exp\left(\phi_d\left( a,Z\right)+b \right) \right)_+} - \expe{Z\left( 1-\exp\left(b \right) \right)_+}\right\vert &\leq \expe{\vert Z \vert \left\vert  \phi_d\left( a,Z\right)\right\vert}\\
& \leq \expe{Z^2}^{1/2} \expe{\phi_d\left( a,Z\right)^2}^{1/2}\\
&\leq \expe{\phi_d\left( a,Z\right)^2}^{1/2}\\
&\leq C d^{-\alpha}. 
\end{align}
However, $\expe{Z\left( 1-\exp\left(b \right) \right)_+} = \expe{Z}\left( 1-\exp\left(b \right) \right)_+ = 0$, and therefore
\begin{equation}\label{eq:bound_g}
\left\vert \expe{ \mathcal{G} \left(
\widetilde{X}^d_{m_j-1,1},\sum_{i=2}^{d}\phi_d\left(
\widetilde{X}^d_{m_j-1,i},Z^d_{m_j,i}\right)\right) \middle| \mathcal{F}}
\right\vert \leq C d^{-\alpha}.
\end{equation}
Combining equations~\eqref{eq:case_a_1},~\eqref{eq:case_a_2} and~\eqref{eq:bound_g} and recalling that $\sigma_d=\ell d^{-\alpha}$, we have
\begin{align}\label{eq:bound_prod_term}
&\left\lvert \expe{b_d \left(\widetilde{X}_{m_j-1,1}^d, Z_{m_j,1}^d \right)\left\lbrace 1-\exp\left( \sum_{i=1}^d \phi_d\left(
\widetilde{X}^d_{m_j-1,i},Z_{m_j,i}^d\right) \right) \right\rbrace_+ \middle| \mathcal{F}}\right\rvert\leq C d^{-\alpha}, 
\end{align}
from which follows that
\begin{align}
\sum_{(m_1,\dots,m_4)\in \mathcal{I}_4} \left\vert \expe{ \prod_{i=1}^{4}b_d\left(X^d_{m_i-1,1} , Z^d_{m_i,1}\right)\1_{\left( \acc^d_{m_i}\right)^c}}\right\vert 
&\leq \sum_{(m_1,\dots,m_4)\in \mathcal{I}_4}\expe{\prod_{j=1}^{4} \frac{C}{d^{\alpha}}} \\
&\leq \binom{k_2-k_1}{4}\frac{C}{d^{4\alpha}}\leq C\frac{(k_2-k_1)^4}{d^{4\alpha}},
\label{eq:case_a}
\end{align}
using that $|\mathcal{I}_4| =\binom{k_2-k_1}{4}$.

\item If $(m_1,..,m_4) \in \mathcal{I}_3$, only three of the $m_i$s take distinct values; without loss of generality, we assume that $m_1$ appears twice, while $m_2,m_3$ appear once. Proceeding as in case 1, we have
\begin{align}
&\left\vert \expe{ \prod_{j=1}^{3} b_d\left( X_{m_j-1,1}^d,  Z^d_{m_j,1} \right)^{1+\delta_{1,j}} \1_{\left(\acc_{m_j}^d\right)^c} \middle| \mathcal{F}} \right\vert \\
&= \prod_{j=1}^{3} \left\vert \expe{ b_d\left(\widetilde{X}_{m_j-1,1}^d,
Z^d_{m_j,1} \right) ^{1+\delta_{1,j}} \left\lbrace 1-\exp\left( \sum_{i=1}^d \phi_d\left(
\widetilde{X}^d_{m_j-1,i},Z_{m_j,i}^d\right) \right) \right\rbrace_+ \middle| \mathcal{F}}
\right\vert ,
\end{align}
where $\delta_{1,j}$ denotes a Dirac's delta.
For the terms $j\neq1$, we use~\eqref{eq:bound_prod_term}, while for the term $j=1$ we bound the indicator function by 1 to obtain
\begin{align}
&\left\vert \expe{ \prod_{j=1}^{3} b_d\left(X_{m_j-1,1}^d,  Z_{m_j,1}^d \right)^{1+\delta_{1,j}} \1_{\left(\acc_{m_j}^d\right)^c} \middle| \mathcal{F}} \right\vert\\ 
&\qquad\leq \left\lvert\expe{ b_d\left(\widetilde{X}_{m_1-1,1}^d,Z_{m_1,1}^d \right)^{2}  \middle| \mathcal{F}} \right\rvert\prod_{j=2}^3 \frac{C}{d^{\alpha}}\\
&\qquad\leq \left( 3+ \frac{2\sigma_d^2}{2^2d^{2\alpha}}\right) \frac{C^2}{d^{2\alpha}}\leq C \frac{1}{d^{2\alpha}},
\end{align}
where the second-to-last inequality follows using the same approach taken
for~\eqref{eq:kolmogorov2} and recalling that $\sigma_d=\ell d^{-\alpha}$.
Hence,
\begin{align}
\label{eq:case_b}
&\sum_{(m_1,\dots,m_4)\in \mathcal{I}_3} \left\vert \expe{ \prod_{i=1}^{4}b_d\left(X^d_{m_i-1,1} ,Z^d_{m_i,1}\right)\1_{\left( \acc^d_{m_i}\right)^c}} \right\vert \\
&\qquad\leq C \binom{k_2-k_1}{3} \frac{1}{d^{2\alpha}}\leq C \frac{(k_2-k_1)^3}{d^{2\alpha}}. 
\end{align}
\item  If $(m_1,..,m_4) \in \mathcal{I}_2$, we have two different cases: the
    $m_i$s take the two values twice or three $m_i$s have the same value.
    For the first one we assume, without loss of generality, that $m_1, m_2$ take distinct values and appear each twice.  Bounding the indicator function with 1,
\begin{align}
&\expe{\expe{ \prod_{j=1}^{2} b_d\left( X_{m_j-1,1}^d, Z_{m_j,1}^d \right)^{2} \1_{\left(\acc_{m_j}^d\right)^c} \middle| \mathcal{F}}} \\
&\qquad\leq \expe{\prod_{j=1}^2 \expe{ b_d\left(\widetilde{X}_{m_j-1,1}^d,Z_{m_j,1}^d \right)^{2} \middle| \mathcal{F}}}.
\end{align}
Since, conditionally on $ \mathcal{F}$, the random variables inside the expectation are normals with bounded mean and variance $1$, we have, using the same approach taken for~\eqref{eq:kolmogorov2},
\begin{align}
&\expe{\prod_{j=1}^2 \expe{ b_d\left(\widetilde{X}_{m_j-1,1}^d,Z_{m_j,1}^d
\right)^{2} \middle| \mathcal{F}}} \leq \left( 1 +
\frac{2\sigma_d^2}{2^2}\right)^2 \leq C.
\end{align}
For the second case we assume, without loss of generality, that $m_1$ appears three times while $m_2$ appears once. Following a similar approach to case 2, we obtain
\begin{align}
&\left\vert\expe{\expe{ \prod_{j=1}^{2} b_d\left( X_{m_j-1,1}^d, Z_{m_j,1}^d \right) ^{1+2\delta_{1,j}} \1_{\left(\acc_{m_j}^d\right)^c} \middle|\mathcal{F}}} \right\vert\\
&\qquad\qquad\leq \expe{\expe{ \prod_{j=1}^{2} \left\vert b_d\left(\widetilde{X}_{m_j-1,1}^d, Z_{m_j,1}^d \right) \right\vert^{1+2\delta_{1,j}}  \middle| \mathcal{F}}} \leq C,
\end{align}
where $\delta_{1,j}$ denotes a Dirac's delta.
Therefore,
\begin{align}
\label{eq:case_c}
&\sum_{(m_1,\dots,m_4)\in \mathcal{I}_2} \left\vert \expe{\prod_{i=1}^{4}\left\lbrace b_d(X^d_{m_i-1,1} , Z^d_{m_i,1}\right) \1_{\left( \acc^d_{m_i}\right)^c}} \right\vert \\
&\qquad\leq C\left(\binom{4}{2}+\binom{4}{3}\right)\binom{k_2-k_1}{2} \leq C (k_2-k_1)^2. 
\end{align}
\item If $(m_1,..,m_4) \in \mathcal{I}_1$ (i.e., all $m_i$s take the same value), we bound the indicator function by 1 and, using the same approach taken for~\eqref{eq:kolmogorov2}, we find
\begin{align}
\expe{b_d\left(X^d_{m_1-1,1} , Z^d_{m_1,1}\right)^4 \1_{\left( \acc^d_{m_1}\right)^c}}
& \leq C \left(3 + \frac{2\sigma_d^4}{2^4} \right)\leq C,
\end{align}
since $\sigma_d = \ell d^{-\alpha}$ and $d\in\mathbb{N}$.
Hence,
\begin{align}
 \label{eq:case_d}
&\sum_{(m_1,\dots,m_4)\in \mathcal{I}_1} \left\vert \expe{\prod_{i=1}^{4}b_d\left(X^d_{m_1-1,1} , Z^d_{m_1,1}\right) \1_{\left( \acc^d_{m_i}\right)^c}} \right\vert \leq C\binom{k_2-k_1}{1}=C (k_2-k_1).
\end{align}
The result follows combining~\eqref{eq:case_a},~\eqref{eq:case_b},~\eqref{eq:case_c} and~\eqref{eq:case_d} in~\eqref{eq:sum_reject}.
\end{enumerate}
\end{proof}

\subsection{Proof of Proposition~\ref{prop:reduction}}
\label{sec:reduction}

We start by proving the following lemma.

\begin{lemma}
\label{lemma:pushforward}
Let $\nu$ be a limit point of the sequence of laws $(\nu_d)_{d\geq 1}$ of $\{(L_{t}^d)_{t\geq 0} \, :\, d \in \nsets\} $. Then for any $t\geq 0$, the pushforward measure of $\nu$ by $W_t$ is $\piLaplace(\rmd x) = \exp(-\vert x\vert)\rmd x / 2$.
\end{lemma}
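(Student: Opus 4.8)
The plan is to reduce the claim to a one-dimensional marginal convergence and then transfer it to the limit point $\nu$ via the continuous mapping theorem. Since $\nu$ is only a limit point, I would fix a subsequence $(\nu_{d_j})_{j\geq 1}$ with $\nu_{d_j}\to\nu$ weakly (such a subsequence exists by \Cref{prop:tightness} and Prokhorov's theorem). The coordinate evaluation $W_t:\rmC(\real_+,\real)\to\real$, $w\mapsto w_t$, is continuous for the topology of uniform convergence over compact sets, so the continuous mapping theorem yields $(W_t)_\ast\nu_{d_j}\to(W_t)_\ast\nu$ weakly. Because $(W_t)_\ast\nu_d$ is exactly the law of $L_t^d$, it suffices to show that $L_t^d$ converges in distribution to $\piLaplace$ as $d\to\infty$; uniqueness of weak limits then forces $(W_t)_\ast\nu=\piLaplace$.

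To establish $L_t^d\Rightarrow\piLaplace$, I would use \Cref{ass:stationarity} together with the invariance of $\piLaplace_d$ under the Metropolis kernel, so that $X^d_{k,1}\sim\piLaplace$ for every $k\in\nset$ and every $d$. From the definition in~\eqref{eq:Yt_differentiable} I would write
\begin{equation}
L_t^d = X^d_{\floor{d^{2\alpha}t},1} + \parenthese{d^{2\alpha}t-\floor{d^{2\alpha}t}}\parenthese{X^d_{\ceil{d^{2\alpha}t},1}-X^d_{\floor{d^{2\alpha}t},1}}.
\end{equation}
The first term is already distributed according to $\piLaplace$; the weight lies in $[0,1]$ and $\ceil{d^{2\alpha}t}-\floor{d^{2\alpha}t}\in\{0,1\}$, so the remainder is dominated by a single increment of the chain. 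Using $X^d_{k+1,1}-X^d_{k,1}=\sigma_d\,b_d(X^d_{k,1},Z^d_{k+1,1})\1_{\acc^d_{k+1}}$ together with the bound $\abs{b_d(x,z)}\leq\abs{z}+\sigma_d/2$ (which holds because $\abs{x}/(\sigma_d^{2v-1}r)<\sigma_d/2$ on $\defEns{\abs{x}<\sigma_d^{2v}r/2}$ and $\abs{\sgn(x)}\leq1$), I obtain
\begin{equation}
\expe{\parenthese{X^d_{k+1,1}-X^d_{k,1}}^2}\leq \sigma_d^2\,\expe{\parenthese{\abs{Z^d_{k+1,1}}+\sigma_d/2}^2}\leq C\sigma_d^2=C\ell^2 d^{-2\alpha}\to 0.
\end{equation}
Hence the remainder tends to $0$ in $\mrL^2$, a fortiori in probability, and Slutsky's lemma combined with $X^d_{\floor{d^{2\alpha}t},1}\sim\piLaplace$ gives $L_t^d\Rightarrow\piLaplace$.

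The argument is essentially soft: the only genuine computation is the uniform-in-$d$ $\mrL^2$ control of one step of the chain, which I expect to be the main (and only) point requiring care, and which reduces to the boundedness of Gaussian moments and to the fact that the soft-thresholding drift in $b_d$ is itself $\bigO(\sigma_d)$ on the region $\defEns{\abs{x}<\sigma_d^{2v}r/2}$. I emphasize that no process-level convergence is needed at this stage: \Cref{prop:tightness} is used only to guarantee the existence of the limit point $\nu$, after which the single-time functional $W_t$ and the exact stationarity of each $X^d_{k,1}$ do all the work, the remaining effort being the bookkeeping of the floor and ceiling indices.
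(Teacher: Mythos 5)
Your proposal is correct and follows essentially the same route as the paper: both arguments reduce the claim to the exact stationarity of $X^d_{\floor{d^{2\alpha}t},1}$ plus a moment bound showing the interpolation remainder $L_t^d - X^d_{\floor{d^{2\alpha}t},1}$ vanishes (you in $\mrL^2$ via $\abs{b_d(x,z)}\leq\abs{z}+\sigma_d/2$, the paper in $\mrL^1$ via the same decomposition of $b_d$), and then transfer the limit to $\nu$ through weak convergence (you via the continuous mapping theorem for $W_t$ and Slutsky, the paper via bounded Lipschitz test functions). If anything, your version is slightly more careful than the paper's in passing to a convergent subsequence $\nu_{d_j}\to\nu$ rather than treating the limit point as a full limit.
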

\begin{proof}
Using~\eqref{eq:Yt1}, we have
\begin{align}
&\expe{\left\vert L_{t}^d-X_{\floor{ d^{2\alpha}t },1}^d \right\vert} \\
&\leq \expe{\left\vert (d^{2\alpha}t- \floor{d^{2\alpha}t})\left[\sigma_d Z^d_{\ceil{d^{2\alpha}t}, 1}-\frac{\sigma^2_d}{2}\sgn(X^d_{\floor{d^{2\alpha}t}, 1})\right]\1\left\lbrace\vert X^d_{\floor{d^{2\alpha}t}, 1}\vert \geq\sigma^{2v}_dr/2\right\rbrace\1_{\acc^d_{\ceil{d^{2\alpha}t}}}\right\vert}\\
& +\expe{\left\lvert(d^{2\alpha}t- \floor{d^{2\alpha}t})\left[\sigma_d Z^d_{\ceil{d^{2\alpha}t}, 1}-\frac{1}{\sigma_d^{2(v-1)}r}X^d_{\floor{d^{2\alpha}t}, 1}\right]\1\left\lbrace\vert X^d_{\floor{d^{2\alpha}t}, 1}\vert < \sigma^{2v}_dr/2\right\rbrace\1_{\acc^d_{\ceil{d^{2\alpha}t}}}  \right\rvert}\\
&\leq (d^{2\alpha}t- \floor{d^{2\alpha}t})\left(\sigma_d\expe{\left\lvert Z^d_{\ceil{d^{2\alpha}t}, 1}\right\rvert}+\frac{\sigma_d^2}{2}\expe{\left\lvert\sgn(X^d_{\floor{d^{2\alpha}t}, 1}) \right\rvert }\right.\\
&\left.\qquad+\frac{1}{\sigma_d^{2(v-1)}r}\expe{\left\lvert X^d_{\floor{d^{2\alpha}t}, 1}\right\rvert\1\left\lbrace\vert X^d_{\floor{d^{2\alpha}t}, 1}\vert < \sigma^{2v}_dr/2\right\rbrace }\right)\\
&\leq (d^{2\alpha}t- \floor{d^{2\alpha}t})
\left(\frac{\ell}{d^{\alpha}}\expe{(Z^d_{\ceil{d^{2\alpha}t}, 1})^2}^{1/2}+\frac{\ell^2}{2d^{2\alpha}}+\frac{1}{\sigma_d^{2(v-1)}r}\expe{\frac{\sigma^{2v}_dr}{2} }\right)\\
&\leq (d^{2\alpha}t- \floor{d^{2\alpha}t})\left(\frac{\ell}{d^{\alpha}}+\frac{\ell^2}{2d^{2\alpha}}+\frac{\ell^2}{2d^{2\alpha}}\right) \leq \frac{C}{d^{\alpha}},
\end{align}
where we used Cauchy-Schwarz inequality and the fact that the moments of $Z^d_{\ceil{d^{2\alpha}t}, 1}$ are bounded.
The above guarantees that, 
\begin{equation}
\lim_{d \rightarrow  \infty} \expe{\left\vert L_{t}^d-X_{\floor{ d^{2\alpha}t } ,1}^d \right\vert} = 0.
\end{equation}
As $(\nu_d)_{d\geq 1}$ converges weakly towards $\nu$, for any Lipschitz bounded function $\psi : \real \to \real$,
\begin{equation}
\lim_{d \rightarrow	\infty} \expe{\psi \left( X_{\floor{ d^{2\alpha}t } ,1}^d\right) } = \lim_{d \rightarrow	\infty} \expe{\psi \left( L_{t}^d\right) } = \mathbb{E}^{\nu} \left[  \psi(W_t)\right]. 
\end{equation}
The result follows since $X_{\floor{ d^{2\alpha}t } ,1}^d$ is distributed according to $\piLaplace(\rmd x) = \exp(-\vert x\vert)\rmd x / 2$ for any $t\geq 0$ and $d \in \mathbb{N}$.
\end{proof}

We are now ready to prove Proposition~\ref{prop:reduction}:
\begin{proof}[Proof of Proposition~\ref{prop:reduction}]
Let $\nu$ be a limit point of $(\nu_d)_{d\geq 1}$. We start by showing that if for any $V \in \rmC^{\infty}_\rmc(\real,\real)$, $m\in \mathbb{N}$, any bounded and continuous mapping $\rho: \real^m \rightarrow \real$ and any $0 \leq t_1 \leq \dots \leq t_m \leq s \leq t$, $\nu$ satisfies
\begin{equation} \label{eq:condition_pb_mart}
\mathbb{E}^{\nu} \left[ \left( V(W_t)-V(W_s)-\int_s^t \rmL V(W_u) \rmd u \right)\rho(W_{t_1},\dots,W_{t_m}) \right] = 0,
\end{equation}
then $\nu$ is a solution to the martingale problem associated with $\rmL$.

Let $\mathfrak{F}_s$ denote the $\sigma$-algebra generated by
\begin{equation}
\left\lbrace \rho(W_{t_1},\dots,W_{t_m})  \, :\, m\in \mathbb{N},\ \rho: \real^m \rightarrow \real\textrm{ bounded and continuous, and }0 \leq t_1 \leq \dots \leq t_m \leq s  \right\rbrace.
\end{equation}
Then,
\begin{equation}
\mathbb{E}^\nu \left[ V(W_t)-V(W_s)-\int_s^t \rmL V(W_u) \rmd u \middle| \mathfrak{F}_s \right] = 0,
\end{equation}
showing that the process 
\begin{equation}
\left(V(W_t)-V(W_0)-\int_0^t \rmL V(W_u) \rmd u \right)_{t\geq 0}
\end{equation}
is a martingale w.r.t. $\nu$ and the filtration $(\mathfrak{F}_t)_{t\geq 0}$.

To prove \eqref{eq:condition_pb_mart}, it is enough to show that for any $V \in \rmC^{\infty}_\rmc(\real,\real)$, $m\in \mathbb{N}$ and any bounded and continuous mapping $\rho: \real^m \rightarrow \real$ and any $0 \leq t_1 \leq \dots \leq t_m \leq s \leq t$, the mapping
\begin{equation}
\Psi_{s,t} : w \longmapsto \left( V(w_t)-V(w_s)-\int_s^t \rmL V(w_u) \rmd u \right) \rho\left( w_{t_1},\dots,w_{t_m}\right),
\end{equation}
is continuous on a $\nu$-almost sure subset of $  \rmC(\rset_+, \real)$.
Let
\begin{equation}
\bmW=\left\lbrace w \in  \rmC(\rset_+, \real)  \, : \,  w_u \neq 0 \textrm{ for almost any } u\in [s,t] \right\rbrace.
\end{equation}
Since $w \in \bmW^c$ if and only if $\int_s^t \1_{\lbrace 0\rbrace}(w_u) \rmd u >0$, using Lemma~\ref{lemma:pushforward} and the Fubini--Tonelli's theorem,
\begin{equation}
\mathbb{E}^\nu \left[ \int_s^t \1_{\lbrace 0\rbrace}(W_u) \rmd u \right] = \int_s^t \mathbb{E}^\nu \left[ \1_{\lbrace 0\rbrace}(W_u) \right] \rmd u =\int_s^t \piLaplace(\lbrace 0\rbrace ) \rmd u = 0,
\end{equation}
and we have that $\nu ( \bmW^c ) = 0$. 

Since $w \mapsto w_u$ is continuous for any $u\geq 0$, so are $w \mapsto V(w_u)$ and $w \mapsto \rho(w_{t_1},\dots,w_{t_m})$. Thus, it is enough to prove that the mapping $w \mapsto \int_s^t \rmL V(w_u) \rmd u$ is continuous. Let $(w^n)_{n\geq 0}$ be a sequence in $\rmC(\rset_+, \real)$ that converges to $w \in \bmW$ in the uniform topology on compact sets. Let $u$ be such that $w_u \neq 0$, therefore, since the $\sgn$ function is continuous in a neighbourhood of $w_u$,
$
\lim_{n \to \infty} \rmL V (w^n_u)= \rmL V (w_u)
$, thus $
\lim_{n \to \infty} \rmL V (w^n_u)= \rmL V(w_u)$  for almost any $u \in [s,t]$.
Finally, using the boundedness of the sequence $( \rmL V (w^n_u) )_{n\geq 0}$ and Lebesgue's dominated convergence theorem,
\begin{equation}
\lim_{n \rightarrow \infty} \int_s^t \rmL V (w^n_u) \rmd u  = \int_s^t \rmL V (w_u) \rmd u,
\end{equation}
which proves that the mappings $\Psi_{s,t}$ are continuous on $\bmW$.
\end{proof}

\subsection{Proof of Theorem~\ref{theo:laplace_diffusion}}
\label{sec:auxiliary_laplace_langevin}

Let us introduce, for any $n\in \mathbb{N}$,  $\mathcal{F}_{n,1}^d = \sigma ( \lbrace X_{k,1}^d, 0\leq k \leq n\rbrace )$, the $\sigma$-algebra generated by the first components of $\{X^d_k  \mid  0\leq k \leq n\}$. We also introduce for any $V \in \rmC^{\infty}_\rmc (\real,\real)$
\begin{align} 
M_n^d (V) & =  \frac{\ell}{d^{\alpha}} \sum_{k=0}^{n-1}V^{\prime}( X_{k,1}^d )\\
\label{eq:def_martingale_approximation}
          &\quad\times\left(b_d\left(X^d_{k,1}, Z^d_{k+1,
          1}\right)\1_{\acc_{k+1}^d}  - \expe{b_d\left(X^d_{k,1}, Z^d_{k+1,
          1}\right)\1_{\acc_{k+1}^d} \middle| \mathcal{F}_{k, 1}^d}\right) \\
&+\frac{\ell^2}{2d^{2\alpha}}\sum_{k=0}^{n-1}V^{\dprime}(X_{k, 1}^d)\notag\\
&\quad\times\left(b_d\left(X^d_{k,1}, Z^d_{k+1,
1}\right)^2\1_{\acc_{k+1}^d} -\expe{b_d\left(
X^d_{k,1}, Z^d_{k+1, 1}\right)^2\1_{\acc_{k+1}^d} \middle|
\mathcal{F}_{k, 1}^d}\right).
\end{align}
where $b_d$ is defined in~\eqref{eq:martingale_b}.

The proof of Theorem~\ref{theo:laplace_diffusion} follows using the sufficient condition in Proposition~\ref{prop:reduction}, the tightness of the sequence $(\nu_d)_{d\geq 1}$ established in Proposition~\ref{prop:tightness} and  Proposition~\ref{prop:martingale_approximation} below.
\begin{proof}
Using Proposition~\ref{prop:tightness},  Proposition~\ref{prop:reduction} and Proposition~\ref{prop:martingale_approximation} below, it is enough to show that for any $V \in \rmC_\rmc^{\infty}(\real,\real), m \geq 1$, any $0\leq t_1 \leq \cdots \leq t_m \leq s \leq t$ and any bounded and continuous mapping $\rho:\real^m \to \real$,
\begin{equation}
\lim_{d \rightarrow \infty}\expe{\left(M^d_{\ceil{ d^{2\alpha}t}}(V) - M^d_{\ceil {d^{2\alpha}s }}(V)\right)\rho(L_{t_1}^d,...,L_{t_m}^d) }=0 ,
\end{equation}
where, for any $n\geq 1$, $M_n^d(V)$  is given by \eqref{eq:def_martingale_approximation}. However, this is straightforwardly obtained by  taking successively the conditional expectations with respect to $\mathcal{F}^d_{k,1}$ for $k=\ceil{ d^{2\alpha}t}, \dots, \ceil{ d^{2\alpha}s}$.
\end{proof}

\begin{prop} \label{prop:martingale_approximation}
For any $0\leq s \leq t$, $V \in \rmC_\rmc(\real,\real)$ we have
\begin{equation}
\label{eq:martingale_approximation}
\lim_{d \rightarrow \infty} \expe{\left\vert V\left( L_{t}^d \right) -  V\left( L_{s}^d \right) - \int_s^t \rmL V \left( L_{u}^d\right) \rmd u - \left( M_{\ceil{ d^{2\alpha}t} }^d \left( V \right) - M_{\ceil{ d^{2\alpha}s } }^d \left( V \right)\right)  \right\vert } = 0,
\end{equation}
where $(L_{t}^d)_{t\geq 0}$ is defined in~\eqref{eq:Yt_differentiable}.
\end{prop}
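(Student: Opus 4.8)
The plan is to prove the martingale-approximation identity \eqref{eq:martingale_approximation} by performing a Taylor expansion of $V(L_t^d)-V(L_s^d)$ along the discrete trajectory and matching the result, term by term, against the martingale increment $M^d_{\ceil{d^{2\alpha}t}}(V)-M^d_{\ceil{d^{2\alpha}s}}(V)$ and the integral $\int_s^t \rmL V(L_u^d)\rmd u$. First I would reduce, using Lemma~\ref{lemma:pushforward} and the boundedness shown in its proof (that $\expe{|L_t^d-X^d_{\floor{d^{2\alpha}t},1}|}\le C d^{-\alpha}$), to replacing the interpolated process $L_t^d$ by the values $X^d_{\floor{d^{2\alpha}t},1}$ at grid points, so that it suffices to telescope over the steps $k=\ceil{d^{2\alpha}s},\dots,\ceil{d^{2\alpha}t}$. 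The contribution of the two endpoint fractional pieces in the interpolation~\eqref{eq:Yt1} is $\mathcal{O}(d^{-\alpha})$ in $\mrL^1$ and thus vanishes.

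Next I would write, for each accepted or rejected step, the second-order Taylor expansion
\begin{align*}
V(X^d_{k+1,1})-V(X^d_{k,1})
&= V'(X^d_{k,1})\sigma_d b_d(X^d_{k,1},Z^d_{k+1,1})\1_{\acc^d_{k+1}}\\
&\quad+\tfrac12 V''(X^d_{k,1})\sigma_d^2 b_d(X^d_{k,1},Z^d_{k+1,1})^2\1_{\acc^d_{k+1}}
+ \mathcal{O}(\sigma_d^3|b_d|^3),
\end{align*}
summed over $k$. Since $\sigma_d=\ell d^{-\alpha}$ with $\alpha=1/3$ and there are $\mathcal{O}(d^{2\alpha})$ terms, the first two orders produce $\mathcal{O}(1)$ contributions and must be tracked exactly, while the cubic remainder contributes $\mathcal{O}(d^{2\alpha}\sigma_d^3)=\mathcal{O}(d^{-\alpha})\to 0$ (using the moment bounds on $b_d$ from Lemma~\ref{lemma:tightness_moment} and the boundedness of $X^d_{k,1}$ on the biased-random-walk region $\{|x|<\sigma_d^{2v}r/2\}$). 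The martingale $M^d_n(V)$ is precisely the sum of the centered versions of these first- and second-order increments, so subtracting $M^d_{\ceil{d^{2\alpha}t}}(V)-M^d_{\ceil{d^{2\alpha}s}}(V)$ leaves exactly the sum of the conditional expectations
\begin{align*}
\sum_{k}\Big[\sigma_d V'(X^d_{k,1})\expe{b_d\1_{\acc^d_{k+1}}\mid\mathcal{F}^d_{k,1}}
+\tfrac{\sigma_d^2}{2} V''(X^d_{k,1})\expe{b_d^2\1_{\acc^d_{k+1}}\mid\mathcal{F}^d_{k,1}}\Big].
\end{align*}

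The core computation, and the main obstacle, is to show that this Riemann-type sum converges in $\mrL^1$ to $\int_s^t \rmL V(L_u^d)\rmd u$, i.e.\ that the per-step drift and diffusion coefficients reproduce the generator~\eqref{eq:laplace_generator}. Concretely I would prove
\begin{align*}
d^{2\alpha}\,\expe{\sigma_d b_d(X^d_{k,1},Z^d_{k+1,1})\1_{\acc^d_{k+1}}\mid X^d_{k,1}=x}
&\;\longrightarrow\; -\tfrac{h^\rmL(\ell)}{2}\sgn(x),\\
d^{2\alpha}\,\expe{\sigma_d^2 b_d(X^d_{k,1},Z^d_{k+1,1})^2\1_{\acc^d_{k+1}}\mid X^d_{k,1}=x}
&\;\longrightarrow\; h^\rmL(\ell),
\end{align*}
uniformly in the relevant range of $x$ (away from $0$). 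These are the analogues of Proposition~\ref{prop:3_case_a} for the non-smooth target: the acceptance indicator $\1_{\acc^d_{k+1}}$ couples the $k$-th coordinate to the other $d-1$ coordinates through $\sum_{i}\phi_{d,i}$, so I would condition on $\mathcal{F}^d_{k,1}$, use that the remaining $d-1$ log-ratios are i.i.d.\ and converge (by the CLT machinery behind Theorem~\ref{theo:acceptance_laplace}) to a Gaussian limit independent of the first coordinate, yielding the limiting acceptance probability $a^\rmL(\ell)$, and then evaluate the one-dimensional Gaussian integrals against $b_d$ and $b_d^2$. The factor $\ell^2 a^\rmL(\ell)=h^\rmL(\ell)$ emerges exactly as in the differentiable case. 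The biased-random-walk region $\{|x|<\sigma_d^{2v}r/2\}$ has $\piLaplace$-measure $\mathcal{O}(\sigma_d^{2v})$ and contributes negligibly since $v\ge 1$, so the $\sgn$ drift and the speed $h^\rmL(\ell)$ coincide with those of sG-MALA. Combining the vanishing of the remainder, the endpoint terms, and this generator convergence via the triangle inequality and dominated convergence (justified by the uniform moment bounds) yields~\eqref{eq:martingale_approximation}.
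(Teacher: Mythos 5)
Your plan follows essentially the same route as the paper's proof: write the increment of $V$ along the trajectory (the paper uses the fundamental theorem of calculus for the piecewise-linear interpolant and Taylor-expands $V^{\prime}$, you telescope the chain and Taylor-expand $V$ --- the resulting first- and second-order terms and the cubic remainder of order $d^{2\alpha}\sigma_d^3 = \mathcal{O}(d^{-\alpha})$ are identical), note that subtracting $M^d_{\ceil{d^{2\alpha}t}}(V)-M^d_{\ceil{d^{2\alpha}s}}(V)$ leaves exactly the conditional expectations, and reduce the problem to showing that the conditional drift and squared-jump coefficients converge to $-\tfrac{h^\rmL(\ell)}{2}\sgn(x)$ and $h^\rmL(\ell)$, via the $1$-Lipschitz property of $x\mapsto 1\wedge \rme^{x}$, the second-moment bound on $\phi_d$, and \Cref{theo:acceptance_laplace}. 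These are the paper's terms $T_1^d$, $T_2^d$, $T_4^d$, $T_5^d$ and the Taylor remainder, and your treatment of them is sound.

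There are, however, two gaps. First, your opening reduction --- ``replace $L_u^d$ by $X^d_{\floor{d^{2\alpha}u},1}$ using Lemma~\ref{lemma:pushforward}'' --- is not automatic for the term $\int_s^t \rmL V(L_u^d)\,\rmd u$, because $\rmL V$ contains the discontinuous factor $\sgn(\cdot)$: the bound $\expe{\vert L_u^d - X^d_{\floor{d^{2\alpha}u},1}\vert}\leq C d^{-\alpha}$ does not by itself imply $\expe{\vert \sgn(L_u^d)-\sgn(X^d_{\floor{d^{2\alpha}u},1})\vert}\to 0$, since the chain may sit near the origin. The paper isolates this as the term $T_3^d$ and disposes of it with a dedicated estimate (Lemma~\ref{lemma:sgn_expe}) showing that the probability of a sign change in one step vanishes. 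Your argument can be patched without that lemma: under \Cref{ass:stationarity} the law of $X^d_{\floor{d^{2\alpha}u},1}$ is $\piLaplace$, whose density is bounded, so for any $\delta>0$, $\pr\parenthese{\sgn(L_u^d)\neq \sgn(X^d_{\floor{d^{2\alpha}u},1})} \leq \pr\parenthese{\vert X^d_{\floor{d^{2\alpha}u},1}\vert \leq \delta} + \delta^{-1}\expe{\vert L_u^d - X^d_{\floor{d^{2\alpha}u},1}\vert} \leq \delta + C\delta^{-1}d^{-\alpha}$, which vanishes with $\delta = d^{-\alpha/2}$; but some such argument must be supplied, and you did not supply it. Second, you assert that the conditional-coefficient limits hold ``uniformly in $x$ away from $0$''; such pointwise-uniform convergence is neither what the cited estimates give nor what is needed. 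The statement to be proved is an $\mrl^1$ statement, and the paper's mechanism is different: by stationarity, the conditional expectations, viewed as random variables through $X^d_{\floor{d^{2\alpha}\tau},1}$, have the same law for every $\tau\in[s,t]$, so $\mrl^1$ convergence at a single time gives the uniformity in $\tau$ needed to pass the limit under $\int_s^t\cdot\,\rmd\tau$. Restating your step in this $\mrl^1$-over-$x\sim\piLaplace$ form makes it correct.
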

\begin{proof}
The process $(L^d_t)_{t\geq 0}$ is piecewise linear, thus it has finite variation. For any $\tau\geq 0$, we define
\begin{equation}
\rmd L_{\tau}^d = d^{2\alpha}\sigma_d b_d\left( X^d_{\floor{ d^{2\alpha}\tau},1} , Z^d_{\ceil{ d^{2\alpha}\tau },1})\right) \1_{\acc_{\ceil{ d^{2\alpha}\tau}}^d} \rmd \tau.
\end{equation}
Since $\sigma_d = \ell d^{-\alpha}$ with $\alpha=1/3$ and using the fundamental theorem of calculus for piecewise $C^1$ maps
\begin{align}\label{eq:ito}
V\left( L_{t}^d \right) -  V\left( L_{s}^d \right)  = \ell d^{\alpha} \int_s^t V^{\prime}\left( L_{\tau}^d \right) b_d\left(X^d_{\floor{ d^{2\alpha}\tau},1}, Z^d_{\ceil{d^{2\alpha}\tau}, 1}\right)\1_{\acc_{\ceil{ d^{2\alpha}\tau}}^d} \rmd \tau,
\end{align}
where $b_d$ is defined in~\eqref{eq:martingale_b}.
A Taylor expansion of $V^{\prime}$ with Lagrange remainder about  $X^d_{\floor{ d^{2\alpha}\tau },1}$ gives 
\begin{align}
V^{\prime} \left( L^d_{\tau}\right) &= V^{\prime} \left( X^d_{\floor {d^{2\alpha}\tau},1} \right) \\
&+\frac{\ell}{d^{\alpha}} \left( d^{2\alpha}\tau - \floor{ d^{2\alpha}\tau } \right) V^{\dprime}\left( X^d_{\floor{ d^{2\alpha}\tau },1}\right)  b_d\left(X^d_{\floor{ d^{2\alpha}\tau},1}, Z^d_{\ceil{d^{2\alpha}\tau}, 1}\right)\1_{\acc_{\ceil{ d^{2\alpha}\tau}}^d} \\
&+\frac{\ell^2}{2d^{2\alpha}} \left( d^{2\alpha}\tau - \floor{ d^{2\alpha}\tau } \right)^2 V^{(3)}\left( \chi_\tau \right)  b_d\left(X^d_{\floor{ d^{2\alpha}\tau},1}, Z^d_{\ceil{d^{2\alpha}\tau}, 1}\right)\1_{\acc_{\ceil{ d^{2\alpha}\tau}}^d},
\end{align}
where for any point $\tau \in [s,t]$, there exists $\chi_\tau \in [X^d_{\floor {d^{2\alpha}\tau },1}, L_{\tau}^d ]$.
Substituting the above into~\eqref{eq:ito} we obtain
\begin{align}
&V\left( L_{t}^d \right) -  V\left( L_{s}^d \right) = \ell d^{\alpha} \int_s^t V^{\prime}\left( X^d_{\floor {d^{2\alpha}\tau },1} \right)b_d\left(X^d_{\floor{ d^{2\alpha}\tau},1}, Z^d_{\ceil{d^{2\alpha}\tau}, 1}\right)\1_{\acc_{\ceil{ d^{2\alpha}\tau}}^d} \rmd \tau\\
\label{eq:martingale_taylor}
&\qquad+\ell^2  \int_s^t \left( d^{2\alpha}\tau - \floor{ d^{2\alpha}\tau } \right) V^{\dprime}\left( X^d_{\floor{ d^{2\alpha}\tau },1}\right)  b_d\left(X^d_{\floor{ d^{2\alpha}\tau},1}, Z^d_{\ceil{d^{2\alpha}\tau}, 1}\right)^2\1_{\acc_{\ceil{ d^{2\alpha}\tau}}^d}\rmd \tau \\
& \qquad+\frac{\ell^3}{2d^{\alpha}} \int_s^t\left( d^{2\alpha}\tau - \floor{ d^{2\alpha}\tau } \right)^2 V^{(3)}\left( \chi_\tau \right)  b_d\left(X^d_{\floor{ d^{2\alpha}\tau},1}, Z^d_{\ceil{d^{2\alpha}\tau}, 1}\right)^3\1_{\acc_{\ceil{ d^{2\alpha}\tau}}^d}\rmd \tau.
\end{align}
Since $V^{(3)}$ is bounded, using Fubini-Tonelli's theorem and recalling the definition of $b_d$ in~\eqref{eq:martingale_b}, we have that
\begin{align}
&\frac{\ell^3}{2d^{\alpha}} \expe{\left\lvert\int_s^t\left( d^{2\alpha}\tau - \floor{ d^{2\alpha}\tau } \right)^2 V^{(3)}\left( \chi_\tau \right)  b_d\left(X^d_{\floor{ d^{2\alpha}\tau},1}, Z^d_{\ceil{d^{2\alpha}\tau}, 1}\right)^3\1_{\acc_{\ceil{ d^{2\alpha}\tau}}^d}\rmd \tau\right\rvert} \\
& \qquad\leq C\frac{\ell^3}{2d^{\alpha}} \int_s^t\expe{ \left(\left\lvert Z^d_{\ceil{ d^{2\alpha}\tau},1}\right\rvert + \frac{\ell}{2d^\alpha}\right)^3 }\rmd \tau \underset{d\to \infty}{\longrightarrow} 0,
\end{align}
since the moments of $Z^d_{\ceil{ d^{2\alpha}\tau},1}$ are bounded.

For the second term in~\eqref{eq:martingale_taylor}, we observe that most of the integrand is piecewise constant since the process $X^d_{\floor{ d^{2\alpha}\tau },1}$ evolves in discrete time. Then, for any integer $d^{2\alpha}s \leq k \leq d^{2\alpha}t - 1$,
\begin{align}
&\int_{k/d^{2\alpha}}^{(k+1)/d^{2\alpha}}  \left( d^{2\alpha}\tau - \floor{ d^{2\alpha}\tau } \right) V^{\dprime}\left( X^d_{\floor{ d^{2\alpha}\tau },1}\right) b_d\left(X^d_{\floor{ d^{2\alpha}\tau},1}, Z^d_{\ceil{d^{2\alpha}\tau}, 1}\right)^2 \1_{\acc_{\ceil{ d^{2\alpha}\tau }}^d} \rmd \tau \\
&\qquad =\frac{1}{2d^{2\alpha}}V^{\dprime}\left( X^d_{k,1}\right)  b_d\left(X^d_{k,1}, Z^d_{k+1, 1}\right)^2 \1_{\acc_{k+1}^d} \\
& \qquad =\frac{1}{2}\int_{k/d^{2\alpha}}^{(k+1)/d^{2\alpha}} V^{\dprime}\left( X^d_{\floor{ d^{2\alpha}\tau },1}\right)  b_d\left(X^d_{\floor{ d^{2\alpha}\tau},1}, Z^d_{\ceil{d^{2\alpha}\tau}, 1}\right)^2 \1_{\acc_{\ceil{ d^{2\alpha}\tau }}^d} \rmd \tau .
\end{align}
Thus, we can write
\begin{align}
I &= \int_s^t \left( d^{2\alpha}\tau - \floor{ d^{2\alpha}\tau } \right) V^{\dprime}\left( X^d_{\floor{ d^{2\alpha}\tau },1}\right)  b_d\left(X^d_{\floor{ d^{2\alpha}\tau},1}, Z^d_{\ceil{d^{2\alpha}\tau}, 1}\right)^2 \1_{\acc_{\ceil{ d^{2\alpha}\tau }}^d} \rmd \tau \\
&= I_1 + I_2  ,
\end{align}
where we define
\begin{equation}
I_2 := \frac{1}{2} \int_s^t V^{\dprime}\left( X^d_{\floor{ d^{2\alpha}\tau },1}\right)  b_d\left(X^d_{\floor{ d^{2\alpha}\tau},1}, Z^d_{\ceil{d^{2\alpha}\tau}, 1}\right)^2 \1_{\acc_{\ceil{ d^{2\alpha}\tau }}^d} \rmd \tau ,
\end{equation}
and
\begin{multline}
I_1 := \left[ \int_s^{\ceil{ d^{2\alpha}s} /d^{2\alpha}}+\int_{\floor{ d^{2\alpha}t} /d^{2\alpha}}^t \right] \left( d^{2\alpha}\tau - \floor{ d^{2\alpha}\tau } - \frac{1}{2}\right) V^{\dprime}\left( X^d_{\floor{ d^{2\alpha}\tau },1}\right) \\  \times b_d\left(X^d_{\floor{ d^{2\alpha}\tau},1}, Z^d_{\ceil{d^{2\alpha}\tau}, 1}\right)^2 \1_{\acc_{\ceil{ d^{2\alpha}\tau }}^d} \rmd \tau .
\end{multline}
In addition, we have
\begin{multline}
I_1 = \frac{1}{2 d^{2\alpha}}\left( d^{2\alpha}s - \floor{ d^{2\alpha}s }\right) \left( \ceil{ d^{2\alpha}s } - d^{2\alpha}s \right) V^{\dprime}\left( X^d_{\floor{ d^{2\alpha}s },1}\right) 
b_d\left(X^d_{\floor{ d^{2\alpha}s},1}, Z^d_{\ceil{d^{2\alpha}s}, 1}\right)^2 \1_{\acc_{\ceil{ d^{2\alpha}s }}^d} \\
+ \frac{1}{2d^{2\alpha}}\left( d^{2\alpha}t - \floor{ d^{2\alpha}t }\right) \left( \ceil{ d^{2\alpha}t } - d^{2\alpha}t \right) V^{\dprime}\left( X^d_{\floor{ d^{2\alpha}t },1}\right) 
b_d\left(X^d_{\floor{ d^{2\alpha}t},1}, Z^d_{\ceil{d^{2\alpha}t}, 1}\right)^2 \1_{\acc_{\ceil{ d^{2\alpha}t }}^d},
\end{multline}
and, since $V^{\dprime}$ and the moments of $Z^d_{\ceil{ d^{2\alpha}t},1}$ are bounded, $\lim_{d \to \infty}\expe{\left\vert I_1 \right\vert} = 0$. Thus, 
\begin{equation}
\lim_{d \to  \infty} \expe{\left\vert V\left( L_{t}^d \right) -  V\left( L_{s}^d \right)  - I_{s,t} \right\vert } = 0 ,
\end{equation}
where
\begin{align}
\label{eq:Ist}
    I_{s,t} &= \int_s^t \left\lbrace  \ell d^{\alpha} V^{\prime} \left( X^d_{\floor{ d^{2\alpha}\tau },1} \right) b_d\left(X^d_{\floor{ d^{2\alpha}\tau},1}, Z^d_{\ceil{d^{2\alpha}\tau}, 1}\right)\right. \\
 &\qquad\qquad+ \left.  \frac{\ell^2}{2} V^{\dprime} \left( X^d_{\floor{
d^{2\alpha}\tau },1}\right)b_d\left(X^d_{\floor{ d^{2\alpha}\tau},1},
Z^d_{\ceil{d^{2\alpha}\tau}, 1}\right)^2 \1_{\acc_{\ceil{
d^{2\alpha}\tau }}^d} \right\rbrace \rmd \tau  .
\end{align}
Next, we use~\eqref{eq:laplace_generator} and write
\begin{align}\label{eq:ecriture_generateur}
\int_s^t \rmL V \left( L^d_{\tau}\right) \rmd \tau &= \int_s^t  \frac{h^\rmL(\ell
)}{2}\left[V^{\dprime}\left( X_{\floor{d^{2\alpha}\tau }1}^d \right) - \sgn
\left( X_{\floor{ d^{2\alpha}\tau },1}^d\right)V^{\prime}\left( X_{\floor{
d^{2\alpha}\tau },1}^d \right) \right] \rmd \tau - T_3^d ,
\end{align}
where we define
\begin{equation}
T_3^d = \int_s^t \left( \rmL V \left( X_{\floor{ d^{2\alpha}\tau },1}^d\right) - \rmL V \left( L^d_{\tau}\right) \right) \rmd \tau.
\end{equation}

Finally, we write the difference $M^d_{\ceil{ d^{2\alpha}t }}(V) -  M^d_{\ceil{ d^{2\alpha}s }}(V)$ as the integral of a piecewise constant function
\begin{align} 
\label{eq:app_mart_integrale}
&M^d_{\ceil{ d^{2\alpha}t }}(V) -  M^d_{\ceil{ d^{2\alpha}s }}(V) =I_{s,t}\\
&\qquad- \int_s^t  \left( \ell d^{\alpha} V^{\prime} \left( X_{\floor{
d^{2\alpha}\tau },1}^d \right) \expe{  b_d\left(X^d_{\floor{
d^{2\alpha}\tau},1}, Z^d_{\ceil{d^{2\alpha}\tau}, 1}\right)\1_{\acc_{\ceil{
d^{2\alpha}\tau }}^d} \middle| \mathcal{F}_{\floor{ d^{2\alpha}\tau },1}^d }
\right.\\
&\qquad \left.+ \frac{\ell^2}{2} V^{\dprime} \left( X_{\floor{ d^{2\alpha}\tau
},1}^d \right) \expe{ b_d\left(X^d_{\floor{ d^{2\alpha}\tau},1},
Z^d_{\ceil{d^{2\alpha}\tau}, 1}\right)^2 \1_{\acc_{\ceil{ d^{2\alpha}\tau }}^d}
\middle| \mathcal{F}_{\floor{ d^{2\alpha}\tau },1}^d } \right) \rmd \tau \\
&\qquad\qquad  - T_4^d - T_5^d  , \notag
\end{align}
where $T_4^d$ and $T_5^d$ account for the difference between the sum in~\eqref{eq:def_martingale_approximation} and the integral, and are defined as
\begin{align}
T_4^d &= - \frac{\ell}{d^{\alpha}} \left( \ceil{ d^{2\alpha}t } - d^{2\alpha}t \right) V^{\prime} \left(X_{\floor{ d^{2\alpha}t },1}^d  \right) \left\lbrace b_d\left(X^d_{\floor{ d^{2\alpha}t},1}, Z^d_{\ceil{d^{2\alpha}t}, 1}\right) \1_{\acc_{\ceil{ d^{2\alpha}t }}^d} \right. \\
&\qquad\left. - \expe{  b_d\left(X^d_{\floor{ d^{2\alpha}t},1}, Z^d_{\ceil{d^{2\alpha}t}, 1}\right) \1_{\acc_{\ceil{ d^{2\alpha}t }}^d} \middle| \mathcal{F}_{\floor{ d^{2\alpha}t },1}^d } \right\rbrace \\
& - \frac{\ell^2}{2d^{2\alpha}} \left( \ceil{ d^{2\alpha}t } - d^{2\alpha}t \right) V^{\dprime} \left(X_{\floor{ d^{2\alpha}t },1}^d  \right) \left\lbrace b_d\left(X^d_{\floor{ d^{2\alpha}t},1}, Z^d_{\ceil{d^{2\alpha}t}, 1}\right)^2 \1_{\acc_{\ceil{ d^{2\alpha}t }}^d} \right. \\
&\qquad \left. - \expe{  b_d\left(X^d_{\floor{ d^{2\alpha}t},1}, Z^d_{\ceil{d^{2\alpha}t}, 1}\right)^2 \1_{\acc_{\ceil{ d^{2\alpha}t }}^d} \middle| \mathcal{F}_{\floor{ d^{2\alpha}t },1}^d } \right\rbrace  ,
\end{align}
and $T_5^d = -T_4^d$ with $t$ substituted by $s$.
Putting~\eqref{eq:Ist}, \eqref{eq:ecriture_generateur} and~\eqref{eq:app_mart_integrale} together we obtain 
\begin{equation}
I_{s,t} - \int_s^t \rmL V \left( L^d_{\tau}\right) \rmd \tau - \left( M^d_{\ceil{ d^{2\alpha}t }}(V) -  M^d_{\ceil{ d^{2\alpha}s }}(V) \right) = T_1^d + T_2^d + T_3^d + T_4^d + T_5^d,
\end{equation}
where $T_1^d$ takes into account all the terms involving $V^{\prime}( X_{\floor{ d^{2\alpha}\tau },1}^d )$, and $T_2^d$ the terms involving $V^{\dprime}( X_{\floor{ d^{2\alpha}\tau },1}^d)$:
\begin{align}
T_1^d &= \int_s^t V^{\prime}\left( X_{\floor{ d^{2\alpha}\tau },1}^d \right)\\
&\times\left\lbrace \ell d^{\alpha} \expe{ b_d\left(X^d_{\floor{ d^{2\alpha}\tau},1}, Z^d_{\ceil{d^{2\alpha}\tau}, 1}\right) \1_{\acc_{\ceil{ d^{2\alpha}\tau }}^d} \middle| \mathcal{F}_{\floor{ d^{2\alpha}\tau },1}^d }  +\frac{ h^\rmL(\ell) }{2}\sgn \left( X_{\floor{ d^{2\alpha}\tau },1}^d\right) \right\rbrace \rmd \tau ,\\
T_2^d &= \int_s^t V^{\dprime}\left( X_{\floor{ d^{2\alpha}\tau },1}^d \right)\\
&\times\left\lbrace \frac{\ell^2}{2} \expe{  b_d\left(X^d_{\floor{ d^{2\alpha}\tau},1}, Z^d_{\ceil{d^{2\alpha}\tau}, 1}\right)^2 \1_{\acc_{\ceil{ d^{2\alpha}\tau }}^d} \middle| \mathcal{F}_{\floor{ d^{2\alpha}\tau },1}^d }- \frac{h^\rmL(\ell)}{2} \right\rbrace \rmd \tau .
\end{align}
To obtain~\eqref{eq:martingale_approximation} it is then sufficient to prove that for any $1 \leq i \leq 5$, $\lim_{d \to  \infty}\expe{\left\vert T_i^d\right\vert} = 0$. 

Since $V^{\prime}, V^{\dprime}$ are bounded and $b_d$ is bounded in expectation because the moments of $Z^d_{\ceil{ d^{2\alpha}\tau},1}$ are bounded, it is easy to show that $\lim_{d \to  \infty}\expe{\left\vert T_i^d\right\vert} = 0$ for $i=4, 5$.
For $T_3^d$, we write $T_3^d = h^\rmL(\ell)(T_{3,1}^d - T_{3,2}^d)/2$, where  
\begin{align}
T_{3,1}^d &= \int_s^t \left\lbrace V^{\dprime}\left( X_{\floor {d^{2\alpha}\tau},1}^d \right) - V^{\dprime}\left( L_{\tau}^d \right)\right\rbrace \rmd \tau  , \\
T_{3,2}^d &= \int_s^t \left\lbrace \sgn \left( X_{\floor{ d^{2\alpha}\tau },1}^d\right)V^{\prime}\left( X_{\floor{ d^{2\alpha}\tau},1}^d \right) - \sgn \left( L_{\tau}^d\right)V^{\prime}\left( L_{\tau}^d \right)\right\rbrace \rmd \tau  .
\end{align}
Using Fubini-Tonelli's theorem, the convergence of $X_{\floor{ d^{2\alpha}\tau},1}^d$ to $L_{\tau}^d$ in Lemma~\ref{lemma:pushforward} and Lebesgue's dominated convergence theorem we obtain
\begin{equation}
    \expe{\left\vert T_{3,1}^d \right\vert} \leq \int_s^t \expe{\left\vert V^{\dprime}\left( X_{\floor{ d^{2\alpha}\tau},1}^d \right) - V^{\dprime}\left( L_{\tau}^d \right)  \right\vert  } \rmd \tau  \underset{d \to \infty}{\longrightarrow} 0.
\end{equation}
We can further decompose $T_{3,2}^d$ as
\begin{multline}
T_{3,2}^d = \int_s^t \left\lbrace \sgn \left( X_{\floor{ d^{2\alpha}\tau },1}^d\right) - \sgn \left( L_{\tau}^d\right)\right\rbrace V^{\prime}\left( X_{\floor{ d^{2\alpha}\tau},1}^d \right) \rmd \tau \\ 
+ \int_s^t \sgn \left( L_{\tau}^d\right) \left\lbrace V^{\prime}\left( X_{\floor{ d^{2\alpha}\tau},1}^d \right) - V^{\prime}\left( L_{\tau}^d \right)\right\rbrace \rmd \tau .
\end{multline}
Proceeding as for $T_{3,1}^d$, it is easy to show that the second integral converges to 0 as $d\to \infty$.
We then bound the first integral by
\begin{align}
&\expe{\left\vert \int_s^t \left\lbrace \sgn \left( X_{\floor{ d^{2\alpha}\tau},1}^d\right) - \sgn \left( L_{\tau}^d\right)\right\rbrace V^{\prime}\left( X_{\floor{ d^{2\alpha}\tau},1}^d \right) \rmd \tau \right\vert }\\
&\qquad\leq C\int_s^t\expe{\left\vert \sgn \left( X_{\floor{ d^{2\alpha}\tau},1}^d\right) - \sgn \left( L_{\tau}^d\right) \right\vert} \rmd \tau .
\end{align}
However, since $\lbrace\sgn ( X_{\floor{ d^{2\alpha}\tau},1}^d) \neq \sgn ( L_{\tau}^d) \rbrace \subset \lbrace \sgn ( X_{\floor{ d^{2\alpha}\tau},1}^d) \neq \sgn ( X_{\ceil{ d^{2\alpha}\tau},1}^d) \rbrace$, using Lemma~\ref{lemma:sgn_expe} in Appendix~\ref{app:sgn_expe} we have that
\begin{align}
\expe{\left\vert \sgn \left( X_{\floor{ d^{2\alpha}\tau},1}^d\right) - \sgn
\left( L_{\tau}^d\right) \right\vert} 
&= 2\expe{ \1\left\lbrace \sgn \left( X_{\floor{ d^{2\alpha}\tau},1}^d\right)
\neq \sgn \left( L_{\tau}^d\right) \right\rbrace} \\
&=2\expe{ \1\left\lbrace \sgn \left( X_{\floor{ d^{2\alpha}\tau},1}^d\right)
\neq \sgn \left( X_{\ceil{ d^{2\alpha}\tau},1}^d\right) \right\rbrace}
\underset{d \to \infty}{\longrightarrow} 0 .
\end{align}
The above and the dominated converge theorem show that 
\begin{align}
    \expe{\left\vert \int_s^t \left\lbrace \sgn \left( X_{\floor{ d^{2\alpha}\tau},1}^d\right) - \sgn \left( L_{\tau}^d\right)\right\rbrace V^{\prime}\left( X_{\floor{ d^{2\alpha}\tau},1}^d \right) \rmd \tau \right\vert } \underset{d \to \infty}{\longrightarrow} 0.
\end{align}
Consider then $T_1^d$, recalling that the derivatives of $V$ are bounded, we have
\begin{align}
\expe{\left\vert T_1^d\right\vert } &\leq \int_s^t C \mathbb{E}\left[\left\lvert\ell d^{\alpha} \expe{ b_d\left(X^d_{\floor{ d^{2\alpha}\tau},1}, Z^d_{\ceil{d^{2\alpha}\tau}, 1}\right) \1_{\acc_{\ceil{ d^{2\alpha}\tau }}^d} \middle| \mathcal{F}_{\floor{ d^{2\alpha}\tau },1}^d } \right.\right.\\
&\qquad+ \left.\left.\frac{h^\rmL(\ell)}{2} \sgn \left( X_{\floor{ d^{2\alpha}\tau },1}^d\right)\right\rvert\right]\rmd \tau\\
&\leq \int_s^t C \left\lbrace \expe{\left\vert D_{1,\tau}^{(1)}\right\vert }  + \expe{\left\vert D_{2,\tau}^{(1)}\right\vert } \right\rbrace \rmd \tau,\notag
\end{align}
where we define
\begin{align}
\label{eq:def_dr} 
D_{1,\tau}^{(1)} &=  \ell d^{\alpha} \expe{ Z^d_{\ceil{ d^{2\alpha}\tau} , 1} \1_{\acc^d_{\ceil{ d^{2\alpha}\tau }}}  \middle| \mathcal{F}^d_{\floor{ d^{2\alpha}\tau },1} } , \\
D_{2,\tau}^{(1)} &=   \frac{h^\rmL(\ell)}{2}\sgn \left(  X^d_{\floor{ d^{2\alpha}\tau} ,1}\right) \notag\\
&- \ell d^{\alpha}\left(\frac{\sigma_d}{2}\sgn(X^d_{\floor{ d^{2\alpha}\tau} ,1})\1_{\vert X^d_{\floor{ d^{2\alpha}\tau} ,1}\vert \geq \sigma_d^{2v} r/2}+\frac{ 1}{\sigma_d^{2v-1}r}  X^d_{\floor{ d^{2\alpha}\tau} ,1}\1_{\vert X^d_{\floor{ d^{2\alpha}\tau} ,1}\vert < \sigma_d^{2v} r/2}\right)\notag\\
&\qquad\qquad\times\expe{\1_{\acc^d_{\ceil{ d^{2\alpha}\tau}} } \middle| \mathcal{F}^d_{\floor{ d^{2\alpha}\tau},1} }    .\notag
\end{align}
Let us start with $D_{1, \tau}^{(1)}$:
\begin{equation}
D_{1,\tau}^{(1)} =  \ell d^{\alpha}  \expe{ Z^d_{\ceil{ d^{2\alpha}\tau } , 1} \left( 1 \wedge \exp \left\lbrace \sum_{i=1}^d \phi_d\left( X^d_{\floor{ d^{2\alpha}\tau },i},Z^d_{\ceil{ d^{2\alpha}\tau } , i}\right) \right\rbrace  \right)    \middle| \mathcal{F}^d_{\floor{ d^{2\alpha}\tau },1} } ,
\end{equation}
where $\phi_d$ is given in~\eqref{eq:laplace_ar}.
Then, by independence of the components of $Z^d_{\ceil{ d^{2\alpha}\tau }}$, we have
\begin{multline}
\expe{ Z^d_{\ceil{ d^{2\alpha}\tau } , 1} \left( 1 \wedge \exp \left\lbrace \sum_{i=2}^d \phi_d\left( X^d_{\floor{ d^{2\alpha}\tau },i},Z^d_{\ceil{ d^{2\alpha}\tau } , i}\right) \right\rbrace  \right)    \middle| \mathcal{F}^d_{\floor{ d^{2\alpha}\tau },1} } \\
= \expe{Z^d_{\ceil{ d^{2\alpha}\tau } , 1}} \expe{ 1 \wedge \exp \left\lbrace \sum_{i=2}^d \phi_d\left( X^d_{\floor{ d^{2\alpha}\tau },i},Z^d_{\ceil{ d^{2\alpha}\tau } , i}\right) \right\rbrace   \middle| \mathcal{F}^d_{\floor{ d^{2\alpha}\tau },1} } = 0  .
\end{multline}
This allows us to write
\begin{multline}
\expe{\vert  D_{1,\tau}^{(1)}\vert} \leq \ell d^{\alpha} \expe{ \vert Z^d_{\ceil{ d^{2\alpha}\tau } , 1}\vert  \right. \\ 
\left. \left\lvert 1 \wedge \exp \left\lbrace \sum_{i=1}^d \phi_d\left( X^d_{\floor{ d^{2\alpha}\tau },i},Z^d_{\ceil{ d^{2\alpha}\tau } , i}\right) \right\rbrace - 1 \wedge \exp \left\lbrace \sum_{i=2}^d \phi_d\left( X^d_{\floor{ d^{2\alpha}\tau },i},Z^d_{\ceil{ d^{2\alpha}\tau } , i}\right) \right\rbrace\right\rvert } .
\end{multline}
However, $x \mapsto 1 \wedge \exp(x)$ is a 1-Lipschitz function, thus
\begin{equation}
\expe{\vert D_{1,\tau}^{(1)}\vert} \leq\ell d^{\alpha} \expe{\vert Z^d_{\ceil{ d^{2\alpha}\tau } , 1}\vert \left\lvert \phi_d\left( X^d_{\floor{ d^{2\alpha}\tau },1},Z^d_{\ceil{ d^{2\alpha}\tau } , 1}\right)\right\rvert }, 
\end{equation}
and $D_{1, \tau}^{(1)}\to 0$ as $d \to \infty$ by Lemma~\ref{lemma:integral_D1tau} in Appendix~\ref{app:sgn_expe}.

For $D_{2, \tau}^{(1)}$, we observe that
\begin{align}
\label{eq:x_bound}
&-\frac{\sigma_d}{2}\1_{\vert X^d_{\floor{ d^{2\alpha}\tau} ,1}\vert < \sigma_d^{2v} r/2}\leq \frac{ 1}{\sigma^{2v-1}r}  X^d_{\floor{ d^{2\alpha}\tau} ,1}\1_{\vert X^d_{\floor{ d^{2\alpha}\tau} ,1}\vert < \sigma_d^{2v} r/2}\leq \frac{\sigma_d}{2}\1_{\vert X^d_{\floor{ d^{2\alpha}\tau} ,1}\vert < \sigma_d^{2v} r/2}.
\end{align}
Distinguishing between $X^d_{\floor{ d^{2\alpha}\tau} ,1}< 0$ and $X^d_{\floor{ d^{2\alpha}\tau} ,1}\geq 0$, it follows that
\begin{align}
\vert D_{2,\tau}^{(1)}\vert &\leq \left\lvert \sgn \left(  X^d_{\floor{ d^{2\alpha}\tau} ,1}\right)\right\rvert \\
&\times\left\vert \frac{h^\rmL(\ell)}{2} - \ell d^{\alpha}\left(\frac{\sigma_d}{2}\1_{\vert X^d_{\floor{ d^{2\alpha}\tau} ,1}\vert \geq \sigma_d^{2v} r/2}+\frac{ \sigma_d}{2}  \1_{\vert X^d_{\floor{ d^{2\alpha}\tau} ,1}\vert < \sigma_d^{2v} r/2}\right)\expe{\1_{\acc^d_{\ceil{ d^{2\alpha}\tau}} } \middle| \mathcal{F}^d_{\floor{ d^{2\alpha}\tau},1} } \right\rvert \\
&\leq \frac{1}{2}\left\lvert  h^\rmL(\ell) - \ell^2 \expe{\1_{\acc^d_{\ceil{ d^{2\alpha}\tau}} } \middle| \mathcal{F}^d_{\floor{ d^{2\alpha}r \rfloor,1} }} \right\rvert,
\end{align}
where we recall that $\sigma_d = \ell d^{-\alpha}$ with $\alpha=1/3$.
Using the triangle inequality we obtain
\begin{align}
&2\expe{\vert D_{2,\tau}^{(1)}\vert} \leq \expe{\left\lvert h^\rmL(\ell)- \ell^2 \expe{1\wedge \exp\left(\sum_{i=1}^d \phi_d\left(X_{\floor{ d^{2\alpha}\tau},i}^d,Z_{\ceil{ d^{2\alpha}\tau},i}^d \right)\right)\middle| \mathcal{F}^d_{\floor{ d^{2\alpha}\tau},1}} \right\rvert} \\
& \leq \expe{\left\lvert h^\rmL(\ell)- \ell^2 \expe{1\wedge \exp\left(\sum_{i=2}^d \phi_d\left( X_{\floor{ d^{2\alpha}\tau},i}^d,Z_{\ceil{ d^{2\alpha}\tau},i}^d\right)\right) \middle| \mathcal{F}^d_{\floor{ d^{2\alpha}\tau},1} }\right\rvert} \\
& + \ell^2\expe{\left\lvert 1\wedge \exp\left(\sum_{i=2}^d \phi_d\left( X_{\floor{ d^{2\alpha}\tau},i}^d,Z_{\ceil{ d^{2\alpha}\tau},i}^d\right)\right)-1\wedge \exp\left(\sum_{i=1}^d \phi_d\left(X_{\floor{ d^{2\alpha}\tau},i}^d,Z_{\ceil{ d^{2\alpha}\tau},i}^d\right)\right)\right\rvert} ,
\end{align}
where we used Jensen's inequality to remove the conditional expectation in the last term. 
Recalling that $x\mapsto 1\wedge \exp(x)$ is 1-Lipschitz, we can then bound the second term 
\begin{align}
&\ell^2\expe{\left\lvert 1\wedge \exp\left(\sum_{i=2}^d \phi_d\left(X_{\floor{ d^{2\alpha}\tau},i}^d,Z_{\ceil{ d^{2\alpha}\tau},i}^d\right)\right)-1\wedge \exp\left(\sum_{i=1}^d \phi_d\left(X_{\floor{ d^{2\alpha}\tau},i}^d,Z_{\ceil{ d^{2\alpha}\tau},i}^d\right)\right)\right\rvert} \\
\label{eq:1lipschitz_bound}
&\qquad \leq \ell^2 \expe{\left\vert \phi_d\left(X_{\floor{ d^{2\alpha}\tau},1}^d,Z_{\ceil{ d^{2\alpha}\tau},1}^d\right)\right\vert }  , \\
&\qquad \leq \ell^2 \expe{\phi_d\left(X_{\floor{ d^{2\alpha}\tau},1}^d,Z_{\ceil{ d^{2\alpha}\tau},1}^d\right)^2}^{1/2},
\end{align}
where the final expectation converges to zero as $d\to \infty$ by Proposition~\ref{prop:laplace_var}.
For the remaining term in $D_{2, \tau}^{(1)}$, since $(X^d_{\floor{ d^{2\alpha}\tau},i},Z^d_{\floor{ d^{2\alpha}\tau} ,i})_{2 \leq i \leq n}$ is independent of $\mathcal{F}^d_{\floor{ d^{2\alpha}\tau},1}$, we have
\begin{align}
&\ell^2 \expe{1\wedge \exp\left(\sum_{i=2}^d \phi_d\left( X_{\floor{ d^{2\alpha}\tau},i}^d,Z_{\ceil{ d^{2\alpha}\tau},i}^d\right)\right) \middle| \mathcal{F}^d_{\floor{ d^{2\alpha}\tau},1} } \\
&\qquad\qquad=\ell^2 \expe{1\wedge \exp\left(\sum_{i=2}^d \phi_d\left( X_{\floor{ d^{2\alpha}\tau},i}^d,Z_{\ceil{ d^{2\alpha}\tau},i}^d\right)\right)  },
\end{align}
and, using again the fact that $x \mapsto 1\wedge \exp(x)$ is 1-Lipschitz, we have
\begin{align}
& \left\lvert h^\rmL(\ell)- \ell^2 \expe{1\wedge \exp\left(\sum_{i=2}^d \phi_d\left(X_{\floor{ d^{2\alpha}\tau},i}^d,Z_{\ceil{ d^{2\alpha}\tau},i}^d\right)\right) }\right\rvert\\
 &\qquad\leq \left\lvert h^\rmL(\ell)- \ell^2 \expe{1\wedge \exp\left(\sum_{i=1}^d \phi_d\left(X_{\floor{ d^{2\alpha}\tau},i}^d,Z_{\ceil{ d^{2\alpha}\tau},i}^d}\right)\right)\right\rvert \\
& \qquad\qquad + \ell^2 \expe{\left\vert \phi_d\left(X_{\floor{ d^{2\alpha}\tau},1}^d,Z_{\ceil{ d^{2\alpha}\tau},1}^d\right)\right\vert}  .
\end{align}
The last term goes to 0 as shown in~\eqref{eq:1lipschitz_bound}, and, as $h^\rmL(\ell)=\ell^2 a^\rmL(\ell)$, with
\begin{equation}
    a^\rmL(\ell)=\lim_{d \to \infty} \expe{1\wedge \exp\left(\sum_{i=1}^d \phi_{d,i}\right) } ,
\end{equation}
by Theorem~\ref{theo:acceptance_laplace}, we obtain
\begin{equation}
\lim_{d \to  \infty}\left\lvert h^\rmL(\ell)- \ell^2 \expe{1\wedge \exp\left(\sum_{i=2}^d \phi_d\left(X_{\floor{ d^{2\alpha}\tau},i}^d,Z_{\ceil{ d^{2\alpha}\tau},i}^d\right)\right) \right\rvert} = 0 ,
\end{equation}
showing that $D_{2, \tau}^{(1)}\to 0$ as $d\to \infty$. 
To obtain convergence of $T_1^d$, we observe that for any $\tau\in[s,t]$,
$D_{1,\tau}^{(1)}$ and $D_{2,\tau}^{(1)}$ follow the same distributions as
$D_{1,s}^{(1)}$ and $D_{2,s}^{(1)}$, since for any $k \in \mathbb{N}$, $X_k^d$
has distribution $\piLaplace_d$. Therefore, the convergence towards zero of
$\expeLine{\vert D_{1,\tau}^{(1)}\vert}$ and $\expeLine{\vert D_{2,\tau}^{(1)}\vert}$
is uniform for $\tau\in [s,t]$, which gives us $T_1^d\to 0$ as $d\to \infty$.

Finally, consider $T_2^d$. Using analogous arguments to those used for $T_1^d$, we obtain
\begin{align}
\expe{\vert T_2^d\vert } &\leq C \int_s^t \frac{\ell^2}{2}\expe{\left\lvert \expe{ b_d\left(X^d_{\floor{ d^{2\alpha}\tau},1}, Z^d_{\ceil{d^{2\alpha}\tau}, 1}\right)^2 \1_{\acc_{\ceil{ d^{2\alpha}\tau }}^d} \middle| \mathcal{F}_{\floor{ d^{2\alpha}\tau },1}^d }  - a^\rmL(\ell, r) \right\rvert}\rmd \tau\\
&\leq C  \int_s^t\frac{\ell^2}{2} \left\lbrace \expe{\vert D_{1,\tau}^{(2)}\vert}  + \expe{\vert D_{2,\tau}^{(2)}\vert } \expe{\vert D_{3,\tau}^{(2)}\vert } \right\rbrace \rmd \tau,\notag
\end{align}
where we define
\begin{align}
\label{eq:def_dr_2} 
D_{1,\tau}^{(2)} &=   \expe{ \left(Z^d_{\ceil{ d^{2\alpha}\tau} , 1} \right)^2\1_{\acc^d_{\ceil{ d^{2\alpha}\tau }}}  \middle| \mathcal{F}^d_{\floor{ d^{2\alpha}\tau },1} } -a^\rmL(\ell, r), \\
D_{2,\tau}^{(2)} &=   \left(\frac{\sigma_d}{2}\sgn(X^d_{\floor{ d^{2\alpha}\tau} ,1})\indicatorDD{\vert X^d_{\floor{ d^{2\alpha}\tau} ,1}\vert \geq \sigma_d^{2v} r/2}\right.\\
&\qquad\qquad\quad+\left.\frac{ 1}{\sigma_d^{2v-1}r}  X^d_{\floor{ d^{2\alpha}\tau} ,1}\indicatorDD{\vert X^d_{\floor{ d^{2\alpha}\tau} ,1}\vert < \sigma_d^{2v} r/2}\right)^2\notag
\times\expe{\1_{\acc^d_{\ceil{ d^{2\alpha}\tau}} } \middle| \mathcal{F}^d_{\floor{ d^{2\alpha}\tau},1} } ,\notag\\
D_{3,\tau}^{(2)} &=   2\left(\frac{\sigma_d}{2}\sgn(X^d_{\floor{ d^{2\alpha}\tau} ,1})\indicatorDD{\vert X^d_{\floor{ d^{2\alpha}\tau} ,1}\vert \geq \sigma_d^{2v} r/2}\right.\\
                 &\left.\qquad+\frac{ 1}{\sigma_d^{2v-1}r}  X^d_{\floor{ d^{2\alpha}\tau} ,1}\indicatorDD{\vert X^d_{\floor{ d^{2\alpha}\tau} ,1}\vert < \sigma_d^{2v} r/2}\right)\notag
\expe{Z^d_{\ceil{ d^{2\alpha}\tau} , 1} \1_{\acc^d_{\ceil{ d^{2\alpha}\tau}} } \middle| \mathcal{F}^d_{\floor{ d^{2\alpha}\tau},1} } .\notag
\end{align}
Using~\eqref{eq:x_bound}, Cauchy-Schwarz's inequality and the fact that the moments of $Z^d_{\ceil{ d^{2\alpha}\tau} , 1}$ are bounded we have
\begin{align}
    \expe{\vert D_{2,\tau}^{(2)}\vert} \leq \frac{\sigma_d^2}{4} \underset{d\to \infty}{\longrightarrow} 0,\qquad
\expe{\vert D_{3,\tau}^{(2)}\vert }\leq C\sigma_d \underset{d\to \infty}{\longrightarrow} 0,
\end{align}
since $\sigma_d = \ell d^{-\alpha}$ with $\alpha= 1/3$.
The remaining term is bounded similarly to $D_{2, \tau}^{(1)}$, using the fact that $x\mapsto 1\wedge \exp(x)$ is 1-Lipschitz, we have 
\begin{align}
&\expe{\vert D_{3,\tau}^{(2)}\vert}\\
&\leq \expe{\left \vert \expe{ \left(Z^d_{\ceil{ d^{2\alpha}\tau} , 1} \right)^2\left(1\wedge \exp\left(\sum_{i=2}^d \phi_d\left(X_{\floor{ d^{2\alpha}\tau},i}^d,Z_{\ceil{ d^{2\alpha}\tau},i}^d\right)\right) \right)  \middle| \mathcal{F}^d_{\floor{ d^{2\alpha}\tau },1} } -a^\rmL(\ell, r)\right \rvert}\\
&\qquad+ \expe{ \left(Z^d_{\ceil{ d^{2\alpha}\tau} , 1} \right)^2 \left \vert\phi_d\left(X_{\floor{ d^{2\alpha}\tau},1}^d,Z_{\ceil{ d^{2\alpha}\tau},1}^d\right)\right \rvert   }.
\end{align}
The second expectation is bounded as~\eqref{eq:1lipschitz_bound} using Cauchy-Schwarz's inequality and Proposition~\ref{prop:laplace_var}. For the first expectation, we use the conditional independence of the components of $Z^d_{\ceil{ d^{2\alpha}\tau}}$ and write
\begin{align}
&\expe{ \left(Z^d_{\ceil{ d^{2\alpha}\tau} , 1} \right)^2\left(1\wedge \exp\left(\sum_{i=2}^d \phi_d\left(X_{\floor{ d^{2\alpha}\tau},i}^d,Z_{\ceil{ d^{2\alpha}\tau},i}^d\right)\right) \right)  \middle| \mathcal{F}^d_{\floor{ d^{2\alpha}\tau },1} } \\
&\qquad\qquad = \expe{\left(Z^d_{\ceil{ d^{2\alpha}\tau} , 1} \right)^2}\expe{ \left(1\wedge \exp\left(\sum_{i=2}^d \phi_d\left(X_{\floor{ d^{2\alpha}\tau},i}^d,Z_{\ceil{ d^{2\alpha}\tau},i}^d\right)\right) \right) }\\
&\qquad\qquad = \expe{ \left(1\wedge \exp\left(\sum_{i=2}^d \phi_d\left(X_{\floor{ d^{2\alpha}\tau},i}^d,Z_{\ceil{ d^{2\alpha}\tau},i}^d\right)\right) \right) }.
\end{align}
It follows that $\expeLine{\vert D_{3,\tau}^{(2)}\vert} \to 0$ as $d\to \infty$ since, by Theorem~\ref{theo:acceptance_laplace},
\begin{align}
\left\lvert \expe{ \left(1\wedge \exp\left(\sum_{i=2}^d \phi_d\left(X_{\floor{ d^{2\alpha}\tau},i}^d,Z_{\ceil{ d^{2\alpha}\tau},i}^d\right)\right) \right) } -a^\rmL(\ell, r)\right\rvert \to 0.
\end{align}
Combining the results for $T_i^d$, $i=1,\dots, 5$ we obtain the result.
\end{proof}


\section{Moments and integrals for the Laplace distribution}
\label{app:laplace_computations}
\subsection{Moments of acceptance ratio for the Laplace distribution}
\label{app:laplace_moments}
The indicator functions in the definition of $\phi_d$ identify four different regions:
\begin{align}
    R_1&:=     \left\lbrace (x, z): \vert x\vert< \sigma^{2v}r/2 \wedge \left\lvert \left(1-\frac{1}{\sigma^{2(v-1)}r}\right)x + \sigma z\right\rvert< \sigma^{2v}r/2 \right\rbrace,\\
    R_2&:=    \left\lbrace(x, z): \vert x\vert\geq \sigma^{2v}r/2  \wedge  \left\lvert x-\frac{\sigma^2}{2}\sgn(x)+\sigma z\right\rvert<  \sigma^{2v}r/2 \right\rbrace,\\
    R_3&:=
    \left\lbrace(x, z): \vert x\vert< \sigma^{2v}r/2 \wedge \left\lvert \left(1-\frac{1}{\sigma^{2(v-1)}r}\right)x + \sigma z\right\rvert\geq\sigma^{2v}r/2 \right\rbrace,\\
    R_4&:=    \left\lbrace (x, z):\vert x\vert> \sigma^{2v}r/2  \wedge  \left\lvert x-\frac{\sigma^2}{2}\sgn(x)+\sigma z\right\rvert> \sigma^{2v}r/2 \right\rbrace,
\end{align}
with corresponding acceptance ratios
\begin{align}
    \phi_d^{1}(x, z) &= \vert x\vert -\left\lvert \left(1-\frac{1}{\sigma^{2(v-1)}r}\right)x + \sigma z\right\rvert+\frac{z^2}{2}\\
    &-\frac{1}{2\sigma^2}\left(\left(\frac{2}{\sigma^{2(v-1)}r}-\frac{1}{\sigma^{4(m-1)}r^2}\right)x-\left(1-\frac{1}{\sigma^{2(v-1)}r}\right) \sigma z\right)^2\\
    \phi_d^2(x, z)  &= \vert x\vert -\left\lvert x-\frac{\sigma^2}{2}\sgn(x)+\sigma z\right\rvert+\frac{z^2}{2}\\
    &-\frac{1}{2\sigma^2}\left(\frac{1}{\sigma^{2(v-1)}r}x+\left(1-\frac{1}{\sigma^{2(v-1)}r}\right)\left(\frac{\sigma^2}{2}\sgn(x)-\sigma z\right)\right)^2\\
    \phi_d^3(x, z) &= \vert x\vert -\left\lvert \left(1-\frac{1}{\sigma^{2(v-1)}r}\right)x + \sigma z\right\rvert+\frac{z^2}{2}\\
    &\qquad-\frac{1}{2\sigma^2}\left(\frac{1}{\sigma^{2(v-1)}r}x -\sigma z+\frac{\sigma^2}{2}\sgn\left[\left(1-\frac{1}{\sigma^{2(v-1)}r}\right)x+ \sigma z\right]\right)^2\\
    \phi_d^4(x, z)&= \vert x\vert -\left\lvert x-\frac{\sigma^2}{2}\sgn(x)+\sigma z\right\rvert+\frac{z^2}{2}\\
    &-\frac{1}{2\sigma^2}\left(\frac{\sigma^2}{2}\sgn(x) -\sigma z+\frac{\sigma^2}{2}\sgn\left[x-\frac{\sigma^2}{2}\sgn(x)+\sigma z\right]\right)^2.
\end{align}
Let us denote
\begin{align}
&A_1:= \left\lbrace x: 0\leq x< \frac{\sigma^{2v}r}{2}\right\rbrace,\quad
A_2:= \left\lbrace x: -\frac{\sigma^{2v}r}{2}< x\leq 0\right\rbrace,\\
&A_3:= \left\lbrace x:  x\geq \frac{\sigma^{2v}r}{2}\right\rbrace,\quad
A_4:= \left\lbrace x:  x\leq -\frac{\sigma^{2v}r}{2}\right\rbrace,
\end{align}
and
\begin{align}
&B_1:= \left\lbrace z:0\leq \left(1-\frac{1}{\sigma^{2(v-1)}r}\right)x + \sigma z< \frac{\sigma^{2v}r}{2}\right\rbrace, \\
&B_2:= \left\lbrace z:-\frac{\sigma^{2v}r}{2}< \left(1-\frac{1}{\sigma^{2(v-1)}r}\right)x + \sigma z\leq 0 \right\rbrace, \\
& B_3:= \left\lbrace z: \left(1-\frac{1}{\sigma^{2(v-1)}r}\right)x + \sigma z\geq \frac{\sigma^{2v}r}{2}\right\rbrace,\\
& B_4:= \left\lbrace z:\left(1-\frac{1}{\sigma^{2(v-1)}r}\right)x + \sigma z\leq-\frac{\sigma^{2v}r}{2}\right\rbrace,
\end{align}
and
\begin{align}
&C_1:= \left\lbrace (x, z):0\leq x-\frac{\sigma^2}{2}\sgn(x)+\sigma z< \frac{\sigma^{2v}r}{2} \right\rbrace, \\
&C_2:= \left\lbrace (x, z):-\frac{\sigma^{2v}r}{2} < x-\frac{\sigma^2}{2}\sgn(x)+\sigma z\leq 0\right\rbrace\\
&C_3:= \left\lbrace (x, z): x-\frac{\sigma^2}{2}\sgn(x)+\sigma z\geq \frac{\sigma^{2v}r}{2} \right\rbrace, \\
&C_4:= \left\lbrace (x, z):x-\frac{\sigma^2}{2}\sgn(x)+\sigma z\leq-\frac{\sigma^{2v}r}{2} \right\rbrace,
\end{align}
so that, up to a set of null measure,
\begin{align}
R_1& =(A_1\cup A_2)\cap (B_1\cup B_2), \qquad \qquad R_2=(A_3\cup A_4)\cap (C_1\cup C_2),\\
R_3&=(A_1\cup A_2)\cap (B_3\cup B_4), \qquad\qquad R_4=(A_3\cup A_4)\cap (C_3\cup C_4).
\end{align}

\begin{prop}
\label{prop:laplace_exp}
Take $X$ a Laplace random variable and $Z$ a standard normal random variable independent of $X$, then if $\sigma^2 = \ell^2d^{-2/3}$, we have
\begin{align}
\lim_{d\to +\infty} d \expe{\phi_d(X, Z)} = -\frac{\ell^3}{3\sqrt{2\uppi}}.
\end{align}
\end{prop}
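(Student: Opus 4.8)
The plan is to evaluate $\expe{\phi_d(X,Z)}$ by integrating the explicit expressions for $\phi_d^1,\dots,\phi_d^4$ against the product density $\tfrac12 e^{-|x|}\varphi(z)$ over the four regions $R_1,\dots,R_4$ identified above, and to read off the coefficient of $\sigma_d^3$; since $\sigma_d^2=\ell^2 d^{-2/3}$ gives $d=\ell^3\sigma_d^{-3}$, it suffices to show $\expe{\phi_d(X,Z)} = -\sigma_d^3/(3\sqrt{2\uppi})+o(\sigma_d^3)$. Accordingly I would write
\[
\expe{\phi_d(X,Z)} = \sum_{k=1}^4 \int\!\!\int_{R_k} \phi_d^k(x,z)\,\tfrac12 e^{-|x|}\varphi(z)\,\rmd x\,\rmd z ,
\]
and argue that the entire order-$\sigma_d^3$ contribution comes from the part of $R_4$ on which the proposal crosses the origin.

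First I would treat $R_4$, where both $x$ and the proposal $y=x-(\sigma_d^2/2)\sgn(x)+\sigma_d z$ lie outside $[-\lambda_d,\lambda_d]$, splitting according to whether $\sgn(y)=\sgn(x)$. A direct simplification of $\phi_d^4$ shows that on the sign-preserving part $\phi_d^4\equiv 0$: away from the origin the potential $|x|$ has constant gradient, so the proposal is exactly $\piLaplace$-reversible and contributes nothing. On the sign-changing part (say $x>0$, $y<0$, together with its mirror image under $x\mapsto -x$, $z\mapsto -z$) the same simplification gives $\phi_d^4(x,z)=x+y=2x-\sigma_d^2/2+\sigma_d z$. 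Writing $I_+$ for the $x>0$ piece and rescaling $x=\sigma_d t$, I would reduce it to
\[
I_+ = \frac{\sigma_d^2}{2}\int_{a}^{\infty} e^{-\sigma_d t}\Big[(2t-\tfrac{\sigma_d}2)\Phi(z^*)-\varphi(z^*)\Big]\,\rmd t , \qquad z^*=-t+\tfrac{\sigma_d}2-a,
\]
where $a=\lambda_d/\sigma_d=\sigma_d^{2v-1}r/2$ encodes the soft threshold; the integral converges absolutely because the inner Gaussian constraint supplies a factor $\Phi(z^*)$ with a super-exponential tail in $t$.

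The heart of the proof is the $\sigma_d$-expansion of the bracketed integral. At order $\sigma_d^0$ one finds $\int_0^\infty[2t\Phi(-t)-\varphi(t)]\,\rmd t=0$, so the $\sigma_d^2$ term of $\expe{\phi_d}$ vanishes, consistent with the expected log-acceptance being of order $\sigma_d^3$. Collecting the order-$\sigma_d$ terms — from $e^{-\sigma_d t}=1-\sigma_d t+\cdots$, from the drift shift $-\sigma_d/2$ inside $(2t-\sigma_d/2)$, from Taylor-expanding $\Phi(z^*)$ and $\varphi(z^*)$ about $-t$, and from the lower limit $a$ — I expect the terms carrying the threshold $a$ to cancel exactly against the shift $\delta:=\sigma_d/2-a$; this cancellation is precisely why the limit is independent of $r$ and $v$. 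Using $\int_0^\infty t^2\Phi(-t)\,\rmd t=2/(3\sqrt{2\uppi})$, $\int_0^\infty t\varphi(t)\,\rmd t=1/\sqrt{2\uppi}$ and $\int_0^\infty\Phi(-t)\,\rmd t=1/\sqrt{2\uppi}$, the order-$\sigma_d$ part of the bracket equals $-1/(3\sqrt{2\uppi})$, so $I_+=-\sigma_d^3/(6\sqrt{2\uppi})+o(\sigma_d^3)$ and the full sign-changing contribution is $2I_+=-\sigma_d^3/(3\sqrt{2\uppi})+o(\sigma_d^3)$; multiplying by $d=\ell^3\sigma_d^{-3}$ then gives the claimed limit $-\ell^3/(3\sqrt{2\uppi})$.

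It remains to show that $R_1,R_2,R_3$ — where $x$ or the proposal falls inside the soft-thresholding window $[-\lambda_d,\lambda_d]$ — contribute only $o(\sigma_d^3)$. I would bound each such contribution by the product of the measure of $R_k$ under $\piLaplace\otimes\varphi$, which is small because $\lambda_d=\sigma_d^{2v}r/2$ with $v\ge 1$, and a pointwise estimate on $|\phi_d^k|$ there, the required integrals being exactly those collected in Appendix~\ref{app:laplace_moments}. The main obstacle is the bookkeeping in this last step, combined with the uniform control of the Taylor remainders in the expansion of $I_+$: one must justify interchanging limit and integral (dominated convergence, using the super-exponential Gaussian tail in $t$ and the finiteness of all moments of $\piLaplace$) and verify that every term involving the threshold $a$ drops out, so that the $\lambda_d=0$ (sG-MALA) value persists for all admissible $\lambda_d$.
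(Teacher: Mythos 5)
Your treatment of $R_4$ is correct and reproduces the paper's key term: the sign-preserving part of $\phi_d^4$ vanishes identically, the sign-changing part reduces to $2x-\sigma_d^2/2+\sigma_d z$, and your expansion of $I_+$ (including the internal cancellation of the threshold terms against the drift shift) gives $\lim_{d\to\infty} d\,\expe{\phi_d^4(X,Z)\indicator{R_4}{X, Z}} = -\ell^3/(3\sqrt{2\uppi})$, exactly as in the paper. The genuine gap is your claim that $R_1,R_2,R_3$ contribute only $o(\sigma_d^3)$. This is false precisely in the case $v=1$, $r>0$, which the proposition must cover (Theorem~\ref{theo:acceptance_laplace} allows all $v\geq 1$, and $v=1,r=1$ is P-MALA). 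When $v=1$ the window has width $\lambda_d=\sigma_d^2 r/2$, so $R_2$ and $R_3$ carry mass of order $\sigma_d^2$ under the joint law of $(X,Z)$, while $|\phi_d^2|$ and $|\phi_d^3|$ are of order $\sigma_d$ there (on $R_3$, for instance, the two $z^2/2$ terms cancel and what survives is of order $\sigma_d|z|$). Your proposed bound --- measure of the region times a pointwise estimate --- therefore yields $O(\sigma_d^3)$, not $o(\sigma_d^3)$, and indeed the paper's computation gives $\lim_{d\to\infty} d\,\expe{\phi_d^2(X,Z)\indicator{R_2}{X, Z}} = +\ell^3 r/(2\sqrt{2\uppi})$ and $\lim_{d\to\infty} d\,\expe{\phi_d^3(X,Z)\indicator{R_3}{X, Z}} = -\ell^3 r/(2\sqrt{2\uppi})$: each boundary region contributes at exactly the critical order with a non-vanishing, $r$-dependent coefficient, and only their \emph{sum} vanishes.

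This also means the $r$- and $v$-independence of the limit is misattributed in your argument: it is not explained by the cancellation of the threshold terms inside $I_+$, but (for $v=1$) by an exact cancellation between the two regions $R_2$ and $R_3$, which your proof never computes. To close the gap you would need to carry out the order-$\sigma_d^3$ expansion of the integrals over $R_2$ and $R_3$ (as the paper does in Appendix~\ref{app:integrals}) and exhibit this cancellation; your "small region" bound suffices only when $v>1$ or $r=0$ (sG-MALA), where these contributions are genuinely negligible.
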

\begin{proof}
Taking expectations of $\phi_d^i\1_{R_i}$ for $i=1, \dots, 4$ and exploiting the symmetry of the laws of $X$ and $Z$, we can write
\begin{align}
    &\expe{\phi_d^1(X, Z)\indicator{R_1}{X, Z}}\\
    &\qquad= 2\expe{\left(\frac{1}{\sigma^{2(v-1)}r}X-\sigma Z\right)\indicator{A_1}{X}\indicator{B_1}{X, Z}}\\
&\qquad+2\expe{\left(2X-\frac{1}{\sigma^{2(v-1)}r}X+\sigma Z\right)\indicator{A_1}{X}\indicator{B_2}{X, Z}}\\
    &\qquad+2 \mathbb{E}\left[\left(\frac{Z^2}{2}-\frac{1}{2\sigma^2}\left(\left(\frac{2}{\sigma^{2(v-1)}r}-\frac{1}{\sigma^{4(m-1)}r^2}\right)X-\left(1-\frac{1}{\sigma^{2(v-1)}r}\right)\sigma Z\right)^2\right)\right.\\
    &\left.\qquad\qquad\times\indicator{A_1}{X}\indicator{B_1\cup B_2}{X, Z}\right],\\
    &\expe{\phi_d^2(X, Z)\indicator{R_2}{X, Z}}\\
&\qquad=2\expe{\left(\frac{\sigma^2}{2}-\sigma Z\right)\indicator{A_3}{X}\indicator{C_1}{X, Z}}\\
&\qquad+2\expe{ \left(2X-\frac{\sigma^2}{2}+\sigma Z\right)\indicator{A_3}{X}\indicator{C_2}{X, Z}}\\
&\qquad+2\mathbb{E}\left[\left(\frac{Z^2}{2}-\frac{1}{2\sigma^2}\left(\frac{1}{\sigma^{2(v-1)}r}X+\left(1-\frac{1}{\sigma^{2(v-1)}r}\right)\left(\frac{\sigma^2}{2}-\sigma Z\right)\right)^2\right)\right.\\
&\qquad\qquad\left.\times\indicator{A_3}{X}\indicator{C_1\cup C_2}{X, Z}\right]  ,\\
    &\expe{\phi_d^3(X, Z)\indicator{R_3}{X, Z}} \\ &\qquad=2\expe{\left(\frac{1}{\sigma^{2(v-1)}r}X-\sigma Z\right)\indicator{A_1}{X}\indicator{B_3}{X, Z}}\\
&\qquad+2\expe{\left(2X-\frac{1}{\sigma^{2(v-1)}r}X+\sigma Z\right)\indicator{A_1}{X}\indicator{B_4}{X, Z}}\\
&\qquad+2\expe{\left(\frac{Z^2}{2}-\frac{1}{2\sigma^2}\left(\frac{1}{\sigma^{2(v-1)}r}X -\sigma Z+\frac{\sigma^2}{2}\right)^2\right)\indicator{A_1}{X}\indicator{ B_3}{X, Z}}  \\
&\qquad+2\expe{\left(\frac{Z^2}{2}-\frac{1}{2\sigma^2}\left(\frac{1}{\sigma^{2(v-1)}r}X -\sigma Z-\frac{\sigma^2}{2}\right)^2\right)\indicator{A_1}{X}\indicator{ B_4}{X, Z}},\\
   &\expe{\phi_d^4(X, Z)\indicator{R_4}{X, Z}}=2\expe{\left(2X-\frac{\sigma^2}{2}+\sigma Z\right)\indicator{A_3}{X}\indicator{C_4}{X, Z}}.
\end{align}
Using the integrals in Appendix~\ref{app:integrals} and Lebesgue's dominated convergence theorem, we find that for $\alpha=\beta =1/3$ and $r\geq0$
\begin{align}
\lim_{d\to+\infty}d\expe{\phi_d^1(X, Z)}&= 0\\
\lim_{d\to+\infty}d\expe{\phi_d^2(X, Z)}&=  -2\frac{\ell^3r}{4\sqrt{2\uppi}}\int_{-\infty}^{0}e^{-z^2/2}z\rmd z=\frac{\ell^3r}{2\sqrt{2\uppi}}\\
\lim_{d\to+\infty}d\expe{\phi_d^3(X, Z)}&= \frac{3\ell^3r}{8\sqrt{2\uppi}}\int_{-\infty}^{0}e^{-z^2/2}z\rmd z-\frac{\ell^3r}{8\sqrt{2\uppi}}\int^{+\infty}_{0}e^{-z^2/2}z\rmd z = -\frac{\ell^3r}{2\sqrt{2\uppi}}\\
\lim_{d\to+\infty}d\expe{\phi_d^4(X, Z)}&= \frac{\ell^3}{6\sqrt{2\uppi}}\int_{-\infty}^{0}e^{-z^2/2}z^3\rmd z = -\frac{\ell^3}{3\sqrt{2\uppi}},
\end{align}
which gives
\begin{align}
\lim_{d\to+\infty}d\expe{\phi_d(X, Z)}&= \lim_{d\to+\infty}d\left(\expe{\phi_d^1(X, Z)}+\expe{\phi_d^2(X, Z)}+\expe{\phi_d^3(X, Z)}+\expe{\phi_d^4(X, Z)}  \right)\\
&=-\frac{\ell^3}{3\sqrt{2\uppi}}.
\end{align}
For $\alpha=1/3, \beta =m/3$ for $m> 1$ and $r\geq0$ we have
\begin{align}
\lim_{d\to+\infty}d\expe{\phi_d^1(X, Z)}&= 0\\
\lim_{d\to+\infty}d\expe{\phi_d^2(X, Z)}&= 0\\
\lim_{d\to+\infty}d\expe{\phi_d^3(X, Z)}&= 0\\
\lim_{d\to+\infty}d\expe{\phi_d^4(X, Z)}&= \frac{\ell^3}{6\sqrt{2\uppi}}\int_{-\infty}^{0}e^{-z^2/2}z^3\rmd z = -\frac{\ell^3}{3\sqrt{2\uppi}},
\end{align}
which gives
\begin{align}
\lim_{d\to+\infty}d\expe{\phi_d(X, Z)}&= \lim_{d\to+\infty}d\left(\expe{\phi_d^1(X, Z)}+\expe{\phi_d^2(X, Z)}+\expe{\phi_d^3(X, Z)}+\expe{\phi_d^4(X, Z)}  \right)\\
&=-\frac{\ell^3}{3\sqrt{2\uppi}}.
\end{align}
\end{proof}

\begin{prop}
\label{prop:laplace_var}
Take $X$ a Laplace random variable and $Z$ a standard normal random variable independent of $X$, then if $\sigma^2 = \ell^2d^{-2/3}$
\begin{align}
\lim_{d\to +\infty}d\var\left(\phi_d(X, Z)\right) = 
\frac{2\ell^3}{3\sqrt{2\uppi}}.
\end{align}
\end{prop}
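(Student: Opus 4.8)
The plan is to reduce the variance to a second moment and then to isolate the single region that survives in the limit. Writing $\var(\phi_d(X,Z)) = \expe{\phi_d(X,Z)^2} - \expe{\phi_d(X,Z)}^2$ and multiplying by $d$, the squared-mean term is negligible: by Proposition~\ref{prop:laplace_exp} we have $d\,\expe{\phi_d(X,Z)} \to -\ell^3/(3\sqrt{2\uppi})$, whence $d\,\expe{\phi_d(X,Z)}^2 = d^{-1}\parenthese{d\,\expe{\phi_d(X,Z)}}^2 \to 0$. Thus it suffices to prove $\lim_{d\to\infty} d\,\expe{\phi_d(X,Z)^2} = 2\ell^3/(3\sqrt{2\uppi})$. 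Since the regions $R_1,\dots,R_4$ partition $\rset^2$ up to a null set and $\phi_d = \sum_{i=1}^4 \phi_d^i\,\indicator{R_i}{X,Z}$ on them, the squares do not cross-interact and $\expe{\phi_d(X,Z)^2} = \sum_{i=1}^4 \expe{\phi_d^i(X,Z)^2\,\indicator{R_i}{X,Z}}$.

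I would then evaluate each term exactly as in Proposition~\ref{prop:laplace_exp}: decompose each $R_i$ into the sets $A_j,B_j,C_j$, use the symmetry of the laws of $X$ and $Z$ to collapse the $x>0$ and $x<0$ contributions into a factor $2$, rescale $x=\sigma_d u$, and pass to the limit with Lebesgue's dominated convergence theorem, the integrands being polynomials in $z$ dominated by Gaussian tails (all relevant integrals are collected in Appendix~\ref{app:integrals}). The dominant contribution comes from $R_4$. On the sub-regions $A_3\cap C_3$ and $A_4\cap C_4$, where the proposal does not change sign, the potential $\abs{\cdot}$ is locally affine and a direct computation gives $\phi_d^4\equiv 0$ (the MALA proposal is exact for a linear potential); the only surviving piece is the sign-flip sub-region $A_3\cap C_4$, on which $\phi_d^4(x,z) = 2x - \sigma_d^2/2 + \sigma_d z$, so that by symmetry $\expe{\phi_d^4(X,Z)^2\,\indicator{R_4}{X,Z}} = 2\,\expe{\parenthese{2X-\sigma_d^2/2+\sigma_d Z}^2\,\indicator{A_3}{X}\indicator{C_4}{X,Z}}$.

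Setting $x=\sigma_d u$ and recalling $\lambda_d=\sigma_d^{2v}r/2$, the constraints defining $A_3\cap C_4$ become $u\geq \sigma_d^{2v-1}r/2\to 0$ and $z\leq \sigma_d/2-\sigma_d^{2v-1}r/2-u\to -u$, while the integrand rescales as $\sigma_d^2\parenthese{2u+z-\sigma_d/2}^2$ and the density $\tfrac12 e^{-\sigma_d u}\to\tfrac12$. Using $d\sigma_d^3=\ell^3$ (as $\alpha=1/3$) and dominated convergence, one obtains
\begin{align}
d\,\expe{\phi_d^4(X,Z)^2\,\indicator{R_4}{X,Z}}
&\longrightarrow \ell^3 \int_0^{\infty} \int_{-\infty}^{-u} (2u+z)^2\, \frac{e^{-z^2/2}}{\sqrt{2\uppi}}\,\rmd z\,\rmd u \\
&= -\frac{\ell^3}{3}\int_{-\infty}^0 z^3\, \frac{e^{-z^2/2}}{\sqrt{2\uppi}}\,\rmd z = \frac{2\ell^3}{3\sqrt{2\uppi}} ,
\end{align}
where in the second line I exchanged the order of integration, computed $\int_0^{-z}(2u+z)^2\,\rmd u = -z^3/3$, and used $\int_{-\infty}^0 z^3 e^{-z^2/2}\,\rmd z = -2$. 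This produces exactly the claimed constant.

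It remains to show that $R_1,R_2,R_3$ contribute nothing. In each of these the state $x$ or its proposal is confined to the shrinking interval $[-\lambda_d,\lambda_d]$, whose probability is $O(\sigma_d^{2v})$; on them $\phi_d^i = O(\sigma_d)$, so that $d\,\expe{\phi_d^i(X,Z)^2\,\indicator{R_i}{X,Z}} = O(d^{(1-2v)/3})\to 0$ for every $v\geq 1$ and $r\geq 0$ (when $r=0$ these regions are empty). The hardest part is controlling $R_2$ and $R_3$ rigorously: there $\phi_d^i$ superficially carries an $O(1)$ term $x^2/(2\sigma_d^2)$, and only its exact cancellation against $z^2/2$ brings the quantity down to the negligible order $O(\sigma_d)$; this delicate bookkeeping, which the region decomposition and the appendix integrals make explicit, rather than the $R_4$ computation, is where the real work lies. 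Collecting the four limits gives $\lim_{d\to\infty} d\,\var(\phi_d(X,Z)) = 2\ell^3/(3\sqrt{2\uppi})$.
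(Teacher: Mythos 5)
Your proof is correct and follows essentially the same route as the paper's: reduce the variance to the second moment via Proposition~\ref{prop:laplace_exp}, decompose over the disjoint regions $R_1,\dots,R_4$, show that only $R_4$ survives at order $d^{-1}$, and extract the constant $2\ell^3/(3\sqrt{2\uppi})$ from the Gaussian integral $\int_{-\infty}^0 z^3 e^{-z^2/2}\,\rmd z = -2$. Your explicit remark that $\phi_d^4\equiv 0$ on the no-sign-flip sub-regions (exactness of the MALA move for a locally affine potential), together with the rescaling $x=\sigma_d u$ and Fubini giving $\int_0^{-z}(2u+z)^2\,\rmd u=-z^3/3$, is a slightly cleaner derivation of the same leading term $-z^3\sigma^3/3$ that the paper obtains by Taylor expansion in Appendix~\ref{app:integrals}, and your order estimates $O(d^{(1-2v)/3})$ for $R_1,R_2,R_3$ agree with the paper's leading integrands of order $\sigma^{2v+2}$ there.
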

\begin{proof}
As a consequence of the previous Proposition we have
\begin{align}
\lim_{d\to +\infty}d\expe{\phi_d(X, Z)}^2=0.
\end{align}
Then, because $R_j\cap R_i=\emptyset$ for all $j\neq i$, we have that
\begin{align}
\expe{\phi_d(X, Z)^2} &= \expe{\phi_d^1(X, Z)^2R_1(X, Z)}+\expe{\phi_d^2(X, Z)^2R_2(X, Z)}\\
&+\expe{\phi_d^3(X, Z)^2R_3(X, Z)} +\expe{\phi_d^4(X, Z)^2R_4(X, Z)},
\end{align}
and, exploiting again the symmetry of the laws of $X$ and $Z$, we have
\begin{align}
&\expe{\phi_d^1(X, Z)^2R_1(X, Z)}\\ &=2\mathbb{E}\left[\left(\frac{1}{\sigma^{2(v-1)}r}X-\sigma Z+\frac{Z^2}{2}-\frac{1}{2\sigma^2}\left(\left(\frac{2}{\sigma^{2(v-1)}r}-\frac{1}{\sigma^{4(m-1)}r^2}\right)X-\left(1-\frac{1}{\sigma^{2(v-1)}r}\right)\sigma Z\right)^2\right)^2\right.\\
    &\left.\qquad\times\indicator{A_1}{X}\indicator{B_1}{X, Z}\right],\\
&+2\mathbb{E}\left[\left(2X-\frac{1}{\sigma^{2(v-1)}r}X-\sigma Z+\frac{Z^2}{2}-\frac{1}{2\sigma^2}\left(\left(\frac{2}{\sigma^{2(v-1)}r}-\frac{1}{\sigma^{4(m-1)}r^2}\right)X-\left(1-\frac{1}{\sigma^{2(v-1)}r}\right)\sigma Z\right)^2\right)^2\right.\\
    &\left.\qquad\times\indicator{A_1}{X}\indicator{B_2}{X, Z}\right],\\
   & \expe{\phi_d^2(X, Z)^2\indicator{R_2}{X, Z}}\\
&=2\expe{\left(\frac{\sigma^2}{2}-\sigma Z+\frac{Z^2}{2}-\frac{1}{2\sigma^2}\left(\frac{1}{\sigma^{2(v-1)}r}X+\left(1-\frac{1}{\sigma^{2(v-1)}r}\right)\left(\frac{\sigma^2}{2}-\sigma Z\right)\right)^2\right)^2\indicator{A_3}{X}\indicator{C_1}{X, Z}}\\
&+2\mathbb{E}\left[ \left(2X-\frac{\sigma^2}{2}+\sigma Z+\frac{Z^2}{2}-\frac{1}{2\sigma^2}\left(\frac{1}{\sigma^{2(v-1)}r}X+\left(1-\frac{1}{\sigma^{2(v-1)}r}\right)\left(\frac{\sigma^2}{2}-\sigma Z\right)\right)^2\right)^2\right.\\
    &\left.\qquad\times\indicator{A_1}{X}\indicator{C_2}{X, Z}\right],\\
&    \expe{\phi_d^3(X, Z)^2\indicator{R_3}{X, Z}} \\ &=2\expe{\left(\frac{1}{\sigma^{2(v-1)}r}X-\sigma Z+\frac{Z^2}{2}-\frac{1}{2\sigma^2}\left(\frac{1}{\sigma^{2(v-1)}r}X -\sigma Z+\frac{\sigma^2}{2}\right)^2\right)^2\indicator{A_1}{X}\indicator{ B_3}{X, Z}}  \\
&+2\expe{\left(2X-\frac{1}{\sigma^{2(v-1)}r}X+\sigma Z+\frac{Z^2}{2}-\frac{1}{2\sigma^2}\left(\frac{1}{\sigma^{2(v-1)}r}X -\sigma Z-\frac{\sigma^2}{2}\right)^2\right)^2\indicator{A_1}{X}\indicator{ B_4}{X, Z}},\\
 & \expe{\phi_d^4(X, Z)^2\indicator{R_4}{X, Z}}=2\expe{\left(2X-\frac{\sigma^2}{2}+\sigma Z\right)^2\indicator{A_3}{X}\indicator{C_4}{X, Z}}.
\end{align}
Proceeding as for Proposition~\ref{prop:laplace_exp}, 
using the integrals in Appendix~\ref{app:integrals} and Lebesgue's dominated convergence theorem we can then show that  for $\alpha=1/3, \beta =m/3$ for $m\geq 1$ and $r\geq0$
\begin{align}
\lim_{d\to+\infty}d\var\left(\phi_d(X, Z)\right)&= \frac{2\ell^3}{3\sqrt{2\uppi}}.
\end{align}
\end{proof}

\begin{prop}
\label{prop:laplace_third}
Take $X$ a Laplace random variable and $Z$ a standard normal random variable independent of $X$, then if $\sigma^2 = \ell^2d^{-2/3}$ we have
\begin{align}
\lim_{d\to +\infty}d\expe{\phi_d(X, Z)^3} = 0.
\end{align}
\end{prop}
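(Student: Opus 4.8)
The plan is to reuse verbatim the region decomposition and the computational machinery of Propositions~\ref{prop:laplace_exp} and~\ref{prop:laplace_var}. Because the sets $R_1,\dots,R_4$ are pairwise disjoint (up to a Lebesgue-null set) and $\phi_d=\sum_{i=1}^{4}\phi_d^i\,\1_{R_i}$, taking the cube produces no cross terms, so that $\phi_d(x,z)^3=\sum_{i=1}^{4}\phi_d^i(x,z)^3\,\1_{R_i}(x,z)$ and hence $d\,\expe{\phi_d(X,Z)^3}=d\sum_{i=1}^{4}\expe{\phi_d^i(X,Z)^3\1_{R_i}(X,Z)}$. It therefore suffices to show that each of the four summands tends to $0$. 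As before, I would exploit the symmetry of the laws of $X$ and $Z$ to reduce each $\expe{(\phi_d^i)^3\1_{R_i}}$ to explicit integrals against the Laplace and Gaussian densities, evaluate them by means of the primitives gathered in Appendix~\ref{app:integrals}, and pass to the limit through dominated convergence.

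The essential point is one of order counting. The only region contributing at order $d^{-1}$ to the variance is $R_4$: on its bulk the proposal preserves the sign of $x$, where a direct computation gives $\phi_d^4\equiv 0$, so the entire contribution comes from the sign-flipping part $A_3\cap C_4$ and its mirror $A_4\cap C_3$, on which $\phi_d^4=2x-\sigma_d^2/2+\sigma_d z$. The change of variables $x=\sigma_d\tilde x$ sends this set to $\{\tilde x>0,\ z\le-\tilde x+O(\sigma_d)\}$, on which $\phi_d^4=\sigma_d(2\tilde x+z)+O(\sigma_d^2)$ while the combined Lebesgue--Gaussian weight is of order $\sigma_d$; hence $\expe{(\phi_d^4)^n\1_{R_4}}=O(\sigma_d^{\,n+1})$. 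For $n=2$ the order is exactly $\sigma_d^3=\ell^3/d$, which reproduces the finite variance limit of Proposition~\ref{prop:laplace_var}, whereas for $n=3$ the bound $O(\sigma_d^4)$ already gives $d\,\expe{(\phi_d^4)^3\1_{R_4}}=O(d^{-1/3})\to 0$, with no need to identify its leading coefficient---unlike for the first moment, whose leading $O(\sigma_d^2)$ term must be shown to cancel.

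For the remaining regions $R_1,R_2,R_3$, all confined to $\{|x|<\sigma_d^{2v}r/2\}\cup\{\,x-\tfrac{\sigma_d^2}{2}\sgn(x)+\sigma_d z\in(-\sigma_d^{2v}r/2,\sigma_d^{2v}r/2)\,\}$, the same substitution shows that the potentially $O(1)$ quadratic pieces $z^2/2$ and $\tfrac{1}{2\sigma_d^2}(\cdots)^2$ cancel to leading order---as they must, since they encode the forward and reverse proposal densities---leaving $\phi_d^i=O(\sigma_d)$ on a set of vanishing measure; the cubed contributions are then of strictly higher order than $\sigma_d^3$ and their $d$-fold multiples vanish, consistently with $\lim_d d\,\expe{\phi_d^1}=0$ and with the $O(d^{-1})$ size of the $R_2,R_3$ first-moment contributions (which, incidentally, cancel one another). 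Because the claim only asks for the limit $0$, and not for an exact nonzero constant as in the first two moments, it is enough to keep upper bounds on these integrals rather than their leading coefficients. The one real nuisance---and the main obstacle---is the sheer algebraic length of the cubes $(\phi_d^i)^3$ together with the need to distinguish the boundary cases $v=1$ and $v>1$ exactly as in Proposition~\ref{prop:laplace_exp}; but since every surviving term is manifestly $o(\sigma_d^3)$, the work reduces to order counting, the integrals of Appendix~\ref{app:integrals}, and dominated convergence.
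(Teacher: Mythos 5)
Your proposal is correct and follows essentially the same route as the paper's proof: decompose $\expe{\phi_d(X,Z)^3}$ over the disjoint regions $R_1,\dots,R_4$, exploit the symmetry of the laws of $X$ and $Z$, control each term via the integrals of Appendix~\ref{app:integrals}, and conclude by dominated convergence. The only (harmless) difference is one of economy: where the paper's appendix computes the surviving terms explicitly to leading order $\sigma_d^5$, you observe that crude upper bounds suffice—e.g. $O(\sigma_d^4)$ on $R_4$, without identifying that its leading coefficient in fact vanishes—since the claim only requires $d\,\expe{\phi_d^3}\to 0$ rather than a nonzero limit.
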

\begin{proof}
Following the same structure of the previous propositions we have that
\begin{align}
\expe{\phi(X, Z)^3} &= \expe{\phi_d^1(X, Z)^3R_1(X, Z)}+\expe{\phi_d^3(X, Z)^2R_2(X, Z)}\\
&+\expe{\phi_d^3(X, Z)^3R_3(X, Z)} +\expe{\phi_d^4(X, Z)^3R_4(X, Z)},
\end{align}
exploiting again the symmetry of the laws of $X$ and $Z$, using the integrals in Appendix~\ref{app:integrals}, the dominated convergence theorem we can then show that
\begin{align}
\lim_{d\to+\infty}d\expe{\phi_d(X, Z)^3}&=0.
\end{align}
\end{proof}

\subsection{Bound on second moment of acceptance ratio for the Laplace distribution}
\label{app:tightness_moment}

\begin{lemma}
\label{lemma:tightness_moment}
Let $Z$ be a standard normal random variable and $\sigma = \ell/d^{\alpha}$ for
$\alpha=1/3$. Then, there exists a constant $C>0$ such that for all $a\in\real$
and $d\in\mathbb{N}$:
\begin{align}
\expe{\phi_d(a, Z)^2} \leq \frac{C}{d^{2\alpha}}.
\end{align}
\end{lemma}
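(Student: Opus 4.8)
The plan is to reduce the lemma to a deterministic pointwise estimate, namely $\abs{\phi_d(a,z)}\leq C\sigma_d(1+\abs{z})$ holding uniformly over $a\in\rset$, and then integrate against the standard Gaussian law of $Z$. Since $\sigma_d=\ell d^{-\alpha}$, this gives $\expe{\phi_d(a,Z)^2}\leq C^2\sigma_d^2\,\expe{(1+\abs{Z})^2}\leq C' d^{-2\alpha}$ with $C'$ independent of $a$ and $d$, which is exactly the claim.

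To obtain the pointwise bound I would first set $w=a+\sigma_d b_d(a,z)$ and introduce the common drift quantity appearing both in the proposal and in its reverse,
\[
\mathrm{dr}(u)=\frac{\sigma_d^2}{2}\sgn(u)\1\defEns{\abs{u}\geq \sigma_d^{2v}r/2}+\frac{1}{\sigma_d^{2(v-1)}r}u\,\1\defEns{\abs{u}< \sigma_d^{2v}r/2}.
\]
A direct rewriting of~\eqref{eq:martingale_b} gives $\sigma_d b_d(a,z)=\sigma_d z-\mathrm{dr}(a)$, and substituting this into the bracketed expression of~\eqref{eq:laplace_ar} shows that the inner term equals $\Delta:=-\sigma_d z+\mathrm{dr}(a)+\mathrm{dr}(w)$. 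The key elementary fact is that $\mathrm{dr}$ is uniformly bounded: on $\defEns{\abs{u}\geq \sigma_d^{2v}r/2}$ the first term has modulus exactly $\sigma_d^2/2$, while on $\defEns{\abs{u}< \sigma_d^{2v}r/2}$ the second term is bounded by $(\sigma_d^{2v}r/2)/(\sigma_d^{2(v-1)}r)=\sigma_d^2/2$; since the two indicators are mutually exclusive, $\abs{\mathrm{dr}(u)}\leq \sigma_d^2/2$ for every $u$. Hence $D:=\mathrm{dr}(a)+\mathrm{dr}(w)$ satisfies $\abs{D}\leq \sigma_d^2$ uniformly, and the same computation shows that the drift part of $b_d$ is bounded by $\sigma_d/2$, so $\abs{b_d(a,z)}\leq\abs{z}+\sigma_d/2$.

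The crucial cancellation is now transparent: writing $\Delta=-\sigma_d z+D$,
\[
\frac{z^2}{2}-\frac{\Delta^2}{2\sigma_d^2}=\frac{z^2}{2}-\frac{\sigma_d^2 z^2-2\sigma_d zD+D^2}{2\sigma_d^2}=\frac{zD}{\sigma_d}-\frac{D^2}{2\sigma_d^2},
\]
so that $\phi_d(a,z)=(\abs{a}-\abs{w})+zD/\sigma_d-D^2/(2\sigma_d^2)$. The $z^2/2$ term coming from the Gaussian log-density exactly cancels the $z^2$ term generated by $\Delta^2$, which is precisely why the remainder is $O(\sigma_d)$ rather than $O(1)$. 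The three surviving pieces are then controlled using $\abs{D}\leq\sigma_d^2$ together with the reverse triangle inequality $\bigl|\,\abs{a}-\abs{w}\,\bigr|\leq\abs{a-w}=\sigma_d\abs{b_d(a,z)}\leq\sigma_d(\abs{z}+\sigma_d/2)$, giving $\abs{\phi_d(a,z)}\leq 2\sigma_d\abs{z}+\sigma_d^2\leq C\sigma_d(1+\abs{z})$ after using $\sigma_d\leq\ell$. Squaring and integrating against $Z\sim\mathrm{N}(0,1)$, whose first two absolute moments are finite, completes the argument.

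I do not anticipate a genuine obstacle, since the lemma is purely a deterministic pointwise estimate combined with finiteness of Gaussian moments, and everything reduces to the uniform boundedness of $\mathrm{dr}$. The only point requiring care is the bookkeeping of the negative powers of $\sigma_d$, so that both the cancellation of the $z^2$ terms and the bound $\abs{D}\leq\sigma_d^2$ hold exactly rather than up to a constant that might secretly depend on $v$; once $\abs{\mathrm{dr}(u)}\leq\sigma_d^2/2$ is verified in each of the two indicator regimes, the remaining steps are routine. Note in particular that this approach never needs the four-region decomposition $R_1,\dots,R_4$, because the identity $\phi_d(a,z)=(\abs{a}-\abs{w})+zD/\sigma_d-D^2/(2\sigma_d^2)$ holds as an equality of functions, with the region structure absorbed into the indicators defining $\mathrm{dr}$.
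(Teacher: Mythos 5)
Your proof is correct, and it takes a genuinely different and substantially simpler route than the paper's. The paper proves Lemma~\ref{lemma:tightness_moment} by reusing the machinery of its moment computations: it splits $\rset^2$ into the four regions $R_1,\dots,R_4$ induced by the indicator structure of $\phi_d$, bounds the resulting integrals region by region (restricting to $a\geq 0$ and one regime of $r$, with the other cases "analogous"), and must finish with a dedicated Gaussian tail estimate to control the $R_4$ term $\sigma^2\int_{-\infty}^{\kappa}(2a/\sigma-\sigma/2+z)^2 e^{-z^2/2}\,\rmd z$, whose integrand grows with $a$ and is only saved by the fact that the upper limit $\kappa\approx -a/\sigma$ recedes into the tail. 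Your argument bypasses all of this: the identity $\phi_d(a,z)=(\abs{a}-\abs{w})+zD/\sigma_d-D^2/(2\sigma_d^2)$, together with the uniform bound $\abs{\mathrm{dr}(u)}\leq\sigma_d^2/2$ (exact in the outer regime, strict in the inner one) and the reverse triangle inequality $\abs{\,\abs{a}-\abs{w}\,}\leq\abs{a-w}$, yields the pointwise estimate $\abs{\phi_d(a,z)}\leq 2\sigma_d\abs{z}+\sigma_d^2$ uniformly in $a$, $v$, $r$ and the sign of $a$, after which integrating against the Gaussian is immediate. I verified the bookkeeping: $\sigma_d b_d(a,z)=\sigma_d z-\mathrm{dr}(a)$ follows from~\eqref{eq:martingale_b}, the bracket in~\eqref{eq:laplace_ar} is exactly $\mathrm{dr}(w)-\sigma_d b_d(a,z)=D-\sigma_d z$, and the $z^2/2$ cancellation is exact, so the growth in $a$ that forces the paper's tail estimate never appears in your decomposition — it is hidden in $\abs{a}-\abs{w}$, which the reverse triangle inequality controls by $\sigma_d(\abs{z}+\sigma_d/2)$ regardless of sign changes. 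What your approach buys beyond brevity: an explicit constant, no case analysis, and a stronger conclusion (the pointwise bound immediately gives $\expe{\abs{\phi_d(a,Z)}^p}\leq C_p d^{-p\alpha}$ for every $p$, not just $p=2$). What the paper's approach buys is not accuracy for this lemma but economy elsewhere: the region decomposition and explicit integrals are needed anyway for the exact first-, second- and third-moment asymptotics (Propositions~\ref{prop:laplace_exp}--\ref{prop:laplace_third}) driving the CLT, and the lemma is proved as a by-product of that machinery; your bound, being $O(\sigma_d^2)$ uniformly in $a$, could not replace those exact computations, but it is the sharper tool for the uniform-in-$a$ statement actually being proved here.
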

\begin{proof}
We consider the case $a\geq0$ and $r\geq \sigma^{2(v-1)}$ only, all the other cases follow from identical arguments. As in the derivation of the moments of $\phi_d$ in Appendix~\ref{app:laplace_moments}, we distinguish four regions.
We recall that $\sigma=\ell/d^{\alpha}$ for $\alpha=1/3$ and thus $\sigma^{p+1}\leq \sigma^{p}$ for all $p\in\mathbb{N}$.
Take $r\geq \sigma^{-2(v-1)}$, for $R_1$, we have, using H\"older's inequality multiple times,
\begin{align}
&\expe{\phi_d^1(a, Z)^2\indicator{R_1}{a, Z}} = \expe{\phi_d^1(a, Z)^2\indicator{B_1\cup B_2}{a, Z}} \\
&\qquad\leq C\sigma^2 \int_{-\left(1-1/\sigma^{2(v-1)}r\right)a/\sigma}^{\sigma^{2v-1} r/2-\left(1-1/\sigma^{2(v-1)}r\right)a}\frac{e^{-z^2/2}}{\sqrt{2\uppi}}\left[\left(\frac{1}{\sigma^{2v-1}r}a- z\right)^2\right.\\
&\qquad\qquad\qquad \left.+ \left(\frac{z^2}{2\sigma}-\frac{1}{2\sigma^3}\left(\left(\frac{2}{\sigma^{2(v-1)}r}-\frac{1}{\sigma^{4(m-1)}r^2}\right)a-\left(1-\frac{1}{\sigma^{2(v-1)}r}\right)\sigma z\right)^2\right)^2\right]\rmd z\\
&\qquad+ C\sigma^2\int^{-\left(1-1/\sigma^{2(v-1)}r\right)a/\sigma}_{-\sigma^{2v-1} r/2-\left(1-1/\sigma^{2(v-1)}r\right)a}\frac{e^{-z^2/2}}{\sqrt{2\uppi}}\left[\left(\frac{2a}{\sigma}-\frac{1}{\sigma^{2v-1}r}a+ z\right)^2\right.\\
&\qquad\qquad\qquad\left. + \left(\frac{z^2}{2\sigma}-\frac{1}{2\sigma^3}\left(\left(\frac{2}{\sigma^{2(v-1)}r}-\frac{1}{\sigma^{4(m-1)}r^2}\right)a-\left(1-\frac{1}{\sigma^{2(v-1)}r}\right)\sigma z\right)^2\right)^2\right]\rmd z\\
&\qquad\leq C\sigma^2\int_{-\infty}^{+\infty}\frac{e^{-z^2/2}}{\sqrt{2\uppi}}\left(4\left(\frac{a}{\sigma}\right)^2+2\left(\frac{1}{\sigma^{2v-1}r}a\right)^2+2z^2\right)\rmd z\\
&\qquad +C\sigma^2 \int_{-\sigma^{2v-1} r/2-\left(1-1/\sigma^{2(v-1)}r\right)a}^{\sigma^{2v-1} r/2-\left(1-1/\sigma^{2(v-1)}r\right)a}\frac{e^{-z^2/2}}{\sqrt{2\uppi}}\\
&\qquad\qquad\qquad\times\left(\frac{z^2}{2\sigma}-\frac{1}{2\sigma^3}\left(\left(\frac{2}{\sigma^{2(v-1)}r}-\frac{1}{\sigma^{4(m-1)}r^2}\right)a-\left(1-\frac{1}{\sigma^{2(v-1)}r}\right)\sigma z\right)^2\right)^2\\
&\qquad\leq C\sigma^2+C\sigma^2\int_{-\sigma^{2v-1} r/2-\left(1-1/\sigma^{2(v-1)}r\right)a}^{\sigma^{2v-1} r/2-\left(1-1/\sigma^{2(v-1)}r\right)a}\frac{e^{-z^2/2}}{\sqrt{2\uppi}}\\
&\qquad\qquad\qquad\times\left(\frac{z^4}{4\sigma^2}+\frac{1}{4\sigma^6}\left(\left(\frac{2}{\sigma^{2(v-1)}r}-\frac{1}{\sigma^{4(m-1)}r^2}\right)^4a^4+\left(1-\frac{1}{\sigma^{2(v-1)}r}\right)^4\sigma^4 z^4\right)\right)\rmd z\\
&\qquad\leq C\sigma^2,
\end{align}
where we used the fact that the moments of $Z$ are bounded and $a\leq \sigma^{2v} r/2$ for the first term, and the fact that $z\leq \sigma^{2v-1} r/2$ for the second one.
Proceeding as above, for $R_3$, $a>0$ and $r\geq \sigma^{-2(v-1)}$, we have
\begin{align}
&\expe{\phi_d^3(a, Z)^2\indicator{R_3}{a, Z}} = \expe{\phi_d^3(a, Z)^2\indicator{B_3\cup B_4}{a, Z}}\\
&\qquad\leq C\sigma^2 \int_{\sigma^{2v-1} r/2-\left(1-1/\sigma^{2(v-1)}r\right)a/\sigma}^{+\infty}\frac{e^{-z^2/2}}{\sqrt{2\uppi}}\left[\left(\frac{1}{\sigma^{2v-1}r}a- z\right)^2\right.\\
&\qquad\qquad\qquad\left.+ \left(\frac{z^2}{2\sigma}-\frac{1}{2\sigma^3}\left(\frac{1}{\sigma^{2(v-1)}r}a -\sigma z+\frac{\sigma^2}{2}\right)^2\right)^2\right] \rmd z\\
&\qquad+ C\sigma^2\int^{-\sigma^{2v-1} r/2-\left(1-1/\sigma^{2(v-1)}r\right)a/\sigma}_{-\infty}\frac{e^{-z^2/2}}{\sqrt{2\uppi}} \left[\left(\frac{2a}{\sigma}-\frac{1}{\sigma^{2v-1}r}a+ z\right)^2\right.\\
&\qquad\qquad\qquad\left.+\left(\frac{z^2}{2\sigma}-\frac{1}{2\sigma^3}\left(\frac{1}{\sigma^{2(v-1)}r}a -\sigma z+\frac{\sigma^2}{2}\right)^2\right)^2\right]\rmd z\\
&\qquad\leq C\sigma^2 + C\sigma^2 \int_{-\infty}^{+\infty}\frac{e^{-z^2/2}}{\sqrt{2\uppi}}\left(\frac{1}{2\sigma^3}\left(\frac{a^2}{\sigma^{4(m-1)}r^2} +\frac{\sigma^4}{4}-\sigma^3 z+\frac{a}{\sigma^{2v-4}r} -\frac{2a z}{\sigma^{2v-3}r}\right)\right)^2\rmd z\\
&\qquad \leq C\sigma^2,
\end{align}
where we used again the boundedness of the moments of $Z$, the fact that $a\leq \sigma^{2v} r/2$ and that $\sigma^{p+1}\leq \sigma^{p}$.
For $R_2$ and $a>0$, we have
\begin{align}
&\expe{\phi_d^2(a, Z)^2\indicator{R_2}{a, Z}} = \expe{\phi_d^2(a, Z)^2\indicator{C_1\cup C_2}{a, Z}} \\
&\qquad\leq C\sigma^2 \int_{\sigma/2-a/\sigma}^{\sigma/2+\sigma^{2v-1} r/2-a/\sigma}\frac{e^{-z^2/2}}{\sqrt{2\uppi}}\left[\left(\frac{\sigma^2}{2}-\sigma z\right)^2\right.\\
&\qquad\qquad\qquad\left.+\left(\frac{z^2}{2}-\frac{1}{2\sigma^2}\left(\frac{1}{\sigma^{2(v-1)}r}a+\left(1-\frac{1}{\sigma^{2(v-1)}r}\right)\left(\frac{\sigma^2}{2}-\sigma z\right)\right)^2\right)^2\right]\rmd z\\
&\qquad+ C\sigma^2 \int^{\sigma/2-a/\sigma}_{\sigma/2-\sigma^{2v-1} r/2-a/\sigma}\frac{e^{-z^2/2}}{\sqrt{2\uppi}}\left[\left(2a-\frac{\sigma^2}{2}+\sigma z\right)^2\right.\\
&\qquad\qquad\qquad\left.+\left(\frac{z^2}{2}-\frac{1}{2\sigma^2}\left(\frac{1}{\sigma^{2(v-1)}r}a+\left(1-\frac{1}{\sigma^{2(v-1)}r}\right)\left(\frac{\sigma^2}{2}-\sigma z\right)\right)^2\right)^2\right]\rmd z.
\end{align}
The first integral is bounded using the moments of $Z$, while for the third one let us denote $\chi(a, \sigma, z) := a-\sigma^2/2+\sigma z$, then
\begin{align}
&\int^{\sigma/2+\sigma^{2v-1} r/2-a/\sigma}_{\sigma/2-\sigma^{2v-1} r/2-a/\sigma}\frac{e^{-z^2/2}}{\sqrt{2\uppi}}\left(\frac{z^2}{2\sigma}-\frac{1}{2\sigma^3}\left(\frac{1}{\sigma^{2(v-1)}r}a+\left(1-\frac{1}{\sigma^{2(v-1)}r}\right)\left(\frac{\sigma^2}{2}-\sigma z\right)\right)^2\right)^2\rmd z \\
&\qquad\qquad=\int^{\sigma/2+\sigma^{2v-1} r/2-a/\sigma}_{\sigma/2-\sigma^{2v-1} r/2-a/\sigma}\frac{e^{-z^2/2}}{\sqrt{2\uppi}}\left(\frac{z^2}{2\sigma}-\frac{1}{2\sigma^3}\left(\frac{\chi(a, \sigma, z)}{\sigma^{2(v-1)}r}+\frac{\sigma^2}{2}-\sigma z\right)^2\right)^2\rmd z \\
&\qquad\qquad\leq C\int^{\sigma/2+\sigma^{2v-1} r/2-a/\sigma}_{\sigma/2-\sigma^{2v-1} r/2-a/\sigma}\frac{e^{-z^2/2}}{\sqrt{2\uppi}}\left(\frac{z^2}{2\sigma}-\frac{1}{2\sigma^3}\left(\frac{\sigma^2}{2}-\sigma z\right)^2\right)^2\rmd z\\
&\qquad\qquad +C\int^{\sigma/2+\sigma^{2v-1} r/2-a/\sigma}_{\sigma/2-\sigma^{2v-1} r/2-a/\sigma}\frac{e^{-z^2/2}}{\sqrt{2\uppi}} \left(\frac{\chi(a, \sigma, z)^2}{2\sigma^{4m-1}r^2}\right)^2 + \left(\frac{\chi(a, \sigma, z)}{r\sigma^{2v-1}}\left(\frac{\sigma^2}{2}-\sigma z\right)\right)^2\rmd z;
\end{align}
recalling that in $R_2$ we have $\vert \chi(a, \sigma, z)\vert \leq \sigma^{2v} r/2$, we obtain that this term is also bounded by $C\sigma^2$.
For $R_4$ and $a>0$, we have
\begin{align}
\expe{\phi_d^4(a, Z)^2\indicator{R_4}{a, Z}} 
&= \expe{\phi_d^4(a, Z)^2\indicator{C_4}{a, Z}}\\
&=  \int_{-\infty}^{\sigma/2-\sigma^{2v-1} r/2-a/\sigma}\frac{e^{-z^2/2}}{\sqrt{2\uppi}}\left(2a-\frac{\sigma^2}{2}+\sigma z\right)^2 \rmd z\\
&=  \sigma^2\int_{-\infty}^{\sigma/2-\sigma^{2v-1} r/2-a/\sigma}\frac{e^{-z^2/2}}{\sqrt{2\uppi}}\left(\frac{2a}{\sigma}-\frac{\sigma}{2}+z\right)^2 \rmd z
\end{align}
Collecting all the terms together, we obtain 
\begin{align}
\expe{\phi_d(a, Z)^2} &= \sum_{i=1}^4\expe{\phi_d^i(a, Z)^2\indicator{T_i}{a, Z}}\\
&\leq C\sigma^2 + C\sigma^2\int_{-\infty}^{\sigma/2-a/\sigma}\frac{e^{-z^2/2}}{\sqrt{2\uppi}}\left(\frac{2a}{\sigma}-\frac{\sigma}{2}+z\right)^2 \rmd z.
\end{align}
Recall that $\sigma = \ell d^{-1/3}$. To bound the last integral we use H\"older's inequality
\begin{align}
\int_{-\infty}^{\sigma/2-a/\sigma}\frac{e^{-z^2/2}}{\sqrt{2\uppi}}\left(\frac{2a}{\sigma}-\frac{\sigma}{2}+z\right)^2 \rmd z &\leq C\int_{-\infty}^{\sigma/2-a/\sigma}\frac{e^{-z^2/2}}{\sqrt{2\uppi}}\left[\left(\frac{\sigma}{2}+z\right)^2+4\left(\frac{\sigma}{2}-\frac{a}{\sigma}\right)^2\right] \rmd z \\
&\leq C\int_{-\infty}^{\sigma/2-a/\sigma}\frac{e^{-z^2/2}}{\sqrt{2\uppi}}\left[\left(\frac{\ell^2}{4}+z^2\right)+4\left(\frac{\sigma}{2}-\frac{a}{\sigma}\right)^2\right] \rmd z.
\end{align}
The first term is bounded since the moments of $Z$ are bounded. For the second term we use an estimate of the Gaussian cumulative distribution function. Let $\kappa(\ell,d,a) := \ell d^{-1/3}/2 - a d^{1/3}/\ell$.
When $z<\kappa(\ell,d,a)<0$, we have $1<z/\kappa(\ell,d,a)$ and therefore 
\begin{align}
(2\uppi)^{-1/2}\kappa(\ell,d,a)^2 \int_{-\infty}^{\kappa(\ell,d,a)}e^{-z^2/2}\rmd z &\leq (2\uppi)^{-1/2}\kappa(\ell,d,a) \int_{-\infty}^{\kappa(\ell,d,a)}z e^{-z^2/2}\rmd z, \\
& = (2\uppi)^{-1/2}\kappa(\ell,d,a)\exp(-\kappa(\ell,d,a)^2/2).
\end{align}
However $y \mapsto ye^{-y^2/2}$ is bounded over $\real$, therefore $(a,d) \longmapsto (2\uppi)^{-1/2} \kappa(\ell,d,a)\exp(-\kappa(\ell,d,a)^2/2)$ is bounded over $\real_+^* \times \mathbb{N}$. If $\kappa(\ell,d,a)\geq 0$, then we still have $\kappa(\ell,d,a) < \ell$ and thus have the inequality
\begin{equation}
(2\uppi)^{-1/2}\kappa(\ell,d,a)^2 \int_{-\infty}^{\kappa(\ell,d,a)} e^{-z^2/2}\rmd z  \leq (2\uppi)^{-1/2} \ell^2 \int_{-\infty}^{+\infty}e^{-z^2/2} \rmd z = \ell^2.
\end{equation}
The result then follows since $\sigma = \ell d^{-1/3}$.

\end{proof}
\subsection{Additional integrals for the Laplace distribution}
\label{app:sgn_expe}

We collect here two auxiliary Lemmata which are used in the proof of Proposition~\ref{prop:martingale_approximation}.

\begin{lemma}
\label{lemma:sgn_expe}
Take $X$ a Laplace random variable and $Z$ a standard normal random variable independent of $X$. Let $\tilde{X}:= X-\frac{1}{\sigma^{2(v-1)}r}X\1\left\lbrace\vert X\vert < \sigma^{2v}r/2\right\rbrace-\frac{\sigma^2}{2}\sgn(X)\1\left\lbrace\vert X\vert \geq\sigma^{2v}r/2\right\rbrace+\sigma Z$, then, for $\sigma=\ell d^{-\alpha}$ with $\alpha = 1/3$,
\begin{align}
\expe{ \1\left\lbrace \sgn(X) \neq \sgn(\tilde{X}) \right\rbrace} \to 0
\end{align}
if $d\to \infty$.
\end{lemma}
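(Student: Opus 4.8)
The plan is to exploit the fact that $\tilde X$ is exactly the one-dimensional MY-MALA proposal started from $X$, namely $\tilde X = X + \sigma b_d(X,Z)$ with $b_d$ as in~\eqref{eq:martingale_b} (using $\sigma\cdot\sigma^{-(2v-1)}r^{-1}=\sigma^{-2(v-1)}r^{-1}$ to match the definition of $\tilde X$), and that the increment $\sigma b_d(X,Z)$ is of order $\sigma=\ell d^{-1/3}\to0$. First I would note that, up to a $\pr$-null set, the event $\{\sgn(X)\neq\sgn(\tilde X)\}$ coincides with $\{X\tilde X<0\}$, since $X$ and $\tilde X$ are almost surely nonzero. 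A sign change forces the increment to dominate the current position: if $X>0>\tilde X$ then $\tilde X-X<-X$, and symmetrically if $X<0<\tilde X$, so in either case $|X|\le|\tilde X-X|=\sigma\,|b_d(X,Z)|$. Hence, up to null sets,
\begin{equation*}
\{\sgn(X)\neq\sgn(\tilde X)\}\subseteq\{|X|\le \sigma\,|b_d(X,Z)|\}.
\end{equation*}

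Next I would control $\sigma\,|b_d(X,Z)|$ by splitting on the two regions in the definition of $b_d$. On $\{|X|\ge\sigma^{2v}r/2\}$ one has $\sigma b_d(X,Z)=\sigma Z-(\sigma^2/2)\sgn(X)$, so $\sigma\,|b_d(X,Z)|\le\sigma|Z|+\sigma^2/2$. On $\{|X|<\sigma^{2v}r/2\}$ I would not estimate $b_d$ at all, but absorb this region into a separate small-probability term. This gives the inclusion
\begin{equation*}
\{\sgn(X)\neq\sgn(\tilde X)\}\subseteq\{|X|<\sigma^{2v}r/2\}\cup\{|X|\le\sigma|Z|+\sigma^2/2\},
\end{equation*}
reducing the problem to bounding two probabilities.

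Both terms are then estimated using the explicit Laplace distribution function $\pr(|X|\le c)=1-e^{-c}\le c$ for $c\ge0$, together with the independence of $X$ and $Z$. The first term is bounded by $\sigma^{2v}r/2$, which vanishes since $v\ge1$ and $\sigma\to0$. For the second, conditioning on $Z$ gives $\pr(|X|\le\sigma|Z|+\sigma^2/2\mid Z)\le\sigma|Z|+\sigma^2/2$, and taking expectations yields $\sigma\,\expe{|Z|}+\sigma^2/2=\sigma\sqrt{2/\uppi}+\sigma^2/2$, which also vanishes as $\sigma=\ell d^{-1/3}\to0$. Combining the two estimates proves $\expe{\1\{\sgn(X)\neq\sgn(\tilde X)\}}\to0$.

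I expect the only genuinely delicate point to be the containment argument in the first step, in particular justifying that a change of sign forces $|X|\le\sigma\,|b_d(X,Z)|$ and disposing of the measure-zero boundary $\{X=0\}\cup\{\tilde X=0\}$; after that the remaining estimates are elementary consequences of the closed-form Laplace distribution function and the boundedness of the Gaussian moments. The splitting on $\{|X|\gtrless\sigma^{2v}r/2\}$ conveniently avoids any pointwise analysis of the large factor $\sigma^{-2(v-1)}/r$ multiplying $X$ inside $b_d$, since that region is discarded as a low-probability event rather than estimated.
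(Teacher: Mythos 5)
Your proof is correct, and it takes a genuinely different and more elementary route than the paper's. The paper reuses the machinery of Appendix~\ref{app:laplace_moments}: it decomposes the sign-change event into the four regions $A_1\cap B_2$, $A_3\cap C_2$, $A_1\cap B_4$, $A_3\cap C_4$, evaluates each probability by explicit integration against the Laplace density, Taylor-expands the results in $\sigma$, and concludes by dominated convergence. You instead observe that $\tilde X = X+\sigma b_d(X,Z)$ and that a sign change forces $|X|\le \sigma\,|b_d(X,Z)|$, then absorb the proximal region into the event $\{|X|<\sigma^{2v}r/2\}$ and bound both resulting events with the elementary estimate $\pr(|X|\le c)=1-e^{-c}\le c$ together with $\expe{|Z|}<\infty$. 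The delicate points are handled correctly: $X$ and $\tilde X$ are almost surely nonzero (the latter being conditionally Gaussian with variance $\sigma^2>0$ given $X$), the containment $|X|\le|\tilde X-X|$ on a sign change is valid in both sign configurations, and discarding the region $\{|X|<\sigma^{2v}r/2\}$ as a low-probability event indeed sidesteps any pointwise control of the large factor $\sigma^{-2(v-1)}r^{-1}$ in $b_d$. What your route buys is brevity, an explicit quantitative rate $O(d^{-1/3})$, and portability: it uses nothing about the Laplace law beyond a bound on its distribution function near the origin, so it would transfer essentially unchanged to the other piecewise-smooth targets mentioned in the paper's discussion. What the paper's route buys is uniformity with the adjacent moment computations, where the same region bookkeeping is unavoidable in any case.
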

\begin{proof}
Using the same strategy of Appendix~\ref{app:laplace_moments} and the symmetry of the laws of $X, Z$, we find that 
\begin{align}
\expe{ \1\left\lbrace \sgn(X) \neq \sgn(\tilde{X}) \right\rbrace} &= 2\expe{\indicator{A_1}{X}\indicator{B_2}{X, Z}}+2\expe{\indicator{A_3}{X}\indicator{C_2}{X, Z}}\\
&+2\expe{\indicator{A_1}{X}\indicator{B_4}{X, Z}}+2\expe{\indicator{A_3}{X}\indicator{C_4}{X, Z}}.
\end{align}
Using the same strategy used to obtain the moments of $\phi_d$ in Appendix~\ref{app:laplace_moments}, we find that 
$$\expe{\indicator{A_1}{X}\indicator{B_2}{X, Z}}=o(1),$$
in addition
\begin{align}
\expe{\indicator{A_3}{X}\indicator{C_2}{X, Z}} &= \frac{1}{2\sqrt{2\uppi}}\int_{-\infty}^{\sigma/2-\sigma^{2v-1} r/2} e^{-z^2/2}\int_{\sigma^2/2-\sigma z}^{\sigma^2/2-\sigma z+\sigma^{2v} r/2} e^{-x}\rmd x\ \rmd z + o(1)\\
&= \frac{1}{2\sqrt{2\uppi}}\int_{-\infty}^{\sigma/2-\sigma^{2v-1} r/2} e^{-z^2/2}\left[\frac{\sigma^2 r}{2}\delta_{m1}+...\right] \rmd z + o(1),
\end{align}
where $\delta_{m1}$ is a Dirac's delta, and
\begin{align}
&\expe{\indicator{A_1}{X}\indicator{B_4}{X, Z}} +
\expe{\indicator{A_3}{X}\indicator{C_4}{X, Z}} \\
&\qquad\qquad= \frac{1}{2\sqrt{2\uppi}}\int_{-\infty}^{\sigma/2-\sigma^{2v-1} r} e^{-z^2/2}\int_{0}^{\sigma^2/2-\sigma z-\sigma^{2v} r/2} e^{-x}\rmd x\ \rmd z + o(1)\\
&\qquad\qquad= \frac{1}{2\sqrt{2\uppi}}\int_{-\infty}^{\sigma/2-\sigma^{2v-1} r} e^{-z^2/2}\left[-\sigma z+...\right] \rmd z + o(1).
\end{align}
Since $\sigma = \ell d^{-1/3}$ and the remainder terms of the Taylor expansions are bounded, Lebesque's dominated convergence theorem gives
\begin{align}
\expe{\indicator{A_3}{X}\indicator{C_2}{X, Z}}\to0, \\
\expe{\indicator{A_1}{X}\indicator{B_4}{X, Z}} +
\expe{\indicator{A_3}{X}\indicator{C_4}{X, Z}} \to 0
\end{align}
as $d\to \infty$.
\end{proof}

\begin{lemma}
\label{lemma:integral_D1tau}
Take $X$ a Laplace random variable and $Z$ a standard normal random variable independent of $X$. Then,
\begin{equation}
d^{\alpha}\expe{\vert Z\vert \left\lvert \phi_d\left( X,Z\right)\right\rvert }\to 0
\end{equation}
for $\alpha=1/3$.
\end{lemma}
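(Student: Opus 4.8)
The plan is to bound the target expectation by Cauchy--Schwarz and then feed in the sharp second-moment asymptotics for $\phi_d$ that have already been established. Since $Z$ is standard normal with $\expeLine{Z^2}=1$, Cauchy--Schwarz gives
\begin{equation}
d^{\alpha}\expe{\vert Z\vert\, \vert \phi_d(X,Z)\vert} \leq d^{\alpha}\,\expe{Z^2}^{1/2}\,\expe{\phi_d(X,Z)^2}^{1/2} = d^{\alpha}\,\expe{\phi_d(X,Z)^2}^{1/2},
\end{equation}
so it suffices to control the rate at which $\expeLine{\phi_d(X,Z)^2}$ vanishes as $d\to\infty$.

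The key point is to use the exact second-moment asymptotics rather than the crude uniform bound of Lemma~\ref{lemma:tightness_moment}. Writing $\expeLine{\phi_d(X,Z)^2} = \var(\phi_d(X,Z)) + \expeLine{\phi_d(X,Z)}^2$, Proposition~\ref{prop:laplace_var} gives $d\,\var(\phi_d(X,Z)) \to 2\ell^3/(3\sqrt{2\uppi})$, while Proposition~\ref{prop:laplace_exp} gives $d\,\expeLine{\phi_d(X,Z)} \to -\ell^3/(3\sqrt{2\uppi})$, whence $d\,\expeLine{\phi_d(X,Z)}^2 \to 0$. Combining these, $d\,\expeLine{\phi_d(X,Z)^2} \to 2\ell^3/(3\sqrt{2\uppi})$, i.e. $\expeLine{\phi_d(X,Z)^2} = O(d^{-1})$. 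Substituting this into the display above with $\alpha = 1/3$ then yields
\begin{equation}
d^{1/3}\,\expe{\phi_d(X,Z)^2}^{1/2} = O\!\left(d^{1/3}\,d^{-1/2}\right) = O\!\left(d^{-1/6}\right)\underset{d\to\infty}{\longrightarrow} 0,
\end{equation}
which is the claim.

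The one subtlety I would flag—and the only place where care is needed—is that the naive route through Lemma~\ref{lemma:tightness_moment} does \emph{not} suffice: that lemma gives $\expeLine{\phi_d(a,Z)^2} \leq C d^{-2\alpha}$ uniformly in $a$, hence only $\expeLine{\phi_d(X,Z)^2} \leq C d^{-2/3}$ after averaging over $X$, and then $d^{1/3}\expeLine{\phi_d(X,Z)^2}^{1/2} = O(1)$, failing to converge to $0$. The extra half-power of $d$ that makes the argument work comes precisely from averaging over the Laplace law of $X$: a typical $X$ sits at distance $O(1)$ from the non-differentiability point, where to leading order the Metropolis ratio of the symmetric subgradient proposal cancels, so $\phi_d(X,Z)$ is genuinely of order $d^{-1/2}$ in $\mrL^2$ rather than the worst-case $d^{-1/3}$. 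This cancellation is exactly what Proposition~\ref{prop:laplace_var} encodes, so no further computation is needed beyond invoking it together with Proposition~\ref{prop:laplace_exp}.
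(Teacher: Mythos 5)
Your proof is correct and takes essentially the same route as the paper: Cauchy--Schwarz with $\expe{Z^2}=1$, followed by the $O(d^{-1})$ rate for $\expe{\phi_d(X,Z)^2}$ obtained from Proposition~\ref{prop:laplace_var}, giving $d^{1/3}\expe{\phi_d(X,Z)^2}^{1/2}=O(d^{-1/6})\to 0$. Your explicit decomposition of the second moment into variance plus squared mean (invoking Proposition~\ref{prop:laplace_exp} for the latter), and your remark that the uniform bound of Lemma~\ref{lemma:tightness_moment} would only yield $O(1)$, simply make precise what the paper's one-line appeal to Proposition~\ref{prop:laplace_var} leaves implicit.
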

\begin{proof}
Using Cauchy-Schwarz's inequality we have that
\begin{equation}
\expe{\vert Z\vert \left\lvert \phi_d\left( X,Z\right)\right\rvert }\leq \expe{Z^2}^{1/2}\expe{ \phi_d( X,Z)^2 }^{1/2};
\end{equation}
the first expectation is equal to one, and the second one converges to zero at rate $d^{1/2}$ by Proposition~\ref{prop:laplace_var}. The result follows straightforwardly.
\end{proof}


\subsection{Integrals for moment computations}
\label{app:integrals}
We distinguish the case $v=1/2$ and $v\geq1$ since the integration bounds significantly differ in these two cases. For values between $1/2$ and $1$ the integrals are not finite. 
The expectations below are obtained by integrating w.r.t. $x$ and using a Taylor expansion about $\sigma=0$ to obtain the leading order terms. Using the Lagrange form of the remainder for the Taylor expansions, we find that the remainder terms are all of the form $\sigma^{1/\alpha+1}f(\gamma(\sigma, z))/(1/\alpha+1)!$ where $\gamma(\sigma, z)$ is a point between the limits of integration w.r.t. $x$ and $f:x\mapsto p(x)e^{-x}$, where $p$ is a polynomial.
Therefore, using the boundedness of the remainder and Lebesgue's dominated convergence theorem, the integrals w.r.t. $z$ of the remainder terms all converge to 0.

\subsubsection{First moment}

For simplicity, we only consider the case for $r\geq \sigma^{-2(v-1)}$, the other case follows analogously.

\textbf{Region \texorpdfstring{$R_1$}{2}}
Let us consider $\phi_d^1$ first. We have
\begin{align}
A_1\cap B_1 &=\begin{cases}
0\leq x\leq \frac{\sigma^{2v} r}{2}\qquad\textrm{if } 0\leq z\leq \frac{\sigma}{2}\\
0\leq x\leq \left(\frac{\sigma^{2v} r}{2}-\sigma z\right)\left(1-\frac{1}{\sigma^{2(v-1)}r}\right)^{-1}\qquad\textrm{if } \frac{\sigma}{2}\leq z\leq \frac{ \sigma^{2v-1}r}{2}\\
-\sigma z\left(1-\frac{1}{\sigma^{2(v-1)}r}\right)^{-1}\leq x\leq \frac{\sigma^{2v} r}{2}\qquad\textrm{if } \frac{\sigma}{2}- \frac{\sigma^{2v-1}r}{2}\leq z\leq 0
\end{cases},\\
A_1\cap B_2 &=\begin{cases}
0\leq x\leq\frac{\sigma^{2v} r}{2}\qquad\textrm{if } -\frac{\sigma^{2v-1}r}{2}\leq z\leq \frac{\sigma}{2}-\frac{\sigma^{2v-1}r}{2}\\
0\leq x\leq -\sigma z\left(1-\frac{1}{\sigma^{2(v-1)}r}\right)^{-1}\qquad\textrm{if } \frac{\sigma}{2}-\frac{\sigma^{2v-1}r}{2}\leq z\leq 0\\
-\left(\frac{\sigma^{2v} r}{2}+\sigma z\right)\left(1-\frac{1}{\sigma^{2(v-1)}r}\right)^{-1}\leq x\leq \frac{\sigma^{2v} r}{2}\qquad\textrm{if } \frac{\sigma}{2}-\sigma^{2v-1}r\leq z\leq - \frac{\sigma^{2v-1}r}{2}
\end{cases},
\end{align}
and
\begin{align}
A_1\cap (B_1 \cup B_2) &=\begin{cases}
0\leq x\leq \frac{\sigma^{2v} r}{2}\qquad\textrm{if } -\frac{\sigma^{2v-1}r}{2}\leq z\leq \frac{\sigma}{2}\\
0\leq x\leq \left(\frac{\sigma^{2v} r}{2}-\sigma z\right)\left(1-\frac{1}{\sigma^{2(v-1)}r}\right)^{-1}\qquad\textrm{if } \frac{\sigma}{2}\leq z\leq \frac{\sigma^{2v-1}r}{2}\\
-\left(\frac{\sigma^{2v} r}{2}+\sigma z\right)\left(1-\frac{1}{\sigma^{2(v-1)}r}\right)^{-1}\leq x\leq \frac{\sigma^{2v} r}{2}\\
\qquad\qquad\textrm{if } \frac{\sigma}{2}-\sigma^{2v-1}r\leq z\leq - \frac{\sigma^{2v-1}r}{2}
\end{cases}.
\end{align}
The corresponding expectations are
\begin{align}
&\mathbb{E}\left[\left(\frac{X}{\sigma^{2(v-1)}r}-\sigma Z\right)\indicator{A_1}{X}\indicator{B_1}{X, Z}\right]\\
&\qquad\qquad= \frac{1}{2\sqrt{2\uppi}}\int_{ \sigma/2}^{ \sigma^{2v-1}r/2}e^{-z^2/2}\left[z^{4-2v}\sigma^{4-2v}\xi(r)+\dots\right]\rmd z\\
&\qquad\qquad+ \frac{1}{2\sqrt{2\uppi}}\int_{ \sigma/2-r\sigma^{2v-1}/2}^{0}e^{-z^2/2}\left[z^{4-2v}\sigma^{4-2v}\xi(r)+\dots\right]\rmd z\\
&\qquad\qquad+o(1),\\
&\mathbb{E}\left[\left(2X-\frac{X}{\sigma^{2(v-1)}r}+\sigma Z\right)\indicator{A_1}{X}\indicator{B_2}{X, Z}\right]\\
&\qquad\qquad= \frac{1}{2\sqrt{2\uppi}}\int_{ \sigma/2-\sigma^{2v-1}r/2}^{0}e^{-z^2/2}\left[z^{4-2v}\sigma^{4-2v}\xi(r)+\dots\right]\rmd z\\
&\qquad\qquad+ \frac{1}{2\sqrt{2\uppi}}\int_{ \sigma/2-\sigma^{2v-1}r}^{-\sigma^{2v-1}r/2}e^{-z^2/2}\left[ z^{4-2v}\sigma^{4-2v}\xi(r)+\dots\right]\rmd z\\
&\qquad\qquad+o(1),
\end{align}
where $\xi:[0, +\infty)\to \real$ is a function of $r$ only which might change from one line to the other,
and
\begin{align}
&\mathbb{E}\left[\left(\frac{Z^2}{2}-\frac{1}{2\sigma^2}\left(\left(\frac{2}{\sigma^{2(v-1)}r}-\frac{1}{\sigma^{4(m-1)}r^2}\right)X-\left(1-\frac{1}{\sigma^{2(v-1)}r}\right)\sigma Z\right)^2\right)\indicator{A_1}{X}\indicator{B_1\cup B_2}{X, Z}\right]\\
&\qquad\qquad =\frac{1}{2\sqrt{2\uppi}}\int^{ \sigma/2}_{-\sigma^{2v-1}r/2}e^{-z^2/2}\left[+\dots\right]\rmd z\\
&\qquad\qquad +\frac{1}{2\sqrt{2\uppi}}\int_{ \sigma/2}^{\sigma^{2v-1}r/2}e^{-z^2/2}\left[+\dots\right]\rmd z\\
&\qquad\qquad +\frac{1}{2\sqrt{2\uppi}}\int_{ \sigma/2-\sigma^{2v-1}r}^{-\sigma^{2v-1}r/2}e^{-z^2/2}\left[z^2\sigma^2\xi(r)+\dots\right]\rmd z,
\end{align}
where $\xi:[0, +\infty)\to \real$ is a function of $r$ only which might change from one line to the other.

\textbf{Region \texorpdfstring{$R_2$}{2}}
For $\phi_d^2$, we have
\begin{align}
A_3 \cap C_1&=\begin{cases}
\sigma^2/2-\sigma z \leq x\leq -\sigma z +\sigma^2/2+\sigma^{2v}r/2\qquad\textrm{if } z<\sigma/2 - \sigma^{2v-1} r/2\\
\sigma^{2v}r/2 < x\leq -\sigma z +\sigma^2/2+\sigma^{2v}r/2\qquad\textrm{if } \sigma/2 - \sigma^{2v-1}r/2\leq z\leq \sigma/2
\end{cases},\\
A_3\cap C_2&=\begin{cases}
\sigma^2/2-\sigma z-\sigma^{2v}r/2\leq x\leq \sigma^2/2-\sigma z\qquad\textrm{if } z< \sigma/2-\sigma^{2v-1} r\\
\sigma^{2v}r/2< x\leq \sigma^2/2-\sigma z\qquad\textrm{if } \sigma/2-\sigma^{2v-1} r< z<
	\sigma/2 - \sigma^{2v-1} r/2
\end{cases}
\end{align}
and
\begin{align}
A_3\cap (C_1\cup C_2)=\begin{cases}
\sigma^{2v}r/2\leq x\leq \sigma^{2v}r/2+\sigma^2/2-\sigma z\qquad\textrm{if } \sigma/2- \sigma^{2v-1} r\leq z\leq \sigma/2\\
-\sigma^{2v}r/2+\sigma^2/2-\sigma z< x\leq \sigma^{2v}r/2+\sigma^2/2-\sigma z\\
\qquad\qquad\textrm{if } z< \sigma/2- \sigma^{2v-1} r
\end{cases}.
\end{align}
The corresponding expectations are
\begin{align}
\expe{\left(\frac{\sigma^2}{2}-\sigma Z\right)\indicator{A_3}{X}\indicator{C_1}{X, Z}} 
&=\frac{1}{2\sqrt{2\uppi}}\int_{-\infty}^{\sigma/2 - \sigma^{2v-1} r/2}e^{-z^2/2}\left[-\frac{rz}{2}\sigma^{2v+1}+ \dots\right]\rmd z\\
&+\frac{1}{2\sqrt{2\uppi}}\int_{\sigma/2 - \sigma^{2v-1} r/2}^{\sigma/2}e^{-z^2/2}\left[ z^2\sigma^2+\dots\right]\rmd z,\\
\expe{\left(2X-\frac{\sigma^2}{2}+\sigma Z\right)\indicator{A_3}{X}\indicator{C_2}{X, Z}}
&=\frac{1}{2\sqrt{2\uppi}}\int_{-\infty}^{\sigma/2-\sigma^{2v-1} r}e^{-z^2/2}\left[-\frac{rz}{2}\sigma^{2v+1}+ \dots\right]\rmd z\\
&+\frac{1}{2\sqrt{2\uppi}}\int_{\sigma/2-\sigma^{2v-1} r}^{\sigma/2-\sigma^{2v-1} r/2}e^{-z^2/2}\left[-\frac{rz}{2}\sigma^{2v+1}+ \dots\right]\rmd z,
\end{align}
and
\begin{align}
&\expe{\left(\frac{Z^2}{2}-\frac{1}{2\sigma^2}\left(\frac{1}{\sigma^{2(v-1)}r}X+\left(1-\frac{1}{\sigma^{2(v-1)}r}\right)\left(\frac{\sigma^2}{2}-\sigma Z\right)\right)^2\right)\indicator{A_3}{X}\indicator{C_1\cup C_2}{X, Z}}\\
&\qquad\qquad=\frac{1}{2\sqrt{2\uppi}}\int ^{ \sigma/2- \sigma^{2v-1} r}_{-\infty}e^{-z^2/2}\left[\frac{rz}{2}\sigma^{2v+1}+ \dots\right]\rmd z\\
&\qquad\qquad+\frac{1}{2\sqrt{2\uppi}}\int_{ \sigma/2- \sigma^{2v-1} r}^{\sigma/2}e^{-z^2/2}\left[-\frac{z^3}{2r}\sigma^{3-2v}+ \dots\right]\rmd z.
\end{align}
\textbf{Region \texorpdfstring{$R_3$}{2}}
For $\phi_d^3$, we have, in the case $r\geq \sigma^{-2(v-1)}$,
\begin{align}
A_1\cap B_3&=\begin{cases}
0\leq x\leq \frac{\sigma^{2v}r}{2}\qquad\textrm{if }  z> \frac{\sigma^{2v-1} r}{2}\\
\left(\frac{\sigma^{2v}r}{2}-\sigma z\right)\left(1-\frac{1}{\sigma^{2(v-1)}r}\right)^{-1}< x\leq \frac{\sigma^{2v}r}{2}\qquad\textrm{if } \frac{\sigma}{2}\leq z\leq \frac{\sigma^{2v-1} r}{2}
\end{cases},\\
A_1\cap B_4&=\begin{cases}
0\leq x\leq \frac{\sigma^{2v}r}{2}\qquad\textrm{if } z< \frac{\sigma}{2}-\sigma^{2v-1} r\\
0\leq  x<\left(\frac{\sigma^{2v}r}{2}-\sigma z\right)\left(1-\frac{1}{\sigma^{2(v-1)}r}\right)^{-1}\qquad\textrm{if } \frac{\sigma}{2}-\sigma^{2v-1} r\leq z\leq -\frac{\sigma^{2v-1} r}{2}
\end{cases}.
\end{align}
The corresponding expectations are
\begin{align}
&\expe{\left(\frac{1}{\sigma^{2(v-1)}r}X-\sigma Z+\frac{Z^2}{2}-\frac{1}{2\sigma^2}\left(\frac{1}{\sigma^{2(v-1)}r}X -\sigma Z+\frac{\sigma^2}{2}\right)^2\right)\indicator{A_1}{X}\indicator{ B_3}{X, Z}} \\
&\qquad\qquad=\frac{1}{2\sqrt{2\uppi}}\int_{\sigma^{2v-1} r/2}
^{+\infty}e^{-z^2/2}\left[-\frac{rz}{8}\sigma^{2v+1}+ \dots\right]\rmd z \\
&\qquad\qquad+\frac{1}{2\sqrt{2\uppi}}\int^{\sigma^{2v-1} r/2}_{\sigma/2}e^{-z^2/2}\left[z^{3-2v}\sigma^{3-2v}\xi(r)+ \dots\right]\rmd z ,\\
&\expe{\left(2X-\frac{1}{\sigma^{2(v-1)}r}X+\sigma Z+\frac{Z^2}{2}-\frac{1}{2\sigma^2}\left(\frac{1}{\sigma^{2(v-1)}r}X -\sigma Z-\frac{\sigma^2}{2}\right)^2\right)\indicator{A_1}{X}\indicator{ B_4}{X, Z}} \\
&\qquad\qquad=\frac{1}{2\sqrt{2\uppi}}\int_{-\infty}^{\sigma/2-\sigma^{2v-1} r/2}e^{-z^2/2}\left[\frac{3}{8}\sigma^{2v+1}+\dots\right]\rmd z\\
&\qquad\qquad+\frac{1}{2\sqrt{2\uppi}}\int_{\sigma/2-\sigma^{2v-1} r/2}^{-\sigma^{2v-1}r/2}e^{-z^2/2}\left[z^3\sigma^{3-2v}\xi(r)+\dots\right]\rmd z,
\end{align}
where $\xi:[0, +\infty)\to \real$ is a function of $r$ only which might change from one line to the other.
\textbf{Region \texorpdfstring{$R_4$}{2}}
Finally, for $\phi_d^4$ we have
\begin{align}
A_3 \cap  C_4=\left\lbrace z<\frac{\sigma}{2}-\sigma^{2v-1} r, \frac{\sigma^{2v}r}{2}<x\leq \frac{\sigma^2}{2 }-\sigma z -\frac{\sigma^{2v}r}{2}\right\rbrace,
\end{align} 
and
\begin{align}
    \expe{\left(2X-\frac{\sigma^2}{2}+\sigma Z\right)\indicator{A_3}{X}\indicator{ C_4}{X, Z}}
&=\frac{1}{2\sqrt{2\uppi}}\int_{-\infty}^{\sigma/2-\sigma^{2v-1} r}e^{-z^2/2}\left[\frac{z^3}{6}\sigma^3+ \dots\right]\rmd z.
\end{align}

\subsubsection{Second moment}
For simplicity, we only consider the case for $r\geq \sigma^{-2(v-1)}$, the other case follows analogously.

\textbf{Region \texorpdfstring{$R_1$}{2}}
For $\phi_d^1$ we have
\begin{align}
\mathbb{E}\left[\phi_d^1(X, Z)^2\indicator{A_1}{X}\indicator{B_1}{X, Z}\right]&= \frac{1}{2\sqrt{2\uppi}}\int_{ \sigma/2}^{ \sigma^{2v-1}r/2}e^{-z^2/2}\left[z^3\sigma^3\xi(r)+\dots\right]\rmd z\\
&+ \frac{1}{2\sqrt{2\uppi}}\int_{ \sigma/2-r\sigma^{2v-1}/2}^{0}e^{-z^2/2}\left[z^3\sigma^3\xi(r)+\dots\right]\rmd z\\
&+o(1),\\
\mathbb{E}\left[\phi_d^1(X, Z)^2\indicator{A_1}{X}\indicator{B_2}{X, Z}\right] &= \frac{1}{2\sqrt{2\uppi}}\int_{ \sigma/2-\sigma^{2v-1}r/2}^{0}e^{-z^2/2}\left[z^3\sigma^3\xi(r)+\dots\right]\rmd z\\
&+ \frac{1}{2\sqrt{2\uppi}}\int_{ \sigma/2-\sigma^{2v-1}r}^{-\sigma^{2v-1}r/2}e^{-z^2/2}\left[z^3\sigma^3\xi(r) +\dots\right]\rmd z\\
&+o(1),
\end{align}
where $\xi:[0, +\infty)\to \real$ is a function of $r$ only which might change from one line to the other.
\textbf{Region \texorpdfstring{$R_2$}{2}}
For $\phi_d^2$ we have
\begin{align}
\expe{\phi_d^2(X, Z)^2\indicator{A_3}{X}\indicator{C_1}{X, Z}} &=\frac{1}{2\sqrt{2\uppi}}\int_{-\infty}^{\sigma/2 - \sigma^{2v-1} r/2}e^{-z^2/2}\left[\frac{rz^2}{2}\sigma^{2v+2}+\dots\right]\rmd z\\
&+\frac{1}{2\sqrt{2\uppi}}\int_{\sigma/2 - \sigma^{2v-1} r/2}^{\sigma/2}e^{-z^2/2}\left[-z^3\sigma^3 +\dots\right]\rmd z,\\
\expe{\phi_d^2(X, Z)^2\indicator{A_3}{X}\indicator{C_2}{X, Z}}&=\frac{1}{2\sqrt{2\uppi}}\int_{-\infty}^{\sigma/2-\sigma^{2v-1} r}e^{-z^2/2}\left[z^2\sigma^{2v+2}+ \dots\right]\rmd z\\
&+\frac{1}{2\sqrt{2\uppi}}\int_{\sigma/2-\sigma^{2v-1} r}^{\sigma/2-\sigma^{2v-1} r/2}e^{-z^2/2}\left[-\frac{rz^2}{2}\sigma^{2v+2}+ \dots\right]\rmd z.
\end{align}

\textbf{Region \texorpdfstring{$R_3$}{2}}
For $\phi_d^3$ we have
\begin{align}
&\expe{\left(\frac{1}{\sigma^{2(v-1)}r}X-\sigma Z+\frac{Z^2}{2}-\frac{1}{2\sigma^2}\left(\frac{1}{\sigma^{2(v-1)}r}X -\sigma Z+\frac{\sigma^2}{2}\right)^2\right)^2\indicator{A_1}{X}\indicator{ B_3}{X, Z}} \\
&\qquad\qquad=\frac{1}{2\sqrt{2\uppi}}\int_{\sigma^{2v-1} r/2}
^{+\infty}e^{-z^2/2}\left[\frac{rz^2}{24}\sigma^{2v+2}+ \dots\right]\rmd z \\
&\qquad\qquad+\frac{1}{2\sqrt{2\uppi}}\int^{\sigma^{2v-1} r/2}_{\sigma/2}e^{-z^2/2}\left[\frac{rz^2}{24}\sigma^{2v+2}+ \dots\right]\rmd z ,\\
&\expe{\left(2X-\frac{1}{\sigma^{2(v-1)}r}X+\sigma Z+\frac{Z^2}{2}-\frac{1}{2\sigma^2}\left(\frac{1}{\sigma^{2(v-1)}r}X -\sigma Z-\frac{\sigma^2}{2}\right)^2\right)^2\indicator{A_1}{X}\indicator{ B_4}{X, Z}} \\
&\qquad\qquad=\frac{1}{2\sqrt{2\uppi}}\int_{-\infty}^{\sigma/2-\sigma^{2v-1} r/2}e^{-z^2/2}\left[\frac{7rz^2}{24}\sigma^{2v+2}+\dots\right]\rmd z\\
&\qquad\qquad+\frac{1}{2\sqrt{2\uppi}}\int_{\sigma/2-\sigma^{2v-1} r/2}^{-\sigma^{2v-1}r/2}e^{-z^2/2}\left[\frac{7rz^2}{24}\sigma^{2v+2}+\dots\right]\rmd z,
\end{align}
\textbf{Region \texorpdfstring{$R_4$}{2}}
For $\phi_d^4$ we have
\begin{align}
    \expe{\left(2X-\frac{\sigma^2}{2}+\sigma Z\right)^2\indicator{A_3}{X}\indicator{ C_4}{X, Z}}
&=\frac{1}{2\sqrt{2\uppi}}\int_{-\infty}^{\sigma/2-\sigma^{2v-1} r}e^{-z^2/2}\left[-\frac{z^3}{3}\sigma^3+ \dots\right]\rmd z.
\end{align}

\subsubsection{Third moment}
Having established that the only possible scaling is given by $\alpha=1/3$, $\beta=m/3$ with $m\geq 1$, we now proceed to bound the third moment of $\phi_d$ in this case.
For simplicity, we only consider the case for $r\geq 1$, the other case follows analogously.

Since $m\geq1$, we find that $\expe{\phi_d^1(X, Z)^3\indicator{R_1}{X, Z}} = o(1)$ as $d\to\infty$ since the limits of integration all converge to 0. Then, using H\"older's inequality for $\phi_d^2$, we have
\begin{align}
&\expe{\phi_d^2(X, Z)^3} 
\leq C \expe{\left(\frac{\sigma^2}{2}-\sigma Z\right)^3\indicator{A_3}{X}\indicator{C_1}{X, Z}}\\
&\qquad\qquad +C\expe{\left(2X-\frac{\sigma^2}{2}+\sigma Z\right)^3\indicator{A_3}{X}\indicator{C_2}{X, Z}}\\
&\qquad\qquad + C \mathbb{E}\left[\left(\frac{Z^2}{2}-\frac{1}{2\sigma^2}\left(\frac{1}{\sigma^{2(v-1)}r}X+\left(1-\frac{1}{\sigma^{2(v-1)}r}\right)\left(\frac{\sigma^2}{2}-\sigma Z\right)\right)^2\right)^3\right.\\
&\qquad\qquad\qquad\left.\times\indicator{A_3}{X}\indicator{C_1\cup C_2}{X, Z}\right]\\
&\qquad\qquad = \frac{C}{2\sqrt{2\uppi}}\int_{-\infty}^{\sigma/2 - \sigma^{2v-1} r/2}e^{-z^2/2}\left[-\frac{rz^3}{2}\sigma^5+...\right]\rmd z\\
&\qquad\qquad+\frac{C}{2\sqrt{2\uppi}}\int_{-\infty}^{\sigma/2-\sigma^{2v-1} r}e^{-z^2/2}\left[-\frac{rz^3}{2}\sigma^5+...\right]\rmd z\\
&\qquad\qquad + \frac{C}{2\sqrt{2\uppi}}\int_{-\infty}^{\sigma/2-\sigma^{2v-1} r}e^{-z^2/2}\left[z^3\sigma^5\xi(r)+...\right]\rmd z +o(1),
\end{align}
where $\xi:[0, +\infty)\to \real$ is a function of $r$ only which might change from one line to the other.
For $\phi_d^3$, we have, using again H\"older's inequality,
\begin{align}
&\expe{\left(\frac{1}{\sigma^{2(v-1)}r}X-\sigma Z+\frac{Z^2}{2}-\frac{1}{2\sigma^2}\left(\frac{1}{\sigma^{2(v-1)}r}X -\sigma Z+\frac{\sigma^2}{2}\right)^2\right)^3\indicator{A_1}{X}\indicator{ B_3}{X, Z}} \\
&\qquad\qquad\leq C\expe{\left(\frac{1}{\sigma^{2(v-1)}r}X-\sigma Z\right)^3\indicator{A_1}{X}\indicator{ B_3}{X, Z}} \\
&\qquad\qquad+ C\expe{\left(\frac{Z^2}{2}-\frac{1}{2\sigma^2}\left(\frac{1}{\sigma^{2(v-1)}r}X -\sigma Z+\frac{\sigma^2}{2}\right)^2\right)^3\indicator{A_1}{X}\indicator{ B_3}{X, Z}}\\
&\qquad\qquad=\frac{1}{2\sqrt{2\uppi}}\int_{\sigma^{2v-1} r/2}^{+\infty}e^{-z^2/2}\left[-\frac{z^3r}{2}\sigma^5+...\right] \rmd z \\
&\qquad\qquad+\frac{1}{2\sqrt{2\uppi}}\int_{\sigma^{2v-1} r/2}^{+\infty}e^{-z^2/2}\left[z^3\sigma^5\xi(r)+...\right] \rmd z + o(1)\\
&\expe{\left(2X-\frac{1}{\sigma^{2(v-1)}r}X+\sigma Z+\frac{Z^2}{2}-\frac{1}{2\sigma^2}\left(\frac{1}{\sigma^{2(v-1)}r}X -\sigma Z-\frac{\sigma^2}{2}\right)^2\right)^3\indicator{A_1}{X}\indicator{ B_4}{X, Z}} \\
&\qquad\qquad\leq C\expe{\left(2X-\frac{1}{\sigma^{2(v-1)}r}X+\sigma Z\right)^3\indicator{A_1}{X}\indicator{ B_4}{X, Z}} \\
&\qquad\qquad+ C\expe{\left(\frac{Z^2}{2}-\frac{1}{2\sigma^2}\left(\frac{1}{\sigma^{2(v-1)}r}X -\sigma Z-\frac{\sigma^2}{2}\right)^2\right)^3\indicator{A_1}{X}\indicator{ B_4}{X, Z}}\\
&\qquad\qquad=\frac{1}{2\sqrt{2\uppi}}\int_{\sigma^{2v-1} r/2}^{+\infty}e^{-z^2/2}\left[\frac{z^3r}{2}\sigma^5+...\right] \rmd z \\
&\qquad\qquad+\frac{1}{2\sqrt{2\uppi}}\int_{\sigma^{2v-1} r/2}^{+\infty}e^{-z^2/2}\left[z^3\sigma^5\xi(r)+...\right] \rmd z + o(1),
\end{align}
where $\xi:[0, +\infty)\to \real$ is a function of $r$ only which might change from one line to the other.
Finally, for $\phi_d^4$ we have
\begin{align}
    \expe{\left(2X-\frac{\sigma^2}{2}+\sigma Z\right)^3\indicator{A_3}{X}\indicator{ C_4}{X, Z}}
&=\frac{1}{2\sqrt{2\uppi}}\int_{-\infty}^{\sigma/2-\sigma^{2v-1} r}e^{-z^2/2}\left[\frac{z^3}{10}\sigma^5+...\right]\rmd z.
\end{align}


\section{Numerical experiments}
\label{app:numerical}
\subsection{Differentiable targets}
\label{sec:diff_numeric}

We collect here a number of numerical experiments confirming the results in Section~\ref{sec:differentiable}.
To do so, we consider the Gaussian distribution in
Example~\ref{ex:gaussian} and four algorithmic settings summarized in Table~\ref{tab:differentiable} which correspond to the three cases identified in Theorem~\ref{theo:differentiable} and MALA.
All chains are initialized at stationarity drawing i.i.d samples from the target, i.e., $X_0^d\sim\pi_d$, so that no burnin effects are present.
\begin{table}
\centering
\begin{tabular}{{cllcccc}}
Case & Figure & Algorithm & $\alpha$ & $\beta$ & $v$ &  $r$\\
\hline\noalign{\smallskip}
\ref{case:a} & \ref{fig:gaussian_case_a} & MY-MALA & 1/4 & 1/8 & 1/2 &1 \\
\ref{case:b} & \ref{fig:gaussian_pmala} & P-MALA & 1/6 & 1/6 & 1 &1\\
\ref{case:c} & \ref{fig:gaussian_general} & MY-MALA & 1/6 & 1/2 & 3 &2 \\
--- & \ref{fig:gaussian_mala} & MALA & 1/6 & 1/6 & 1 &$\approx 0$ \\
\end{tabular}
\caption{Algorithm setting for the simulation study on the Gaussian target.}
\label{tab:differentiable}
\end{table}

The first plot in Figure~\ref{fig:gaussian_case_a}--\ref{fig:gaussian_mala} show that for values of $\alpha$ different from those identified in \Cref{theo:differentiable} the acceptance ratio $a_d(\ell,r)$ becomes degenerate as $d$ increases.
For the values of $\alpha$ identified in \Cref{theo:differentiable} we analyze the influence of $\ell$ on the acceptance $a_d(\ell,r)$ (second plot), obtaining for $d
\to +\infty$ the expression given by \Cref{theo:differentiable}--\ref{case:a} for
\Cref{fig:gaussian_case_a}, the expression in \Cref{theo:differentiable}--\ref{case:b} for
\Cref{fig:gaussian_mala,fig:gaussian_pmala} and that in
\Cref{theo:differentiable}--\ref{case:c} for \Cref{fig:gaussian_general}. 

Finally, we consider the relationship between acceptance ratio $a_d(\ell,r)$ and the speed of the diffusion $h(\ell, r)$ approximated by the expected squared jumping distance (see, e.g. \cite{gelman1996efficient})
\begin{align}
    \textrm{ESJD}_d:= d^{2\alpha}\mathbb{E}\left[(X_0^d-X_1^d)^2\right].
\end{align}
Looking at the last plot in Figure~\ref{fig:gaussian_case_a}--\ref{fig:gaussian_mala} we find that, even for relatively small values of $d$, the shape of the plot of $\textrm{ESJD}_d$ as a function of the
acceptance $a_d(\ell,r)$ is similar to that of the theoretical limit.
This suggests that tuning the acceptance ratio to be approximately $0.452$ when $\alpha=1/4, \beta = 1/8$
and approximately
$0.574$ when $\alpha=1/6, \beta=v/6$ with $v\geq 1$ 
should generally guarantee high efficiency.


\begin{figure}
\centering
\begin{tikzpicture}[every node/.append style={font=\tiny}]
\node (img1) {\includegraphics[width = 0.4\textwidth]{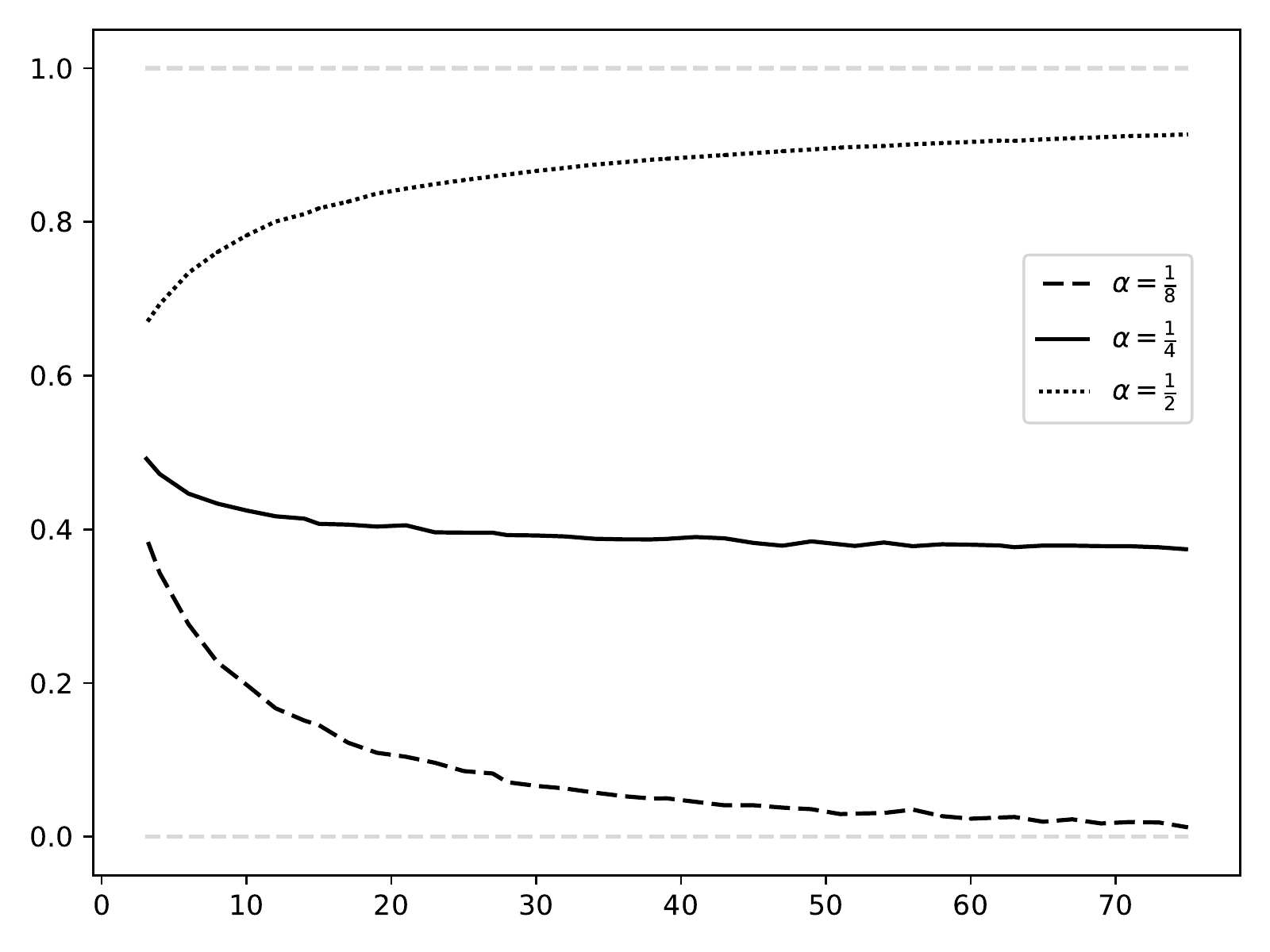}};
\node[below=of img1, node distance = 0, yshift = 1.2cm] {$d$};
\node[left=of img1, node distance = 0, rotate = 90, anchor = center, yshift = -1cm] {$a_d(\ell, r)$};
\node[right=of img1, node distance = 0, xshift = -0.7cm] (img2) {\includegraphics[width = 0.4\textwidth]{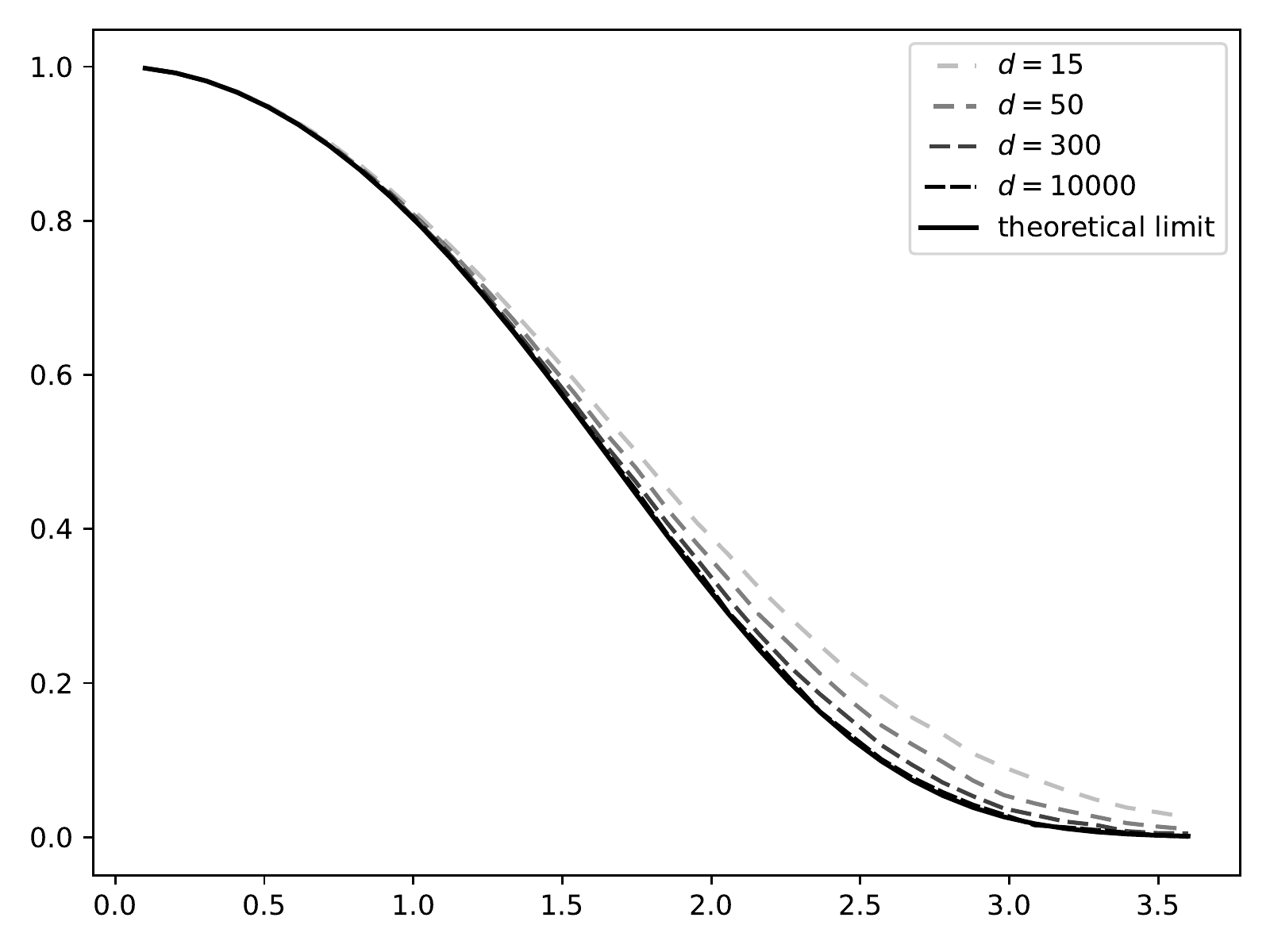}};
\node[below=of img2, node distance = 0, yshift = 1.2cm] {$\ell$};
\node[left=of img2, node distance = 0, rotate = 90, anchor = center, yshift = -1cm] {$a_d(\ell, r)$};
\node[below=of img1, node distance = 0, xshift = 3cm, yshift = 1cm] (img3) {\includegraphics[width = 0.4\textwidth]{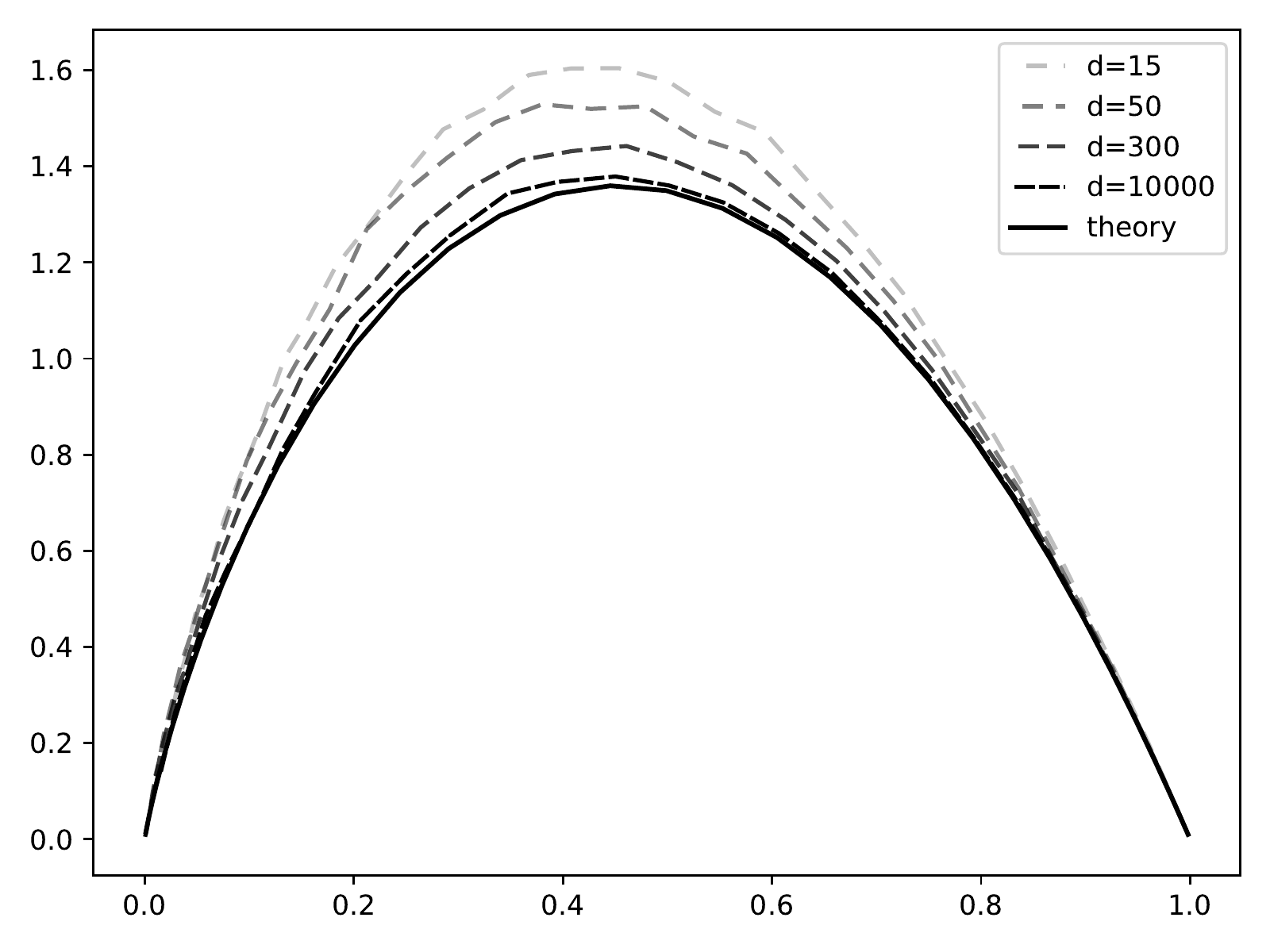}};
\node[below=of img3, node distance = 0, yshift = 1.2cm] {$a_d(\ell, r)$};
\node[left=of img3, node distance = 0, rotate = 90, anchor = center, yshift = -1cm] {$\textrm{ESJD}_d$};
\end{tikzpicture}
\caption{\textbf{Case~\ref{case:a}}: MY-MALA with Gaussian target and $v=1/2, r=1$. Average acceptance rate for different choices of $\alpha$ (first); acceptance rate as a function of $\ell$ for increasing dimension $d$ (second); $\textrm{ESJD}_d$ as a function of the acceptance rate $a_d(\ell, r)$ (third).}
\label{fig:gaussian_case_a}
\end{figure}

\begin{figure}
\centering
\begin{tikzpicture}[every node/.append style={font=\tiny}]
\node (img1) {\includegraphics[width = 0.4\textwidth]{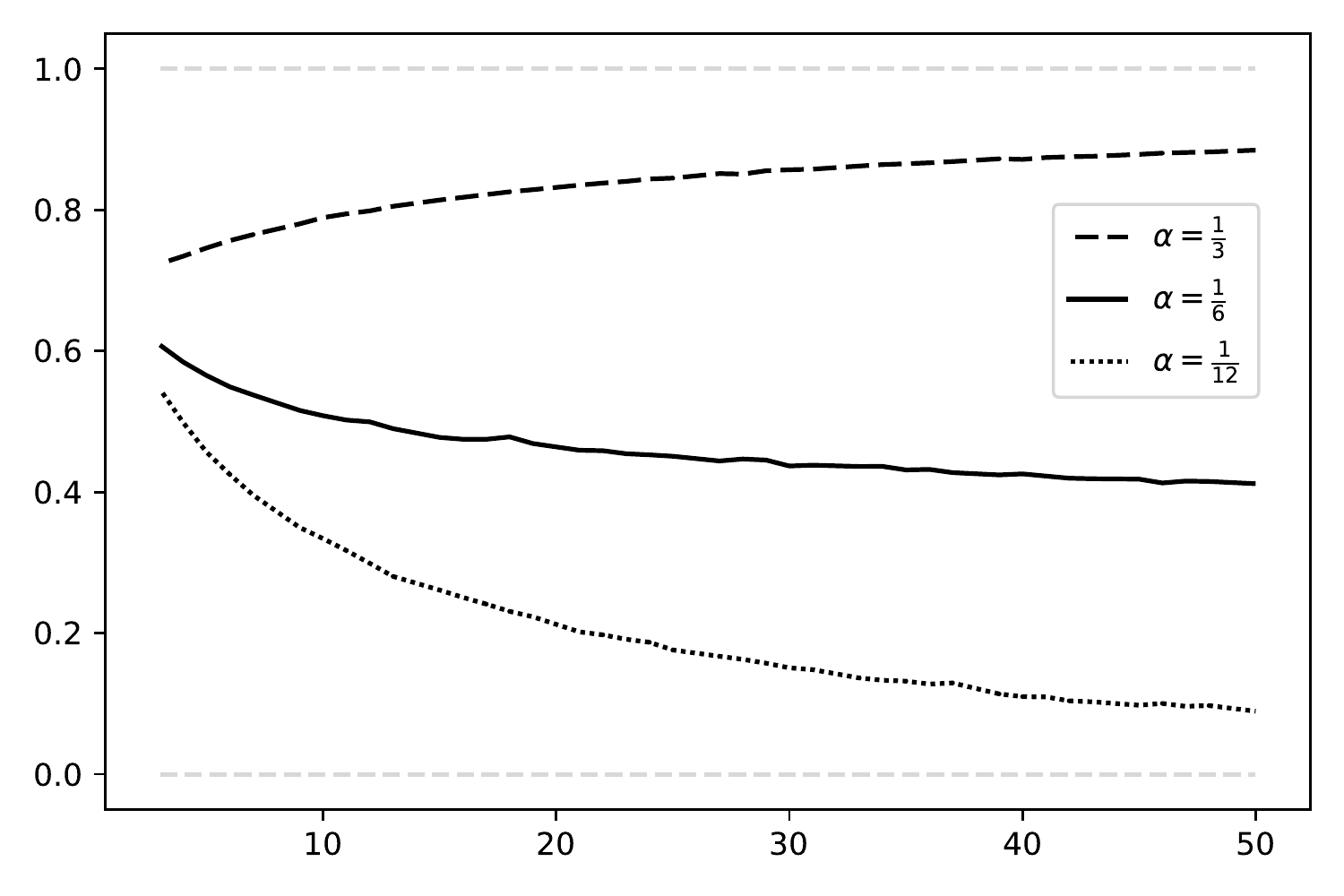}};
\node[below=of img1, node distance = 0, yshift = 1.2cm] {$d$};
\node[left=of img1, node distance = 0, rotate = 90, anchor = center, yshift = -1cm] {$a_d(\ell, r)$};
\node[right=of img1, node distance = 0, xshift = -0.7cm] (img2) {\includegraphics[width = 0.4\textwidth]{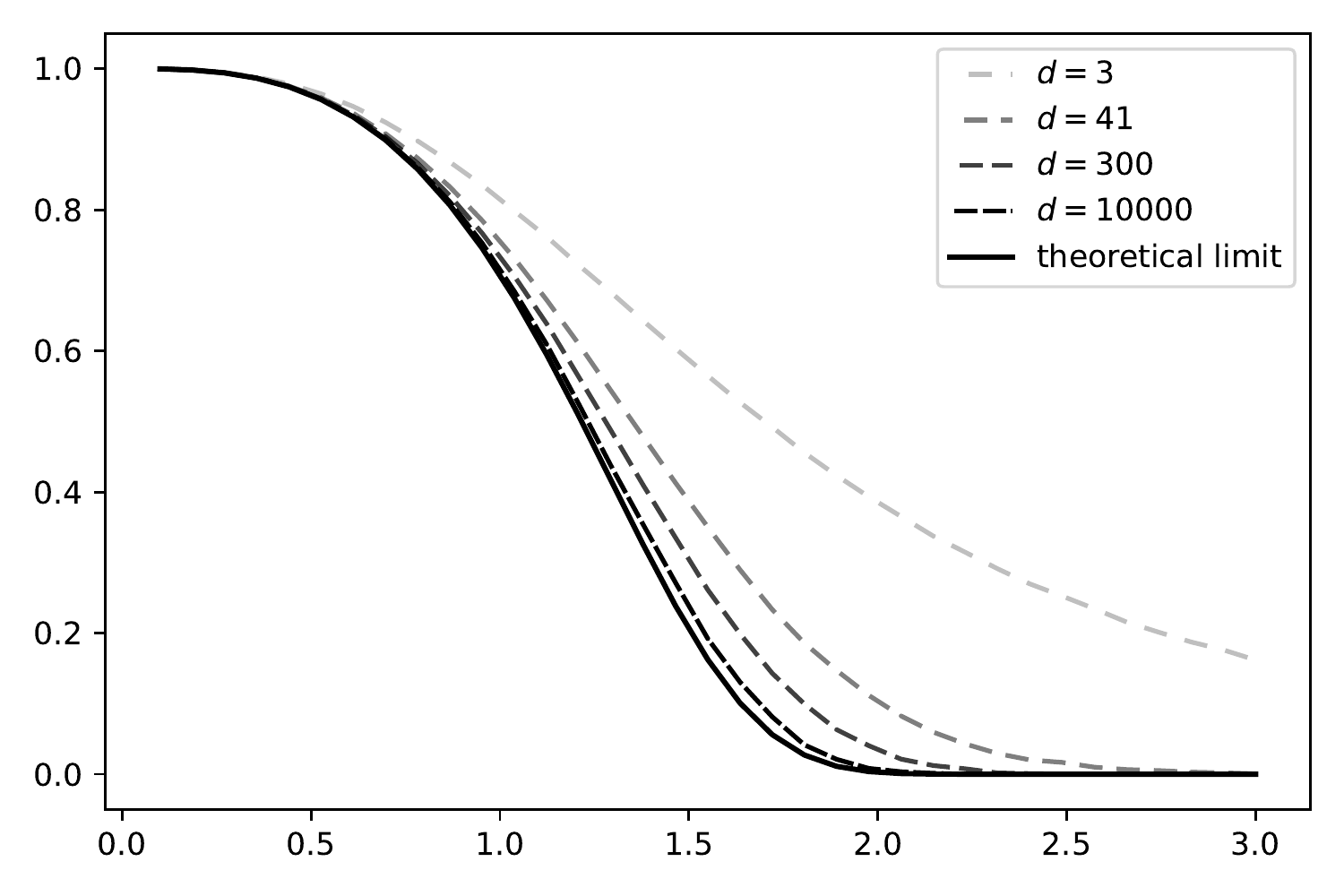}};
\node[below=of img2, node distance = 0, yshift = 1.2cm] {$\ell$};
\node[left=of img2, node distance = 0, rotate = 90, anchor = center, yshift = -1cm] {$a_d(\ell, r)$};
\node[below=of img1, node distance = 0, xshift = 3cm, yshift = 1cm] (img3) {\includegraphics[width = 0.4\textwidth]{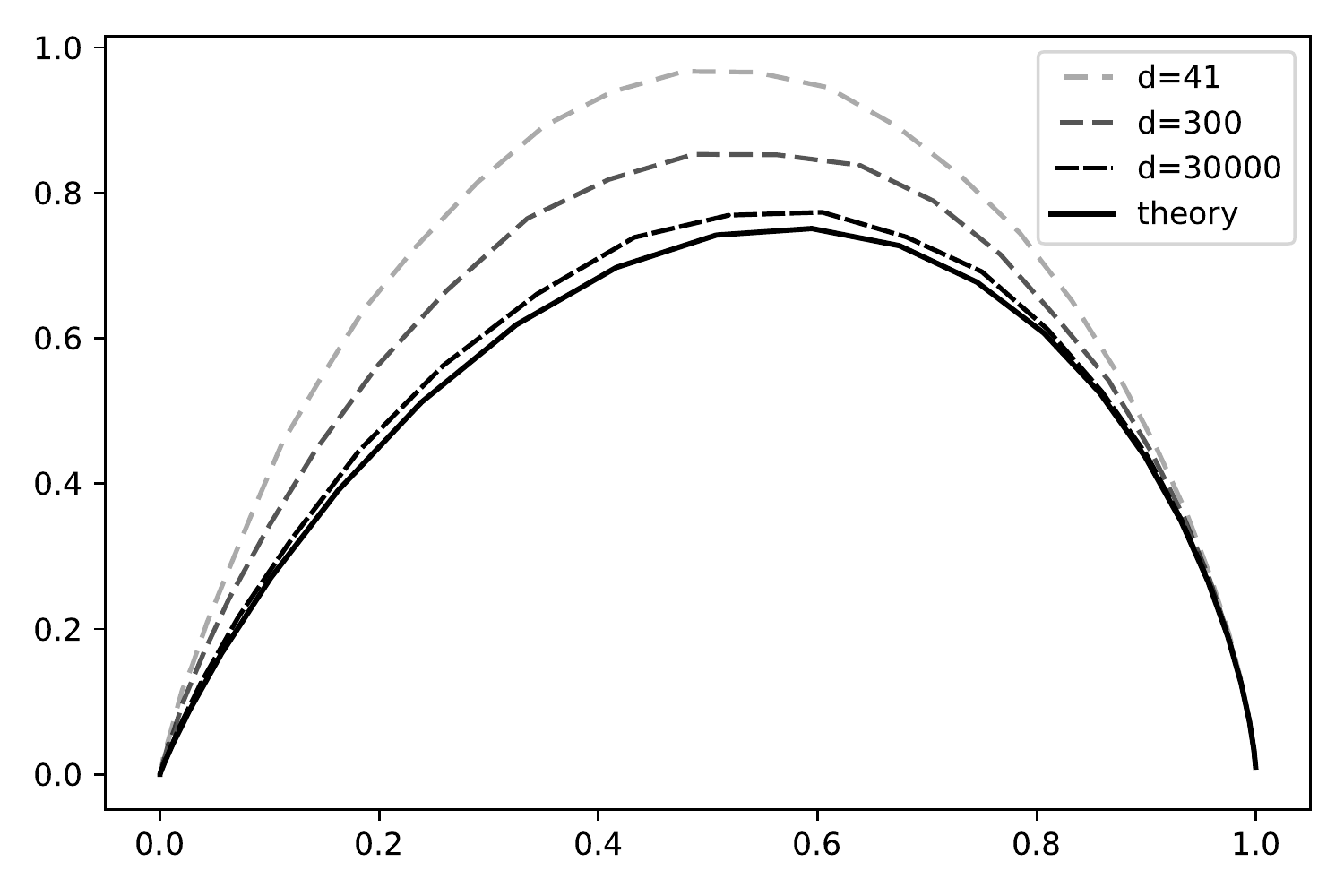}};
\node[below=of img3, node distance = 0, yshift = 1.2cm] {$a_d(\ell, r)$};
\node[left=of img3, node distance = 0, rotate = 90, anchor = center, yshift = -1cm] {$\textrm{ESJD}_d$};
\end{tikzpicture}
\caption{\textbf{Case~\ref{case:b}}: MY-MALA with Gaussian target and $v=1, r=1$ (P-MALA).  Average acceptance rate for different choices of $\alpha$ (first); acceptance rate as a function of $\ell$ for increasing dimension $d$ (second); $\textrm{ESJD}_d$ as a function of the acceptance rate $a_d(\ell, r)$ (third).}
\label{fig:gaussian_pmala}
\end{figure}

\begin{figure}
\centering
\begin{tikzpicture}[every node/.append style={font=\tiny}]
\node (img1) {\includegraphics[width = 0.4\textwidth]{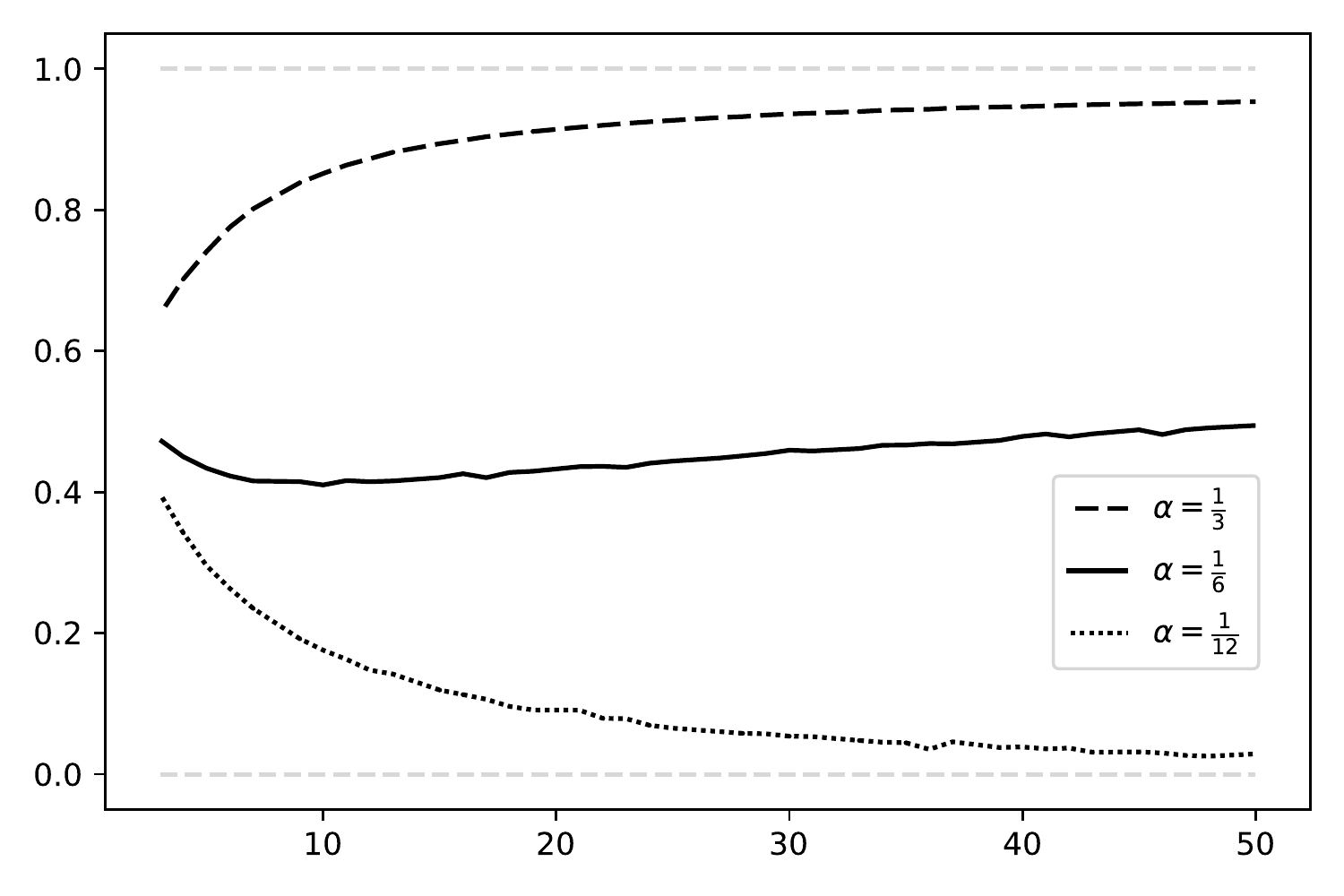}};
\node[below=of img1, node distance = 0, yshift = 1.2cm] {$d$};
\node[left=of img1, node distance = 0, rotate = 90, anchor = center, yshift = -1cm] {$a_d(\ell, r)$};
\node[right=of img1, node distance = 0, xshift = -0.7cm] (img2) {\includegraphics[width = 0.4\textwidth]{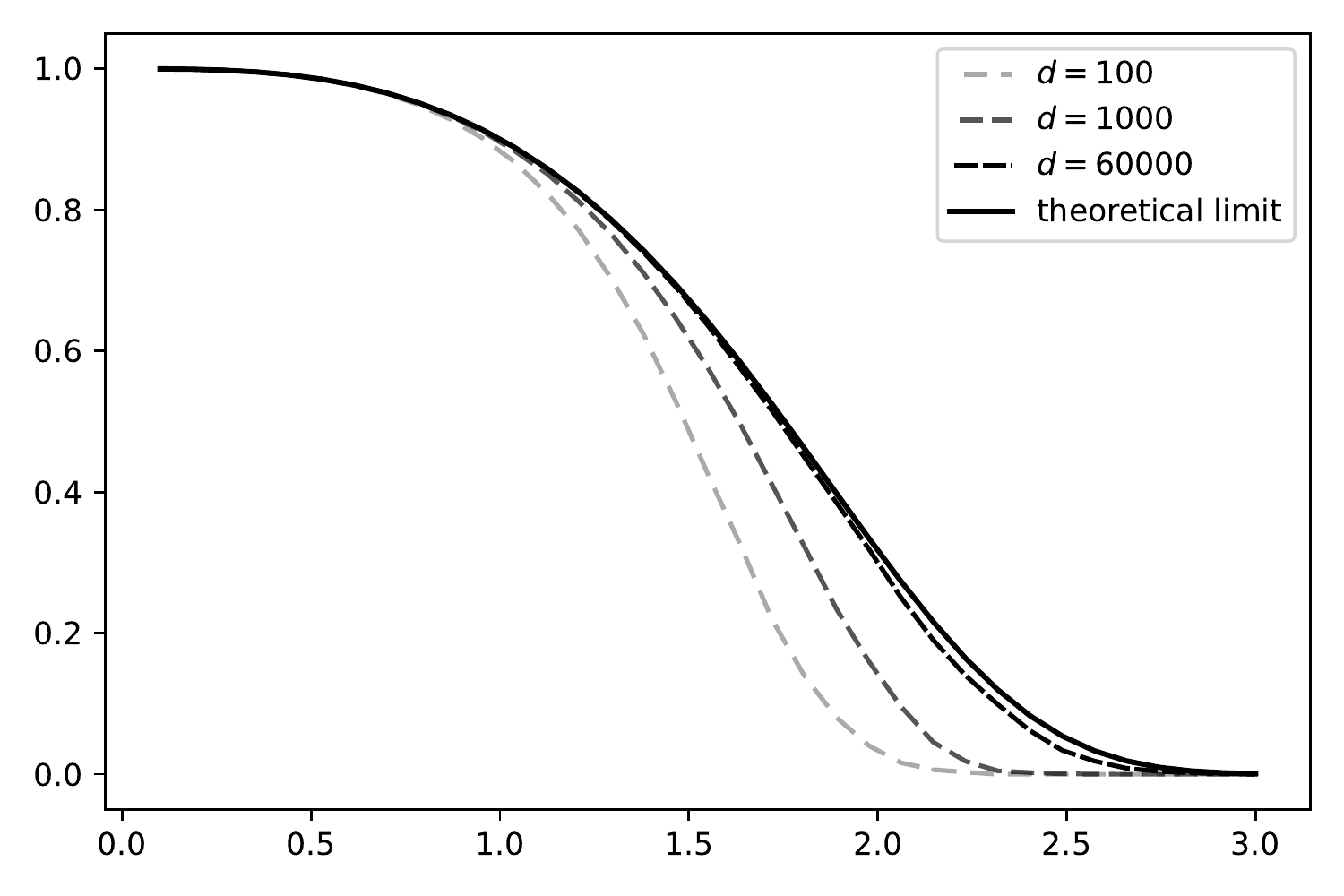}};
\node[below=of img2, node distance = 0, yshift = 1.2cm] {$\ell$};
\node[left=of img2, node distance = 0, rotate = 90, anchor = center, yshift = -1cm] {$a_d(\ell, r)$};
\node[below=of img1, node distance = 0, xshift = 3cm, yshift = 1cm] (img3) {\includegraphics[width = 0.4\textwidth]{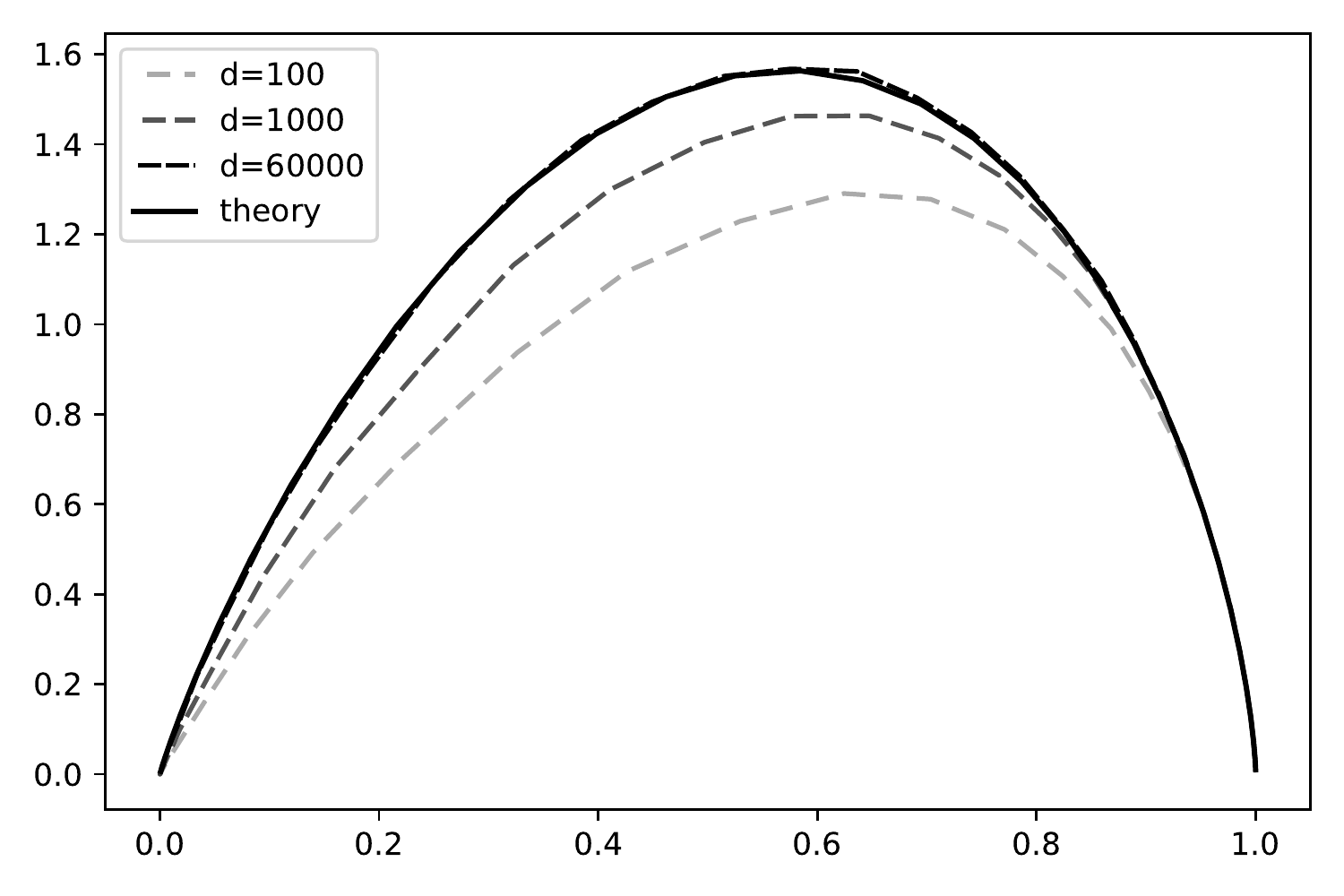}};
\node[below=of img3, node distance = 0, yshift = 1.2cm] {$a_d(\ell, r)$};
\node[left=of img3, node distance = 0, rotate = 90, anchor = center, yshift = -1cm] {$\textrm{ESJD}_d$};
\end{tikzpicture}
\caption{\textbf{Case~\ref{case:c}}: MY-MALA with Gaussian target and $v=3, r=2$.  Average acceptance rate for different choices of $\alpha$ (first); acceptance rate as a function of $\ell$ for increasing dimension $d$ (second); $\textrm{ESJD}_d$ as a function of the acceptance rate $a_d(\ell, r)$ (third).}
\label{fig:gaussian_general}
\end{figure}

\begin{figure}
\centering
\begin{tikzpicture}[every node/.append style={font=\tiny}]
\node (img1) {\includegraphics[width = 0.4\textwidth]{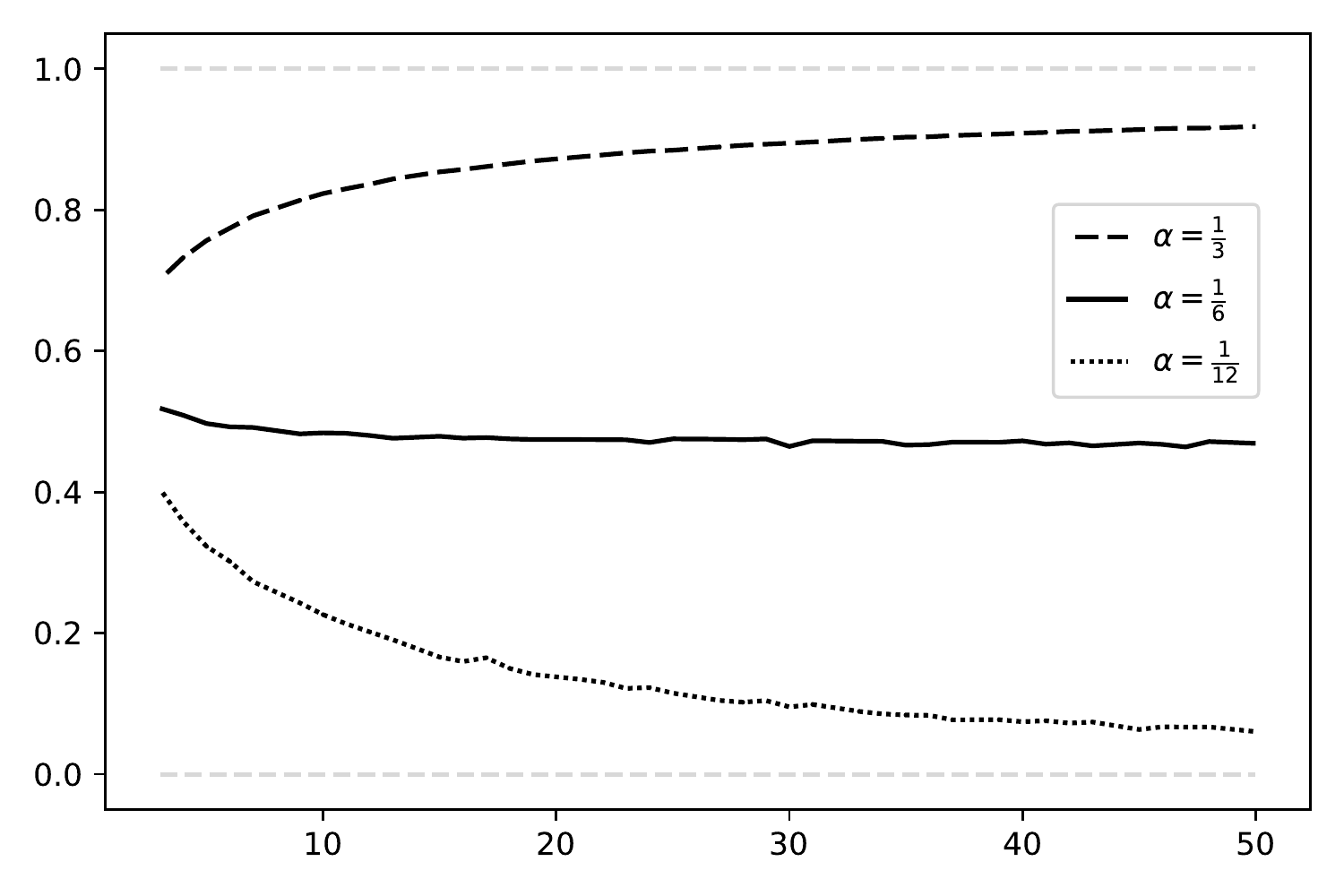}};
\node[below=of img1, node distance = 0, yshift = 1.2cm] {$d$};
\node[left=of img1, node distance = 0, rotate = 90, anchor = center, yshift = -1cm] {$a_d(\ell, r)$};
\node[right=of img1, node distance = 0, xshift = -0.7cm] (img2) {\includegraphics[width = 0.4\textwidth]{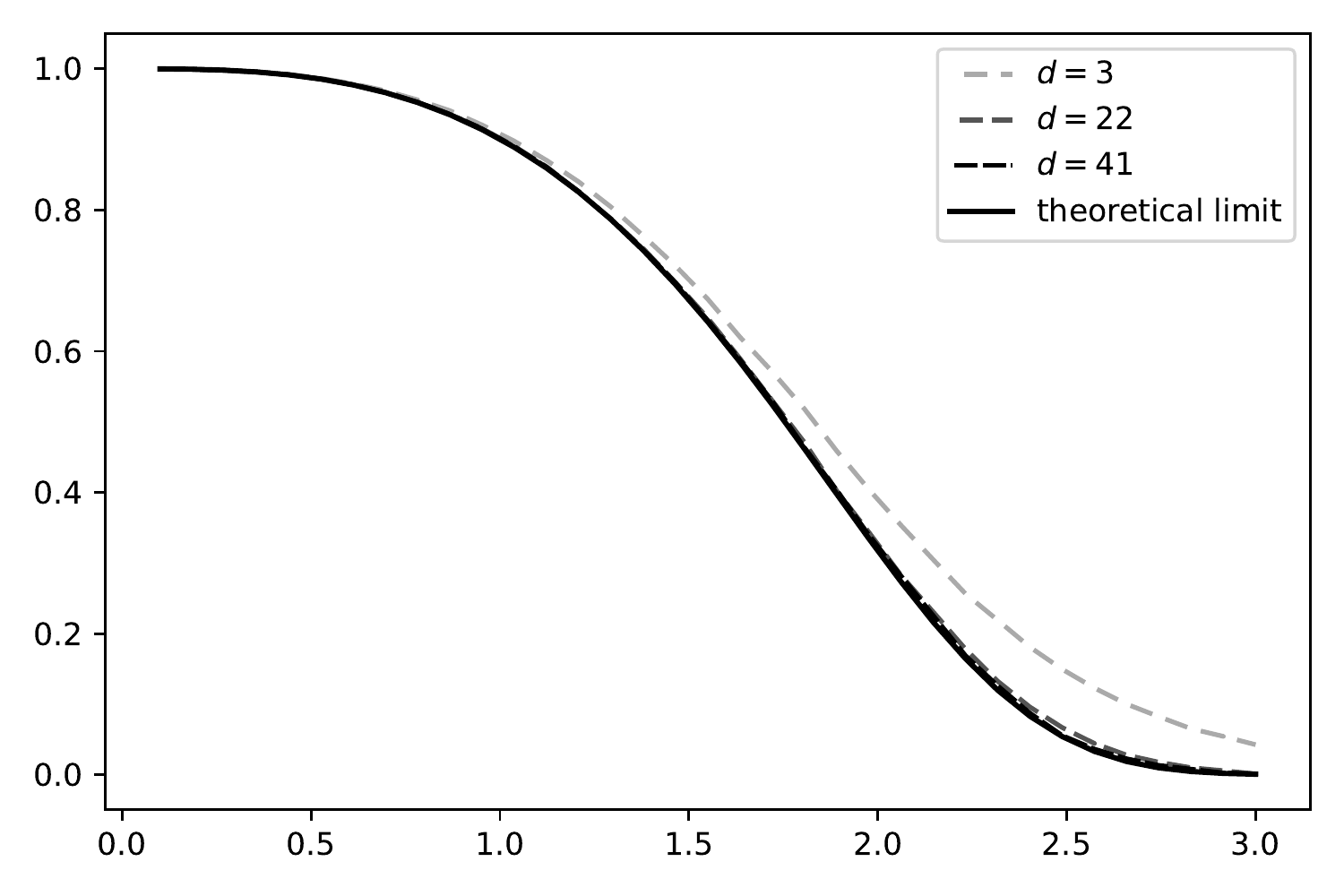}};
\node[below=of img2, node distance = 0, yshift = 1.2cm] {$\ell$};
\node[left=of img2, node distance = 0, rotate = 90, anchor = center, yshift = -1cm] {$a_d(\ell, r)$};
\node[below=of img1, node distance = 0, xshift = 3cm, yshift = 1cm] (img3) {\includegraphics[width = 0.4\textwidth]{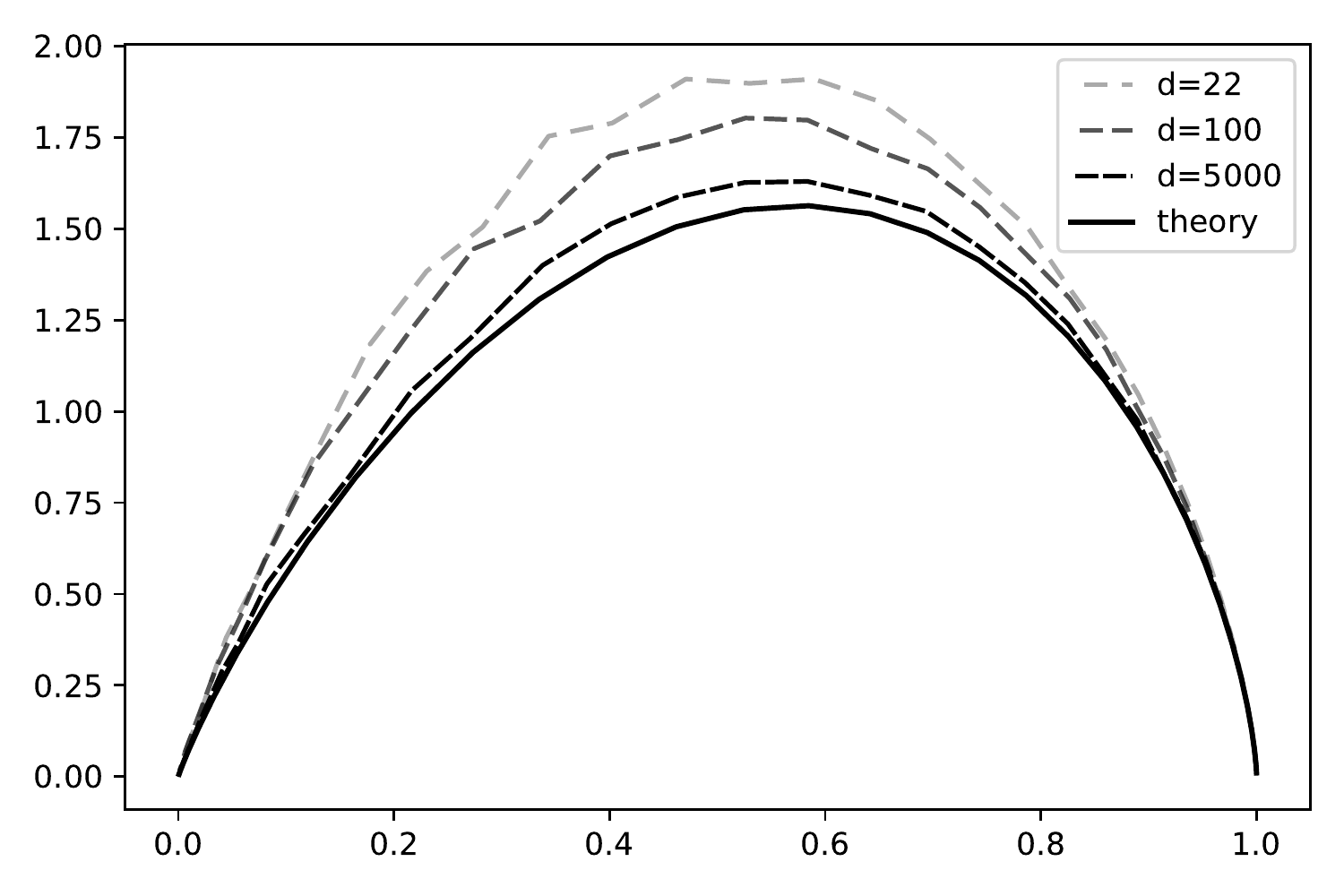}};
\node[below=of img3, node distance = 0, yshift = 1.2cm] {$a_d(\ell, r)$};
\node[left=of img3, node distance = 0, rotate = 90, anchor = center, yshift = -1cm] {$\textrm{ESJD}_d$};
\end{tikzpicture}
\caption{MY-MALA with Gaussian target and $v=1, r\to 0$ (MALA).  Average acceptance rate for different choices of $\alpha$ (first); acceptance rate as a function of $\ell$ for increasing dimension $d$ (second); $\textrm{ESJD}_d$ as a function of the acceptance rate $a_d(\ell, r)$ (third).}
\label{fig:gaussian_mala}
\end{figure}

\subsection{Laplace target}
\label{sec:laplace_numeric}

We collect here a number of numerical experiments confirming the results for the Laplace distribution in Section~\ref{sec:laplace}.
Similarly to \Cref{sec:diff_numeric} we consider three algorithmic settings, summarized in Table~\ref{tab:laplace}. All chains are initialized at stationarity drawing i.i.d samples from the target, i.e., $X_0^d\sim\pi_d$, so that no burnin effects are present.

\begin{table}[!ht]
\centering
\begin{tabular}{{llcccc}}
Figure & Algorithm & $\alpha$ & $\beta$ & $v$ &  $r$ \\
\hline\noalign{\smallskip}
\ref{fig:laplace_mala} & sG-MALA & 1/3 & 1/3 & 1 &$  0$\\
\ref{fig:laplace_pmala} & P-MALA & 1/3 & 1/3 & 1 &1\\
\ref{fig:laplace_general} & MY-MALA & 1/3 & 1 & 3 &2 \\
\end{tabular}
\caption{Algorithm setting for the simulation study on the Laplace target.}
\label{tab:laplace}
\end{table}

The first plot in Figures~\ref{fig:laplace_mala}--\ref{fig:laplace_general} shows that for $\alpha\neq 1/3$ the acceptance ratio $a_d(\ell, r)$ becomes degenerate as $d$ increases; while the second plot shows that $(a_d(\ell, r))_{d \in \nsets}$ and $(\textrm{ESJD}_d)_{d\in\nsets}$ converge to $a^\rmL(\ell)$ and $h^\rmL(\ell)$ given in
\Cref{theo:acceptance_laplace,theo:laplace_diffusion}, respectively.
In the case $v=3, r=2$
in \Cref{fig:laplace_general}, we find that the behaviour for low values of $d$ significantly differs from the limiting one. 
For values of $d$ lower than 130 the $\textrm{ESJD}_d$ achieves its maximum at a value of the acceptance $a_d(\ell, r)$ different from that predicted by \Cref{,theo:laplace_diffusion}. In practice, this might mean that for low dimensional
settings the recommended choice of
$a(\ell,c)=0.360$ is far from optimal. Similar behaviours for small $d$ have also been observed in the case of RWM and MALA (e.g., \cite[Section 2.1]{sherlock2009optimal}).

\begin{figure}
\centering
\begin{tikzpicture}[every node/.append style={font=\tiny}]
\node (img1) {\includegraphics[width = 0.4\textwidth]{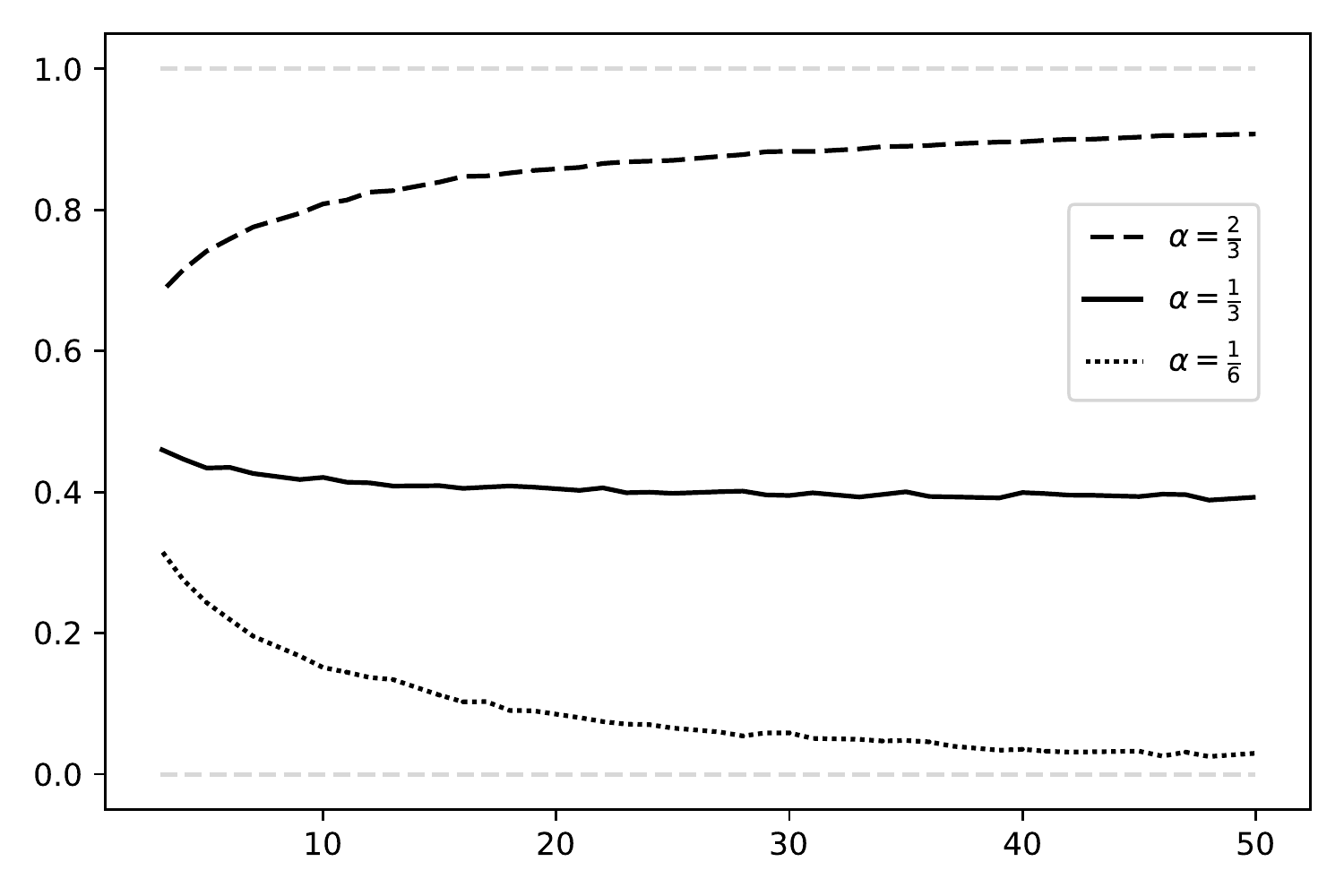}};
\node[below=of img1, node distance = 0, yshift = 1.2cm] {$d$};
\node[left=of img1, node distance = 0, rotate = 90, anchor = center, yshift = -1cm] {$a_d(\ell, r)$};
\node[right=of img1, node distance = 0, xshift = -0.7cm] (img2) {\includegraphics[width = 0.4\textwidth]{laplace_ar_mala.pdf}};
\node[below=of img2, node distance = 0, yshift = 1.2cm] {$\ell$};
\node[left=of img2, node distance = 0, rotate = 90, anchor = center, yshift = -1cm] {$a_d(\ell, r)$};
\node[below=of img1, node distance = 0, xshift = 3cm, yshift = 1cm] (img3) {\includegraphics[width = 0.4\textwidth]{laplace_esjd_mala.pdf}};
\node[below=of img3, node distance = 0, yshift = 1.2cm] {$a_d(\ell, r)$};
\node[left=of img3, node distance = 0, rotate = 90, anchor = center, yshift = -1cm] {$\textrm{ESJD}_d$};
\end{tikzpicture}
\caption{MY-MALA with Laplace target and $v=1, r=0$ (sG-MALA).  Average acceptance rate for different choices of $\alpha$ (first); acceptance rate as a function of $\ell$ for increasing dimension $d$ (second); $\textrm{ESJD}_d$ as a function of the acceptance rate $a_d(\ell, r)$ (third).}
\label{fig:laplace_mala}
\end{figure}

\begin{figure}
\centering
\begin{tikzpicture}[every node/.append style={font=\tiny}]
\node (img1) {\includegraphics[width = 0.4\textwidth]{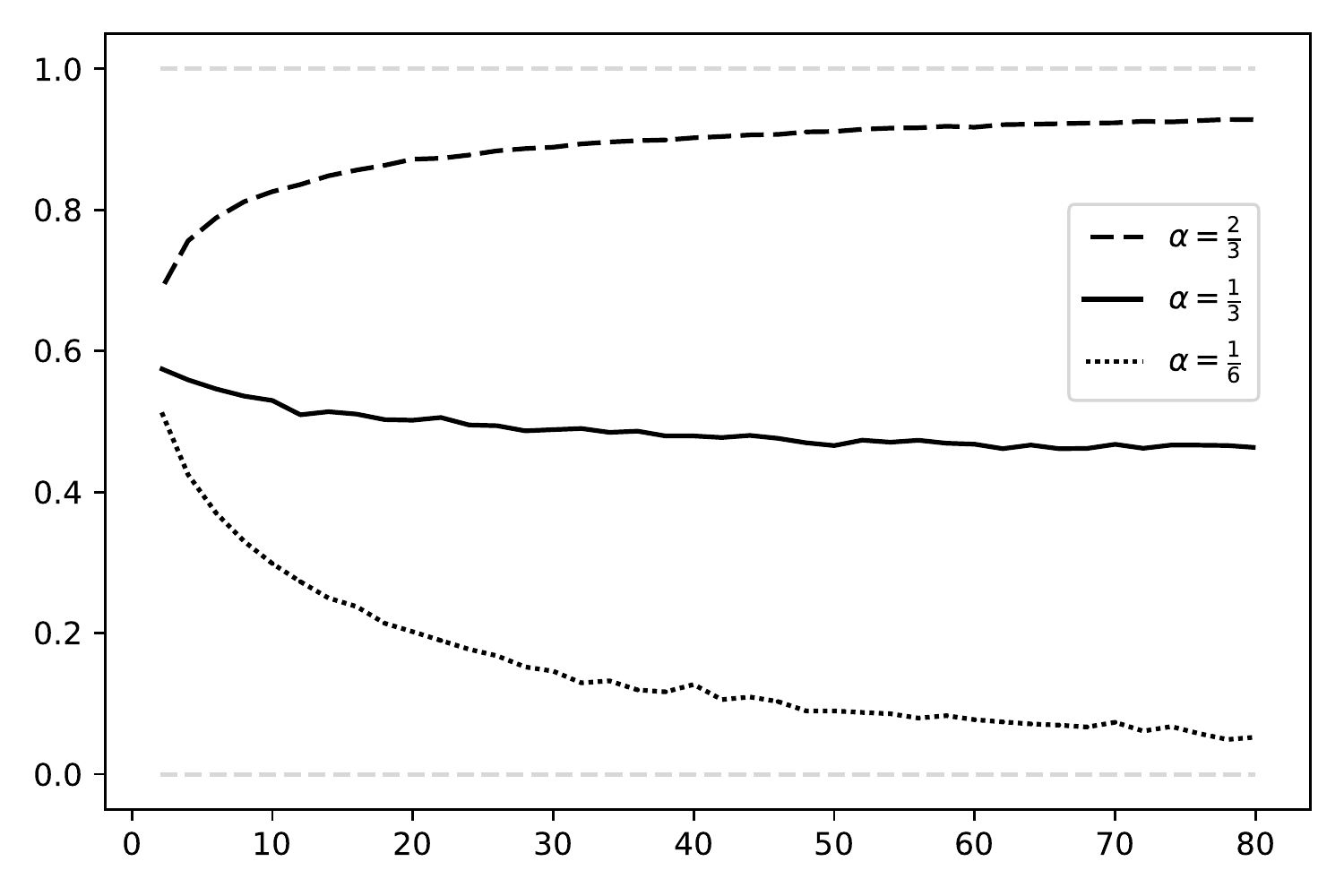}};
\node[below=of img1, node distance = 0, yshift = 1.2cm] {$d$};
\node[left=of img1, node distance = 0, rotate = 90, anchor = center, yshift = -1cm] {$a_d(\ell, r)$};
\node[right=of img1, node distance = 0, xshift = -0.7cm] (img2) {\includegraphics[width = 0.4\textwidth]{laplace_ar_pmala.pdf}};
\node[below=of img2, node distance = 0, yshift = 1.2cm] {$\ell$};
\node[left=of img2, node distance = 0, rotate = 90, anchor = center, yshift = -1cm] {$a_d(\ell, r)$};
\node[below=of img1, node distance = 0, xshift = 3cm, yshift = 1cm] (img3) {\includegraphics[width = 0.4\textwidth]{laplace_esjd_pmala.pdf}};
\node[below=of img3, node distance = 0, yshift = 1.2cm] {$a_d(\ell, r)$};
\node[left=of img3, node distance = 0, rotate = 90, anchor = center, yshift = -1cm] {$\textrm{ESJD}_d$};
\end{tikzpicture}
\caption{MY-MALA with Laplace target and $v=1, r=1$ (P-MALA).  Average acceptance rate for different choices of $\alpha$ (first); acceptance rate as a function of $\ell$ for increasing dimension $d$ (second); $\textrm{ESJD}_d$ as a function of the acceptance rate $a_d(\ell, r)$ (third).}
\label{fig:laplace_pmala}
\end{figure}

\begin{figure}
\centering
\begin{tikzpicture}[every node/.append style={font=\tiny}]
\node (img1) {\includegraphics[width = 0.4\textwidth]{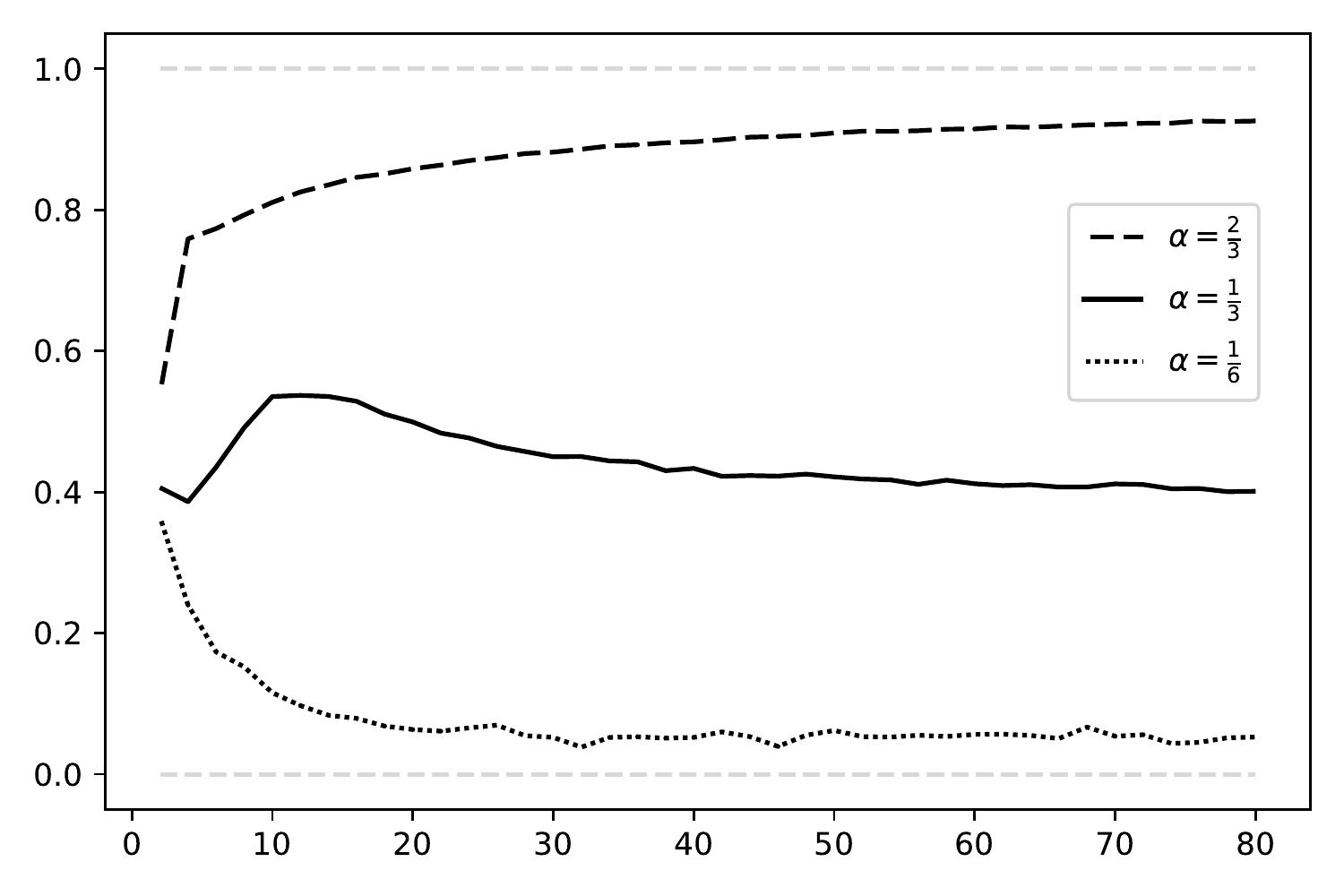}};
\node[below=of img1, node distance = 0, yshift = 1.2cm] {$d$};
\node[left=of img1, node distance = 0, rotate = 90, anchor = center, yshift = -1cm] {$a_d(\ell, r)$};
\node[right=of img1, node distance = 0, xshift = -0.7cm] (img2) {\includegraphics[width = 0.4\textwidth]{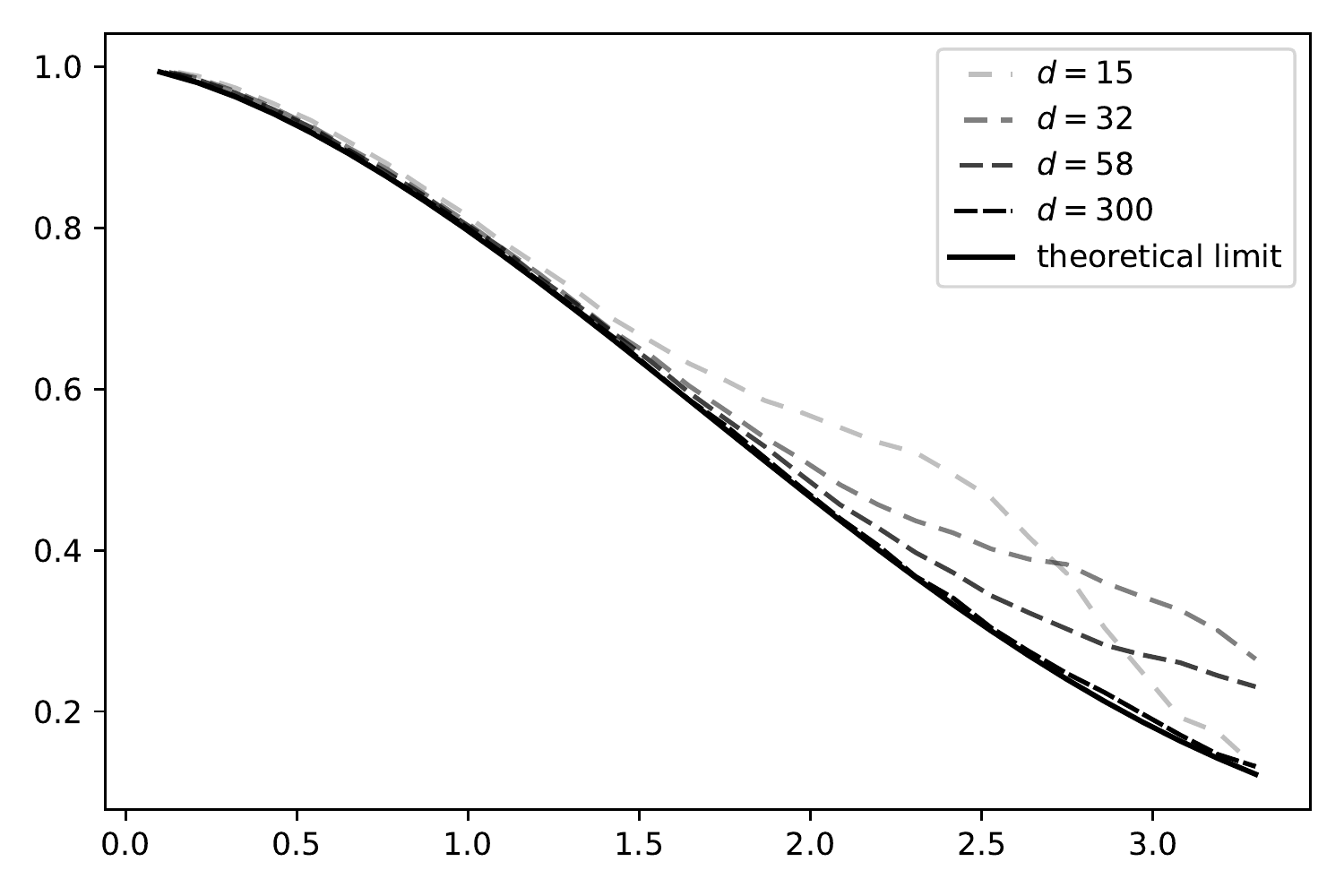}};
\node[below=of img2, node distance = 0, yshift = 1.2cm] {$\ell$};
\node[left=of img2, node distance = 0, rotate = 90, anchor = center, yshift = -1cm] {$a_d(\ell, r)$};
\node[below=of img1, node distance = 0, xshift = 3cm, yshift = 1cm] (img3) {\includegraphics[width = 0.4\textwidth]{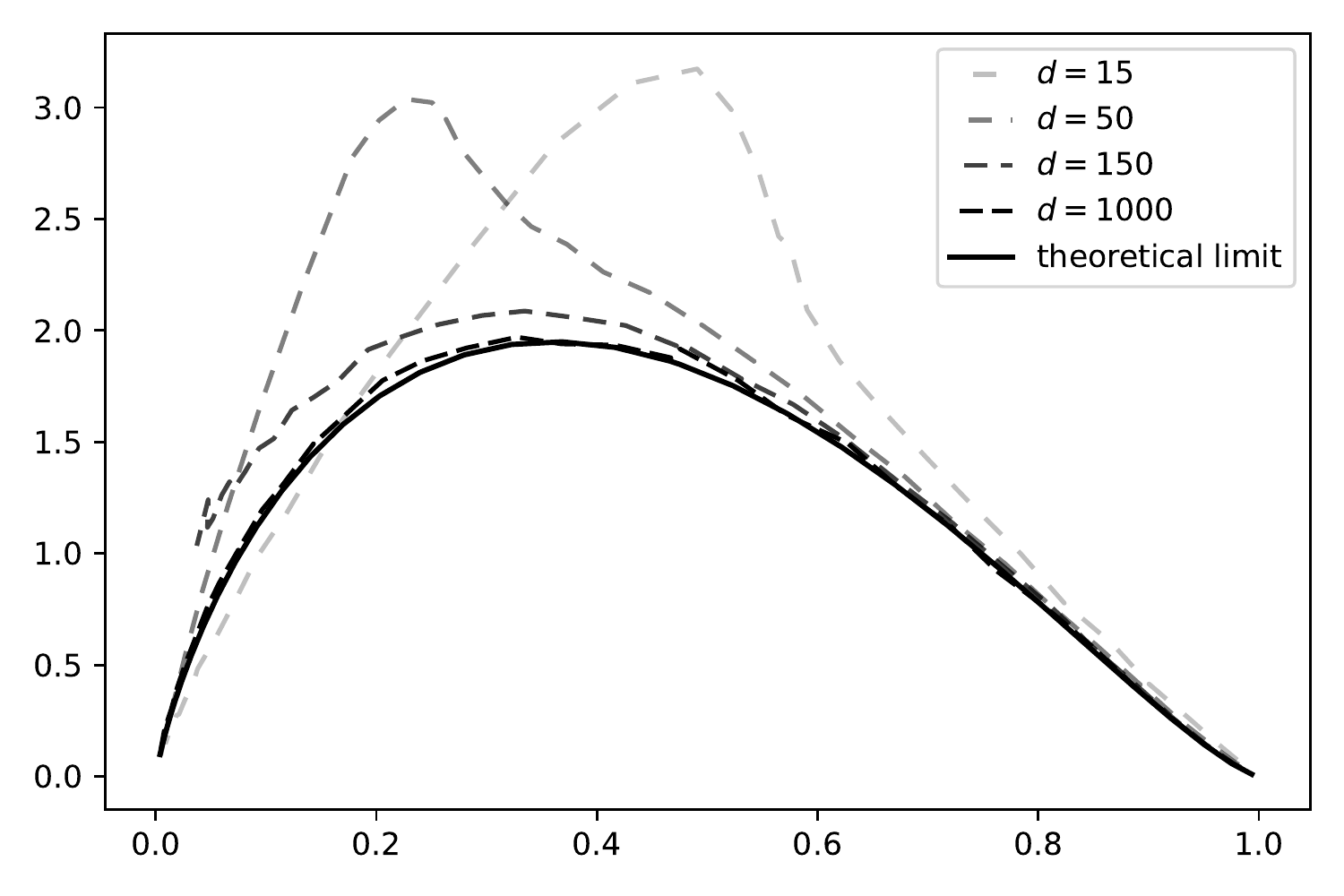}};
\node[below=of img3, node distance = 0, yshift = 1.2cm] {$a_d(\ell, r)$};
\node[left=of img3, node distance = 0, rotate = 90, anchor = center, yshift = -1cm] {$\textrm{ESJD}_d$};
\end{tikzpicture}
\caption{MY-MALA with Laplace target and $v=3, r=2$.  Average acceptance rate for different choices of $\alpha$ (first); acceptance rate as a function of $\ell$ for increasing dimension $d$ (second); $\textrm{ESJD}_d$ as a function of the acceptance rate $a_d(\ell, r)$ (third). }
\label{fig:laplace_general}
\end{figure}

\subsection{Mix of a Laplace and differentiable target}
\label{sec:mixed_numeric}

We collect here the rest of the numerical experiments illustrating the
scaling the results for the density defined in \eqref{eq:def_l2_l1_numerics}. Similarly to
\Cref{sec:laplace_numeric} we consider three algorithmic settings, summarized in
Table~\ref{tab:mixed}. All chains are initialized at stationarity drawing i.i.d samples from the target, i.e., $X_0^d\sim\pi_d$, so that no burnin effects are present.

To sample from the distribution $\pi$ in~\eqref{eq:def_l2_l1_numerics} we employ the following procedure.
Let $X$ be a random variable, distributed according to the density $\pi(x) \propto
e^{-\abs{x} - x^2/2}$. Since this distribution is symmetrical, $X$ has the same
distribution as $\varepsilon Z$, where $Z$ has the same distribution as
$\abs{X}$ and $\pr(\varepsilon = 1)= \pr(\varepsilon = -1)= 1/2$. The density of $Z$ is given by
$\pi_Z(z) \propto e^{-(z +1)^2/2}$ for any positive $z$. However, if $Y$ is a normal
distribution of mean
$-1$ and variance $1$, $Y$ conditionally to the event $(Y>0)$ admits, for all $y>0$, $y \mapsto
(\sqrt{2\pi} \pr(Y>0) )^{-1} e^{-(y +1)^2/2}$ as a
density.
Therefore, we sample $Z$ by sampling the normal variable $Y$ and rejecting whenever $Y<0$. In practice, when $Y<-2$, we don't reject by taking $-Y-2$ instead of $Y$.


\begin{table}[!ht]
\centering
\begin{tabular}{{llcccc}}
Figure & Algorithm & $\alpha$ & $\beta$ & $v$ &  $r$ \\
\hline\noalign{\smallskip}
\ref{fig:mixed_mala} & sG-MALA & 1/3 & 1/3 & 1 &$  0$\\
\ref{fig:mixed_pmala} & P-MALA & 1/3 & 1/3 & 1 &1\\
\ref{fig:mixed_general_main} & MY-MALA & 1/3 & 1 & 3 &2 \\
\end{tabular}
\caption{Algorithm setting for the simulation study on the mixed Laplace-normal target.}
\label{tab:mixed}
\end{table}

The first plot in Figures~\ref{fig:mixed_pmala}--\ref{fig:mixed_mala} shows
that for $\alpha\neq 1/3$ the acceptance ratio $a_d(\ell, r)$ becomes
degenerate as $d$ increases; while the second plot shows that $(a_d(\ell,
r))_{d \in \nsets}$ and $(\textrm{ESJD}_d)_{d\in\nsets}$ converge as $d\to \infty$.

It is interesting to note that the convergence of these plots happens faster
with respect to the dimension $d$, compared to the Laplace distribution. It is also
surprising that out of the three settings tested for the parameters $r$ and $v$, the
one that seems to converge faster, and that looks most stable, is $v=3$ and $r=2$, which 
was never the case for the other densities. This may suggest that for such distributions, 
a general MY-MALA algorithm has better properties of convergence than sG-MALA and P-MALA.

\begin{figure}
\centering
\begin{tikzpicture}[every node/.append style={font=\tiny}]
\node (img1) {\includegraphics[width = 0.4\textwidth]{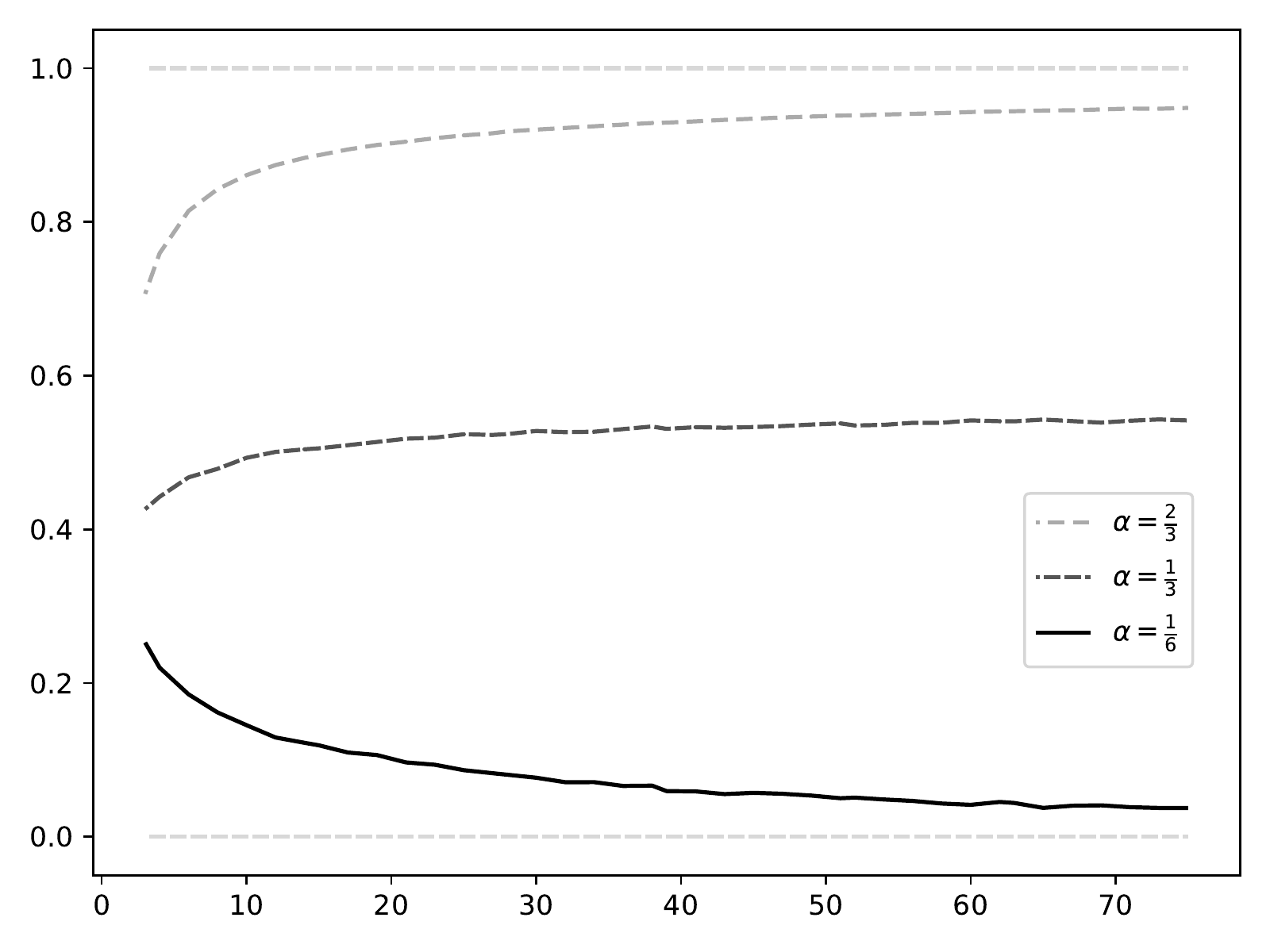}};
\node[below=of img1, node distance = 0, yshift = 1.2cm] {$d$};
\node[left=of img1, node distance = 0, rotate = 90, anchor = center, yshift = -1cm] {$a_d(\ell, r)$};
\node[right=of img1, node distance = 0, xshift = -0.7cm] (img2) {\includegraphics[width = 0.4\textwidth]{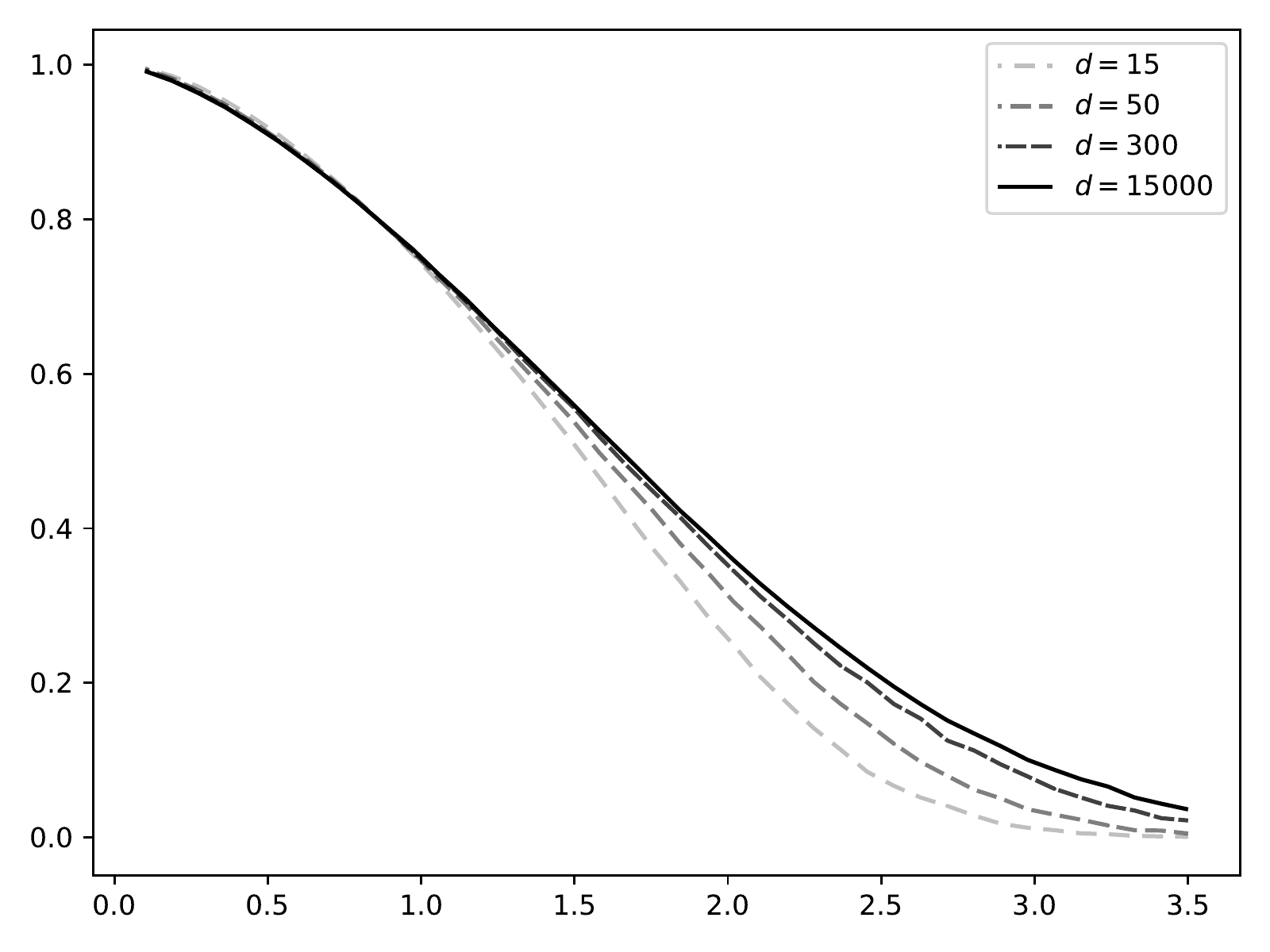}};
\node[below=of img2, node distance = 0, yshift = 1.2cm] {$\ell$};
\node[left=of img2, node distance = 0, rotate = 90, anchor = center, yshift = -1cm] {$a_d(\ell, r)$};
\node[below=of img1, node distance = 0, xshift = 3cm, yshift = 1cm] (img3) {\includegraphics[width = 0.4\textwidth]{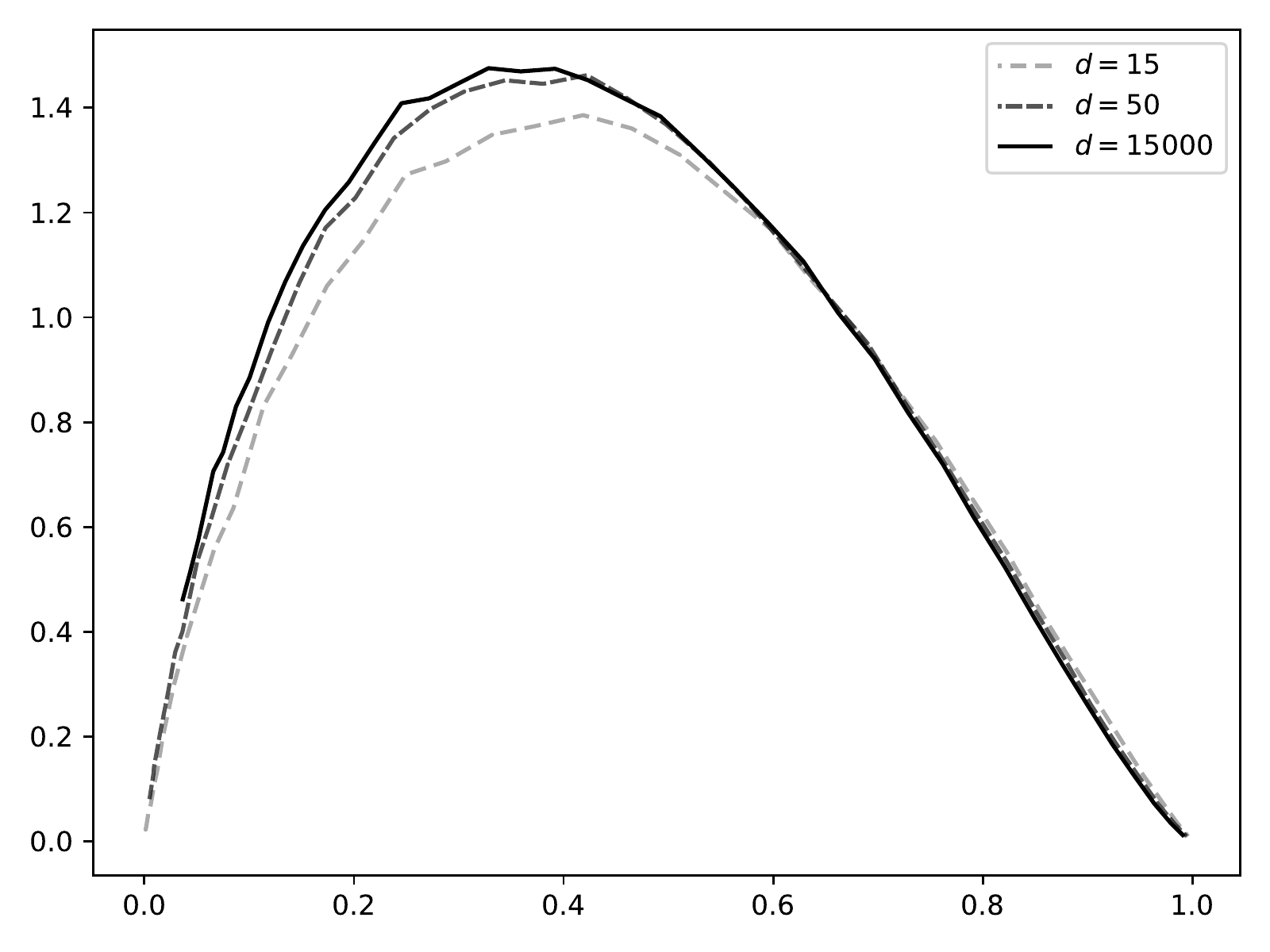}};
\node[below=of img3, node distance = 0, yshift = 1.2cm] {$a_d(\ell, r)$};
\node[left=of img3, node distance = 0, rotate = 90, anchor = center, yshift = -1cm] {$\textrm{ESJD}_d$};
\end{tikzpicture}
\caption{MY-MALA with mixed Laplace-normal target and $v=1, r=0$ (sG-MALA).  Average acceptance rate for different choices of $\alpha$ (first); acceptance rate as a function of $\ell$ for increasing dimension $d$ (second); $\textrm{ESJD}_d$ as a function of the acceptance rate $a_d(\ell, r)$ (third).}
\label{fig:mixed_mala}
\end{figure}

\begin{figure}
\centering
\begin{tikzpicture}[every node/.append style={font=\tiny}]
\node (img1) {\includegraphics[width = 0.4\textwidth]{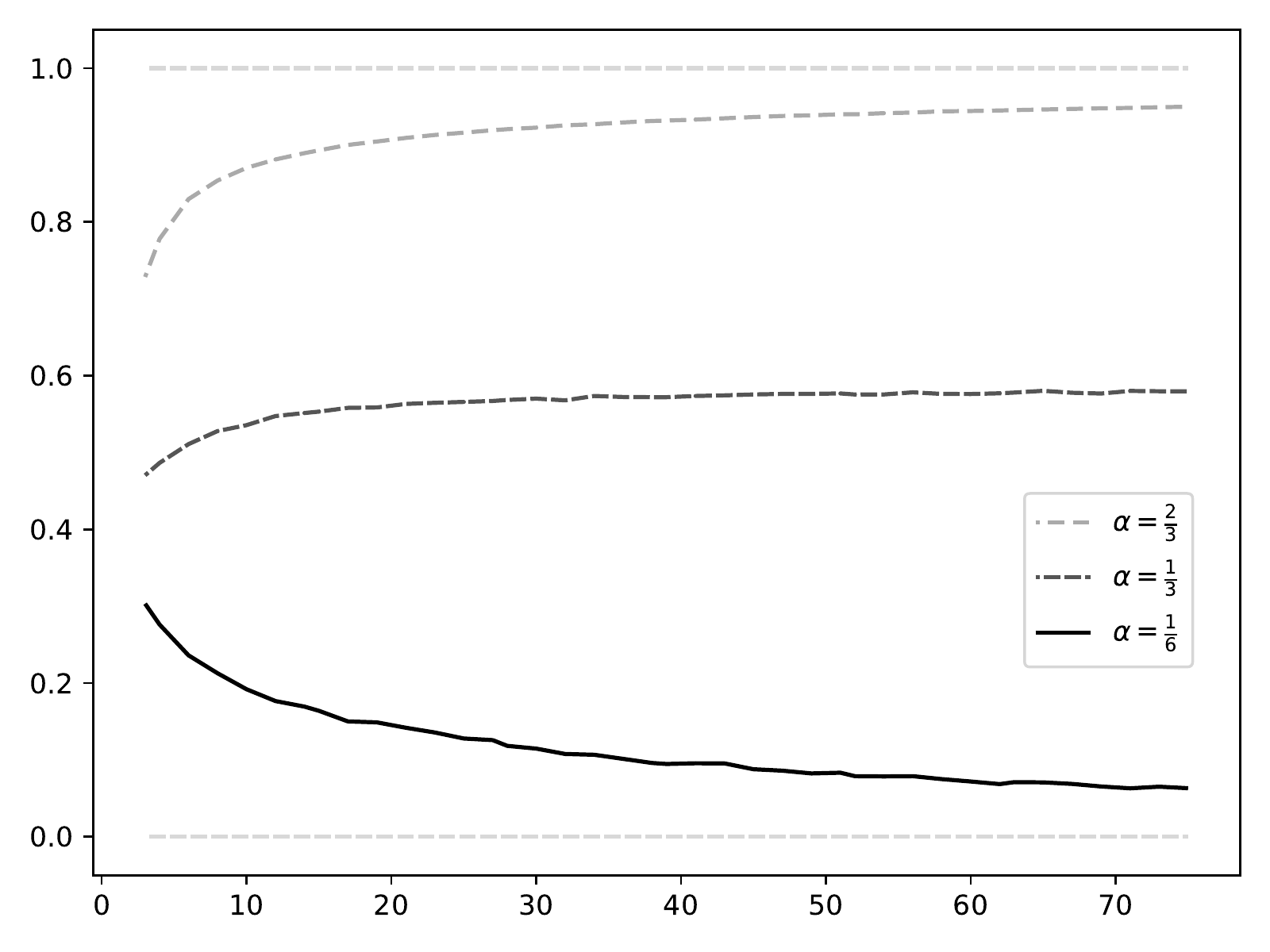}};
\node[below=of img1, node distance = 0, yshift = 1.2cm] {$d$};
\node[left=of img1, node distance = 0, rotate = 90, anchor = center, yshift = -1cm] {$a_d(\ell, r)$};
\node[right=of img1, node distance = 0, xshift = -0.7cm] (img2) {\includegraphics[width = 0.4\textwidth]{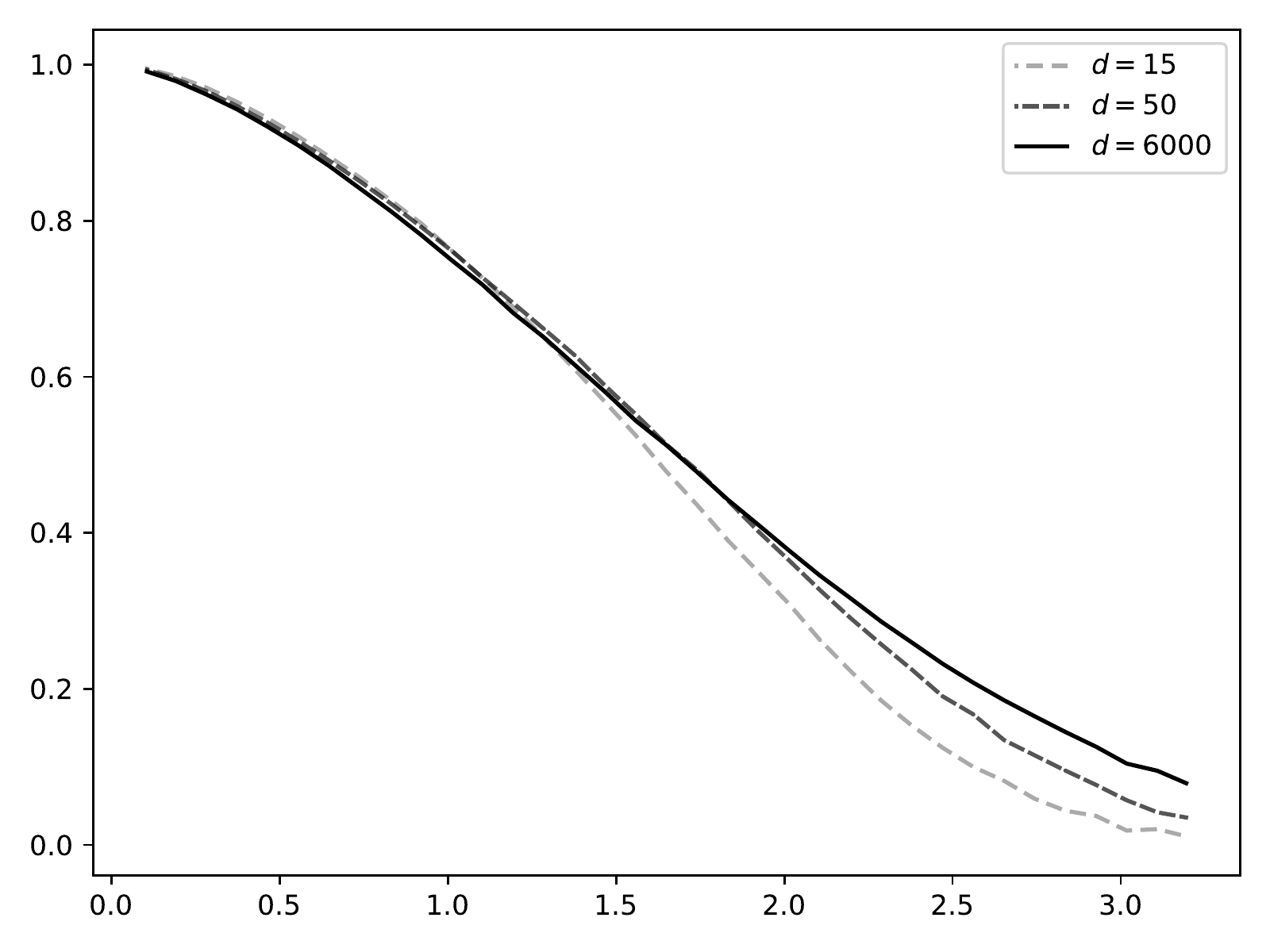}};
\node[below=of img2, node distance = 0, yshift = 1.2cm] {$\ell$};
\node[left=of img2, node distance = 0, rotate = 90, anchor = center, yshift = -1cm] {$a_d(\ell, r)$};
\node[below=of img1, node distance = 0, xshift = 3cm, yshift = 1cm] (img3) {\includegraphics[width = 0.4\textwidth]{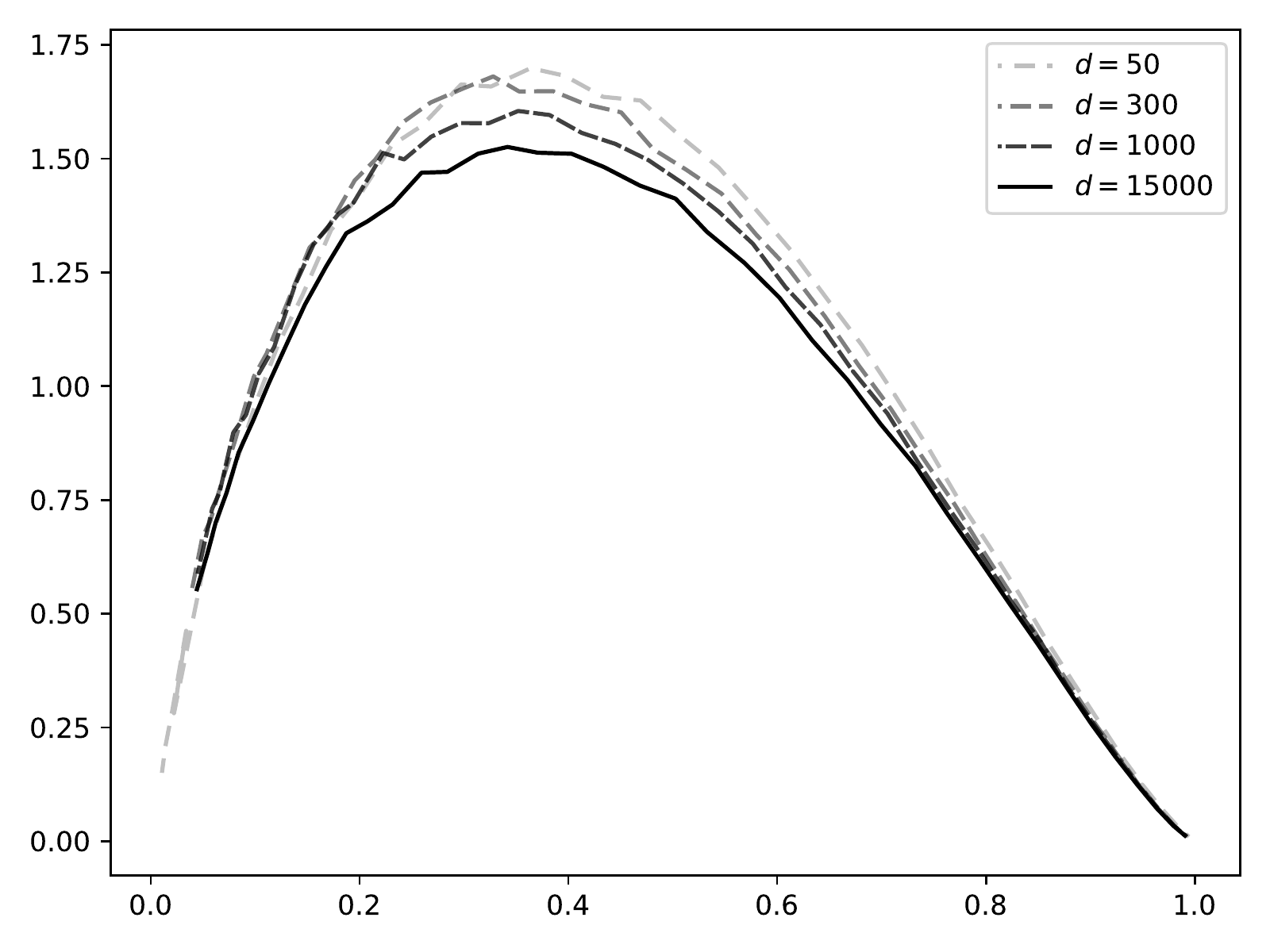}};
\node[below=of img3, node distance = 0, yshift = 1.2cm] {$a_d(\ell, r)$};
\node[left=of img3, node distance = 0, rotate = 90, anchor = center, yshift = -1cm] {$\textrm{ESJD}_d$};
\end{tikzpicture}
\caption{MY-MALA with mixed Laplace-normal target and $v=1, r=1$ (P-MALA).  Average acceptance rate for different choices of $\alpha$ (first); acceptance rate as a function of $\ell$ for increasing dimension $d$ (second); $\textrm{ESJD}_d$ as a function of the acceptance rate $a_d(\ell, r)$ (third).}
\label{fig:mixed_pmala}
\end{figure}





\begin{figure}
\centering
\begin{tikzpicture}[every node/.append style={font=\tiny}]
\node (img1) {\includegraphics[width = 0.4\textwidth]{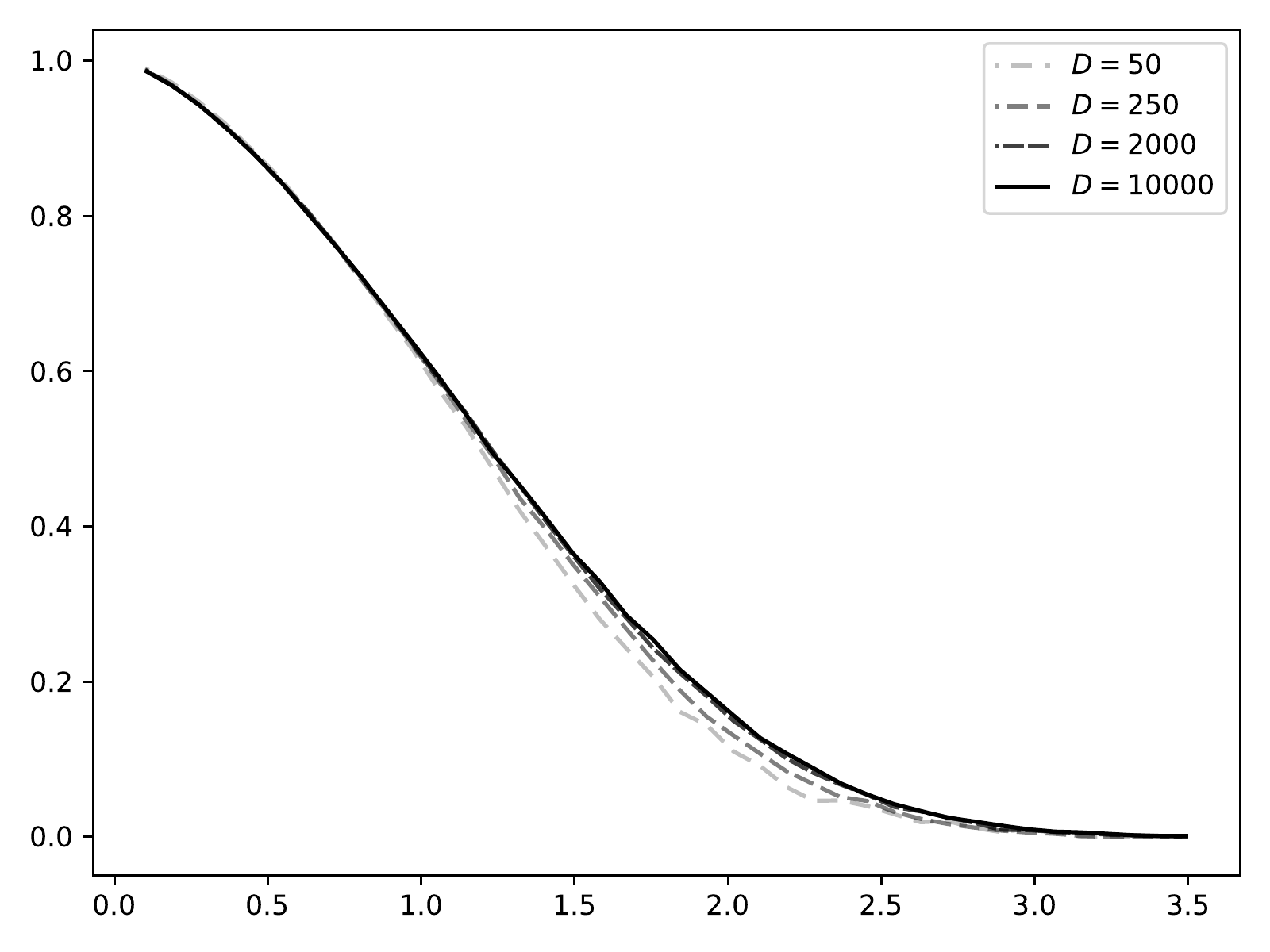}};
\node[below=of img1, node distance = 0, yshift = 1.2cm] {$\ell$};
\node[left=of img1, node distance = 0, rotate = 90, anchor = center, yshift = -1cm] {$a_d(\ell, r)$};
\node[right=of img1, node distance = 0, xshift = -0.7cm] (img3) {\includegraphics[width = 0.4\textwidth]{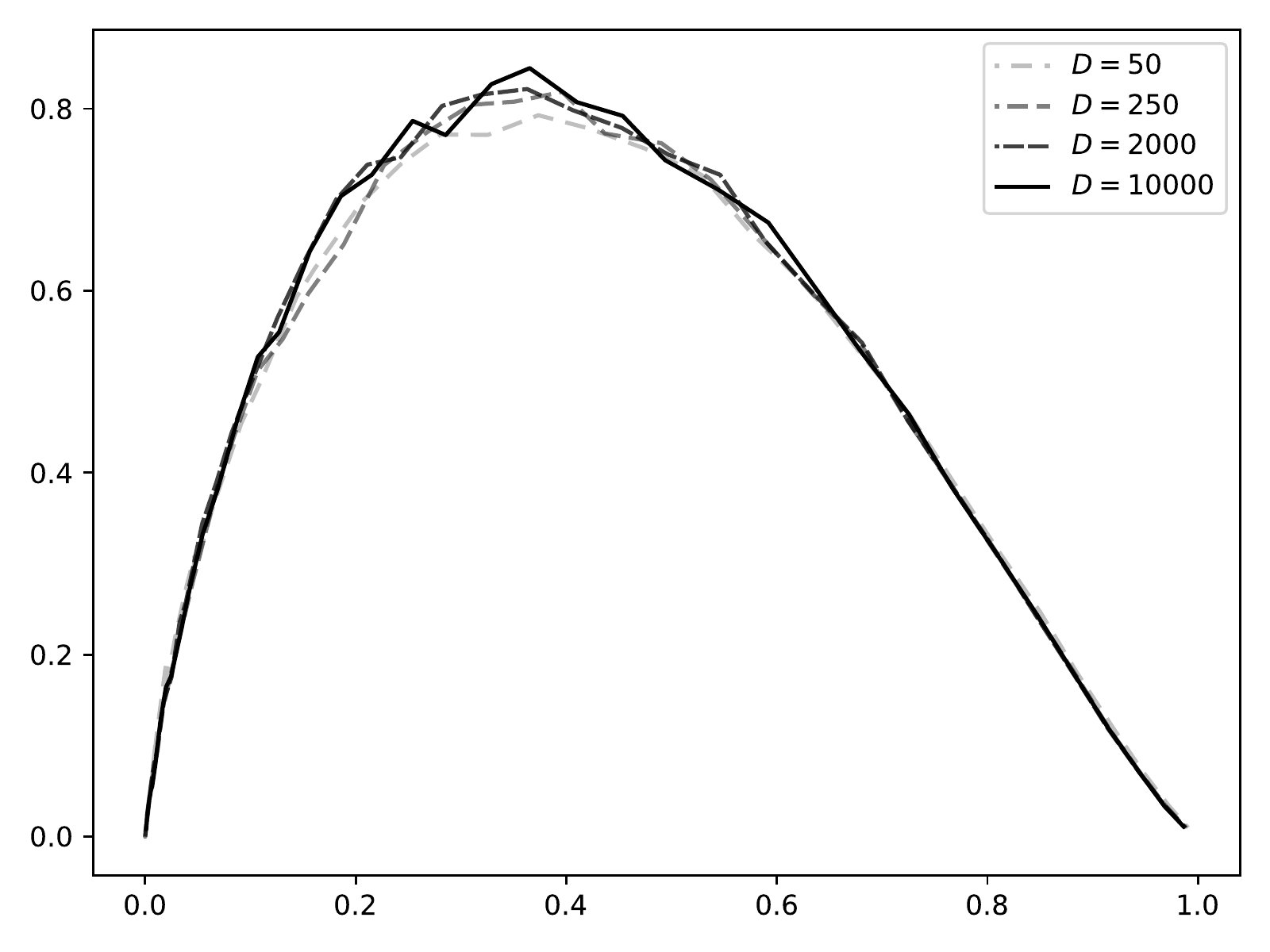}};
\node[below=of img3, node distance = 0, yshift = 1.2cm] {$a_d(\ell, r)$};
\node[left=of img3, node distance = 0, rotate = 90, anchor = center, yshift = -1cm] {$\textrm{ESJD}_d$};
\end{tikzpicture}
\caption{MY-MALA for the target \eqref{eq:def_tv_l2_l1_numerics} with $m=10$ and $v=3$, $r=2$. Left: acceptance rate as a function of $\ell$ for increasing dimension $d$; Right: $\textrm{ESJD}_d$ as a function of the acceptance rate $a_d(\ell, r)$.}
\label{fig:mixedtv_general}
\end{figure}

\end{document}